\theoremstyle{plain}
\newtheorem{thm}{Theorem}
\newtheorem{lem}[thm]{Lemma}
\newtheorem{cor}[thm]{Corollary}
\newtheorem*{thm*}{Theorem}
\newtheorem*{lem*}{Lemma}
\newtheorem*{cor*}{Corollary}
\theoremstyle{definition}
\newtheorem{dfn}{Definition}
\theoremstyle{remark}
\newtheorem{rem}[thm]{Remark}
\newtheorem*{rem*}{Remark}
\newcommand{\green}[1]{\textcolor{black}{#1}}
\newcommand{\beginsupplement}{%
  \setcounter{section}{0}%
  \renewcommand{\thesection}{S\arabic{section}}%
  \setcounter{thm}{0}%
  \renewcommand{\thethm}{S\arabic{thm}}%
  \setcounter{dfn}{0}%
  \renewcommand{\thedfn}{S\arabic{dfn}}%
  \setcounter{figure}{0}%
  \renewcommand{\thefigure}{S\arabic{figure}}
  \renewcommand{\theHfigure}{S.\arabic{figure}}%
  \setcounter{table}{0}%
  \renewcommand{\thetable}{S\arabic{table}}%
}
\begin{document}

\title{State-to-Hamiltonian conversion with a few copies}

\author[1]{Kaito Wada\thanks{\href{mailto:wkai1013keio840@keio.jp}{wkai1013keio840@keio.jp}}}
\author[1,2,3]{Jumpei Kato}
\author[1]{Hiroyuki Harada}
\author[3,4]{Naoki Yamamoto\thanks{\href{mailto:yamamoto@appi.keio.ac.jp}{yamamoto@appi.keio.ac.jp}}}

\affil[1]{Graduate School of Science and Technology, Keio University}
\affil[2]{Mitsubishi UFJ Financial Group, Inc. and MUFG Bank, Ltd.}
\affil[3]{Quantum Computing Center, Keio University}
\affil[4]{Department of Applied Physics and Physico-Informatics, Keio University}


\date{\today}

\maketitle

\begin{abstract}
Density matrix exponentiation (DME) is a general procedure that converts an unknown quantum state into the Hamiltonian evolution.
This enables state-dependent operations and can reveal nontrivial properties of the state, among other applications, without full tomography. 
However, it has been proven that for any physical process, the DME requires $\Theta(1/\varepsilon)$ state copies in error $\varepsilon$.
In this work, we go beyond the lower bound and propose a procedure called the \textit{virtual} DME that achieves  $\mathcal{O}(\log(1/\varepsilon))$ or $\mathcal{O}(1)$ state copies, by using non-physical processes. 
Using the virtual DME in place of its conventional counterpart realizes a general-purpose quantum algorithm for property estimation, that achieves \textit{exponential} circuit-depth reductions over existing protocols across tasks including quantum principal component analysis, quantum emulator, calculation of nonlinear functions such as entropy, and linear system solver with quantum precomputation.
In such quantum algorithms, the non-physical process for virtual DME can be effectively simulated via simple classical post-processing while retaining a near-unity measurement overhead.
We numerically verify this small constant overhead together with the exponential reduction of copy count in the quantum principal component analysis task. 
The number of state copies used in our algorithm essentially saturates the theoretical lower bound we proved.
\end{abstract}

\bigskip

\addtocontents{toc}{\protect\setcounter{tocdepth}{0}}

\begin{figure}[bt]
 \centering
 \includegraphics[scale=1.05]{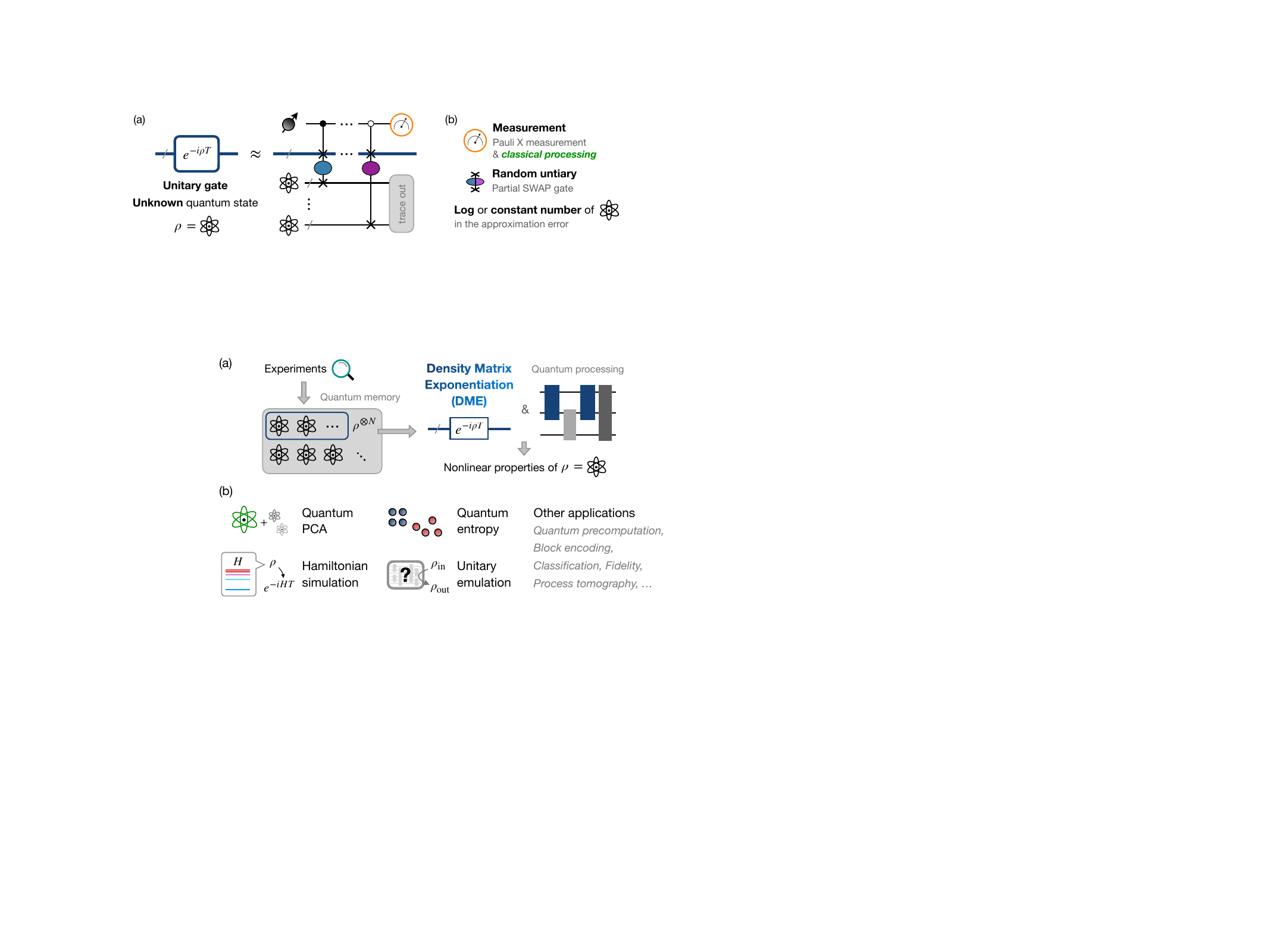}
 \caption{(a) Concept of density matrix exponentiation (DME). 
 The DME converts multiple copies of a state $\rho$ into the Hamiltonian evolution as $e^{-i\rho T}$, which integrates versatile state-dependent quantum operations into quantum data processing. 
 (b) Applications of DME. Our procedure for DME improves the accuracy dependence exponentially over all existing ones, thereby offering the same exponential speedup in various tasks using DME.
 }
 \label{fig:main_result}
\end{figure}

\section{Introduction}

The importance of investigating the properties of quantum states is becoming increasingly evident in various scenarios including condensed-matter physics and quantum information science. 
In particular, with the advent of quantum devices such as analog/digital quantum simulators~\cite{feynman2018simulating,doi:10.1126/science.273.5278.1073,RevModPhys.86.153,bloch2012quantum} and quantum sensors~\cite{RevModPhys.89.035002,giovannetti2011advances}, it has now become possible to reproducibly generate quantum states. 
These states have thus emerged as valuable objects, often referred to as ``quantum data"~\cite{huang2021power,PhysRevLett.126.190505,Biamonte_2017}.

Conventionally, we directly measure each quantum data to reveal the quantum properties,
but this process incurs inevitable information loss due to the quantum mechanical nature. 
On the other hand, recent progress in quantum technology naturally suggests to us to use quantum computers to coherently process quantum data without any measurement. 
This can avoid the information loss caused by measurement, leading to a much more efficient extraction of specific nontrivial properties and quantum data processing~\cite{Lloyd2014-nn,chen2022exponential,huang2022quantum}.
Because this approach is promising, it is important to establish a more general and systematic approach to coherent quantum data processing, 
particularly one that achieves minimal consumption of unclonable quantum data.

By contrast, for Hamiltonians---another central object in quantum physics, there has already been significant development in efficient quantum algorithms such as Hamiltonian
simulation~\cite{berry2014exponential,berry2015hamiltonian,berry2015simulating,low2017qsp}, eigenvalue estimation~\cite{kitaev1995quantum,Dutkiewicz2022heisenberglimited}, and a more general eigenvalue transformation~\cite{low2019qubitization,gilyen2019-qsvt}.
However, comparable algorithms for quantum states remain scarce.
An intuitive way to understand this gap is that Hamiltonian $H$ has a dynamical nature generating the time-evolution $e^{-iHt}$, whereas a state $\rho$ does not due to its static nature. 
The time evolution of $H$ underpins the Hamiltonian-oriented algorithms; hence, an efficient conversion from a state to Hamiltonian and realizing $e^{-i\rho t}$ unlocks those sophisticated algorithms for the purpose of revealing nontrivial properties of $\rho$.
Moreover, this conversion eventually produces versatile quantum operations $\mathcal{G}(\rho)$ that integrate state-dependent functionality into quantum data processing. 
Such $\mathcal{G}(\rho)$ may formally have the Fourier form 
\begin{equation}
\label{eq:Fourier_form}
    \mathcal{G}(\rho) = \int g(t) e^{-i\rho t}dt
\end{equation}
for some function $g(t)$, implying the general necessity of the operation $e^{-i\rho t}$.

Fortunately, there exists a general procedure that converts $\rho$ to $e^{-i\rho T}$ by consuming multiple copies of $\rho$, which is called the {\it density matrix exponentiation (DME)}~\cite{Lloyd2014-nn}.
The DME allows the use of algorithms for Hamiltonian to analyze $\rho$, without very expensive quantum tomography; see Fig.~\ref{fig:main_result}. 
Specifically, the conventional DME uses $N=\mathcal{O}(T^2/\varepsilon)$ state copies and $\mathcal{O}(N\log d)$ elementary quantum operations to approximately implement $e^{-i\rho T}$ in the system dimension $d$ and the error $\varepsilon$~\cite{Lloyd2014-nn,Kimmel2017-hv}. 
This shows a \textit{super-exponential} reduction in the number of copies compared to the polynomial $d$-dependence in any tomographic approaches~\cite{Kimmel2017-hv}. 
Also, an experimental realization of DME has been demonstrated~\cite{Kjaergaard2022-ed}.
However, a critical drawback of the conventional DME is the poor scaling $\mathcal{O}(\varepsilon^{-1})$. 
Notably, it was proven that the scaling $\mathcal{O}(T^2/\varepsilon)$ cannot be overcome, or equivalently the lower bound is $\Omega(T^2/\varepsilon)$, for any use of quantum process alone~\cite{Kimmel2017-hv, Go2024-pb}. 
This is the main reason that has prevented the practical use of DME, despite its importance. 
In contrast, remarkably, recent quantum algorithms, including Hamiltonian simulation and quantum linear system solver, achieve $\mathcal{O}(\log(1/\varepsilon))$ circuit depth~\cite{berry2014exponential,low2017qsp, low2019qubitization, gilyen2019-qsvt, PhysRevLett.129.030503, Chakraborty2024implementingany,PRXQuantum.6.010359,berry2015hamiltonian,berry2015simulating,childs2017quantum,an2022quantum,lin2020optimal}.

In this work, we provide a procedure for DME that breaks the above-mentioned fundamental limit and achieves an \textit{exponential}/\textit{super-exponential} improvement in the number of state copies, by using non-physical quantum processes. 
Specifically, our procedure simulates $e^{-i\rho T}$ with use of $N=\mathcal{O}(T^2\log(T/\varepsilon))$ state copies and $\mathcal{O}(N\log d)$ elementary quantum operations, compared to the previous one with $N=\mathcal{O}(T^2/\varepsilon)$. 
Furthermore, while our procedure works for any unknown quantum states, if the target state $\rho$ is pure, the required number of copies becomes $N=\mathcal{O}(1)$.
That is, surprisingly, we can simulate $e^{-i\rho T}$ of any pure state $\rho$ with no error by using a constant number of state copies. 
The non-physical processes can be effectively simulated via simple (i.e., near-unity measurement overhead) classical post-processing in any quantum algorithm for estimating properties of an (extended) quantum system, although not implementable on a single quantum circuit.
For this reason, we call our DME the \textit{virtual DME}.
Note that the use of non-physical processes does not restrict the applicability of the virtual DME for many tasks. 
Also, we highlight that non-physical processes are widely employed in recent techniques such as 
quantum error mitigation~\cite{PhysRevLett.119.180509,PhysRevX.8.031027,Kim_2023,koczor2021-esd,Huggins2021-vd,PhysRevApplied.23.054021,RevModPhys.95.045005}, quantum circuit knitting~\cite{PhysRevLett.125.150504,Mitarai_2021,10.1109/TIT.2023.3310797,PRXQuantum.5.040308,PRXQuantum.6.010316,Harada2025densitymatrix,yamamoto2024virtualentanglementpurificationnoisy}, virtual cooling~\cite{PhysRevX.9.031013}, resource distillation~\cite{PhysRevLett.132.050203,PhysRevA.109.022403}, quantum broadcasting~\cite{yao2024optimal,PhysRevLett.132.110203,z2pr-zbwl}, nonlinear quantum transformation~\cite{holmes2023nonlinear}, and quantum error correction~\cite{cao2023quantummapscptphptp}.

Our results have a broad impact on quantum technologies, ranging from quantum data analysis to methods that enhance quantum computation itself via DME. 
To demonstrate this, we construct a general quantum algorithm for property estimation, including the signal processing via the Fourier form~\eqref{eq:Fourier_form}, with the use of the virtual DME. 
We prove this algorithm is optimal (for pure state cases) or nearly optimal (for general cases) in terms of the number of state copies.
Then, we apply the algorithm to the quantum principal component analysis~\cite{Lloyd2014-nn}, unknown unitary emulator~\cite{marvian2024universalquantumemulator}, quantum entropy estimation~\cite{acharya2020-entropy, wang2023-entropy,wang2024-new}, and quantum linear system solver with quantum precomputation~\cite{Huggins2024-qf}, beyond the simplest application---Hamiltonian simulation for $e^{-i\rho T}$.
All of these algorithms are exponentially improved compared to the previous one, while keeping the standard scaling of quantum measurement.

\section{Main results}
\subsection{Density matrix exponentiation with a few state copies}

\begin{figure}[bt]
 \centering
 \includegraphics[scale=0.92]{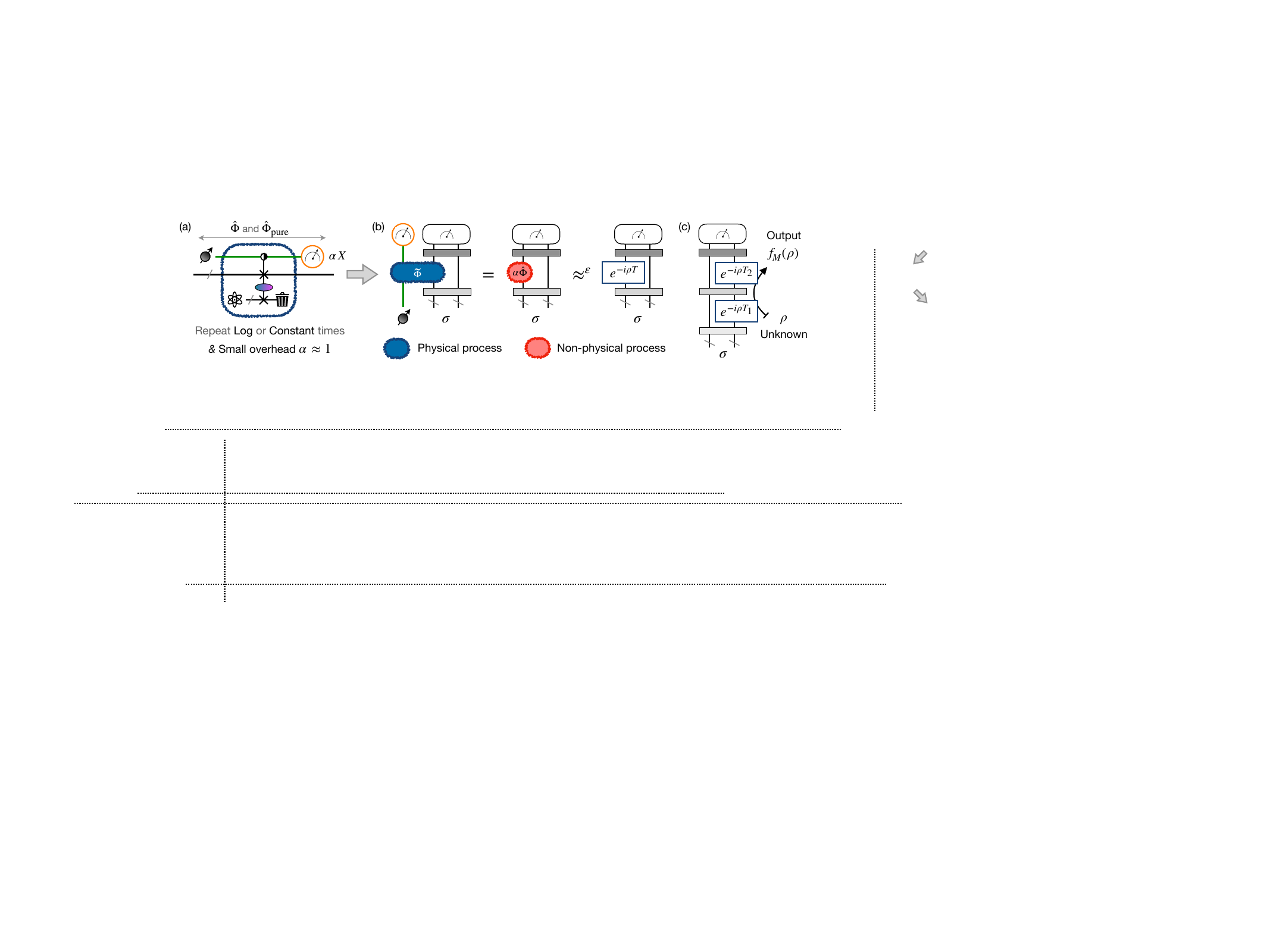}
 \caption{
 Procedure of the virtual DME. 
 (a) Quantum circuit for the virtual DME. 
 For a single run, randomly generated unitary gates are set into the colored ellipse. 
 The number of copies is logarithmic (Theorem~\ref{main_thm:general_result}) or constant (Theorem~\ref{thm_main:purestate}) when $T$ is constant.
 (b) Simulation of non-physical processes that approximate $e^{-i\rho T}$.
 The non-physical process (red) can be simulated by the random physical process (blue) followed by taking expectation on $\alpha X$, the rescaled Pauli X. 
 The explicit construction of $\tilde{\Phi}$ is provided in \green{the Supplementary Materials~\ref{sec:S22}}. 
 (c) Overview of our general quantum algorithm. This algorithm takes unknown state copies as input and outputs the expectation value Eq.~\eqref{eq_main:target_fM}.
 The initial state $\sigma$ may depend on the unknown state $\rho$.}
 \label{fig:algorithm_illustration}
\end{figure}

We now present our main results on the virtual DME, with a particular focus on the required number of state copies.
Let us introduce a random map $\alpha\hat{\Phi}$ to represent the procedure of the virtual DME.
This map can be simulated by a randomized quantum process and classical post-processing in any quantum algorithm for property estimation. 
The randomized quantum process has the common structure with indirect measurement, shown in Fig.~\ref{fig:algorithm_illustration}(a).
The classical post-processing is done by simply multiplying a scalar, represented by $\alpha$, by the measurement outcome.
The following theorem shows that at most $\mathcal{O}(T^2\log(T/\varepsilon))$ copies of $\rho$ are enough to construct an $\alpha\hat{\Phi}$ for approximating the target unitary $e^{-i\rho T}$ within the error $\varepsilon$ in expectation.

\begin{thm}[Virtual DME for general states; informal]
\label{main_thm:general_result}
    Let $\rho$ be an unknown but accessible $d$-dimensional quantum state, and let 
    $\mathcal{U}[e^{-i\rho T}](\sigma)=e^{-i\rho T} \, \sigma \, e^{i\rho T}$ be the unitary channel with a time $T$ ($|T|\geq 1$). 
    We can construct a random modified-quantum map $\hat{\Phi}$ using at most $N$ copies of $\rho$ such that
    \begin{equation}\label{eq_main:thm1_error}
        \left\| \alpha\mathbb{E}[\hat{\Phi}] - \mathcal{U}[e^{-i\rho T}]\right\|_{\diamond}\leq \varepsilon
    \end{equation}
    holds for the diamond norm, given that $N=\mathcal{O}(T^2\log(T/\varepsilon))$ and an error $\varepsilon \in (0, 1/2)$. 
    Here, $\alpha$ is a positive value such that $1<\alpha\leq e$ holds for the Napier's constant $e$. 
\end{thm}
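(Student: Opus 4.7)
The plan is a two-level construction: (i) Trotterize into short segments so that the quasi-probabilistic prefactor does not blow up exponentially in $T$, and (ii) implement each short-time step as a random map via a truncated Taylor-series linear combination built on top of the standard partial-SWAP DME primitive. Set $K=\Theta(T^2)$ and $t=T/K=\Theta(1/T)$, so $e^{-i\rho T}=(e^{-i\rho t})^K$. If I can build, per segment, an independent random modified-map $\hat{\Phi}_{\text{seg}}$ using $N_{\text{seg}}$ copies of $\rho$ with
\begin{equation*}
\bigl\|\alpha_{\text{seg}}\,\mathbb{E}[\hat{\Phi}_{\text{seg}}] - \mathcal{U}[e^{-i\rho t}]\bigr\|_{\diamond} \le \frac{\varepsilon}{eK},
\qquad
\alpha_{\text{seg}}\le 1+\frac{c}{K}
\end{equation*}
for some constant $c\le 1$, then composing $K$ independent copies and using independence of the internal randomness yields $\mathbb{E}[\hat\Phi_K\circ\cdots\circ\hat\Phi_1]=\prod_i\mathbb{E}[\hat\Phi_i]$, overall prefactor $\alpha=\alpha_{\text{seg}}^K\le e^{c}\le e$, and, by subadditivity of the diamond norm under composition (using that $\|\mathcal{U}[e^{-i\rho t}]\|_{\diamond}=1$ and $\|\alpha_{\text{seg}}\mathbb{E}[\hat\Phi_{\text{seg}}]\|_{\diamond}\le\alpha_{\text{seg}}$), the global error bound $\|\alpha\,\mathbb{E}[\hat\Phi]-\mathcal{U}[e^{-i\rho T}]\|_{\diamond}\le\varepsilon$. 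The total cost is $KN_{\text{seg}}$ state copies.

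For the short-time block I Taylor expand
\begin{equation*}
\mathcal{U}[e^{-i\rho t}] \;=\; \mathcal{I} + t\mathcal{L} + \sum_{k\ge 2}\frac{t^k}{k!}\mathcal{L}^k,
\qquad \mathcal{L}(\sigma)=-i[\rho,\sigma],
\end{equation*}
and truncate at order $n=\mathcal{O}(\log(K/\varepsilon))$, so that the tail is bounded by $\varepsilon/(2eK)$ since $|t|=\Theta(1/T)$ and $\|\mathcal{L}\|_{\diamond}\le 2$. The leading piece $\mathcal{I}+t\mathcal{L}$ is realized (up to an intrinsic $\mathcal{O}(t^2)$ tail to be absorbed below) by the standard CPTP partial-SWAP DME map using one copy of $\rho$, contributing exactly $1$ to the $\ell^1$ overhead. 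Each higher-order correction $\mathcal{L}^k$ for $k\ge 2$ is realized \emph{virtually} using the controlled-SWAP Hadamard-test circuit of Fig.~\ref{fig:algorithm_illustration}(a): it is written as a real linear combination $\sum_i c_i^{(k)} \mathcal{E}_i^{(k)}$ of physical channels built from $k$ controlled-SWAP gates acting on $k$ fresh copies of $\rho$, with ancilla measurement in the $X$ (and, as needed, $Y$) basis producing $\pm 1$; the sign is absorbed into the $\alpha X$ post-processing of Fig.~\ref{fig:algorithm_illustration}(b). Because the zeroth- and first-order weights are CPTP and contribute $1$, summing the higher-order contributions gives
\begin{equation*}
\alpha_{\text{seg}} - 1 \;=\; \mathcal{O}\!\left(\sum_{k\ge 2}\frac{|t|^k}{k!}\,\|c^{(k)}\|_1\right)=\mathcal{O}(t^2)=\mathcal{O}(1/K),
\end{equation*}
as required. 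The per-segment copy count is $N_{\text{seg}}=\mathcal{O}(n)=\mathcal{O}(\log(T/\varepsilon))$, so the total is $\mathcal{O}(T^2\log(T/\varepsilon))$.

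The main obstacle is the explicit construction, at each order $k\ge 2$, of a quasi-probabilistic decomposition of $\mathcal{L}^k$ --- and of the intrinsic higher-order tail of the partial-SWAP primitive --- into channels realizable by the controlled-SWAP Hadamard-test architecture, with $\|c^{(k)}\|_1=\mathcal{O}(1)$ uniformly in $k$. Concretely, $\mathcal{L}^k(\sigma)$ expands into a signed sum of nested adjoint actions $\rho^j\sigma\rho^{k-j}$; these must be expressed through iterated controlled-SWAPs on fresh copies of $\rho$, with Hermitian and anti-Hermitian parts extracted via $X$ versus $Y$ ancilla measurements, and with the spurious higher-order terms of the leading partial-SWAP channel cancelled order by order so they do not inflate $\alpha_{\text{seg}}$. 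Once this uniform bookkeeping is in place, the remaining steps --- Taylor truncation bound, composition bound in diamond norm, and the estimate $(1+c/K)^K\le e$ --- are routine and yield the theorem.
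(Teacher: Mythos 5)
Your outer scaffolding is sound and matches the paper's: split $e^{-i\rho T}$ into $K=\Theta(T^2)$ segments of duration $t=\Theta(1/T)$, demand per-segment diamond error $\mathcal{O}(\varepsilon/K)$ and per-segment overhead $1+\mathcal{O}(1/K)$, and compose using independence and submultiplicativity to get $\alpha\leq e$ and $N=\mathcal{O}(T^2\log(T/\varepsilon))$. The genuine gap is exactly the step you defer as ``the main obstacle'': the explicit per-segment decomposition with $\ell^1$ overhead $1+\mathcal{O}(t^2)$. Two specific problems arise in your sketch. First, the claim $\|c^{(k)}\|_1=\mathcal{O}(1)$ uniformly in $k$ is not what the natural expansion gives: $\mathcal{L}^k(\sigma)=(-i)^k\sum_{j}\binom{k}{j}(-1)^{k-j}\rho^j\sigma\rho^{k-j}$ has coefficient $\ell^1$ norm $2^k$. (This happens to be survivable because it multiplies $|t|^k/k!$, so the tail still sums to $\mathcal{O}(t^2)$, but your argument as written does not establish the bound you rely on.) Second, anchoring the leading order on the CPTP partial-SWAP map $\Lambda_t(\sigma)=\cos^2(t)\,\sigma - i\sin t\cos t\,[\rho,\sigma]+\sin^2(t)\,\rho$ forces you to cancel its spurious $\mathcal{O}(t^2)$ terms (including the state-replacement piece $\sin^2(t)\,\rho\,\mathrm{tr}[\sigma]$) against the genuine second-order terms $-\tfrac{t^2}{2}(\rho^2\sigma+\sigma\rho^2)+t^2\rho\sigma\rho$ of $\mathcal{U}[e^{-i\rho t}]$, order by order, while certifying that every residual piece is realizable in the controlled-SWAP architecture and that the total added $\ell^1$ weight stays $\mathcal{O}(t^2)$ with a constant small enough that $(1+c t^2)^K\leq e$. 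None of this bookkeeping is carried out, and it is precisely where the theorem's content lives.

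The paper avoids both issues with a different inner construction: it Taylor-expands the \emph{operator} $e^{-i\rho t}$ (not the superoperator) and groups consecutive terms pairwise, so that the truncated series becomes $S_L=\sum_{l}C_l\,(-i\rho)^{2l}\bigl(\cos\theta_l\,\bm{1}-i\,\mathrm{sgn}(T)\sin\theta_l\,\rho\bigr)$ with $\sum_l C_l\leq e^{t^2}$. Each factor $\cos\delta\,\bm{1}\pm i\sin\delta\,\rho$ is realized \emph{exactly} by a single partial-SWAP $e^{\mp i\delta S}$ acting asymmetrically on one fresh copy of $\rho$ followed by a partial trace, and $S_L\bullet S_L^\dagger$ is manifestly Hermitian-preserving, so a single $X$-basis ancilla measurement suffices in the LCS conversion. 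This yields the overhead bound $C^{2r}\leq e^{2T^2/r}$ in closed form, with no leading-order CPTP anchor and no order-by-order cancellation. If you want to complete your route, you must either carry out the cancellation bookkeeping explicitly with a quantitative $\ell^1$ bound, or adopt a grouping of the expansion along the paper's lines so that every basis superoperator is realized exactly with unit coefficient.
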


\noindent
When the simulation time $|T|$ is smaller than 1, we can also construct a random modified-quantum map $\hat{\Phi}$ satisfying Eq.~\eqref{eq_main:thm1_error} given that $N=\mathcal{O}(\log(1/\varepsilon))$. 
The factor $\alpha$ is related to a classical post-processing cost, as will be mentioned later. 
As the necessity of classical post-processing implies, the map $\hat{\Phi}$ is not a physical process, and thus we call $\hat{\Phi}$ the random \textit{modified}-quantum map.
Technically, it always preserves Hermiticity; see \green{the Supplementary Materials \ref{sec:def of physical and non-physical} and \ref{supplesec:example_nonphysical}} for detailed characterization and a simple example of non-physical processes, respectively.
Also, the probability distribution of $\hat{\Phi}$ is sharply concentrated. 
Our result exponentially improves the error dependence $1/\varepsilon$ over all existing protocols for DME.
The full proof is provided in \green{the Supplementary Materials~\ref{supple_sec:sim_dme}}.

Theorem~\ref{main_thm:general_result} does not require any prior information about the target unknown state except for its dimension.
Yet, various applications use DME for pure states $\rho=\ketbra{\psi}$ (see Section~\ref{sec:application_main}). 
    When the target unknown state $\rho$ is known to be pure, surprisingly, the accuracy dependence is completely removed as follows.
    
    \begin{thm}[Virtual DME for pure states; informal]
    \label{thm_main:purestate}
        Let $\rho=\ketbra{\psi}$ be an unknown but accessible $d$-dimensional pure state, and let $T$ be a simulation time ($|T|\in (0,2\pi]$).  
    We can construct a random modified-quantum map $\hat{\Phi}_{\rm pure}$ using at most $N\geq 2\times {\max}\{2,2T^2\}$ copies of $\rho$ 
    such that
    \begin{equation}
        \alpha_{\rm pure}\mathbb{E}[\hat{\Phi}_{\rm pure}] = \mathcal{U}[e^{-i\ketbra{\psi}T}].
    \end{equation}
    Here, $\alpha_{\rm pure}$ is a positive value such that $1< \alpha_{\rm pure}\leq e$. 
\end{thm}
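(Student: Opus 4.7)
The foundation of the proof is the algebraic identity $P^{k} = P$ for all $k \geq 1$ satisfied by any projector $P = |\psi\rangle\langle\psi|$, which truncates the exponential series into the closed form
\begin{equation*}
e^{-iPT} = I + (e^{-iT}-1)\,P.
\end{equation*}
The induced unitary channel then admits the exact four-term decomposition
\begin{equation*}
\mathcal{U}[e^{-iPT}](\sigma) = \sigma + c_{1}\,P\sigma + c_{1}^{*}\,\sigma P + |c_{1}|^{2}\,P\sigma P, \qquad c_{1} := e^{-iT} - 1,
\end{equation*}
with no infinite Taylor tail. This single observation removes the $\log(1/\varepsilon)$ cost that appears in Theorem~\ref{main_thm:general_result} and underlies the \emph{exact} equality $\alpha_{\mathrm{pure}}\mathbb{E}[\hat{\Phi}_{\mathrm{pure}}] = \mathcal{U}[e^{-iPT}]$ claimed in the theorem.

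The plan is to realize each of the four maps on the right-hand side as a signed linear combination of physical channels implemented through the indirect-measurement circuit of Fig.~\ref{fig:algorithm_illustration}(a,b): an ancilla qubit controlled-SWAPed with one or two copies of $|\psi\rangle$ and then measured in the $X$ basis, with the $\pm 1$ outcome rescaled by $\alpha_{\mathrm{pure}}$. The SWAP identities $\operatorname{Tr}_{1}[S_{12}(|\psi\rangle\langle\psi|_{1}\otimes\sigma_{2})] = P\sigma$ and its transpose turn the non-positive maps $\sigma\mapsto P\sigma$ and $\sigma\mapsto\sigma P$ into $\pm 1$ outcomes of this circuit using one copy of $|\psi\rangle$ each, while $\sigma\mapsto P\sigma P$ is produced by a two-copy SWAP sandwich. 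For $|T|\leq 1$ a single pass with four copies suffices. For $|T| > 1$ the evolution is subdivided into $n = \mathcal{O}(T^{2})$ sub-intervals of length $\delta = T/n$, the same exact four-term expansion is applied on each sub-interval (where $|c_{1}(\delta)| = \mathcal{O}(1/T)$ is much smaller), and the independent signed samples are fused into a single randomized protocol consuming at most $N \leq 4T^{2}$ copies in total. Linearity of expectation together with the $P^{2}=P$ identity then delivers the exact equality claimed in the theorem, with zero residual error at every level of the recursion.

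The main technical obstacle is the sharp uniform bound $\alpha_{\mathrm{pure}}\leq e$: a naive product of per-sub-step factors $(1+|c_{1}(\delta)|)^{2}$ would blow up like $e^{\mathcal{O}(T)}$, so the argument must exploit a more structured representation. A promising route is the reflection form $e^{-iPT} = e^{-iT/2}[\cos(T/2)\,I + i\sin(T/2)\,R_{\psi}]$ with $R_{\psi} = I - 2P$, whose single-shot channel coefficient total is only $1+|\sin T|\leq 2$, combined with a recursive signed embedding that keeps the overall sampling overhead bounded by Napier's constant uniformly in $|T|\in(0,2\pi]$. The copy count $N\geq 2\max\{2,2T^{2}\}$ follows immediately from the two-copy cost per sub-step and the $\mathcal{O}(T^{2})$ sub-step count, while the $2\pi$-periodicity of $e^{-iPT}$ in $T$ explains the restriction $|T|\in(0,2\pi]$.
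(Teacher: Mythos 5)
Your starting point is the same as the paper's: exploit $P^k=P$ to collapse the Taylor series into an exact finite expression, realize the resulting left/right multiplications by $P$ via SWAP tricks on copies of $\ket{\psi}$, and fold the signs and the rescaling factor into an ancilla $X$-measurement with classical post-processing. The copy count and the exactness of $\mathbb{E}[\hat{\Phi}_{\rm pure}]$ also come out correctly from this skeleton. However, there is a genuine gap exactly where you flag "the main technical obstacle": the bound $\alpha_{\rm pure}\le e$ is never established, and the route you sketch for it does not work. The reflection form $e^{-iPT}=e^{-iT/2}[\cos(T/2)\bm{1}+i\sin(T/2)R_\psi]$ has small $\ell^1$-coefficient weight only if $R_\psi=\bm{1}-2P$ is available as a unit-cost physical building block; but $R_\psi=e^{-i\pi P}$ \emph{is} the DME of $\ket{\psi}$ at $T=\pi$, i.e., precisely the object being constructed, and the $\Omega(T^2/\varepsilon)$ lower bound for physical processes rules out implementing it exactly from finitely many copies. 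Re-expanding $R_\psi$ as $\bm{1}-2P$ to make it SWAP-implementable reinstates a coefficient sum of $3$ per factor (hence up to $9$ for $R_\psi\bullet R_\psi$), and your fallback of subdividing with the naive four-term expansion gives per-substep weight $(1+|c_1(\delta)|)^2=(1+2|\sin(\delta/2)|)^2\approx(1+\delta)^2$, whose product over $n=T/\delta$ substeps is $e^{\Theta(T)}$ — about $3\times 10^5$ at $T=2\pi$, not $\le e$.

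The paper closes this gap with a different grouping of the substep expansion. Writing $e^{-iP\delta}=\sqrt{1+\delta^2}\left(\cos\theta_0\,\bm{1}-i\,\mathrm{sgn}(T)\sin\theta_0\,P\right)+\left[e^{-i\delta}-(1-i\delta)\right]P$ with $\theta_0=\arccos\bigl((1+\delta^2)^{-1/2}\bigr)$, the bracketed first operator is \emph{not} split into $\bm{1}$ and $P$ separately: it is realized exactly, as a single basis element, by a partial-SWAP conjugation $\mathrm{tr}_2[e^{-i\theta_0 S}(\,\cdot\,\otimes\rho)]=(\cos\theta_0\,\bm{1}-i\sin\theta_0\,\rho)(\cdot)$ using one copy. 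Its $\ell^1$ cost is therefore $\sqrt{1+\delta^2}=1+\mathcal{O}(\delta^2)$ rather than $1+\mathcal{O}(\delta)$, and the remainder coefficient $|e^{-i\delta}-(1-i\delta)|$ is also $\mathcal{O}(\delta^2)$, so the per-substep weight is $C_{\rm pure}(\delta)\le 1+\delta^2$ and the total overhead is $C_{\rm pure}^{2r}\le e^{2T^2/r}\le e$ for $r\ge 2T^2$. Without this quadratic-in-$\delta$ per-step cost (or an equivalent device), your construction proves exactness of the expectation but only with an overhead $\alpha_{\rm pure}$ that can be exponentially large in $T$, which falls short of the theorem as stated.
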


\noindent
We remark that if $\rho$ is pure, then $e^{-i\rho(T+2k\pi)}=e^{-i\rho T}$ holds for any integer $k$, and thus we consider $|T|\in (0,2\pi]$ without loss of generality. 
This result shows a super-exponential improvement in the error dependence $1/\varepsilon$ over all existing protocols for DME.
The proof of this theorem is given in \green{the Supplementary Materials~\ref{supplesec_variants}}, where it is essential to use the property $\rho^2=\rho$ for any pure state $\rho=\ketbra{\psi}$.

As for the controlled version of $e^{-i\rho T}$, we can simulate it by simply setting $\ketbra{1}\otimes \rho$ instead of $\rho$ in our procedure.
When the target state $\rho$ is pure, we can also use Theorem~\ref{thm_main:purestate} for simulating the corresponding controlled operation.

\subsection{Applicability and quantum resources in virtual DME}

Our virtual DME itself is a non-physical process, but it can be simulated by standard quantum computing with the help of classical post-processing. 
The procedure is very simple: we replace every operation $e^{-i\rho T}$ with the random circuit of Fig.~\ref{fig:algorithm_illustration}(a) in any quantum algorithm for property estimation, and finally perform classical post-processing.
The overall procedure effectively acts as a map $A\mapsto {\rm tr}_{\rm anc}[(\alpha X)_{\rm anc}A]$ for the single-ancilla system (denoted by anc), which is neither completely positive nor trace-preserving, i.e., a non-physical process; see Fig.~\ref{fig:algorithm_illustration}(b) or \green{the Supplementary Materials~\ref{supple_sec:non_physical_ope}}.
However, the entire classical post-processing cost increases due to the one for the virtual DME.
Specifically, the number of measurements (or the number of the entire circuit runs) increases by a multiplicative factor of $\alpha^M$ when a target quantum algorithm has $M$ uses of $e^{-i\rho T}$.
Here, the \textit{measurement overhead} $\alpha$ per unit is a small positive value determined by our construction of virtual DME.
Importantly, our construction admits the reduction of $\alpha$ to a value at most $e^{1/M}$ by increasing $N$ to $MN$ in Theorems~\ref{main_thm:general_result} and~\ref{thm_main:purestate}, thereby achieving a constant measurement overhead $e=2.718...$ (or a smaller value).
This is one of the most important features of our virtual DME; see the full version of our main theorems in \green{the Supplementary Materials~\ref{sec:Virtual_DME}}.

We here make some remarks on the random quantum processes in $\hat{\Phi}$ and $\hat{\Phi}_{\rm pure}$.
These random processes can be efficiently generated by independent sampling from the classical discrete probability distribution whose support size is $\mathcal{O}(\log(T/\varepsilon))$ for the general case or 2 for the pure state case. 
Their circuit implementation requires at most $N$ controlled (partial) SWAP gates, each of which can be implemented using only $\mathcal{O}(\log d)$ elementary gates~\cite{Lloyd2014-nn, Kimmel2017-hv}, thus $\mathcal{O}(N\log d) $ in total.
The $\log d$-dependence is the same as the conventional DME~\cite{Lloyd2014-nn} whose circuit has a similar structure as in Fig.~\ref{fig:algorithm_illustration}(a) except for the single ancilla qubit.
As a result, the virtual DME uses only a logarithmic number of elementary quantum operations in terms of both $d$ and $1/\varepsilon$.

The non-physical nature clearly separates the virtual DME from the conventional one in terms of the applicability for quantum state preparation.
The conventional DME is comprised of physical processes and thus can produce a quantum state evolved under $e^{-i\rho T}$ on a quantum computer, while the virtual DME cannot.
Nevertheless, in most applications, one needs only properties of an (extended) quantum system such as the expectation value of some observable with respect to final quantum states, rather than directly preparing the states. 
We will provide a general quantum algorithm, combined with the virtual DME, that can be applied to any such observable estimations in the next subsection.

\subsection{Optimal DME-based quantum algorithm for property estimation}
\label{sec:optimalalgorithm}

Theorems~\ref{main_thm:general_result} and~\ref{thm_main:purestate} allow us to replace a unitary gate $e^{-i\rho T}$ or its controlled version with a random modified-quantum map, in any DME-based algorithm for property estimation.
To describe this result systematically, we construct a general quantum algorithm to estimate properties of a quantum system evolved under a quantum process with multiple $e^{-i\rho T}$.
The following problem setup contains a wide range of applications including all instances presented in Section~\ref{sec:application_main}.
Let $f_M$ be a function from a quantum state to the expectation value of an observable $O$ with respect to a quantum state $\sigma$ evolved under $U_M$:
\begin{equation}\label{eq_main:target_fM}
    f_M(\rho):={\rm tr}[OU_M\sigma U_M^\dagger].
\end{equation}
Here, $U_M$ is any unitary circuit that alternatively performs $e^{-i\rho T}$ (to a partial subsystem of $\sigma$) and a known unitary gate $M$ times in total; see Fig.~\ref{fig:algorithm_illustration}(c) for an example of $f_M$ with $M=2$.
Note that $M$ operations in the form of $e^{-i\rho T}$ can have distinct $\rho$ and $T$, while we assume the same $\rho$ in the following for simplicity.
Also, $\sigma$ may be a function of $\rho$ such as $\sigma=\ketbra{0}\otimes \rho$ for an initial state $\ket{0}$.
Without loss of generality, the time of each $e^{-i\rho T}$ is assumed to be $|T|\leq 1$.
We aim to calculate the expectation value $f_M(\rho)$ of an unknown target state $\rho$ by using its copies minimally.

Our quantum algorithm below solves this problem.
We first replace all the $e^{-i\rho T}$ in $U_M$ with the randomly sampled map $\hat{\Phi}$ (or $\hat{\Phi}_{\rm pure}$) with certain $N$. 
Then, we measure the target observable $O$ with the final state.
The final measurement outcome multiplied by $\alpha^M$ and $\pm 1$ from the mid-circuit $X$ measurements serves as an estimate of $f_M(\rho)$. 
The estimator has the bias $\mathcal{O}(\varepsilon)$ by setting the right hand side of Eq.~\eqref{eq_main:thm1_error} to be $\mathcal{O}(\varepsilon/M)$ as the error.
Then, repeating the above procedure $\mathcal{O}(\alpha^{2M}/\varepsilon^2)$ times and taking the average value to reduce the shot noise, we can estimate the target expectation value within $\mathcal{O}(\varepsilon)$ error with a high probability. 
We remark that any quantum circuit appearing in this algorithm can be executed on a standard quantum computer because $\hat{\Phi}$ is a valid physical process except for the classical post-processing.

We summarize here the performance of our general quantum algorithm.
Importantly, the measurement overhead becomes $\alpha^{2M}\leq e$ by using $2MN$ copies of $\rho$ in each $\hat{\Phi}$, as mentioned before. 
Therefore, our algorithm has the standard measurement cost $\mathcal{O}(1/\varepsilon^2)$.
Under this crucial restriction, we prove that each circuit in this algorithm uses at most 
\begin{equation}
\label{eq:num_copies_qalg}
    \mathcal{O}\left(M^2{\log(M/\varepsilon)}\right)
\end{equation}
copies of the unknown quantum state $\rho$, and its gate complexity, except for the interleaving known operations, is given by
\begin{equation}\label{eq:num_gates_qalg}
    \mathcal{O}\left(M^2{\log(M/\varepsilon)}\log (d)\right),
\end{equation}
where we recall that $d$ is the dimension of $\rho$. 
Thus, the circuit depth (except for interleaving operations) of each circuit is logarithmic with respect to both $1/\varepsilon$ and $d$.
If the target density matrices to be exponentiated are all pure, then we can use $\hat{\Phi}_{\rm pure}$, leading to the number of copies $\mathcal{O}(M^2)$ and the gate complexity $\mathcal{O}(M^2\log d)$ in a single circuit.

Finally, in addition to the logarithmic scaling regarding $1/\varepsilon$ and $d$, the proposed quantum algorithm is (nearly) optimal regarding the number $M$, as follows.

\begin{thm}\label{main_thm:lower_bound_M_alg}
    Suppose that $\mathcal{A}$ is any quantum algorithm that, given $k$ copies of an unknown state $\rho$ and a description of $f_M$, outputs an estimate of the expectation value $f_M(\rho)$ within a constant additive error with a high probability. Then, any such algorithm $\mathcal{A}$ requires $k=\Omega(M^2)$ copies of $\rho$.
\end{thm}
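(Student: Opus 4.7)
The plan is to reduce $f_M$-estimation to the task of distinguishing two carefully chosen nearby single-qubit states whose fidelity decays quadratically in a small parameter $\eta$, and then to pick $\eta=\Theta(1/M)$ so that the resulting copy lower bound scales as $\Omega(M^2)$ while the gap $|f_M(\rho_0)-f_M(\rho_\eta)|$ remains constant.

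First I would construct an explicit hard instance on a single qubit ($d=2$). Let the known unitaries interleaved in $U_M$ all be the identity and let each time be $T=1$, so $U_M=e^{-iM\rho}$; take the fixed initial state $\sigma=\ketbra{+}$ and observable $O=X$. For the one-parameter family $\rho_\eta=\tfrac{1}{2}(I+\eta Z)$, since $\rho_\eta$ and $Z$ are simultaneously diagonal we get $e^{-iM\rho_\eta}=e^{-iM/2}e^{-iM\eta Z/2}$, and a short direct computation yields
\begin{equation*}
f_M(\rho_\eta)={\rm tr}\bigl[X\,e^{-iM\rho_\eta}\,\ketbra{+}\,e^{iM\rho_\eta}\bigr]=\cos(M\eta).
\end{equation*}
Choosing $\eta=\pi/(2M)$ gives $f_M(\rho_0)=1$ and $f_M(\rho_\eta)=0$, so the gap is a fixed constant, independent of $M$.

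Next I would lift any estimator to a distinguisher. Any algorithm $\mathcal{A}$ that estimates $f_M$ to additive error below $1/4$ with success probability at least $2/3$ immediately distinguishes $\rho_0$ from $\rho_\eta$ with the same probability and the same number of copies, by running $\mathcal{A}$ and thresholding the output at $1/2$. Then I would invoke the standard fidelity lower bound: direct evaluation gives $F(\rho_0,\rho_\eta)=\tfrac12(\sqrt{1+\eta}+\sqrt{1-\eta})=1-\Theta(\eta^2)$, hence $F(\rho_0^{\otimes k},\rho_\eta^{\otimes k})^2\geq 1-\Theta(k\eta^2)$. Fuchs--van de Graaf bounds the optimal trace distance by $\sqrt{1-F^{2k}}=O(\sqrt{k}\,\eta)$, and Helstrom then forces $k=\Omega(1/\eta^2)=\Omega(M^2)$ for any constant discrimination advantage, which yields the claimed bound.

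The main conceptual obstacle is that the statement concerns \emph{any} quantum algorithm, not one restricted to a DME-shaped subroutine, so one cannot simply black-box the known $\Omega(T^2/\varepsilon)$ DME lower bound; the reduction instead must pass through an information-theoretic primitive (state discrimination) that automatically covers arbitrary joint processing of the $k$ copies with ancillas and post-selection. The only technical subtlety is engineering the instance so that an $O(1/M)$-small Bloch-radius perturbation of $\rho$ is amplified into an $\Omega(1)$ shift in expectation value, which is precisely what letting the $M$ applications of $e^{-i\rho T}$ act coherently as $e^{-i\rho M}$ accomplishes: a Bloch-axis perturbation of size $\eta$ becomes a rotation of angle $\eta M$.
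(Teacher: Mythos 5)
Your proposal is correct, and it takes a genuinely different route from the paper. The paper constructs a hard $f_M$ from fixed-point amplitude amplification (a Grover-type circuit with $M=\Theta(1/\sqrt{w})$ controlled reflections about the unknown state) and then invokes the $\Omega(1/w)$ lower bound for the \emph{sample-based Grover's search problem} of Kimmel et al.\ as a black box. You instead build an explicit single-qubit instance in which the $M$ coherent applications of $e^{-i\rho}$ amplify a Bloch-radius perturbation of size $\eta$ into a rotation of angle $M\eta$, and then close the argument with the standard fidelity--multiplicativity / Fuchs--van de Graaf / Helstrom chain; your computation $f_M(\rho_\eta)=\cos(M\eta)$ and the bound $F(\rho_0,\rho_\eta)=1-\Theta(\eta^2)$ are both correct, and the trace-distance argument does cover arbitrary joint (entangled, adaptive) processing of the $k$ copies, so the reduction is airtight. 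What your approach buys is self-containedness and elementarity: no external lower bound and no amplitude-amplification machinery. What the paper's approach buys is the stronger Remark that the $\Omega(M^2)$ bound persists when the input is restricted to \emph{pure} states: your construction essentially needs mixed states, since for a pure $\rho=\ketbra{\psi}$ one has $e^{-iM\ketbra{\psi}}=\bm{1}+(e^{-iM}-1)\ketbra{\psi}$ with identity interleaving, so a perturbation of the state by angle $\eta$ shifts $f_M$ only by $O(\eta)$ rather than $O(M\eta)$ --- the coherent amplification disappears, and recovering it for pure states requires nontrivial interleaving unitaries of exactly the Grover/amplitude-amplification type that the paper uses. As a proof of the theorem as stated (general unknown $\rho$), your argument is complete; just be aware it does not subsume the pure-state corollary.
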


\noindent 
The proof is provided in \green{the Supplementary Materials~\ref{supple_sec:alg_lower_bound}}. 
This theorem shows that for a general unknown state $\rho$, our algorithm with state copies at most Eq.~\eqref{eq:num_copies_qalg} is optimal up to a logarithmic factor. 
That is, unlike the case of simulating sparse Hamiltonians~\cite{berry2014exponential,berry2015simulating,berry2015hamiltonian,low2017qsp}, the linear resource scaling on $M$ is not possible.
In the case of pure states, the following remark should be made.

\begin{rem}[Optimality for pure state input]
    Theorem~\ref{main_thm:lower_bound_M_alg} also holds when we restrict the input state of $f_M$ to pure states.
    This means that the required number of copies in our algorithm using $\hat{\Phi}_{\rm pure}$ matches the lower bound $\Omega(M^2)$; that is, our method for pure state input is exactly optimal.
\end{rem}

\subsection{Proof overview of Theorem~\ref{main_thm:general_result}}

The key idea to derive Theorem~\ref{main_thm:general_result} is as follows.
Our construction of $\hat{\Phi}$ is based on the Taylor expansion of $e^{-i\rho T}$.
This expansion enjoys a rapid convergence with respect to the number of terms, leading to the desired $\mathcal{O}(\log(1/\varepsilon))$-scaling. 
However, there is a large gap between each non-unitary term in the expansion and allowable operations on a quantum computer. 
Moreover, even if one can rely on classical post-processing, it is still challenging to make its measurement overhead ($\alpha$ in our case) small. 
Our approach to these challenges is the use of controlled random partial SWAP gates and Pauli X measurement followed by the classical post-processing as shown in Fig.~\ref{fig:algorithm_illustration}(a). 
These operations provide a basis set for a very efficient and exact decomposition of the truncated Taylor expansion, and we eventually obtain $\hat{\Phi}$ with a small measurement overhead from this decomposition. 
More details are provided in \green{the Supplementary Materials~\ref{supple_sec:sim_dme}}.

\section{Applications}\label{sec:application_main}

Due to the ubiquity of DME itself, our algorithm leads to a significant reduction of complexity from $\mathcal{O}({\rm poly}(1/\varepsilon))$ to $\mathcal{O}({\rm polylog}(1/\varepsilon))$ in various important applications, while preserving the scaling of other problem-specific parameters.
Here we show some examples; detailed descriptions
are provided in \green{the Supplementary Materials~\ref{supple_sec:applications} and~\ref{sec:quantum_PCA}}.

\begin{figure}[bt]
  \centering
  \begin{minipage}[b]{0.33\linewidth}
    \centering
    (a) Hamiltonian simulation
    \includegraphics[scale=0.69]{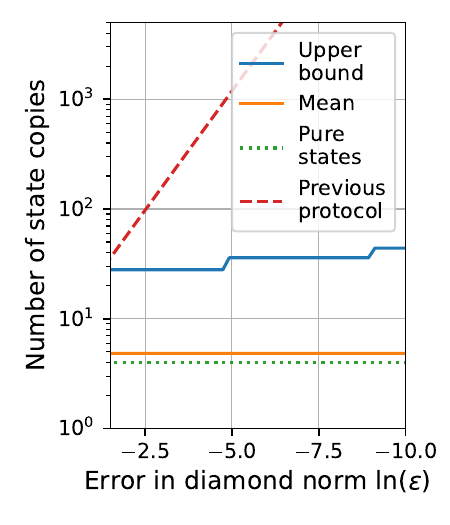}
  \end{minipage}
  \begin{minipage}[b]{0.66\linewidth}
    \centering
    (b) Quantum PCA
    \includegraphics[scale=0.71]{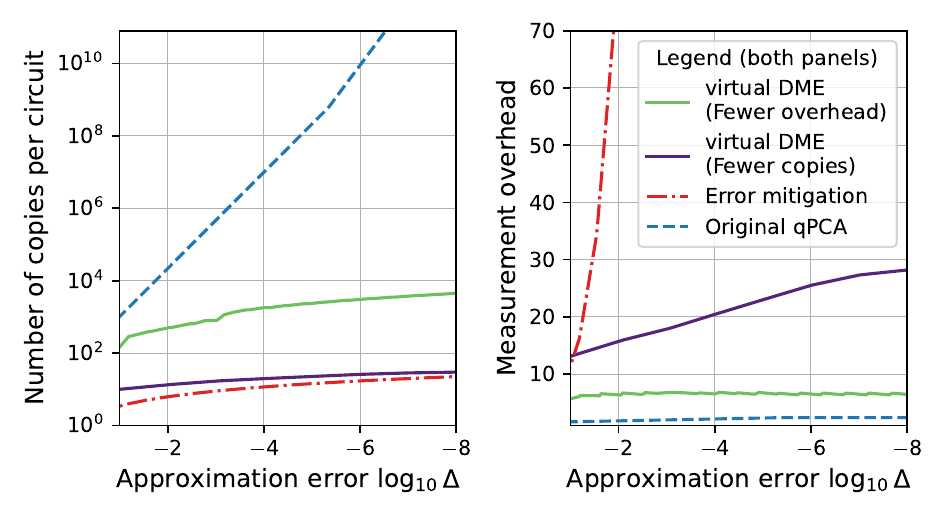}
  \end{minipage}
  \caption{Numerical simulation results. 
  (a) Hamiltonian simulation $e^{-i\rho T}$ with $T=1$, showing the necessary number of state copies against the error. 
  For general $\rho$, the blue and orange lines are the worst and average case in the virtual DME, respectively, indicating exponential reduction of the number of copies needed for the original DME method (red dashed line)~\cite{Lloyd2014-nn,Go2024-pb}. 
  The green dotted line shows the case of pure states $\rho$. 
  (b) Quantum PCA for $\rho=0.7\ketbra{\psi}+0.3\rho_{\rm err}$, where $\rho_{\rm err}$ is an arbitrary unknown pure state orthogonal to $\ketbra{\psi}$.
  We aim to estimate the expectation value of any unit-norm observable on the state $\ket{\psi}$ with the approximation error $\Delta$.
  Two methods derived from the virtual-DME-based algorithm are studied, represented by the green or purple lines.
  The red dash-dot and blue dotted lines represent the result from the quantum error mitigation (more precisely, virtual distillation)~\cite{koczor2021-esd, Huggins2021-vd} and the original qPCA~\cite{Lloyd2014-nn}, respectively.}
  \label{fig:main_numerical_res}
\end{figure}

\subsection{Hamiltonian simulation based on unknown $\rho$}
\label{sec:main-sample-based-HS}

There is a situation where we can access an unknown Hamiltonian $H$ as a black box and aim to do the simulation of $e^{-iHT}$. 
This problem can be formulated as a DME, if $H$ can be encoded into a density matrix in the form $\rho\propto (H+c\bm{1})$ for some constant $c$~\cite{Kimmel2017-hv, Go2024-pb}.
For the problem of estimating the properties of quantum states evolved by $e^{-i\rho T}$, the virtual DME allows the required number of copies to be exponentially smaller than the direct application of the previous method~\cite{Lloyd2014-nn}.
More precisely, we can estimate ${\rm tr}[O e^{-i\rho T} \sigma e^{i\rho T}]$ for any observable $O$ and state $\sigma$ by consuming at most $\mathcal{O}(T^2\log(T/\varepsilon))$ copies of $\rho$ per circuit, while the previous method consumes $\Theta(T^2/\varepsilon)$ copies per circuit. 
To clarify this advantage including the hidden constant in the scaling, we show a numerical simulation result on the number of state copies for this task in Fig.~\ref{fig:main_numerical_res}(a).
Note that the number of state copies for the random modified-quantum map in Theorem~\ref{main_thm:general_result} is randomly determined, and its upper bound is given by Eq.~\eqref{eq:hs_cal_copycount} in \green{the Supplementary Materials~\ref{supple_sec:sim_dme}}.
In Fig.~\ref{fig:main_numerical_res}(a), our virtual DME shows a clear exponential (for the upper bound of general cases) or super-exponential (for pure states) reduction.
Notably, the mean (average) number of the general state copies is almost 5, thereby offering a significant simplification for real device execution.

\subsection{Quantum principal component analysis (qPCA)}

Typically, output states of quantum processes are subjected to unknown noise.
However, we can expect that such a noisy state has a large overlap with a target pure state, when we operate on a near-ideal device~\cite{huang2022quantum,koczor2021dominant}. 
Let $\rho=(1-\lambda)\ketbra{\psi}+\lambda \rho_{\rm err}$ be a state of interest, where $\ket{\psi}$ is a target pure state (i.e., the first principal component) and $\rho_{\rm err}$ is any state orthogonal to $\ket{\psi}$. 
In qPCA tasks~\cite{Lloyd2014-nn,huang2022quantum}, we aim for estimating properties of $\ket\psi$, particularly the expectation value $\bra{\psi}O\ket{\psi}$ for an arbitrary observable $O$. 
Notably, as rigorously established by Ref.~\cite{huang2022quantum}, the original quantum algorithm~\cite{Lloyd2014-nn} for the qPCA task shows   
an exponential advantage compared to any classical algorithm with single-copy measurement outcomes for a fixed $\varepsilon$.
However, for desired errors beyond the fixed value, this algorithm requires at least $\mathcal{O}({\rm poly}(1/\varepsilon))$ state copies per circuit. 
On the other hand, some error mitigation techniques (sometimes called the virtual distillation)~\cite{koczor2021-esd, Huggins2021-vd} construct an estimate using a poly-logarithmic number of copies of $\rho$, yet at the cost of an exponential number of measurement overhead.
In contrast, our virtual DME solves the qPCA task using $\mathcal{O}({\rm polylog}(1/\varepsilon))$ state copies with a small constant measurement overhead and $\mathcal{O}({\rm polylog}(1/\varepsilon,d))$ gates per circuit. 
The key idea is to construct a filter $\mathcal{G}(\rho)\approx\ketbra{\psi}$, which is indeed of the Fourier form \eqref{eq:Fourier_form}, and to apply it to the noisy state $\rho$ for filtering out $\rho_{\rm err}$. 
Note that our algorithm does not require any information regarding noise except for the guarantee on the large overlap. 
Moreover, 
the total measurement cost in our procedure remains $\mathcal{O}(1/\varepsilon^2)$, which is the standard scaling of quantum measurement; thus, the super-exponential reduction of the measurement overhead, compared to the error mitigation~\cite{koczor2021-esd, Huggins2021-vd}, is achieved.

We here demonstrate the clear advantage of the proposed method with the virtual DME, over the original qPCA protocol~\cite{Lloyd2014-nn} and the error mitigation technique~\cite{koczor2021-esd, Huggins2021-vd}.
The detailed setup and in-depth numerical simulations are provided in \green{the Supplementary Materials~\ref{sec:quantum_PCA}}. 
Figure~\ref{fig:main_numerical_res}(b) shows the number of state copies per circuit (left) and the value of measurement overhead (right), to achieve a given approximation error.
It is clear from the figures that the proposed methods (green and purple) yield an estimator using a modest number of copies and measurements, while the existing methods require an exponentially large resource either in the number of copies or measurements.

\subsection{Emulation of an unknown unitary transformation from input-output data}

Universal quantum emulator (UQE)~\cite{marvian2024universalquantumemulator} is one of the powerful quantum machine learning algorithms for analyzing quantum data generated by quantum process~\cite{Biamonte_2017}. 
This algorithm emulates the action of an unknown unitary transformation $U$ given multiple pairs of input-output quantum states for
the unitary, $\{\ket{\phi_i^{\rm in}}, \ket{\phi_i^{\rm out}}\}$, 
without
reconstructing a classical description of $U$. Specifically, UQE coherently constructs a quantum channel which emulates the action of $U$ on a new unseen input state, using the DME-based operations $\{e^{-i \pi \ketbra{\phi_i^{\rm in}}},~e^{-i \pi \ketbra{\phi_i^{\rm out}}} \}$---the reflection for pure states $\ket{\phi}$ which is simply the case of $g(t)=\delta(t-\pi)$ in the Fourier form \eqref{eq:Fourier_form}. 
Notable features of UQE are 
its wide range of applicability~\cite{RevModPhys.91.025001,PhysRevA.78.022304,Arapinis2021quantumphysical}, including secure quantum computation~\cite{10.5555/2011670.2011674}, and its scalability in terms of the system dimension $d$, 
as its run-time (i.e., execution time) grows as $\mathcal{O}(\log d)$, unlike the standard process tomography that needs $\Omega(d)$ 
run-time~\cite{Chuang01111997,PhysRevLett.86.4195,PhysRevLett.78.390,haah2023query}.

Despite the usefulness of UQE, the original algorithm requires sequential consumption of 
$\mathcal{O}(1/\varepsilon)$ state pairs for the error $\varepsilon$ due to the use of the conventional DME~\cite{Lloyd2014-nn}. 
We prove that if the task is restricted to property estimation of the emulated state, our general algorithm with the virtual DME improves the $\varepsilon$-dependence exponentially, while preserving the favorable scaling in other parameters.
As a result, the UQE can be effectively performed using $\mathcal{O}({\rm polylog}(1/\varepsilon))$ state copies and $\mathcal{O}({\rm polylog}(1/\varepsilon,d))$ elementary gates. 
Additional details are provided in \green{the Supplementary Materials~\ref{apsec:uqe}}.

\subsection{Estimation of nonlinear properties such as entropy}

In quantum science, there are several important nonlinear functions of states. 
For instance, quantum entropy such as von Neumann entropy and relative entropy, is of central importance for entanglement quantification~\cite{calabrese2009entanglement} and quantum hypothesis testing~\cite{hiai1991relativy, ogawa2002strong}. 
To compute the entropy $S(\rho)=-\mathrm{tr}[\rho \, \ln\rho]$, which is a nonlinear function of unknown $\rho$, a standard approach is to identify the classical description of $\rho$ through tomographic measurements and then directly compute the entropy  classically~\cite{acharya2020-entropy}, leading to $\mathcal{O}(\mathrm{poly}(d))$ measurement cost. 
DME offers a promising alternative that avoids the costly tomographic process by coherently consuming $\mathcal{O}(\mathrm{poly}(1/\varepsilon))$ state copies per circuit for accuracy $\varepsilon$~\cite{wang2023-entropy}.

We propose a new algorithm for estimating von Neumann entropy and relative entropy using the virtual DME,
which requires $\mathcal{O}(\mathrm{polylog}(1/\varepsilon))$ state copies and $\mathcal{O}(\mathrm{polylog}(1/\varepsilon, d))$ gates per circuit.
The main idea is to construct a Fourier-formed nonlinear operation $\mathcal{G}(\rho) \approx \ln(\rho)$ directly in the circuit, by combining the virtual DME with well-established Hamiltonian signal processing technique~\cite{low2017qsp, low2019qubitization, gilyen2019-qsvt, Dong2022-qetu}.
Virtual DME thus acts as the crucial interface between the unknown state copies and the quantum signal processing, requiring logarithmically fewer copies with respect to $1/\varepsilon$.
Furthermore, this procedure is fully systematic, and thus it will be readily applicable to the task of calculating more general nonlinear properties of multiple quantum states, such as the distance or fidelity.
See \green{the Supplementary Materials~\ref{apsec:q entropy}} for the detail.

\subsection{Linear system solver with quantum precomputation}

The DME can be useful to significantly reduce the ``wall-clock'' time in quantum computing, given reasonably precomputed resources~\cite{Huggins2024-qf}. 
Usually, we execute circuits including $e^{-i\rho T}$ by decomposing it into a set of elementary gates, if the classical description of $\rho$ is available.
This requires $\mathcal{O}({\rm poly}(d))$ gates in the worst case of $d$-dimensional $\rho$.
However, when the state copies are reasonably computed ahead of time, the gate counts for the main circuit decrease to $\mathcal{O}({\rm poly}(1/\varepsilon,\log d))$~\cite{Huggins2024-qf} by the conventional DME and further $\mathcal{O}({\rm polylog}(1/\varepsilon,d))$ by the virtual DME.
Therefore, our algorithm unlocks a new regime for time-sensitive quantum computing when having precomputed resources.
Solving a quantum linear system of equations $A\ket{x}=\ket{b}$~\cite{harrow2009quantum,gilyen2019-qsvt,childs2017quantum,an2022quantum,lin2020optimal} is a typical application of the precomputation strategy~\cite{Huggins2024-qf}. 
As discussed in this reference paper, it is natural to consider the situation where a classical description of $\ket{b}$ is available before specifying the matrix $A$, and thus the preparation of $\ket{b}$ can be done in advance in this case.
Moreover, the state-of-the-art algorithm~\cite{lin2020optimal} with minimal calls to $A$ uses multiple $e^{-i\pi\ketbra{b}}$. 
As a result, we prove that the virtual DME combined with this efficient quantum solver can estimate the properties of the solution $\ket{x}$ using quantum circuits with $\mathcal{O}({\rm polylog}(1/\varepsilon,d))$ gates.
We refer to \green{the Supplementary Materials~\ref{apsec:precomp}} for more details.

\section{Discussion}

We have shown that the DME, the conversion from a state to the corresponding Hamiltonian evolution with a unit time, can be effectively simulated with a few state copies.
Also, the required number of elementary quantum operations is the same as that of state copies up to only a logarithmic factor in the dimension of a target state.
The crucial component for this achievement is non-physical processes, which can be efficiently simulated on standard quantum computers using simple classical post-processing, without directly implementing them on a circuit (we thus call our procedure the virtual DME).
A simple replacement of the conventional DME with the virtual DME offers an exponential improvement over
existing methods in a wide range of property estimation tasks, and we confirm this by deriving the required amounts of quantum resources for Hamiltonian simulation, quantum PCA, quantum emulation, estimation of nonlinear functions such as entropy, and quantum linear system solver with quantum precomputation.

Our results recast DME as a versatile and state-dependent primitive, potentially enabling
much simpler device-level implementations across various applications beyond the examples we presented here, such as 
quantum machine learning~\cite{cong2016quantum,brandao2017quantum,PhysRevLett.113.130503, zhao2019bayesian, lloyd2020quantum,liu2022quantum}, dynamic programming over states~\cite{PhysRevLett.134.180602}, state discrimination~\cite{Lloyd2014-nn,lloyd2020quantum,gilyen2022pgm}, process tomography~\cite{Lloyd2014-nn}, and entanglement-spectrum analysis~\cite{PhysRevX.6.041033}.
Beyond simplification, treating DME as an efficiently implementable primitive expands the design space for quantum algorithms. 
When a dense yet preparable Hamiltonian is encoded as a quantum state, our approach simulates its time evolution with resources governed by the number of state copies, not by sparsity. 
By preparing such quantum states as consumable resources in advance, virtual DME provides building blocks for new forms of quantum advantage under realistic wall-clock constraints, advancing the paradigm of quantum precomputation~\cite{Huggins2024-qf}. 
Moreover, although we focus on a single observable in this paper, we expect that our algorithm can be extended to multiple observable cases by e.g., incorporating shadow tomography~\cite{aaronson2018shadow,Huang_2020,Grier2024sampleoptimal,PRXQuantum.6.010336,grier2024principal}.

Finally, the virtual DME provides a crucial piece for fully quantum approaches to experimental science.
Recent progress in quantum technologies leads us to envision that quantum data is generated by quantum experimental apparatus and then processed by quantum computers. 
Can we have a general and systematic approach to perform such coherent quantum data processing?
Our results suggest a promising solution to this problem.
That is, the virtual DME unlocks the grand unified framework of quantum algorithms~\cite{martyn2021grand} for quantum data processing, as it effectively provides a highly efficient procedure for encoding quantum data (i.e., block encoding of $\rho$) to quantum circuits.

\section*{Acknowledgements}
We thank the fruitful discussions with Kosuke Ito and Suguru Endo.
K.W. was supported by JSPS KAKENHI Grant Number JP24KJ1963. 
This work was supported by MEXT Quantum Leap Flagship Program Grants 
No. JPMXS0118067285 and No. JPMXS0120319794, and JST Grant No. JPMJPF2221.

\clearpage

\appendix

\begin{center}
	\Large
	\textbf{Supplementary Materials for\\
    ``State-to-Hamiltonian conversion with a few copies''}\\[0.5em]
    \large Kaito Wada, Jumpei Kato, Hiroyuki Harada, Naoki Yamamoto
\end{center}

\addtocontents{toc}{\protect\setcounter{tocdepth}{3}}

To help the reader navigate the Supplementary Materials, we provide the following outline. 
Section~\ref{sec:Preliminaries} reviews the theoretical foundations underlying our main results. Section~\ref{sec:Virtual_DME} presents the detailed proofs of the main theorems in the main text. Section~\ref{sec:nearlyopt_qalg} introduces a general algorithmic framework for property estimation tasks together with a proof of its optimality. Section~\ref{supple_sec:non_physical_ope} discusses the non-physical operations responsible for the exponential improvement in our results over the previous DME protocols.
Section~\ref{supple_sec:applications} and Section~\ref{sec:quantum_PCA} summarize theoretical and numerical analysis in five property estimation tasks, and Section~\ref{sec:proofs} provides further theoretical details, including the proofs of the main theorems in Sections~\ref{supple_sec:applications} and~\ref{sec:quantum_PCA}.
\tableofcontents

\beginsupplement

\section{Preliminaries}\label{sec:Preliminaries}

\subsection{Physical and non-physical quantum processes}
\label{sec:def of physical and non-physical}

First, we list the physical and non-physical quantum processes appearing in this work. 
Let us consider finite-dimensional Hilbert spaces $\mathcal{H}_1$ and $\mathcal{H}_2$ for describing quantum systems, where $\mathcal{H}_2$ can be taken as $\mathcal{H}_1=\mathcal{H}_2$.
We write $\mathsf{L}(\mathcal{H})$ as the set of all linear operators on $\mathcal{H}$.
A linear map between the spaces of linear operators is often called a \textit{superoperator}.
\\

\noindent
\textbf{Physical process}
\begin{itemize}
    \item \textbf{Completely positive and trace-preserving (CPTP) map}. A CPTP map $\mathcal{E}$ is a superoperator that satisfies the completely positive and trace-preserving properties.
    This is the most general (deterministic) quantum process.
    It can always be written as the Kraus form: $\mathcal{E}(A)=\sum_i K_i AK_i^\dagger$ for any $A\in \mathsf{L}(\mathcal{H}_1)$, where $K_i$ is a linear operator from $\mathcal{H}_1$ to $\mathcal{H}_2$ such that $\sum_i K_i^\dagger K_i=\bm{1}$ for the identity $\bm{1}$. 
    Any CPTP map can be implemented on a quantum computer, while this may require a large number of ancilla qubits and elementary gates.
    
\end{itemize}

\noindent
\textbf{Non-physical process}
\begin{itemize}
    \item \textbf{Hermitian-preserving (HP) map}. A HP map $\mathcal{E}_{\rm h}$ is a superoperator that satisfies $\mathcal{E}_{\rm h}(A)^\dagger = \mathcal{E}_{\rm h}(A^\dagger)$ for any $A\in \mathsf{L}(\mathcal{H}_1)$. 
    It always returns a Hermitian operator for an arbitrary Hermitian input.
    Any CP map is automatically an HP map, but the inverse direction does not hold in general.
    An important example is $\mathcal{E}_{\rm h}(A)={\rm tr}_{\rm anc}[X_{\rm anc}A]$, where $X_{\rm anc}$ is the Pauli X operator acting only on a 1-qubit subsystem on $\mathcal{H}_1$, and ${\rm tr}_{\rm anc}$ is the partial trace on the subsystem.
    
    \item \textbf{Asymmetric map} $U\bullet V^\dagger$. 
    This is a superoperator that maps $A\mapsto UAV^\dagger$ for any $A\in\mathsf{L}(\mathcal{H}_1)$, where $U$ and $V$ are distinct linear operators from $\mathcal{H}_1$ to $\mathcal{H}_2$.
    This map is neither CP nor HP.
\end{itemize}

\noindent
Although non-physical processes cannot be directly implemented on a quantum computer, one may effectively simulate them with the help of classical post-processing.
We will discuss this with a clear example in Section~\ref{supplesec:example_nonphysical}.

\subsection{SWAP operation}
\label{sec:swap-operation}

Here, we list several properties of the SWAP operation.
The SWAP operator $S_{12}$ is defined as
\begin{equation}
    S_{12}:=\sum_{i,j}\ket{e_i}\bra{e_j}_{1}\otimes \ket{e_j}\bra{e_i}_{2},
\end{equation}
where $\{\ket{e_j}\}$ is an orthonormal basis set.
When it is clear from the context, we omit the subscript of $S_{12}$ for simplicity. 
\\

\noindent
\textbf{SWAP trick}:
For arbitrary and same-dimensional operators $A$ and $B$, the products $AB$ and $BA$ are realized as
\begin{equation}\label{eq:swap_trick}
    {\rm tr}_{2}\left[S_{12}(A_1\otimes B_2)\right]=BA,~~~
    {\rm tr}_{2}\left[(A_1\otimes B_2)S_{12}\right]=AB, 
\end{equation}
where ${\rm tr}_{2}$ denotes the partial trace over the second register. 
This relation can be generalized as follows.
For any operators $A,B$ and unitary $e^{-i\delta S_{12}}$, we have
\begin{align}
    {\rm tr}_2\left(e^{-i\delta S_{12} }A_1\otimes B_2\right)&={\rm tr}_2\left(\cos(\delta) A_1\otimes B_2 -i\sin(\delta) S_{12} A_1\otimes B_2\right)\notag\\
    &={\rm tr}[B]\cos(\delta)A-i\sin(\delta)BA\notag\\
    &={\rm tr}[B]\left(\cos(\delta)\bm{1}-i\sin(\delta)\frac{B}{{\rm tr}[B]}\right)A,
\end{align}
where $\bm{1}$ denotes an identity operator. 
Similarly,
\begin{align}
    {\rm tr}_2\left(A_1\otimes B_2 e^{i\delta S_{12} }\right)&={\rm tr}[B]A\left(\cos(\delta)\bm{1}+i\sin(\delta)\frac{B}{{\rm tr}[B]}\right).
\end{align}
In particular, if $B$ is a density matrix $\rho$, we obtain the following useful relations that we often use in this work:
\begin{equation}\label{eq:generalized_swap_tric_1}
    {\rm tr}_2\left(e^{-i\delta S_{12}}A_1\otimes \rho_2\right)=\left(\cos(\delta)\bm{1}-i\sin(\delta)\rho\right)A,
\end{equation}
\begin{equation}\label{eq:generalized_swap_tric_2}
    {\rm tr}_2\left(A_1\otimes \rho_2e^{i\delta S_{12} }\right)=A\left(\cos(\delta)\bm{1}+i\sin(\delta)\rho\right).
\end{equation}

\subsection{LMR protocol for density matrix exponentiation}
\label{sec:LMR}

Here, we review the conventional procedure proposed by Lloyd, Mohseni, and Rebentrost \cite{Lloyd2014-nn}, which we call the LMR protocol, for approximately implementing the density matrix exponential $e^{-i\rho T}$ from multiple copies of an unknown state $\rho$; that is, with the notation 
\begin{equation}
     \mathcal{U}[V](\bullet):=V\bullet V^\dagger, 
\end{equation}
the target is $\mathcal{U}[e^{-i\rho T}]$. 
The original approach uses a CPTP map $\Lambda_{\delta T}$ defined as
\begin{equation}
    \Lambda_{\delta T} (\bullet)={\rm tr}_{2}\left[e^{-i\delta T S}(
    \bullet\otimes \rho)e^{i\delta T S}\right],
\end{equation}
where again $S$ is the SWAP operator.
We note that this operation uses a unitary channel (thus, a physical operation) between a target system and a single copy of $\rho$. 
Then, for any $\sigma$,
\begin{equation}
    \Lambda_{\delta T}(\sigma)=\sigma-i\delta T[\rho,\sigma]+\mathcal{O}((\delta T)^2)=e^{-i\rho \delta T}\sigma e^{i\rho \delta T}+\mathcal{O}((\delta T)^2)
\end{equation}
holds. 
Therefore, sequential $N$ applications of $\Lambda_{\delta T}$ implement the unitary channel $\mathcal{U}[e^{-i\rho T}]$ ($T=N\delta T$) up to $\mathcal{O}(T^2/N)$ error.
In other words, we can implement the unitary $e^{-i\rho T}$ up to $\varepsilon$ error, using $N=\mathcal{O}(T^2/\varepsilon)$ copies of an unknown state $\rho$.
The gate complexity of this implementation is $\mathcal{O}(N\log d)$ for the dimension $d$ of $\rho$, so the logarithmic dependence on $d$ is achieved. 
The number of copies matches the lower bound $\Omega(T^2/\varepsilon)$, which was proven in Refs.~\cite{Kimmel2017-hv,Go2024-pb} for the case where we can use any physical quantum process (i.e., any CPTP map) between a target system and copies of $\rho$. 
The above procedure is optimal in this sense.

\subsection{Illustrative example for a non-physical process and its overhead}
\label{supplesec:example_nonphysical}

In the algorithm proposed in this paper, the non-physical process and resulting measurement overhead play a crucial role.
Here, we give a simple example of a non-physical process for the case of simulating a single-qubit gate. 
Let us consider a single-qubit gate $e^{-i\theta X}$ with Pauli $X$ and $\theta=\pi/4$.
We can decompose the unitary channel of 
$e^{-i\pi X/4}=(I-iX)/\sqrt{2}$ as
\begin{equation}\label{supple 1.3 example}
    e^{-i\pi X/4}A e^{i\pi X/4}=2\left[\frac{1}{4}A+\frac{1}{4}(-iX)A+\frac{1}{4}A(iX)+\frac{1}{4}XAX\right]
    \equiv 2\sum_{i=1}^4 \frac{1}{4} U_i A V_i^\dagger,
\end{equation}
where $U_i$ and $V_i$ are the unitary operators in the middle expression.
This equality shows that we can take a probabilistic (non-physical) approach to simulate the (physical) unitary channel of $e^{-i\pi X/4}$; that is, we apply the map $U_i\bullet V_i^\dagger$ to $A$ with the probability $1/4$, take their mean, and finally multiply the constant factor $2$. 
Among the four maps $U_i \bullet V_i^{\dagger}$, only the identity map and the conjugation $A \mapsto XAX$ are CPTP, and thus they can be physically implemented. 
However, the other two maps $A \mapsto (-iX)A$ and $A \mapsto A(iX)$ are not even Hermitian preserving. 
Therefore, they cannot be directly implemented on a quantum circuit in general.

If we only need expectation values at the end, such a non-physical process can be effectively simulated with the help of classical post-processing.
To see this, we revisit the \textit{Hadamard test} that can be used to obtain ${\rm Re}\left({\rm tr}[OU \sigma V^\dagger]\right)$ for distinct unitaries $U,V$, a general observable $O$, and any input state $\sigma$. 
From the circuit construction of the Hadamard test, we find that the expectation value ${\rm tr}[Oe^{-i\pi X/4}\sigma e^{i\pi X/4}]$ can be expanded as 
\begin{align}
    {\rm tr}\left[Oe^{-i\pi X/4}\sigma e^{i\pi X/4}\right]
    &=2\sum_{i=1}^4 \frac{1}{4} {\rm tr}[OU_i \sigma V_i^\dagger]=2\sum_{i=1}^4 \frac{1}{4} {\rm Re}\left({\rm tr}[OU_i \sigma V_i^\dagger]\right)\notag\\
    &=2\sum_{i=1}^4 \frac{1}{4} {\rm tr}\left[(X_{\rm anc}\otimes O)\cdot \mathcal{U}\left[\begin{pmatrix}
        U_i&0\\
        0&V_i
    \end{pmatrix}\right]
    (\ketbra{+}_{\rm anc}\otimes \sigma)\right]\notag\\
    &={\rm tr}\left(O\sum_{i=1}^4 \frac{1}{4} {\rm tr}_{\rm anc}\left[(2X_{\rm anc}\otimes \bm{1})\cdot \mathcal{U}\left[\begin{pmatrix}
        U_i&0\\
        0&V_i
    \end{pmatrix}\right]
    (\ketbra{+}_{\rm anc}\otimes \sigma)\right]\right), 
\end{align}
where $X$ is the Pauli X matrix and $\ket{+}$ is the $+1$ eigenstate of $X$. 
Also, ${\rm tr}_{\rm anc}$ means the trace out of the single ancilla system. 
Therefore, we have
\begin{align}
\label{supple 1.3 example final expression}
    e^{-i\pi X/4}\bullet e^{i\pi X/4}&=\sum_{i=1}^4 \frac{1}{4} {\rm tr}_{\rm anc}\left[(2X_{\rm anc}\otimes \bm{1})\cdot \mathcal{U}\left[\begin{pmatrix}
        U_i&0\\
        0&V_i
    \end{pmatrix}\right]
    (\ketbra{+}_{\rm anc}\otimes \bullet)\right]\notag\\
    &=\sum_{s=\pm} (2s) {\rm tr}_{\rm anc}\left[(\ketbra{s}\otimes \bm{1})\cdot \sum_{i=1}^4 \frac{1}{4} \mathcal{U}\left[\begin{pmatrix}
        U_i&0\\
        0&V_i
    \end{pmatrix}\right]
    (\ketbra{+}_{\rm anc}\otimes \bullet)\right].
\end{align}
The final expression has the following clear interpretation: we can simulate the target unitary channel with a random quantum process and Pauli X measurement followed by classical post-processing that multiplies $2$ with the outcome $\pm 1$.
In addition, the multiplication of the factor $2$ amplifies the estimation variance of the Hadamard test; we call this factor the {\it measurement overhead}. 
That is, the factor $2$ becomes the measurement overhead in simulating the target unitary channel via the expansion with non-physical processes in Eq.~\eqref{supple 1.3 example}.

In the next subsection, we present a general framework, called linear combination of superoperators, that extends the above discussion and allows for a systematic analysis.

\subsection{Linear combination of superoperators (LCS)}
\label{sec:theory_of_LCS}

We review a theoretical framework called linear combination of superoperators (LCS) introduced by~\cite{kato2024exponentially,yu2024exponentially}, which uses superoperators such as $U\bullet V^\dagger$ for some unitaries $U, V$~\cite{sun2022perturbative, sun2025lcs}. 
The framework of LCS significantly simplifies the proofs of our main results.

Let $\mathcal{W}_{\nu}(\bullet)$ be a sequence of superoperators $\{{\Phi}_{\nu,k}(\bullet)\}_{k=1}^{\#_\nu}$ ($\#_\nu$ means the number of superoperators involved in $\mathcal{W}_{\nu}$): 
\begin{equation}
    \mathcal{W}_{\nu}:={\Phi}_{\nu,\#_{\nu}}\circ \cdots {\Phi}_{\nu,2}\circ {\Phi}_{\nu,1}.
\end{equation}
Each superoperator $\Phi_{\nu,k}(\bullet)$ is a CPTP
map, or it may have the form of $\sum_i q_i U_i\bullet V_i^\dagger$ with unitary operators $U_i, V_i$ and a probability distribution $\{q_i\}$, like the case of example \eqref{supple 1.3 example}. 
The theory of LCS systematically provides a procedure to simulate any convex combination of superoperators $\{\mathcal{W}_\nu\}$ using quantum circuits with only a single additional qubit. 
Specifically, for a given probability distribution $\{p_{\nu}\}$, 
if $\sum_\nu p_{\nu}\mathcal{W}_{\nu}$ is a Hermitian-preserving map, then we can systematically find CPTP 
maps $\{\mathcal{\widetilde{W}}_{\nu}\}$ satisfying 
\begin{equation}\label{eq:LCS_formula}
    \sum_\nu p_\nu \mathcal{W}_{\nu}(\bullet)=\sum_{\nu} p_{\nu}{\rm tr}_{\rm anc}\left[(X_{\rm anc}\otimes \bm{1})\mathcal{\widetilde{W}}_{\nu}\left(\ketbra{+}_{\rm anc}\otimes \bullet\right)\right].
\end{equation}
Here, $\mathcal{\widetilde{W}}_{\nu}$ is obtained from $\mathcal{W}_{\nu}$ as
\begin{equation}
\mathcal{\widetilde{W}}_{\nu}:=\widetilde{\Phi}_{\nu,\#_\nu}\circ \cdots \circ \widetilde{\Phi}_{\nu,2}\circ \widetilde{\Phi}_{\nu,1}.
\end{equation}
Each $\widetilde{\Phi}$ is an extended superoperator of $\Phi$ defined as follows; if $\Phi$ is a CPTP
map,
\begin{equation}\label{eq:typeA_translated}
    \widetilde{\Phi}:=\mathcal{I}_{\rm anc}\otimes \Phi
\end{equation}
for the identity channel $\mathcal{I}_{\rm anc}$, and if $\Phi$ is a convex combination of asymmetric forms $\sum_i q_i U_i\bullet V_i^\dagger$ ,
\begin{align}
\label{eq:typeB_translated}
    \widetilde{\Phi}(\bullet)
    &= \sum_{i}q_i \Big(\ketbra{0}_{\rm anc}\otimes U_i+\ketbra{1}_{\rm anc}\otimes V_i\Big)\bullet \left(\ketbra{0}_{\rm anc}\otimes U^\dagger_i+\ketbra{1}_{\rm anc}\otimes V^\dagger_i\right) \notag\\
    &= \sum_i q_i \, \mathcal{U}\Big[\ketbra{0}_{\rm anc}\otimes U_i+\ketbra{1}_{\rm anc}\otimes V_i\Big](\bullet).
\end{align}
These translation rules are summarized in Table~\ref{tab:translationlist}.

\begin{table}
        \centering
        \begin{tabular}{c|c}
                superoperator $\Phi$ & circuit diagram for $\widetilde{\Phi}$\\ 
                \hline
                \hline
                \begin{tabular}{c}
                     CPTP map $\mathcal{E}$
                \end{tabular}
                ~~~&
                \begin{tabular}{c}
                     \Qcircuit @C=1em @R=1.2em {
                     \\
                     & \qw & \qw & \qw\\
                     &{/} \qw & \gate{\mathcal{E}} & \qw
                     \\
                     \\
                     }
                \end{tabular}
                \\
                \hline
                \begin{tabular}{c}
                     Convex combination of asymmetric forms\\
                     $\sum_i q_i U_i \bullet V^\dagger_i $
                \end{tabular} &~~$\sum_i q_i$
                \begin{tabular}{c}
                     \Qcircuit @C=1em @R=1.2em {
                     \\
                      &\qw& \ctrlo{1} & \ctrl{1} &\qw\\
                       &{/}\qw& \gate{U_i}  & \gate{V_i} & \qw
                     \\
                     \\
                     }
                \end{tabular}
                \\
                \hline
        \end{tabular}
    \caption{Translation rule from superoperators $\Phi$ into quantum circuits for $\widetilde{\Phi}$. The top (bottom) line represents the ancilla (system) qubit(s).
    }
    \label{tab:translationlist}
\end{table}

The LCS formula Eq.~\eqref{eq:LCS_formula}, which is a generalization of Eq.~\eqref{supple 1.3 example final expression}, can be proved as follows.
We first introduce the following notation: for any operator $B$ acting on the ancilla 1-qubit and the other (multiple) qubits, we denote $B$ as in the $2\times 2$ matrix form
\begin{equation}
    \begin{pmatrix}
        B_{00}&B_{01}\\
        B_{10}&B_{11}
    \end{pmatrix},
\end{equation}
where $B_{ij}:={(\bra{i}_{\rm anc}\otimes \bm{1} )} \, B \, {(\ket{j}_{\rm anc}\otimes \bm{1})}$ for the computational basis $\{\ket{i}_{\rm anc}\}$ on the ancilla system.
Using this notation, for any input operator $\ketbra{+}_{\rm anc}\otimes A$, we write it as
\begin{equation}
    \ketbra{+}_{\rm anc}\otimes A=\frac{1}{2}\begin{pmatrix}
        *&A\\
        A&*
    \end{pmatrix}.
\end{equation}
Because we are not interested in the diagonal elements, we use $*$ for simplicity.
Also, we define the anti-linear map $\mathcal{J}:A\mapsto A^\dagger$ for any operator $A$. 
In the following, we often use the following relation: for any Hermitian-preserving linear map $\mathcal{E}$, 
\begin{equation}
    \mathcal{J}\circ \mathcal{E}\circ \mathcal{J}=\mathcal{E}.
\end{equation}

When $\Phi$ is a CPTP map, the action of $\widetilde{\Phi}$ (defined by Eq.~\eqref{eq:typeA_translated}) can be written as
\begin{equation}
\label{CPTP expression}
    \widetilde{\Phi}\left(\begin{pmatrix}
        A_{00}&A_{01}\\
        A_{10}&A_{11}
    \end{pmatrix}\right)=\begin{pmatrix}
        *&\Phi(A_{01})\\
        \Phi(A_{10})&*
    \end{pmatrix}=\begin{pmatrix}
        *&\Phi(A_{01})\\
        \mathcal{J}\circ\Phi\circ\mathcal{J}(A_{10})&*
    \end{pmatrix},
\end{equation}
where the second equality holds due to the Hermitian-preserving property of CPTP maps.
On the other hand, if $\Phi=\sum_i q_iU_i\bullet V_i^\dagger$, the action of $\widetilde{\Phi}$ (defined by Eq.~\eqref{eq:typeB_translated}) takes the same form as Eq.~\eqref{CPTP expression}:
\begin{equation}
    \widetilde{\Phi}\left(\begin{pmatrix}
        A_{00}&A_{01}\\
        A_{10}&A_{11}
    \end{pmatrix}\right)=\begin{pmatrix}
        *&\Phi(A_{01})\\
        \mathcal{J}\circ \Phi\circ \mathcal{J}(A_{10})&*
    \end{pmatrix}.
\end{equation}
Therefore, in both cases of $\Phi$, we have
\begin{align}
    &\sum_\nu p_{\nu}\mathcal{\widetilde{W}}_{\nu}\left(\ketbra{+}_{\rm anc}\otimes A\right)\notag\\
    &=\sum_\nu p_{\nu}\frac{1}{2}\begin{pmatrix}
        *&({\Phi}_{\nu,\#_\nu}\circ \cdots\circ {\Phi}_{\nu,2}\circ {\Phi}_{\nu,1})(A)\\
        (\mathcal{J}\circ {\Phi}_{\nu,\#_\nu}\circ \cdots\circ {\Phi}_{\nu,2}\circ {\Phi}_{\nu,1}\circ \mathcal{J})(A)&*
    \end{pmatrix}\notag\\
    &=\frac{1}{2}\begin{pmatrix}
        *&\sum_\nu p_{\nu}\mathcal{{W}}_{\nu}(A)\\
        \sum_\nu p_{\nu}\mathcal{{W}}_{\nu}(A)&*
    \end{pmatrix},
\end{align}
where we used $\mathcal{J}^2=\mathcal{I}$ in the first equality. 
In the final equality, we used the Hermitian-preserving assumption on $\sum_\nu p_{\nu}\mathcal{W}_{\nu}$, leading to
\begin{equation}
    \sum_\nu p_\nu \mathcal{J}\circ \mathcal{W}_{\nu}\circ \mathcal{J}=\mathcal{J}\circ\left(\sum_\nu p_\nu  \mathcal{W}_{\nu}\right)\circ \mathcal{J}=\sum_\nu p_\nu  \mathcal{W}_{\nu}.
\end{equation}
From the final expression, we can directly confirm that Eq.~\eqref{eq:LCS_formula} holds.
For the better understanding, Fig.~\ref{fig:lcs-schematic} provides a simple example for $\sum_{\nu} p_{\nu} \mathcal{W}_\nu = \mathcal{W}$ and $p_{\nu}=\delta_{\nu 1}$.

\begin{figure}[htb]
 \centering
\includegraphics[width=0.8\linewidth,page=1]{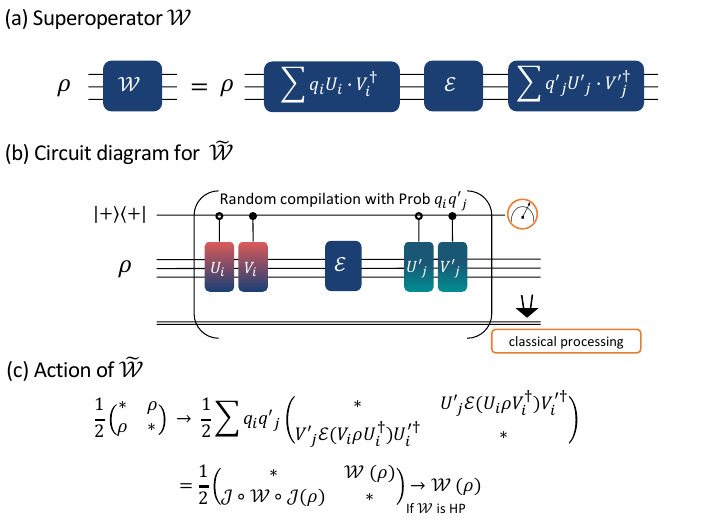}
 \caption{An example of LCS simulation. (a) A target superoperator $\mathcal{W}$, which is composed of 3 superoperators ($p_{\nu}=\delta_{\nu 1}$ and $\#_1=3$). 
 (b) A circuit realization of $\widetilde{\mathcal{W}}$ on the system $\rho$ and 1-qubit ancilla system $\ketbra{+}$. Controlled $U_i, V_i, U_j'$ and $V_j'$ are randomly compiled according to the probability $q_iq_j'$. 
 $X$-measurement is performed on the single ancilla system. 
 This measurement with outcomes $\pm 1$ forms the non-physical operation ${\rm tr}_{\rm anc}[X_{\rm anc}\bullet]$ via classical post-processing in expectation. 
 (c) Action of $\widetilde{\mathcal{W}}$. We can apply the target operation on the off-diagonal component in the dilated quantum system. The $X$-measurement and classical processing allow us to access the off-diagonal component virtually.}
 \label{fig:lcs-schematic}
\end{figure}

\subsection{Diamond norm}
For a given superoperator $\mathcal{X}$, which is a linear map between the spaces of (bounded) linear operators, 
the diamond norm~\cite{kitaev1997quantum} of $\mathcal{X}$ is defined as
\begin{equation}
    \|\mathcal{X}\|_{\diamond}:=\max_{R} \|\mathcal{X}\otimes \mathcal{I}_{R}\|_{1\to 1},
\end{equation}
where the maximization is taken over all ancilla system $R$, and $\mathcal{I}_{R}$ is the identity channel on the system $R$.
The norm $\|\bullet\|_{1\to 1}$ is the (induced) $1\to 1$ norm~\cite{10.5555/2011608.2011614}.
From the definition, we can easily check that $\|\mathcal{Y}\circ \mathcal{X}\|_{\diamond}\leq \|\mathcal{Y}\|_{\diamond}\|\mathcal{X}\|_{\diamond}$ holds. 
Also, we can prove $\|\mathcal{X}\otimes \mathcal{Y}\|_{\diamond}=\|\mathcal{X}\|_{\diamond}\|\mathcal{Y}\|_{\diamond}$ \cite{watrous2018theory}.

For the subsequent analysis, we here provide useful inequalities for the difference of two superoperators.
For any positive integer $K$, let $\mathcal{E}_1,...,\mathcal{E}_{K+1}$ be CPTP maps, and let $\{\mathcal{A}_{l}\}_{l=1}^K,\{\mathcal{B}_l\}_{l=1}^K$ be superoperators.
Defining 
\begin{equation}
    \mathcal{X}:=\mathcal{E}_{K+1}\circ \mathcal{A}_{K}\circ \mathcal{E}_{K}\circ\cdots \circ \mathcal{A}_1\circ \mathcal{E}_1,
\end{equation}
\begin{equation}
    \mathcal{Y}:=\mathcal{E}_{K+1}\circ \mathcal{B}_{K}\circ \mathcal{E}_{K}\circ\cdots \circ \mathcal{B}_1\circ \mathcal{E}_1,
\end{equation}
we can bound the difference between $\mathcal{X}$ and $\mathcal{Y}$ as
\begin{equation}\label{eq:xy_error_bound_general}
    \|\mathcal{X}-\mathcal{Y}\|_{\diamond}\leq \sum_{k=0}^{K-1} \left(\left[\prod_{l=1}^{K-k-1}\|\mathcal{A}_l\|_{\diamond}\right]\|\mathcal{A}_{K-k}-\mathcal{B}_{K-k}\|_{\diamond} \left[\prod_{l=K-k+1}^K\|\mathcal{B}_{l}\|_{\diamond}\right] \right)
\end{equation}
due to the triangle inequality and the fact that $\|\mathcal{E}\|_{\diamond}=1$ for any CPTP map $\mathcal{E}$.
An example is the case where each superoperator $\mathcal{A}_l$ is CPTP e.g., 
\begin{equation}
    \mathcal{A}_l=\mathcal{U}_l\otimes \mathcal{I}_l
\end{equation}
for some unitary channel $\mathcal{U}_l$ and identity channel $\mathcal{I}_l$.
If a superoperator $\mathcal{B}_l$ approximates the CPTP map $\mathcal{A}_l$ as $\|\mathcal{A}_l-\mathcal{B}_l\|_{\diamond}\leq \Delta$ for all $l$, 
then we have
\begin{equation}
\label{diamond_norm_of_xy}
    \|\mathcal{X}-\mathcal{Y}\|_{\diamond}\leq K\Delta (1+\Delta)^{K-1}\leq (K\Delta) e^{K\Delta}
\end{equation}
from Eq.~\eqref{eq:xy_error_bound_general}.
This is summarized as in Fig.~\ref{fig:diamond_norm_of_xy}.

\begin{figure}[htb]
 \centering
 \includegraphics[scale=0.45]{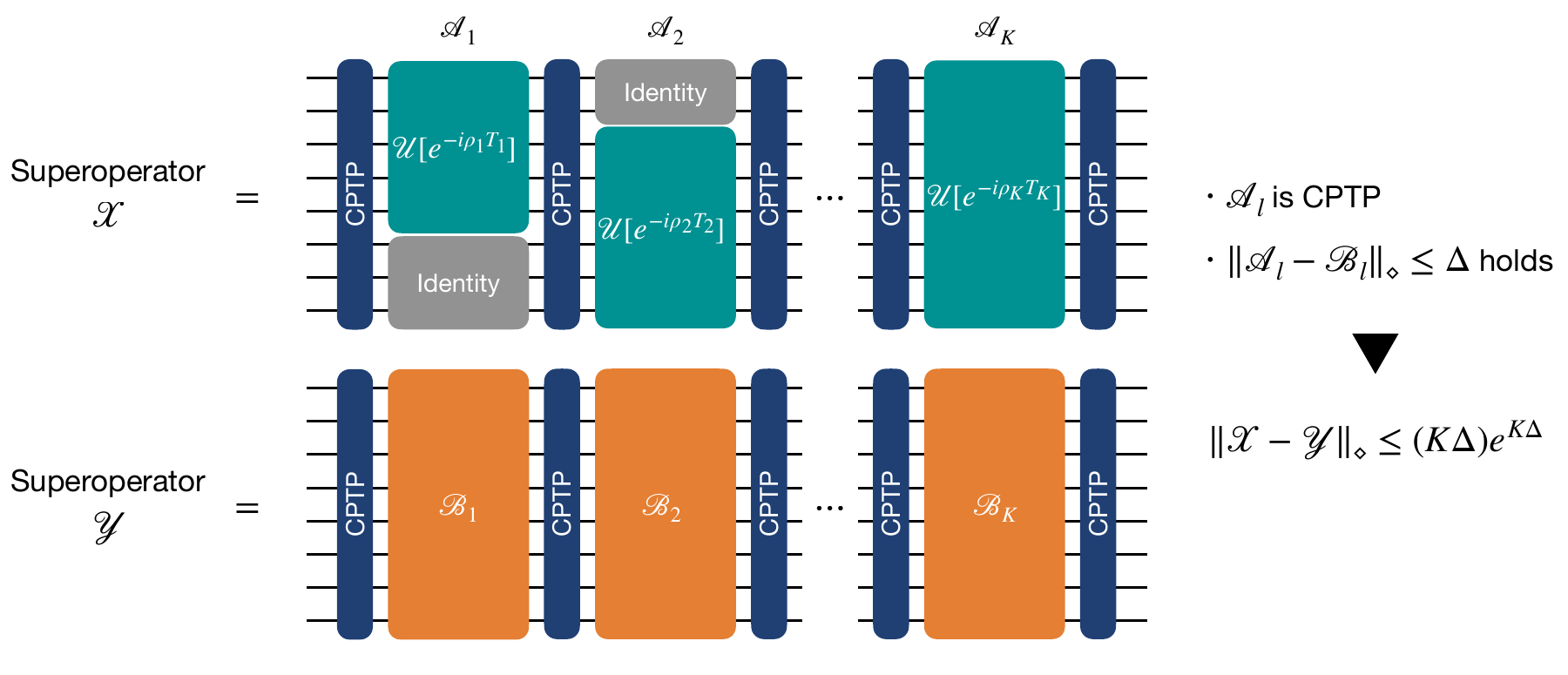}
 \caption{The distance between two superoperators $\mathcal{X}$ and $\mathcal{Y}$.
 If $\{\mathcal{A}_l\}$ are CPTP (not limited to the unitary channel for $e^{-i\rho_lT_l}$, yet this will be taken as $\mathcal{A}_l$ later) and the superoperators $\{\mathcal{B}_l\}$ satisfy  $\|\mathcal{A}_l-\mathcal{B}_l\|_{\diamond}\leq \Delta$ for an approximation error $\Delta>0$, then the error bound for $\mathcal{X},\mathcal{Y}$ holds.}
 \label{fig:diamond_norm_of_xy}
\end{figure}

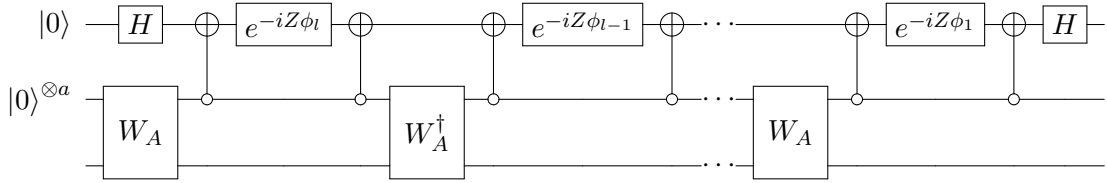
\begin{figure}[bt]
\centering
\begin{tabular}{c}
\\
\\
~~~~~\Qcircuit @C=0.6em @R=1.4em {
  \lstick{\ket{0}}&\gate{H}                           & \targ  & \gate{e^{-iZ\phi_l }} & \targ 
  &\qw & \targ & \gate{e^{-iZ\phi_{l-1} }} & \targ
  &\qw&\cdots&& \qw & \targ  & \gate{e^{-iZ\phi_1 }} & \targ
  &\gate{H} & \qw \\
  \lstick{\ket{0}^{\otimes a}}&\multigate{1}{W_A}  & \ctrlo{-1} & \qw & \ctrlo{-1}  
  &\multigate{1}{W_A^\dagger}& \ctrlo{-1}& \qw & \ctrlo{-1} 
  &\qw&\cdots& & \multigate{1}{W_A}& \ctrlo{-1}& \qw & \ctrlo{-1}
  & \qw&\qw \\
                              &\ghost{W_A}         & \qw & \qw & \qw                             
  &\ghost{W_A^\dagger}& \qw & \qw & \qw
  &\qw &\cdots&& \ghost{W_A} & \qw & \qw & \qw
  & \qw&\qw \\
  }
\\
\\
\end{tabular}
\caption{Quantum circuit for quantum singular value (eigenvalue) transformation for real polynomials $P$ of odd degree $l$.
$W_A$ denotes a $(\beta,a,0)$-block-encoding of a Hermitian operator $A$.
The NOT gate is controlled by $\ket{0}^{\otimes a}$ of the $a$ qubits, which is represented by the white circle $\circ$.
For a given real polynomial $P$ of degree $l$, the $l$ circuit parameters $\{\phi_i\}_{i=1}^{l}$ are calculated in $\mathcal{O}({\rm poly}(l))$ classical computation time~\cite{gilyen2019-qsvt}.
With such parameters, this circuit results in a $(1,a+1,0)$-block-encoding of $P(A/\beta)$.
}
\label{fig:qet_circuit}
\end{figure}

\subsection{Nonlinear transformation of matrices encoded in quantum systems}\label{sec:nonlinear-transformation}
Here, we review basic tools for efficiently transforming block-encoded matrices on a quantum computer. 
To encode target matrices into unitary operators of dilated quantum systems, we use the framework of block encoding (for simplicity, we here assume that the target matrix acts on a system whose dimension $d$ is a power of 2).
\begin{dfn}[Block encoding]
Suppose that $A$ is a $\log_2 d$-qubit operator, $\beta, \varepsilon > 0$
and $a \in \mathbb{N}$.
Then we say that a $(\log_2 d + a)$-qubit unitary $W_A$ is
 a $(\beta, a ,\varepsilon$)-block encoding of $A$, if
\begin{equation}
\| \beta (\bra{0}^{\otimes a} \otimes \bm{1} )W_A (\ket{0}^{\otimes a} \otimes \bm{1}) - A \| \leq \varepsilon,
\end{equation}
where $\| \bullet \|$ denotes the operator norm.
\end{dfn}
\noindent
When $\beta=1,\varepsilon=0$, the unitary matrix $W_A$ is represented as
\begin{equation}
    W_A=\begin{pmatrix}
        A&*\\
        *&*
    \end{pmatrix}.
\end{equation}
We can construct $W_A$ for various types of matrix $A$, e.g., sparse matrices and linear combinations of Pauli strings~\cite{gilyen2019-qsvt}.

Once we construct the block-encoding circuit for a target matrix, we can systematically transform the matrix based on a wide range of polynomial functions $P$.
Specifically, for a Hermitian matrix $A$ (see~\cite{gilyen2019-qsvt} for more general matrices), we can construct a quantum circuit to realize
\begin{equation}
    W_{P(A/\beta)}\simeq \begin{pmatrix}
        P(A/\beta)&*\\
        *&*
    \end{pmatrix}
\end{equation}
from the multiple uses of $W_A$. 
A formal statement of this fact is given in the following lemma together with Fig.~\ref{fig:qet_circuit}.

\begin{lem}[Quantum eigenvalue transformation~\cite{gilyen2019-qsvt}]
    Let $W_A$ be an $(\beta,a,0)$-block encoding of a Hermitian matrix $A$.
    Let $P(x)$ be a degree $l$ even or odd real polynomial satisfying $|P(x)|\leq 1$ for all $x\in[-1,1]$.
    Then, we can construct a quantum circuit for an $(1,a+1,0)$-block encoding of $P(A/\beta)$, with $l$ uses of $W_A~( W_A^\dagger)$ and $\mathcal{O}(al)$ one- or two-qubit elememtary gates.
\end{lem}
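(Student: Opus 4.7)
The plan is to invoke the quantum signal processing (QSP) and qubitization framework of Gilyén, Su, Low, and Wiebe~\cite{gilyen2019-qsvt} and verify that the circuit depicted in Fig.~\ref{fig:qet_circuit} realizes the claimed block-encoding. The key observation is that because $A$ is Hermitian with the spectrum of $A/\beta$ lying in $[-1,1]$, the action of $W_A$ decomposes into an orthogonal sum of $SU(2)$ rotations on two-dimensional invariant subspaces indexed by the eigenvalues of $A$.

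First, I would fix a spectral decomposition $A=\sum_j \lambda_j \ketbra{\lambda_j}$ and, writing $\Pi=\ketbra{0^a}\otimes \bm{1}$ with $\ket{0^a}:=\ket{0}^{\otimes a}$, observe that for each $j$ the state $\ket{0^a}\otimes\ket{\lambda_j}$ together with an orthogonal "garbage" state produced by $W_A$ span a two-dimensional invariant subspace on which $W_A$ restricts to an $SU(2)$ rotation whose angle is determined by $\lambda_j/\beta$. The multi-controlled-NOT (conditioned on $\ket{0^a}$) conjugated around the single-qubit $e^{-iZ\phi_k}$ rotation on the extra ancilla implements a controlled reflection about $\Pi$ dressed by a phase, so every layer of the circuit reduces to an $SU(2)$ rotation inside each invariant block.

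Next, by the QSP theorem (Theorem 3 of~\cite{gilyen2019-qsvt}), for any real polynomial $P$ of degree $l$ of definite parity with $|P(x)|\leq 1$ on $[-1,1]$, there exists an angle sequence $\{\phi_k\}_{k=1}^{l}$ such that the product of $l$ such $SU(2)$ layers yields a $2\times 2$ matrix whose upper-left entry is a specified complex polynomial in $\lambda_j/\beta$ of parity matching $P$. The outer Hadamards on the top ancilla together with the placement of $W_A$ versus $W_A^\dagger$ in the figure then select the real part of this polynomial, producing exactly $P(\lambda_j/\beta)$ in the $(\ket{0}^{\otimes(a+1)},\ket{0}^{\otimes(a+1)})$-block. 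Summing over $j$ gives a $(1,a+1,0)$-block-encoding of $P(A/\beta)$, and counting shows $l$ uses of $W_A$ or $W_A^\dagger$ together with $l$ multi-controlled phase gates, each expanding into $\mathcal{O}(a)$ one- and two-qubit gates, hence $\mathcal{O}(al)$ elementary gates in total.

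The main obstacle is the completeness of QSP: one must show that every admissible $P$ actually admits such an angle sequence, and that these angles can in fact be computed efficiently from the coefficients of $P$. This nontrivial step is handled constructively in~\cite{gilyen2019-qsvt}, where a polynomial-time classical algorithm is provided for obtaining $\{\phi_k\}$ from $P$. Once this existence and computability result is invoked, the remaining ingredients—the qubitization structure of block-encodings, the correctness of the controlled reflections drawn in Fig.~\ref{fig:qet_circuit}, and the elementary gate count—are essentially bookkeeping on top of this core technical input.
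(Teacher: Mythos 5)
Your proposal is correct and follows essentially the same route the paper relies on: the paper does not prove this lemma itself but cites it directly from~\cite{gilyen2019-qsvt} alongside the circuit of Fig.~\ref{fig:qet_circuit}, and your sketch (qubitization into $SU(2)$ invariant subspaces, the QSP completeness theorem for the angle sequence, the extra ancilla with Hadamards to extract the real polynomial, and the $\mathcal{O}(a)$ decomposition of each multi-controlled gate) is exactly the standard argument from that reference. Your identification of QSP angle-sequence existence and efficient computability as the one nontrivial input, with the rest being bookkeeping, is accurate.
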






\clearpage

\section{Density matrix exponentiation with a few state copies}\label{sec:Virtual_DME}

\subsection{General theory}\label{supple_sec:sim_dme}

We now prove a generalized version of \green{Theorem~\ref{main_thm:general_result}} in the main text.

\begin{thm}\label{thm:main_supple}
    Let $\rho$ be an unknown but accessible $d$-dimensional quantum state, and let $\mathcal{U}[e^{-i\rho T}]$ be the unitary channel of $e^{-i\rho T}$ for $T\in \mathbb{R}\backslash\{0\}$.
    For any integer $r\geq \max\{1,|T|\}$, 
    we can construct a random modified-quantum map $\hat{\Phi}_r$ using at most $N$ copies of $\rho$ and a single-ancilla measurement such that
    \begin{equation}
    \label{eq:diamond_distance_alpha}
        \|C^{2r}\mathbb{E}[\hat{\Phi}_r]-\mathcal{U}[e^{-i\rho T}]\|_{\diamond}\leq \varepsilon
    \end{equation}
    holds for a given $\varepsilon\in(0,1/2)$, if $N$ satisfies 
    \begin{equation}
    \label{number of copies in Th S2}
        N\geq 2r+4r\left\lceil\frac{1}{2}\frac{\ln(3r/\varepsilon)}{W_0((1/e)\ln(3r/\varepsilon))}-\frac{1}{2}\right\rceil.
    \end{equation}
    Here, $W_0$ is the principal branch of the Lambert-W function (that is, $W_0(c)$ is defined as the unique solution of $xe^x = c$ for $x\geq -1$ and $c\geq -1/e$), and $C$ is a positive value that satisfies 
    \begin{equation}
    \label{bound of C in Th S2}
        1< C^{2r}\leq e^{2T^2/r}.
    \end{equation}
\end{thm}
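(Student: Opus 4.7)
The plan is to realize $\mathcal{U}[e^{-i\rho T}]$, up to the scalar overhead $C^{2r}$, as the expectation of a random modified-quantum map obtained by (i) splitting the evolution into $r$ Trotter-like segments of length $T/r$, (ii) approximating each segment by the LMR base channel plus a small high-order correction, and (iii) realizing the correction through an LCS-style weighted sum of asymmetric superoperators, implemented by controlled partial SWAPs with a single ancilla qubit and a final Pauli-$X$ measurement. The assumption $r \geq \max\{1,|T|\}$ ensures $|T/r| \leq 1$, which simultaneously yields rapid convergence of the Taylor expansion and a small per-segment overhead; concatenating $r$ independent segments and invoking the composition bound \eqref{diamond_norm_of_xy} then produces the stated diamond-norm guarantee \eqref{eq:diamond_distance_alpha}.

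For a single segment I would write $\mathcal{U}[e^{-i\rho T/r}] = \Lambda_{T/r} + \mathcal{E}_{T/r}$, where $\Lambda_{T/r}$ is the one-copy CPTP LMR channel of Section~\ref{sec:LMR} and $\mathcal{E}_{T/r}$ is the higher-order residual with $\|\mathcal{E}_{T/r}\|_{\diamond} = \mathcal{O}((T/r)^{2})$. Using \eqref{eq:generalized_swap_tric_1}--\eqref{eq:generalized_swap_tric_2} with the choice $\delta = \pi/2$, $k$ sequential partial SWAPs against fresh copies of $\rho$ implement the asymmetric action $A \mapsto (-i\rho)^{k} A$ exactly, and the same trick applied on the right gives $A \mapsto A(i\rho)^{\ell}$. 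Each cross-term $\tfrac{(-iT/r)^{k}(iT/r)^{\ell}}{k!\,\ell!}\rho^{k}\sigma\rho^{\ell}$ of the Taylor-truncated double-sided expansion $V_{K}\sigma V_{K}^{\dagger}$, with $V_{K}=\sum_{k\leq K}(-iT/r)^{k}\rho^{k}/k!$, is therefore realized exactly by a single asymmetric superoperator consuming $k+\ell$ copies with nonnegative LCS weight $(|T|/r)^{k+\ell}/(k!\,\ell!)$. Absorbing the low-order terms with $k+\ell\leq 2$ into $\Lambda_{T/r}$, the residual weights sum to $\mathcal{O}((T/r)^{2})$; translating the combined object via Table~\ref{tab:translationlist} and the LCS identity~\eqref{eq:LCS_formula} produces a random single-ancilla circuit whose $X$-measurement outcome, multiplied by the cumulative factor $C^{2}$, reproduces $\mathcal{U}[e^{-i\rho T/r}]$ up to the truncation error of $V_{K}$.

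The remaining steps are quantitative. Requiring the per-segment truncation error to be at most $\varepsilon/(3r)$ in diamond norm forces $(|T|/r)^{K+1}/(K+1)! \leq \varepsilon/(3r)$, i.e.\ $(K+1)! \geq 3r/\varepsilon$; inverting Stirling's bound gives $K+1 \approx \ln(3r/\varepsilon)/W_{0}((1/e)\ln(3r/\varepsilon))$, which is exactly the ceiling appearing in \eqref{number of copies in Th S2}. Each segment then consumes one copy for the LMR base plus at most $2\lceil K/2\rceil$ copies per side of the correction, and summing over $r$ independent segments gives the copy bound $2r + 4r\lceil K/2\rceil$. For the overhead, the $\mathcal{O}((T/r)^{2})$ bound on the residual LCS weights yields $C^{2} \leq 1 + c(T/r)^{2} \leq e^{c(T/r)^{2}}$ per segment, and hence $C^{2r} \leq e^{2T^{2}/r}$, matching \eqref{bound of C in Th S2}. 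Composing the $r$ independent segments via \eqref{diamond_norm_of_xy} accumulates the diamond-norm error linearly to at most $\varepsilon$, as claimed.

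The main obstacle is the second-paragraph assertion that, after absorbing the zeroth- and first-order Taylor contributions (including the cross term $\rho\sigma\rho$ already present inside $\Lambda_{T/r}$) into the CPTP LMR channel, the residual asymmetric-superoperator weights genuinely sum to $\mathcal{O}((T/r)^{2})$ rather than to $\mathcal{O}(|T|/r)$. This demands a careful term-by-term matching between the double-sided Taylor expansion of $\mathcal{U}[V_{K}]$ and the closed-form expansion of $\Lambda_{T/r}$ obtained from \eqref{eq:generalized_swap_tric_1}--\eqref{eq:generalized_swap_tric_2}: every first-order piece must cancel exactly, and the second-order discrepancy that makes $\Lambda_{T/r}$ only first-order accurate must be expressible in the asymmetric-superoperator basis with coefficients of order $(T/r)^{2}$. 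Once this structural cancellation is in place, the factorial truncation bound, the Lambert-$W_{0}$ inversion of Stirling, and the composition bound for the diamond norm are routine, completing the proof.
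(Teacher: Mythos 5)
Your overall architecture (split into $r$ segments of length $\delta=T/r$, truncate a Taylor series per segment, realize the non-unitary terms through partial SWAPs against fresh copies plus the LCS single-ancilla $X$-measurement, invert the factorial bound with $W_0$, and compose via the diamond-norm subadditivity of Fig.~\ref{fig:diamond_norm_of_xy}) matches the paper. But the mechanism you use to keep $C^{2r}$ at $e^{\mathcal{O}(T^2/r)}$ instead of $e^{\mathcal{O}(|T|)}$ — subtracting the one-copy LMR channel $\Lambda_{\delta}$ from the naive double-sided Taylor expansion and hoping the first-order weights cancel — is precisely the step you flag as "the main obstacle" and then leave unproven. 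That is the heart of the theorem (the paper explicitly notes that the naive Taylor expansion gives $C^{2r}\propto e^{\mathcal{O}(T)}$), so the proposal has a genuine gap exactly where the new idea is needed. The paper avoids the issue entirely by a different device: it groups the Taylor terms pairwise, $S_L(\delta)=\sum_l C_l(\delta)(-i\rho)^{2l}\bigl(\cos\theta_{l}\bm{1}-i\,\mathrm{sgn}(T)\sin\theta_{l}\,\rho\bigr)$, where each factor $\cos\theta\,\bm{1}-i\sin\theta\,\rho$ is realized \emph{exactly} by a single partial SWAP $e^{-i\theta S}$ via Eq.~\eqref{eq:generalized_swap_tric_1}, so the $l=0$ term already carries $\bm{1}-i\delta\rho$ at weight $\sqrt{1+\delta^2}\approx 1+\delta^2/2$ rather than $1+\delta$; no cancellation against a separately implemented CPTP base channel is ever required, and $C(\delta)=\sum_l C_l(\delta)\le e^{\delta^2}$ follows directly.

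If you do carry out your cancellation, two further problems surface. First, the constant does not come out right: writing $V_K\bullet V_K^\dagger-\Lambda_\delta$ in your basis gives coefficients $\sin^2\delta$ on $\mathcal{I}$, $-\sin^2\delta$ on the replacer $\rho\,\mathrm{tr}[\bullet]$, $|\delta-\cos\delta\sin\delta|=\mathcal{O}(\delta^3)$ on the first-order pieces, and a total of $e^{2\delta}-1-2\delta\approx 2\delta^2$ on the $k+\ell\ge 2$ Taylor terms (note these second-order Taylor terms $\rho\sigma\rho$, $\rho^2\sigma$ cannot be "absorbed into $\Lambda_\delta$", whose second-order structure is $-\delta^2\mathcal{I}+\delta^2\rho\,\mathrm{tr}[\bullet]$ and lives on different basis elements). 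The residual weight per segment is therefore $\approx 4\delta^2$, yielding $C^{2r}\le e^{4T^2/r}$ rather than the stated bound \eqref{bound of C in Th S2}. Second, the residual contains a \emph{negatively} weighted CPTP component $-\sin^2\delta\,\rho\,\mathrm{tr}[\bullet]$, which is not covered by the LCS translation rules of Table~\ref{tab:translationlist} (those handle CPTP maps with positive weight and convex combinations of unitary asymmetric forms $U_i\bullet V_i^\dagger$); one must additionally dilate it into an asymmetric form such as $(iS)\bullet(-iS)^\dagger$ followed by a partial trace. Both issues are repairable, but as written the proposal neither establishes the precise overhead bound nor completes the decomposition it rests on.
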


We remark that $C$ is related to the total number of sampling of the random map $\hat{\Phi}_r$ and should be small, as described in detail later.
\green{Theorem~\ref{main_thm:general_result}} in the main text is a special case focusing on this aspect, that realizes a small $\alpha\equiv C^{2r}$
by setting $r=\lceil2T^2\rceil$ with $|T|\geq 1$ (and omitting the division by the $W_0$ term).
After the proof, we will summarize a concrete procedure to sample and simulate the random modified-quantum map $\hat{\Phi}_r$.

In advance of the proof, we sketch the key idea of Theorem \ref{thm:main_supple}.
From the SWAP trick described in Section~\ref{sec:swap-operation}, we observe the following fact. 
Using the asymmetric superoperator $\bm{1}\bullet e^{i\delta S}:A\mapsto A e^{i\delta S}$ followed by the trace-out operation, we can realize the multiplication of $(\cos(\delta)\bm{1}+i\sin(\delta)\rho)$ from the right-side without any error as in Eq.~\eqref{eq:generalized_swap_tric_2}. 
The multiplication of $(\cos(\delta)\bm{1}-i\sin(\delta)\rho)$ from the left-side is also realized by using $e^{-i\delta S}\bullet \bm{1}$.
Since the non-physical operation Eq.~\eqref{eq:generalized_swap_tric_2} is a product of an asymmetric map and CPTP maps, we can simulate it
by using a quantum circuit with a single-ancilla measurement followed by classical post-processing (or equivalently, multiplication of Pauli X followed by the trace-out operation of the single-ancilla register) as discussed in Section~\ref{sec:theory_of_LCS}.
We found that these multiplication operations from both sides serve as a basis set for a very efficient decomposition of the target unitary channel $\mathcal{U}[e^{-i\rho T}]$.

The crucial piece connecting $\mathcal{U}[e^{-i\rho T}]$ with $(\cos(\delta)\bm{1}\pm i\sin(\delta)\rho)$ is the following Taylor expansion with special grouping:
\begin{align}\label{eq:explain_expansion}
    e^{-i\rho T/r}&=\bm{1}-i({T}/{r})\rho + (-i)^2 \frac{(T/r)^2}{2!}\rho^2+(-i)^3 \frac{(T/r)^3}{3!}\rho^3+\cdots\notag\\
    &=\sqrt{1+(T/r)^2}\left(\cos(\theta_0)\bm{1}-i{\rm sgn}(T)\sin(\theta_0)\rho\right) \notag\\
    &~~~~~+ \frac{(T/r)^2}{2!}\sqrt{1+\left(\frac{T/r}{3}\right)^2}(-i\rho)^2\left(\cos(\theta_1)\bm{1}-i{\rm sgn}(T)\sin(\theta_1)\rho\right) 
    +\cdots.
\end{align}
Here, by appropriately determining $\theta_0,\theta_1,...$, we combined the $(2l,2l+1)$-th terms ($l=0,1,...$) in the first line to obtain the second and third lines.
The expansion indicates that $\mathcal{U}[e^{-i\rho T/r}]$ (and thus $\mathcal{U}[e^{-i\rho T}]$) can be expanded by a sum of products of superoperators in the form of 
\begin{equation}
\label{eq:explain_random_map}
    (\cos(\delta)-i\sin(\delta)\rho)\bullet (\cos(\delta')+i\sin(\delta')\rho).
\end{equation}
Since the truncation error of the Taylor expansion becomes exponentially small with respect to the number of expansion terms, the truncated series of $\mathcal{U}[e^{-i\rho T/r}]$ only contains $(\cos(\delta)\bm{1}\pm i\sin(\delta)\rho)^{l}$ with a logarithmically small number $l$ with respect to the truncation error.
This is the reason for the logarithmic dependence in Theorem~\ref{thm:main_supple}.
Then, our random map $\hat{\Phi}_r$ effectively realizes a product of Eq.~\eqref{eq:explain_random_map} according to its expansion coefficient.

Note that $C$ is given by the absolute sum of the expansion coefficients and, as mentioned above, it is connected with the total number of sampling and accordingly the total computational cost in our quantum algorithm (see Section~\ref{sec:nearlyopt_qalg}); the use of the expansion~\eqref{eq:explain_expansion}, instead of direct use of Taylor expansion, is crucial to keep the value $C$ small. 
Specifically, if we expand $\mathcal{U}[e^{-i\rho T}]$ based on the naive Taylor expansion, then $C^{2r}$ must be proportional to $e^{\mathcal{O}(T)}$ and thus suffers from exponentially large overhead. 
Note that a similar expansion to Eq.~\eqref{eq:explain_expansion} is also used in recent Hamiltonian simulation methods~\cite{PhysRevLett.129.030503, PRXQuantum.6.010359,Chakraborty2024implementingany}; the crucial difference from these previous results is that our expansion basis operation $(\cos(\delta)\bm{1}\pm i\sin(\delta)\rho)$ is not a unitary operation and thus Eq.~\eqref{eq:explain_expansion} differs from a linear combination of unitaries. 
Hence, each expansion operation is a non-physical process and cannot be directly implemented on a quantum computer. 
This is the reason why we need the theory of LCS in Section~\ref{sec:theory_of_LCS} and as a result the {\it modified} quantum map $\hat{\Phi}_r$ for approximating $\mathcal{U}[e^{-i\rho T}]$.

\begin{proof}[Proof of Theorem \ref{thm:main_supple}]
We focus on the Taylor expansion of $e^{-i\rho T/r}$ with specific grouping, up to the $(2L+1)$-th term:
\begin{align}\label{eq:def_SL_mainthm1}
    S_{L}(T/r)&:=\sum_{q=0}^{2L+1}\frac{(T/r)^q}{q!}(-i\rho)^q=\sum_{l=0}^{L}\frac{(T/r)^{2l}}{(2l)!}(-i\rho)^{2l}\left(\bm{1}-i\frac{T/r}{2l+1}\rho\right)\notag\\
    &=\sum_{l=0}^{L}\frac{(T/r)^{2l}}{(2l)!}(-i\rho)^{2l}\left(\bm{1}-i\frac{|T|/r}{2l+1}{\rm sgn}(T)\rho\right)\notag\\
    &=\sum_{l=0}^{L}C_l(T/r)(-i\rho)^{2l}\left(\cos(\theta_{l,|T|/r})\bm{1}-i{\rm sgn}(T)\sin(\theta_{l,|T|/r})\rho\right),
\end{align}
where $\theta_{l,|T|/r}$ and $C_l(x)$ are defined as
\begin{equation}\label{eq:angle_for_partialswap}
    {\theta_{l,|T|/r}}:=\arccos\left(\left\{1+\left(\frac{T/r}{2l+1}\right)^{2}\right\}^{-1/2}\right),
\end{equation}
\begin{equation}
    C_l(x):=\frac{x^{2l}}{(2l)!}\sqrt{1+\left(\frac{x}{2l+1}\right)^{2}}\geq 0.
\end{equation}
Also, we here define the probability distribution $p_l(T/r)$ as 
\begin{equation}
\label{def of C}
    p_l(T/r):=\frac{C_l(T/r)}{C(T/r)},~~~
    C(T/r):=\sum_{l=0}^{L} C_l(T/r) \leq e^{T^2/r^2},
\end{equation}
where $C(T/r)\leq e^{T^2/r^2}$ holds for any $L$ as proven in Proposition~13 in~\cite{PhysRevLett.129.030503}. 
We now choose the integer $L$ to be 
\begin{equation}
\label{def of L}
    L:=\left\lceil\frac{1}{2}\frac{\ln(3r/\varepsilon)}{W_0((1/e)\ln(3r/\varepsilon))}-\frac{1}{2}\right\rceil,~~\left(\mbox{or}~~
    L:=\left\lceil\frac{\ln \left({3r}/{\varepsilon}\right)}{\ln \ln \left({3r}/{\varepsilon}\right)}-\frac{1}{2}\right\rceil\right), 
\end{equation}
where the second $L$ in Eq.~\eqref{def of L} is a simplified version of the first choice. 
Under this choice of $L$, from the technical Lemma~\ref{lem:DME_taylor_truncation} shown later, the operator $S_L(T/r)$ is $\varepsilon/3r$-close to the target unitary $e^{-i\rho T/r}$ in the operator norm:
\begin{equation}
\label{eq:approx_error_SL}
    \left\|S_L(T/r)-e^{-i\rho T/r}\right\|\leq\frac{\varepsilon}{3r}.
\end{equation}

To simulate each term in $S_L(T/r)$, we define the following two superoperators acting on linear operators of $d$-dimensional Hilbert space: for any $d$-dimensional operator $A$, $\delta\in \mathbb{R}$, integer $k\geq 1$,
\begin{align}\label{eq:basic_op1_main_thm}
    \Upsilon_{\delta,k}(A)&:={\rm tr}_{\rm E}\left[\left(\prod_{j=k}^1 e^{-i\delta S_{{\rm targ,E}_j}}\right)A_{\rm targ}\otimes \rho_{\rm E_1}\otimes \rho_{\rm E_2}\otimes \cdots \rho_{{\rm E}_k}\right]\\
    &=\left(\cos(\delta)\bm{1}-i\sin(\delta)\rho\right)^{k}A,
\end{align}
and 
\begin{align}\label{eq:basic_op2_main_thm}
    \Upsilon^{\S}_{\delta,k}(A)&:={\rm tr}_{\rm E}\left[A_{\rm targ}\otimes \rho_{\rm E_1}\otimes \rho_{\rm E_2}\otimes \cdots \rho_{{\rm E}_k}\left(\prod_{j=1}^k e^{i\delta S_{{\rm targ,E}_j}}\right)\right]\\
    &=A\left(\cos(\delta)\bm{1}+i\sin(\delta)\rho\right)^{k}.
\end{align}
Here, $S_{{\rm targ, E}_j}$ denotes the SWAP operator between the target system (denoted by the subscript ${\rm targ}$) and the system ${\rm E}_j$.
The second equalities follow from the properties of the SWAP operator; see Eqs.~\eqref{eq:generalized_swap_tric_1} and \eqref{eq:generalized_swap_tric_2}.
The superoperators $\Upsilon_{\delta,k}$ and $\Upsilon^{\S}_{\delta,k}$ consist of three sequential operations: (i) couple $A$ with $k$ copies of $\rho$, (ii) apply an asymmetric superoperator
\begin{equation}
    \left(\prod_{j=k}^1 e^{-i\delta S_{{\rm targ},E_j}}\right)\bullet\bm{1}~~~\mbox{or}~~~\bm{1}\bullet\left(\prod_{j=1}^k e^{i\delta S_{{\rm targ},E_j}}\right),
\end{equation}
and (iii) trace out all systems except for the original system of $A$.
We then observe
\begin{align*}
    &\left(\Upsilon_{\pi/2,2l}\circ\Upsilon_{{\rm sgn}(T)\theta_{l,|T|/r},1}\circ \Upsilon^{\S}_{\pi/2,2l'}\circ\Upsilon^{\S}_{{\rm sgn}(T)\theta_{l',|T|/r},1}\right)(A)\notag\\
    &= \left(\Upsilon_{\pi/2,2l}\circ\Upsilon_{{\rm sgn}(T)\theta_{l,|T|/r},1}\right)(A\left(\cos(\theta_{l',|T|/r})\bm{1}+i{\rm sgn}(T)\sin(\theta_{l',|T|/r})\rho\right) (i\rho)^{2l'})\notag\\
    &= (-i\rho)^{2l} \left(\cos(\theta_{l,|T|/r})\bm{1}-i{\rm sgn}(T)\sin(\theta_{l,|T|/r})\rho\right)A\left(\cos(\theta_{l',|T|/r})\bm{1}+i{\rm sgn}(T)\sin(\theta_{l',|T|/r})\rho\right) (i\rho)^{2l'}.
\end{align*}
Using this map, we obtain (omitting the argument $T/r$ on $S_L(T/r)$ and $C(T/r)$ for simplicity)
\begin{align}
    S_{L}AS_L^\dagger&=\sum_{l,l'}C_lC_{l'}\left(\Upsilon_{\pi/2,2l}\circ\Upsilon_{{\rm sgn}(T)\theta_{l,|T|/r},1}\circ \Upsilon^{\S}_{\pi/2,2l'}\circ\Upsilon^{\S}_{{\rm sgn}(T)\theta_{l',|T|/r},1}\right)(A)\notag\\
    &\equiv C^2\sum_{l,l'}p_lp_{l'}\Gamma_{l,l'}(A)
\end{align}
for any $A$, where Eq.~\eqref{def of C} is used. 
By repeating this operation, we have
\begin{equation}
\label{eq:SAS^r in the middle}
    S_L^{r}A(S^\dagger_L)^r=C^{2r}\sum_{l_1,l'_1,...l_r,l_r'} p_{l_1}p_{l_1'}\cdots p_{l_r}p_{l_r'} \left(\Gamma_{l_r,l_r'}\circ \cdots \circ \Gamma_{l_1,l_1'}\right)(A).
\end{equation}

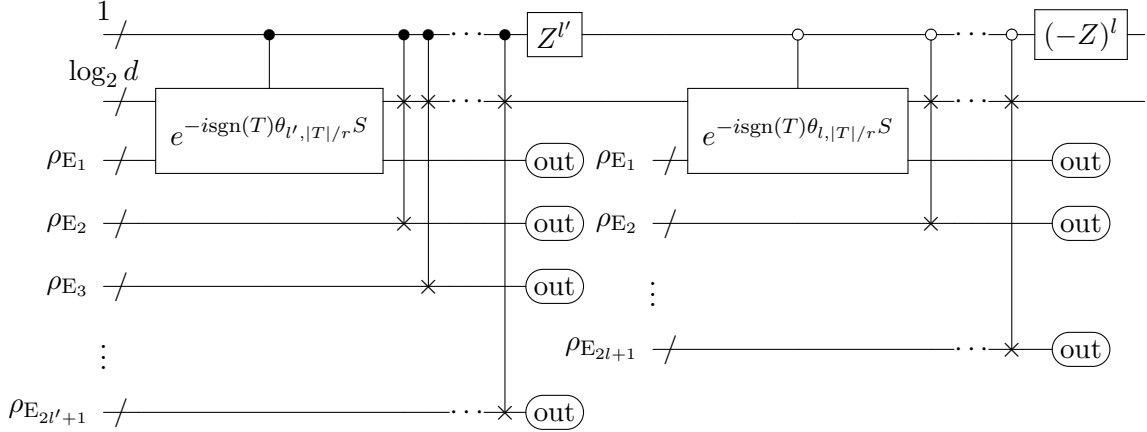
\begin{figure}[tb]
\centering
\begin{tabular}{c}
~~~~~\Qcircuit @C=0.6em @R=1em {
  \ustick{1}&{/} \qw&\qw &\ctrl{1}&\ctrl{3}&\ctrl{4}&\qw&\cdots &&\ctrl{6}&\gate{Z^{l'}}&\qw&\qw&\qw&\qw&\qw&\ctrlo{1}&\ctrlo{3}&\qw&\cdots &&\ctrlo{5}&\gate{(-Z)^{l}}&\qw\\
  \ustick{\log_2d}&{/}\qw &\qw &\multigate{1}{e^{-i{\rm sgn}(T)\theta_{l',|T|/r}S}}&\qswap&\qswap &\qw&\cdots &&\qswap&\qw&\qw&\qw&\qw&\qw&\qw&\multigate{1}{e^{-i{\rm sgn}(T)\theta_{l,|T|/r}S}}&\qswap&\qw&\cdots &&\qswap&\qw&\qw\\
  \lstick{\rho_{{\rm E}_1}}&{/}\qw &\qw &\ghost{e^{-i{\rm sgn}(T)\theta_{l',|T|/r}S}}&\qw&\qw&\qw&\qw&\qw&\qw&\measure{\mbox{out}}&&&&\lstick{\rho_{{\rm E}_1}}&{/}\qw&\ghost{{e^{-i{\rm sgn}(T)\theta_{l,|T|/r}S}}}&\qw&\qw&\qw&\qw&\qw&\measure{\mbox{out}}\\
  \lstick{\rho_{{\rm E}_2}}&{/}\qw &\qw&\qw &\qswap&\qw&\qw&\qw&\qw&\qw&\measure{\mbox{out}}&&&&\lstick{\rho_{{\rm E}_2}}&{/}\qw&\qw&\qswap&\qw&\qw&\qw&\qw&\measure{\mbox{out}}\\
  \lstick{\rho_{{\rm E}_3}}&{/}\qw &\qw&\qw &\qw&\qswap&\qw&\qw&\qw&\qw&\measure{\mbox{out}}&&&&\vdots\\
  \vdots&&&&&&&&&&&&&&\lstick{\rho_{{\rm E}_{2l+1}}}&{/}\qw&\qw&\qw&\qw&\cdots &&\qswap&\measure{\mbox{out}}\\
  \lstick{\rho_{{\rm E}_{2l'+1}}}&{/}\qw &\qw&\qw&\qw&\qw&\qw&\cdots &&\qswap&\measure{\mbox{out}}&&&\\
  }
\\
\end{tabular}
\caption{Quantum circuit for $\widetilde{\Gamma}_{l,l'}$ in $\hat{\Phi}_r$ to simulate $\mathcal{U}[e^{-i\rho T}]$.
The angle for the partial SWAP gate is given by Eq.~\eqref{eq:angle_for_partialswap}.}
\label{fig:element_modified_random_map}
\end{figure}

From the theory of LCS in Section~\ref{sec:theory_of_LCS}, we can construct the corresponding quantum channel $\widetilde{\Phi}_{l_1,l'_1,...,l_r,l_r'}$ for the superoperator $\Gamma_{l_r,l'_r}\circ \cdots \circ \Gamma_{l_1,l_1'}$ by replacing each superoperation according to Table~\ref{tab:translationlist}.
More precisely, we define
\begin{equation}\label{eq:sample_partial_expression}
    \widetilde{\Phi}_{l_1,l_1',...,l_r,l_r'}:=\widetilde{\Gamma}_{l_r,l'_r}\circ\cdots\circ\widetilde{\Gamma}_{l_1,l'_1},
\end{equation}
where $\widetilde{\Gamma}_{l,l'}$ is defined as in Fig.~\ref{fig:element_modified_random_map}. 
Importantly, from Eq.~\eqref{def of L}, the quantum channel $\widetilde{\Phi}_{l_1,l_1',...,l_r,l_r'}$ contains at most 
\begin{equation}\label{eq:hs_cal_copycount}
    2r+4rL
    = 2r + 4r\left\lceil\frac{1}{2}\frac{\ln(3r/\varepsilon)}{W_0((1/e)\ln(3r/\varepsilon))}-\frac{1}{2}\right\rceil
\end{equation}
copies of the target quantum state $\rho$; see Fig.~\ref{fig:element_modified_random_map} on how to count the number $r\times(2+4L)$. 
This is indeed the number of copies of $\rho$ necessary to simulate the channel approximating $\mathcal{U}[e^{-i\rho T}]$, which thus proves Eq.~\eqref{number of copies in Th S2}.
We remark that to obtain $\widetilde{\Gamma}_{l,l'}$, we translate each $\Upsilon^{\S}_{\delta,k}$ in ${\Gamma}_{l,l'}$ into Fig.~\ref{fig:qcircuit_for_upsilon} (which is the case for $k=3$).

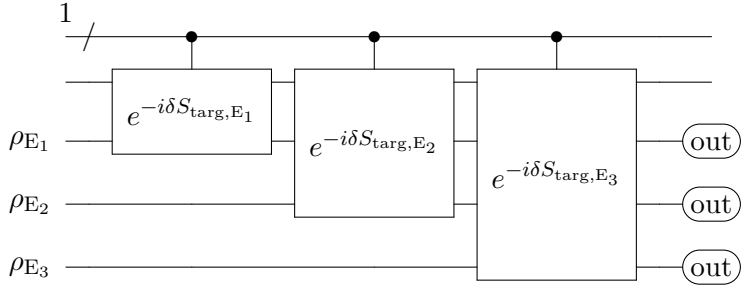
\begin{figure}[h]
\centering
\begin{tabular}{c}
\\
~~~~~\Qcircuit @C=0.8em @R=1em {
    \ustick{1}&{/} \qw&\ctrl{1}&\ctrl{1}&\ctrl{1}&\qw&\qw\\
    &\qw&\multigate{1}{e^{-i\delta S_{\rm targ, E_1}}}&\multigate{2}{e^{-i\delta S_{\rm targ, E_2}}}&\multigate{3}{e^{-i\delta S_{\rm targ, E_3}}}&\qw&\qw\\
    \lstick{\rho_{\rm E_1}}&\qw&\ghost{e^{-i\delta S_{\rm targ, E_1}}}&\ghost{e^{-i\delta S_{\rm targ, E_2}}}&\ghost{e^{-i\delta S_{\rm targ, E_3}}}&\qw&\measure{\mbox{out}}\\
    \lstick{\rho_{\rm E_2}}&\qw&\qw &\ghost{e^{-i\delta S_{\rm targ, E_2}}}&\ghost{e^{-i\delta S_{\rm targ, E_3}}}&\qw&\measure{\mbox{out}}\\
    \lstick{\rho_{\rm E_3}}&\qw&\qw&\qw&\ghost{e^{-i\delta S_{\rm targ, E_3}}}&\qw&\measure{\mbox{out}}\\
  }
\\
\\
\end{tabular}
\caption{Translated quantum circuit for $\Upsilon^{\S}_{\delta,k}$ (for the case $k=3$), where ``out'' means the trace out.}
\label{fig:qcircuit_for_upsilon}
\end{figure}
 
\noindent
Note that the circuit for $\Upsilon_{\delta,k}$ is obtained by replacing $1$-control with $0$-control in the above circuit.
We can directly check that the quantum circuit Fig.~\ref{fig:qcircuit_for_upsilon} has the following action:
\begin{equation}
    \begin{pmatrix}
        *&A_{01}\\
        A_{10}&*
    \end{pmatrix}\mapsto 
    \begin{pmatrix}
        *&\Upsilon^\S_{\delta,k}(A_{01})\\
        (\mathcal{J}\circ \Upsilon^\S_{\delta,k}\circ \mathcal{J})(A_{10})&*
    \end{pmatrix}.
\end{equation}
Therefore, 
\begin{align}
    &C^{2r}\sum_{l_1,l'_1,...l_r,l_r'} p_{l_1}p_{l_1'}\cdots p_{l_r}p_{l_r'}\widetilde{\Phi}_{l_1,l_1',...,l_r,l_r'}\left(\ketbra{+}_{\rm anc}\otimes A\right)\notag\\
    &=C^{2r}\sum_{l_1,l'_1,...l_r,l_r'} p_{l_1}p_{l_1'}\cdots p_{l_r}p_{l_r'}\frac{1}{2}\begin{pmatrix}
        *&\left(\Gamma_{l_r,l'_r}\circ \cdots \circ \Gamma_{l_1,l'_1}\right)(A)\\
        \left(\mathcal{J}\circ \Gamma_{l_r,l'_r}\circ \cdots \circ \Gamma_{l_1,l'_1}\circ \mathcal{J}\right)(A)&*
    \end{pmatrix}\notag\\
    &=\frac{1}{2}\begin{pmatrix}
        *&S_L^{r}A(S^\dagger_L)^r\\
        S_L^{r}A(S^\dagger_L)^r&*
    \end{pmatrix}.
\end{align}
In the final line, we used the Hermitian-preserving property of the linear map $S_L^r\bullet (S_L^\dagger)^r$.
Now, we define a random modified-quantum map $\hat{\Phi}_r$ that realizes
\begin{equation}
    \hat{\Phi}_r(\bullet)={\rm tr}_{\rm anc}\left[(X_{\rm anc}\otimes \bm{1})\widetilde{\Phi}_{l_1,l_1',...,l_r,l_r'}\left(\ketbra{+}_{\rm anc}\otimes\bullet\right)\right]
\end{equation}
with the probability 
\begin{equation}
    p_{l_1}p_{l_1'}\cdots p_{l_r}p_{l_r'}.
\end{equation}
In other words, we randomly sample the map $\hat{\Phi}_r$ according to the probability distribution $p_{l_1}p_{l_1'}\cdots p_{l_r}p_{l_r'}$.
Using this notation, Eq.~\eqref{eq:SAS^r in the middle} has the following expression:
\begin{equation}
    S_L^{r}A(S^\dagger_L)^r=C^{2r}\mathbb{E}_{\hat{\Phi}_r}\left[\hat{\Phi}_r(A)\right].
\end{equation}
Here, from $C(T/r)\leq e^{T^2/r^2}$ stated below Eq.~\eqref{def of C}, $C^{2r}$ can be upper bounded by 
\begin{equation}
    C^{2r}=[C(T/r)]^{2r}\leq e^{2T^2/r},
\end{equation}
which thus proves Eq.~\eqref{bound of C in Th S2}.


Finally, we bound the approximation error. 
We observe that if an operator $A$ and a unitary $U$ satisfy $\|A-U\|\leq \varepsilon/(3r)$ ($\varepsilon>0$), then
\begin{align}
    \|A^r-U^r\| &= \left\|\sum_{k=0}^{r-1} A^{k}(A-U)U^{r-1-k}\right\|\notag\\
    &\leq r\|A-U\|\left(\max\{\|A\|,\|U\|\}\right)^{r-1}\leq (\varepsilon/3)(1+(\varepsilon/3r))^{r}\leq (\varepsilon/3)e^{\varepsilon/3},
\end{align}
where the second inequality follows from the triangle inequality $\|A-U \| + \| U \| \geq \|A\|$, and the last inequality uses the bound $(1+\frac{a}{r})^{r}\leq e^{a}$, which holds for all $r>0$ and $a>0$.
Using $\|S_L-e^{-i\rho T/r}\|\leq \varepsilon/(3r)$ in Eq.~\eqref{eq:approx_error_SL} 
and the above inequality, we have
\begin{align}
    &\left\|(e^{-i\rho T}\otimes \bm{1}_R)A(e^{-i\rho T}\otimes \bm{1}_R)^\dagger-(S_L^r\otimes \bm{1}_R)A(S_L^r\otimes \bm{1}_R)^\dagger\right\|_{1}\notag\\
    &~~~\leq \left\|(e^{-i\rho T}\otimes \bm{1}_R - S_L^r\otimes \bm{1}_R)A(e^{-i\rho T}\otimes \bm{1}_R)^\dagger\|_{1}  + \|(S_L^r\otimes \bm{1}_R)A(e^{-i\rho T}\otimes \bm{1}_R - S_L^r\otimes \bm{1}_R)^\dagger\right\|_{1}\notag\\
    &~~~\leq \|e^{-i\rho T}\otimes \bm{1}_R-S_L^r\otimes \bm{1}_R\|\|A\|_1+\|S_L^r\otimes \bm{1}_R\|\|A\|_1\|(e^{-i\rho T}\otimes \bm{1}_R-S_L^r\otimes \bm{1}_R)^\dagger\|\notag\\
    &~~~\leq \|e^{-i\rho T}-S_L^r\|\|A\|_1+({\|e^{-i\rho T}\|+\|S_L^r-e^{-i\rho T}\|})\|A\|_1\|e^{-i\rho T}-S_L^r\|\notag\\
    &~~~\leq 2\|A\|_1\|e^{-i\rho T}-S_L^r\| \left(1+\frac{1}{2}\|e^{-i\rho T}-S_L^r\|\right)\leq \|A\|_1\frac{2\varepsilon}{3}e^{\varepsilon/3}\left(1+\frac{\varepsilon}{6}e^{\varepsilon/3}\right)\leq \varepsilon\|A\|_1
\end{align}
for any operator $A$ and any finite-dimensional ancilla system $R$. In the second-to-third line, we bound each term using Hölder's inequality, $\|XYZ\|_{1}\leq\|X\|\|Y\|_{1}\|Z\|$, and
the final inequality follows from the bound $\frac{2\varepsilon}{3}e^{\varepsilon/3}\left(1+\frac{\varepsilon}{6}e^{\varepsilon/3}\right)\leq\varepsilon$ holds under
the assumption $\varepsilon<1/2$. 
Thus, we conclude that
\begin{equation}
    \left\|\mathcal{U}[e^{-i\rho T}]-C^{2r}\mathbb{E}_{\hat{\Phi}_r}[\hat{\Phi}_r]\right\|_{\diamond}
    = \max_R \left\|\left(\mathcal{U}[e^{-i\rho T}]-S_L^r\bullet (S_L^r)^\dagger\right)\otimes \mathcal{I}_R\right\|_{1\to 1}\leq \varepsilon.
\end{equation}
Thus, Eq.~\eqref{eq:diamond_distance_alpha} has been proven. 
\end{proof}

\bigskip
\noindent
Finally, we show the technical lemma to complete the proof of Theorem~\ref{thm:main_supple}.
\begin{lem}\label{lem:DME_taylor_truncation}
    Let $\rho$ be an arbitrary density matrix.
    Given $T\in\mathbb{R}\backslash\{0\}$, $r\geq \max\{1,|T|\}$, $\varepsilon\in (0,1/e)$, for any positive integer $L$ satisfying 
    \begin{equation}
        L\geq\frac{1}{2}\frac{\ln(r/\varepsilon)}{W_0((1/e)\ln(r/\varepsilon))}-\frac{1}{2},
    \end{equation}
    we obtain
    \begin{equation}\label{eq:taylor_oneside}
        \left\| S_L(T/r) - e^{-i\rho T/r}\right\|
        = \left\|\sum_{q=0}^{2L+1}\frac{(T/r)^{q}}{q!}(-i\rho)^{q}
             - e^{-i\rho T/r}\right\|
        \leq \frac{\varepsilon}{r}.
    \end{equation}
    Here, $W_0$ is the principal branch of the Lambert-W function.
    Also, a simple sufficient choice of $L$ satisfying Eq.~\eqref{eq:taylor_oneside} is 
    \begin{equation}\label{eq:l-condition}
    L\geq  \frac{\ln \left({r}/{\varepsilon}\right)}{\ln \ln \left({r}/{\varepsilon}\right)}-\frac{1}{2}.
    \end{equation}
\end{lem}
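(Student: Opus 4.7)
The plan is to recognize that $S_L(T/r)$ is, by construction, the degree-$(2L+1)$ Taylor polynomial of $e^{-i\rho T/r}$ in the argument $-i\rho T/r$. Hence the target quantity is simply the Taylor remainder
$$
S_L(T/r)-e^{-i\rho T/r} \;=\; -\sum_{q=2L+2}^{\infty}\frac{(-iT/r)^{q}}{q!}\,\rho^{q}.
$$
The triangle inequality, together with $\|\rho\|\leq 1$ and $|T|/r\leq 1$ (which follows from the hypothesis $r\geq|T|$), gives
$$
\bigl\|S_L(T/r)-e^{-i\rho T/r}\bigr\| \;\leq\; \sum_{q=2L+2}^{\infty}\frac{1}{q!}.
$$
The problem thus reduces to a purely numerical factorial-tail estimate.

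Next I would bound the tail using the elementary inequality $\sum_{q\geq n}1/q!\leq 1/(n-1)!$, valid for $n\geq 2$ by a geometric majorization with ratio $1/n$. With $n=2L+2$ this reduces the goal to showing $(2L+1)!\geq r/\varepsilon$. Invoking the Stirling-type bound $m!\geq(m/e)^{m}$ further reduces it to verifying $n\ln(n/e)\geq\ln(r/\varepsilon)$ with $n:=2L+1$.

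To connect this with the Lambert-$W$ form of the hypothesis, set $c:=\ln(r/\varepsilon)$ and observe that the assumption on $L$ is equivalent to $n\geq c/W_0(c/e)$. Using the defining identity $e^{W_0(x)}=x/W_0(x)$, one rewrites $c/W_0(c/e)=e\cdot e^{W_0(c/e)}$, so $\ln(n/e)\geq W_0(c/e)$ and therefore $n\ln(n/e)\geq (c/W_0(c/e))\cdot W_0(c/e)=c$, closing the argument. The simpler sufficient condition $L\geq\ln(r/\varepsilon)/\ln\ln(r/\varepsilon)-1/2$ will then follow from the estimate $W_0(c/e)\geq\tfrac{1}{2}\ln c$, which one verifies by plugging $v=\tfrac{1}{2}\ln c$ into $ve^{v}$ and checking the resulting scalar inequality $\ln c\leq 2\sqrt{c}/e$ across the relevant range $c\geq 1$; this holds because the function $2\sqrt{c}/e-\ln c$ attains its global minimum value $0$ at $c=e^{2}$.

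The main technical obstacle I anticipate is the Lambert-$W$ manipulation rather than the core approximation estimate: one must carefully verify that the inequality on $W_0$ correctly translates into the polynomial-type condition $n\ln(n/e)\geq c$, and that the simplified double-logarithm bound is valid over the entire parameter regime (especially near the boundary $\varepsilon\uparrow 1/e$ with $r=1$, where $\ln\ln(r/\varepsilon)$ is small and the monotonicity arguments become delicate). The Taylor remainder, triangle inequality, factorial tail, and Stirling portions of the argument are entirely routine.
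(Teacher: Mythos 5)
Your proposal is correct and follows essentially the same route as the paper's proof: Taylor remainder plus triangle inequality using $\|\rho\|\le 1$ and $|T|/r\le 1$, the factorial tail bound $\sum_{q\ge 2L+2}1/q!\le 1/(2L+1)!$, the Stirling-type estimate $n!\ge (n/e)^n$, and the identity $e^{W_0(x)}=x/W_0(x)$ to translate the hypothesis into $n\ln(n/e)\ge \ln(r/\varepsilon)$. Even your derivation of the simplified double-logarithm condition reduces to the same scalar inequality $\ln c\le 2\sqrt{c}/e$ (minimized to $0$ at $c=e^2$) that underlies the paper's direct estimate of $\ln(1/Q!)$, so the two arguments differ only in packaging.
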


\begin{proof}
From the Taylor expansion, 
\begin{align}
    \left\|\sum_{q=0}^{2L+1}\frac{(T/r)^{q}}{q!}(-i\rho)^{q}-e^{-i\rho T/r}\right\|
    &=\left\|
    \sum_{q=2L+2}^{\infty} \frac{(T/r)^q}{q!}(-i\rho)^q
    \right\|\leq \sum_{q=2L+2}^{\infty}\left\| \frac{|T|^q}{r^q q!} (-i\rho)^q \right\|\nonumber\\
    &\leq \sum_{q=2L+2}^{\infty} \frac{1}{q!}\nonumber\\
    &< \frac{1}{(2L+1)!}\leq \left(\frac{e}{2L+1}\right)^{2L+1}.
\end{align}
In passing from the second to third line, we used the relation $\sum_{k=1}^{\infty}\frac{1}{(n+k)!}  <\frac{1}{(2L+1)!}$ with $n:=2L+1$. Indeed, for every $k$ we have $(n+k)!\geq n!(n+1)^k $ and hence
\begin{equation}
    \sum_{k=1}^{\infty} \frac{1}{(n+k)!} \leq \frac{1}{n!} \sum_{k=1}^{\infty} \left( \frac{1}{n+1} \right)^k = \frac{1}{n!} \cdot \frac{\frac{1}{n+1}}{1-\frac{1}{n+1}} = \frac{1}{n \cdot n!} \leq \frac{1}{(2L+1)!}.
\end{equation}
In the last line, we applied the lower bound of the factorial $n!\geq(n/e)^n$, which holds for all positive integers $n$.
    For any $c\in(0,1)$, let $s(c)\geq e$ be the unique solution to the equation $(e/s)^s=c$ ($s>e$).
    Observing that
    \begin{equation}
        \frac{\ln(1/c)}{e}=\frac{\ln(1/c)}{s(c)}e^{\frac{\ln(1/c)}{s(c)}},
    \end{equation}
    we can write $s(c)$ with the principal branch of the Lambert-W function $W_0(x)$ satisfying $W_0(x)\exp({W_0(x)})=x$:
    \begin{equation}
        s(c)=\frac{\ln(1/c)}{W_0((1/e)\ln(1/c))}.
    \end{equation}
    Setting $c=\varepsilon/r\in(0,1)$, we have Eq.~\eqref{eq:taylor_oneside} if the integer $L$ satisfies 
    \begin{equation}
        2L+1\geq s(c)=\frac{\ln(r/\varepsilon)}{W_0((1/e)\ln(r/\varepsilon))}.
    \end{equation}
    
    Now, for any $\delta'\in (0,1/e)$, we show that an integer $Q$ satisfies 
    ${1}/{Q!}\leq {\delta'}$ if $Q\geq 2\ln(1/\delta')/\ln (\ln(1/\delta'))$. 
    Defining  $\kappa:=\ln(1/\delta')>1$ and an integer $Q:=\lceil 2\kappa/\ln(\kappa)\rceil>2e$, we obtain 
    \begin{equation}
        \ln \frac{1}{Q!}<Q-Q\ln Q=Q\ln \kappa\cdot \frac{1-\ln Q}{\ln \kappa}\leq -\frac{1}{2}Q\ln\kappa<-\kappa.
    \end{equation}
    Thus, by taking $Q = 2L+1$ and $\delta' = \varepsilon/r$, we arrive at Eq.~\eqref{eq:l-condition}.
\end{proof}

\subsection{Construction of the random modified-quantum map}
\label{sec:S22}

We here show the explicit construction method of the random modified-quantum map $\hat{\Phi}_r$ which approximates $\mathcal{U}[e^{-i\rho T}]$ in the sense of mean.
First of all, for a desired error $\varepsilon\in(0,1/2)$, we calculate the integer $L$ 
\begin{equation}
    L:=\left\lceil\frac{1}{2}\frac{\ln(3r/\varepsilon)}{W_0((1/e)\ln(3r/\varepsilon))}-\frac{1}{2}\right\rceil.
\end{equation}
Then, we sample a sequence of random variables $(l_1,l_1',...,l_r,l_r')$, the elements of which are independent random variables $l$ following the identical probability distribution \eqref{def of C}, i.e., 
\begin{equation}\label{eq:sample_distribution_l}
    p_l(T/r):=\frac{1}{C(T/r)}\frac{(T/r)^{2l}}{(2l)!}\sqrt{1+\left(\frac{T/r}{2l+1}\right)^{2}},
\end{equation}
\begin{equation}
    C(T/r):=\sum_{l=0}^{L}\frac{(T/r)^{2l}}{(2l)!}\sqrt{1+\left(\frac{T/r}{2l+1}\right)^{2}}.
\end{equation}
We note that the support of the random variable $l$ is $\{0,1,...,L\}$.
Examples of the probability distribution are provided in Fig.~\ref{fig:probdist_DME}. 
For the sampled sequence $(l_1,l_1',...,l_r,l_r')$, we construct the quantum channel 
\eqref{eq:sample_partial_expression}, i.e., 
\begin{equation}
\label{eq:sample_each_expression}
     \widetilde{\Phi}_{l_1,l_1',...,l_r,l_r'}
       :=\widetilde{\Gamma}_{l_r,l'_r}\circ\cdots\circ\widetilde{\Gamma}_{l_1,l'_1},
\end{equation}
where $\widetilde{\Gamma}_{l,l'}$ is a quantum channel defined in Fig.~\ref{fig:element_modified_random_map}.
As a result, we realize the map $\hat{\Phi}_r$:
\begin{equation}
\label{eq:sample_total_expression}
    \hat{\Phi}_r(\bullet)={\rm tr}_{\rm anc}\left[(X_{\rm anc}\otimes \bm{1})\widetilde{\Phi}_{l_1,l_1',...,l_r,l_r'}\left(\ketbra{+}_{\rm anc}\otimes \bullet\right)\right].
\end{equation}
Note again that this map has been randomly chosen according to the probability distribution of $(l_1,l_1',...,l_r,l_r')$, each of which is independently and identically subjected to Eq.~\eqref{def of C}.


As shown in Fig.~\ref{fig:element_modified_random_map}, 
to implement a quantum circuit for $\widetilde{\Phi}_{l_1,l_1',...,l_r,l_r'}$, we use the controlled version of the SWAP gate and the partial SWAP gate; the numbers of these gates contained in the circuit are summarized in Table~\ref{tab:resources}.
We remark that the controlled SWAP and controlled partial SWAP are further decomposed into $\mathcal{O}(\log d)$ elementary gates~\cite{Kimmel2017-hv}, where $d$ is the dimension of the target density matrix $\rho$.
Also, the controlled partial SWAP gate can be implemented using three controlled SWAP gates and a constant number of two-qubit elementary gates as shown in Fig.~\ref{fig:controlled_partial_swap}.
The circuit construction is based on the linear combination of unitaries method followed by a single-step oblivious amplitude amplification~\cite{berry2014exponential}.

\begin{figure}[htb]
 \centering
 \includegraphics[scale=0.8]{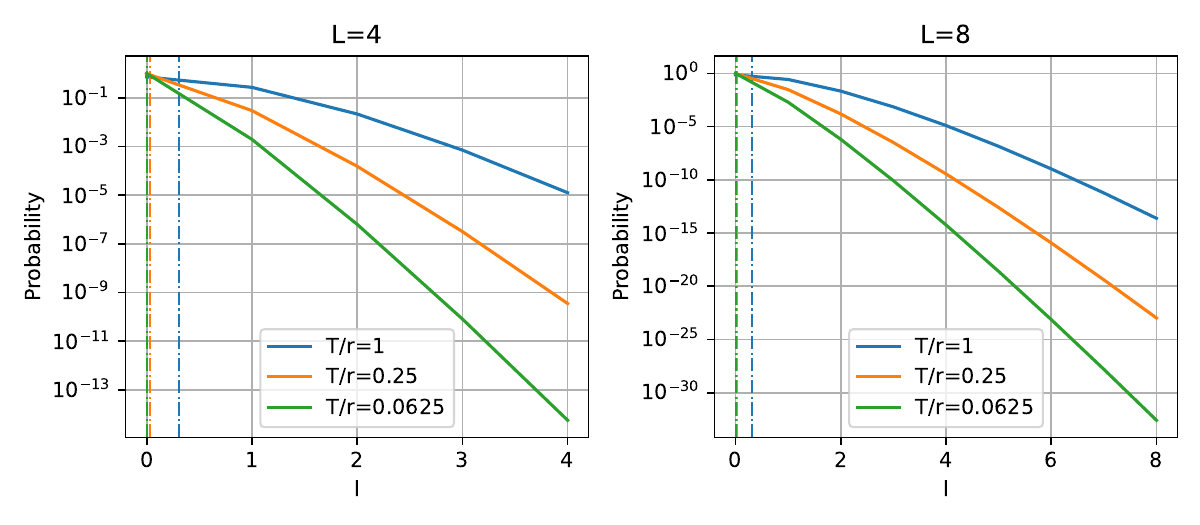}
 \caption{Examples of the probability distribution $p_l(T/r)$ of the random variable $l$ whose support is $\{0,1,...,L\}$.
 The vertical dashdot lines represent the mean value $\mathbb{E}[l]$ of each probability distribution. 
 The probability of getting a large $l$ is exponentially small.}
 \label{fig:probdist_DME}
\end{figure}

\renewcommand{\baselinestretch}{1.3}
\begin{table}[ht]
\centering
\begin{tabular}{c|c}
\hline
\textbf{Quantum resource} & \textbf{Number per circuit} \\ \hline\hline
Target quantum state $\rho$ in $d$ dim.& $2r+\sum_{j=1}^r 2(l_j'+l_j)$ \\
Controlled SWAP gate $S$& $\sum_{j=1}^r 2(l_j'+l_j)$ \\
Controlled partial SWAP gate $e^{-i\delta S}$& $2r$ \\
\hline
\end{tabular}
\renewcommand{\baselinestretch}{1.0}
\caption{Summary of quantum resources for constructing the random map $\hat{\Phi}_r$, 
which is determined from the sequence of independent random variables $(l_1,l_1',...,l_r,l_r')$ with each $l$ subjected to the probability distribution Eq.~\eqref{eq:sample_distribution_l}. 
The SWAP gate $S$ acts on a $2d$ dimensional system, with $d$ the dimension of $\rho$. 
The total gate complexity scales as $\mathcal{O}(\log d)$. }
\label{tab:resources}
\end{table}

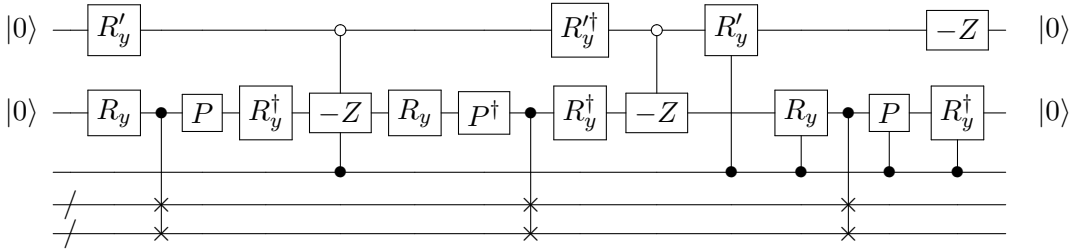
\begin{figure}[tb]
\centering
\begin{tabular}{c}
~~~~~\Qcircuit @C=0.6em @R=1em {
  \lstick{\ket{0}}&\qw & \gate{R_y'}&\qw &\qw&\qw&\ctrlo{1}&\qw&\qw&\qw&\gate{R_y'^\dagger}&\ctrlo{1}&\gate{R_y'}&\qw&\qw&\qw&\gate{{-Z}}&\qw&\rstick{\ket{0}}\\
  \lstick{\ket{0}}&\qw & \gate{R_y} &\ctrl{3} &\gate{P}&\gate{R_y^\dagger}&\gate{-Z}&\gate{R_y}&\gate{P^\dagger}&\ctrl{3}&\gate{R_y^\dagger}&\gate{-Z}&\qw&\gate{R_y}&\ctrl{3}&\gate{P}&\gate{R_y^\dagger}&\qw&\rstick{\ket{0}}\\
  &\qw & \qw &\qw &\qw&\qw&\ctrl{-1}&\qw&\qw&\qw&\qw&\qw&\ctrl{-2}&\ctrl{-1}&\qw&\ctrl{-1}&\ctrl{-1}&\qw\\
  &{/}\qw &\qw &\qswap&\qw&\qw&\qw&\qw&\qw&\qswap&\qw&\qw&\qw&\qw&\qswap&\qw&\qw&\qw\\
  &{/}\qw &\qw &\qswap&\qw&\qw&\qw&\qw&\qw&\qswap&\qw&\qw&\qw&\qw&\qswap&\qw&\qw&\qw\\
  }
\\
\end{tabular}
\caption{Quantum circuit for constructing the controlled $e^{-i\theta S}$ gate using additional two ancilla qubits. 
The first two qubits initialized $\ket{00}$ return $\ket{00}$ at the end of the circuit.
The third qubit is the control qubit.
For $\gamma:=|\cos\theta|+|\sin\theta|$, we define $R_y,R_y',P$ as $R_y:\ket{0}\mapsto \sqrt{|\cos\theta|/\gamma}\ket{0}+\sqrt{|\sin\theta|/\gamma}\ket{1}$, $R_y':\ket{0}\mapsto (\gamma/2)\ket{0}+\sqrt{1-\gamma^2/4}\ket{1}$, 
$P={\rm diag}({\rm sgn(\cos\theta)},-i{\rm sgn}(\sin\theta))$.}
\label{fig:controlled_partial_swap}
\end{figure}


\subsection{Procedure for expectation value estimation}
\label{sec:S23}


We can use the random modified-quantum map $\hat{\Phi}_r$ to simulate (approximate) $\mathcal{U}[e^{-i\rho T}]$. 
We here focus on the task of computing the expectation value 
\begin{equation}
    {\rm tr}\left[O\mathcal{U}[e^{-i\rho T}](\sigma )\right]
     = {\rm tr}\left[O e^{-i\rho T}\sigma e^{i\rho T}\right],
\end{equation}
for a given initial state $\sigma$ and an observable $O$.
A more general setup will be discussed in Section~\ref{sec:nearlyopt_qalg}.
The key observation is that from the proof of Theorem~\ref{thm:main_supple}, the map $\mathbb{E}[\hat{\Phi}_r]$ can also be written as
\begin{equation}
\label{eq:mean_modified channel_explicit}
    \mathbb{E}[\hat{\Phi}_r](\bullet)={\rm tr}_{\rm anc}\left[(X_{\rm anc}\otimes \bm{1})\left(\sum_{l_1,l_1',...,l_r,l_r'}p_{l_1}p_{l_1'}\cdots p_{l_r}p_{l_r'}\widetilde{\Phi}_{l_1,l_1',...,l_r,l_r'}\right)\left(\ketbra{+}_{\rm anc}\otimes \bullet\right)\right].
\end{equation}
Using this expression, we obtain
\begin{align}
    &{\rm tr}\left[O\mathcal{U}[e^{-i\rho T}](\sigma )\right]
    \simeq C^{2r}{\rm tr}\left[O\mathbb{E}[\hat{\Phi}_r](\sigma)\right]
    \notag\\
    &=C^{2r}{\rm tr}\left[(X_{\rm anc}\otimes O)\left(\sum_{l_1,l_1',...,l_r,l_r'}p_{l_1}p_{l_1'}\cdots p_{l_r}p_{l_r'}\widetilde{\Phi}_{l_1,l_1',...,l_r,l_r'}\right)\left(\ketbra{+}_{\rm anc}\otimes \sigma\right)\right].
\end{align}
This indicates that the following procedure generates an estimate of the target expectation value:
\begin{itemize}
    \item[1.] Randomly generate $(l_1,l_1',...,l_r,l_r')$ according to the probability $p_{l_1}p_{l_1'}\cdots p_{l_r}p_{l_r'}$
    \item[2.] Measure the observable $X_{\rm anc}\otimes O$ with the quantum state $\widetilde{\Phi}_{l_1,l_1',...,l_r,l_r'}(\ketbra{+}_{\rm anc}\otimes \sigma)$
    \item[3.] Multiply $C^{2r}$ on the measurement outcome. 
\end{itemize}
By repeating the procedure independently and taking the sample mean of the outputs, we can reduce the variance of the estimator. 
Although the final multiplication of $C^{2r}~(\leq e^{2T^2/r})$ amplifies the variance, we can make $C^{2r}$ arbitrary close to one by setting $r\geq T^2$. 
Hence, with this choice, we can ignore the measurement overhead in our procedure.
Since the approximation error of $\hat{\Phi}_r$ is at most $\varepsilon$ under the conditions of Theorem~\ref{thm:main_supple}, the estimation bias is ensured to be $\varepsilon\|O\|$ because of 
\begin{align}
     \left|C^{2r}{\rm tr}\left[O\mathbb{E}[\hat{\Phi}_r](\sigma)\right]-{\rm tr}\left[O\mathcal{U}[e^{-i\rho T}](\sigma )\right]\right|\leq \|O\|\|C^{2r}\mathbb{E}[\hat{\Phi}_r]-\mathcal{U}[e^{-i\rho T}]\|_{\diamond}
     \leq \varepsilon \|O\|.
\end{align}
Consequently, to estimate the target expectation value within an additive error $\mathcal{O}(\varepsilon \|O\|)$ (in a high probability), each of the quantum circuits in our procedure has at most 
\begin{equation}
\label{eq:simple_dme_copycost}
     N=2r+4rL=\mathcal{O}\left(T^2 \frac{\log(T/\varepsilon)}{\log\log(T/\varepsilon)}\right)
\end{equation}
copies of $\rho$, where we have employed the simplified version of $N$ given in Eq.~\eqref{def of L} together with $r=T^2$.
Also, from the discussion around Table~\ref{tab:resources}, its gate complexity needs additional $\log(d)$ factor with $d$ the dimension of $\rho$, which is thus given by 
\begin{equation}\label{eq:simple_dme_gatecost}
    \mathcal{O}\left(T^2 \frac{\log(T/\varepsilon)}{\log\log(T/\varepsilon)}\log(d)\right).
\end{equation}
These complexities logarithmically depend on the error $\varepsilon$.
Note that the total number of copies in the whole estimation procedure is equal to Eq.~\eqref{eq:simple_dme_copycost} multiplied by the final measurement cost $\mathcal{O}(1/\varepsilon^2)$, i.e., the number of repetitions to reduce the variance; 
consequently, for the task of observable estimation, the total complexity is also reduced from $\mathcal{O}(1/\varepsilon^3)$~\cite{Lloyd2014-nn} to $\tilde{\mathcal{O}}(1/\varepsilon^2)$ via the proposed method.

\subsection{Several variants}\label{supplesec_variants}
\subsubsection{Controlled version of $e^{-i\rho T}$}

From Theorem~\ref{thm:main_supple}, we can directly simulate the controlled version of $e^{-i\rho T}$.
The idea simply comes from:
\begin{equation}
    \mathcal{U}\left[e^{-iT\ketbra{1}\otimes \rho}\right]=\mathcal{U}\left[\ketbra{0}\otimes \bm{1}+\ketbra{1}\otimes e^{-i\rho T}\right].
\end{equation}
Thus, by taking $\ketbra{1}\otimes \rho$ as the target state in Theorem~\ref{thm:main_supple} instead of $\rho$, we can simulate the controlled operation of $e^{-i\rho T}$.

Note that in some applications, we use the following controlled operation:
\begin{equation}\label{eq:double_control_DME}
    \mathcal{U}\left[\ketbra{0}\otimes e^{i\rho T}+\ketbra{1}\otimes e^{-i\rho T}\right].
\end{equation}
This operation can be rewritten as
\begin{equation}
    \mathcal{U}\left[e^{iT\ketbra{0}\otimes \rho}\right]\circ \mathcal{U}\left[e^{-iT\ketbra{1}\otimes \rho}\right],
\end{equation}
so we can simulate it with two times use of the procedure in Theorem~\ref{thm:main_supple}; hence the number of necessary copies is doubled. 
However, instead of this approach, we here provide a more efficient procedure to simulate Eq.~\eqref{eq:double_control_DME}.
The following approach described in Theorem~\ref{thm:controlledDME} requires the same number of copies as the case without control in Theorem~\ref{thm:main_supple}.

\begin{thm}\label{thm:controlledDME}
    Let $\rho$ be an unknown but accessible $d$-dimensional quantum state.
    For any $T\in \mathbb{R}\backslash\{0\}$ and any integer $r\geq \max\{1,|T|\}$, 
    we can construct a random modified-quantum map $\hat{\Phi}'_r$ using at most $N$ copies of $\rho$ and a single-ancilla measurement such that
    \begin{equation}
        \left\|C^{2r}\mathbb{E}[\hat{\Phi}'_r] 
          - \mathcal{U}\left[\ketbra{0}\otimes e^{i\rho T}+\ketbra{1}\otimes e^{-i\rho T}\right] \right\|_{\diamond}\leq \varepsilon
    \end{equation}
    holds for a given $\varepsilon\in(0,1/2)$, if $N$ satisfies 
    \begin{equation}
        N\geq 2r+4r\left\lceil\frac{1}{2}\frac{\ln(3r/\varepsilon)}{W_0((1/e)\ln(3r/\varepsilon))}-\frac{1}{2}\right\rceil.
    \end{equation}
    Here, $W_0$ is the principal branch of the Lambert-W function, and $C$ is a positive value that satisfies 
    \begin{equation}
        1<C^{2r}\leq e^{2T^2/r}.
    \end{equation}
\end{thm}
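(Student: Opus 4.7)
The plan is to adapt the proof of Theorem~\ref{thm:main_supple} by recognizing that the doubly-controlled operator can be written as $V_c = e^{-i(Z'\otimes \rho)T}$, where $Z':=\ketbra{1}-\ketbra{0}$ acts on the control qubit. The naive route of separately applying Theorem~\ref{thm:main_supple} to $\ketbra{0}\otimes\rho$ and $\ketbra{1}\otimes\rho$ doubles the copy count, so the task is to carry out the Taylor-expansion argument directly with the Hermitian generator $Z'\otimes\rho$ while paying only for copies of $\rho$ itself.

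First, I would repeat the grouping \eqref{eq:def_SL_mainthm1} with $\rho$ replaced by $Z'\otimes\rho$, writing
\begin{equation}
S_L^{(c)}(T/r)
= \sum_{l=0}^{L} \frac{(T/r)^{2l}}{(2l)!}\,(-i Z'\otimes\rho)^{2l}\!
\left[\bm{1}\otimes\bm{1} - i\tfrac{T/r}{2l+1}\,Z'\otimes\rho\right].
\end{equation}
The crucial structural observation is that $Z'^2=\bm{1}$, so $(-iZ'\otimes\rho)^{2l} = \bm{1}\otimes(-i\rho)^{2l}$. Consequently the same $C_l(T/r)$ and $\theta_{l,|T|/r}$ as in Theorem~\ref{thm:main_supple} appear, and the truncation bound of Lemma~\ref{lem:DME_taylor_truncation} carries over verbatim: $\|S_L^{(c)}(T/r)-e^{-iZ'\otimes\rho\,T/r}\|\leq \varepsilon/(3r)$, because the operator norm of $Z'\otimes\rho$ is still bounded by one.

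Next, I would build the corresponding random modified-quantum map by reusing the two basic superoperators from Eqs.~\eqref{eq:basic_op1_main_thm}--\eqref{eq:basic_op2_main_thm}, with one modification. The even-degree factor $\bm{1}\otimes(-i\rho)^{2l}$ is implemented by $\Upsilon_{\pi/2,2l}$ acting on the data register alone, which consumes $2l$ copies of $\rho$ exactly as before and does not touch the control. The odd-degree factor $\cos(\theta_{l,|T|/r})\bm{1}\otimes\bm{1} - i\,\mathrm{sgn}(T)\sin(\theta_{l,|T|/r})\,Z'\otimes\rho$ is realized by replacing each partial SWAP $e^{-i\delta S}$ with the controlled partial SWAP $e^{-i\delta Z'\otimes S}$, applied between the control qubit, the data register, and one fresh copy of $\rho$, followed by a partial trace on the copy. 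By the SWAP trick \eqref{eq:generalized_swap_tric_1}--\eqref{eq:generalized_swap_tric_2} (with $Z'$ a $\pm1$ eigenvalue along the control diagonal), this produces exactly the desired asymmetric superoperator. The controlled partial SWAP still decomposes into $\mathcal{O}(\log d)$ elementary gates, since it is a standard controlled version of the gate described in Fig.~\ref{fig:controlled_partial_swap}.

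Finally, I would assemble $\hat{\Phi}'_r$ by applying the LCS formula \eqref{eq:LCS_formula} to the same $r$-fold product of the sampled elementary maps $\Gamma_{l,l'}$ (now with the controlled partial SWAPs) with the same probability weights $p_{l_1}p_{l_1'}\cdots p_{l_r}p_{l_r'}$. The constant $C^{2r}$ is unchanged because $C(T/r)$ only depends on $|T|/r$, not on whether the exponentiation variable is $\rho$ or $Z'\otimes\rho$. The diamond-norm error bound then follows line by line as in the final step of the proof of Theorem~\ref{thm:main_supple}, using the same Hölder-type telescoping and the bound $(1+\tfrac{a}{r})^r\leq e^a$. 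The main obstacle — avoiding the factor-of-two blow-up in copy count — is dissolved by the identity $Z'^2=\bm{1}$ that collapses every even power of the composite generator onto a power of $\rho$ alone; once that is in place, the entire machinery of Theorem~\ref{thm:main_supple} carries through with only the cosmetic change of inserting the control qubit into the partial SWAPs.
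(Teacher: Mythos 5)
Your proposal is correct and follows essentially the same route as the paper: the paper writes ${\rm c}S_L=\ketbra{0}\otimes S_L(-T/r)+\ketbra{1}\otimes S_L(T/r)$ and observes that the even-power factors $(-i\rho)^{2l}$ are common to both branches, so only the odd-order correction needs a control --- which is exactly your $Z'^2=\bm{1}$ observation in different notation, leading to the same controlled partial SWAP $e^{-i(2j-1)\mathrm{sgn}(T)\theta_{l,|T|/r}S}$, the same coefficients $C_l$, probabilities $p_l$, and LCS/error analysis.
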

\begin{proof}
    The following has the same flow as the proof of Theorem~\ref{thm:main_supple}.
    For the operator $S_L(T/r)$ defined by Eq.~\eqref{eq:def_SL_mainthm1}, we consider the following operation:
    \begin{equation}
        {\rm c}S_L:=\ketbra{0}\otimes S_L(-T/r)+\ketbra{1}\otimes S_L(T/r).
    \end{equation}
    This can be rewritten as
    \begin{align}
        {\rm c}S_L
        &= \sum_{l=0}^L C_l(T/r)\left({\bm{1}}\otimes (-i\rho)^{2l} 
    \right)\cdot \left(\ketbra{0}\otimes \left(\cos(\theta_{l,|T|/r}){\bm{1}}+i{\rm sgn}(T)\sin(\theta_{l,|T|/r})\rho\right) \right)\notag\\
        &~~~~~+\sum_{l=0}^L C_l(T/r)\left({\bm{1}}\otimes (-i\rho)^{2l}\right) \cdot \left( \ketbra{1}\otimes \left(\cos(\theta_{l,|T|/r}){\bm{1}}-i{\rm sgn}(T)\sin(\theta_{l,|T|/r})\rho\right)\right)\notag\\
        &= \sum_{l=0}^L C_l(T/r)\sum_{j=0}^1\ket{j}\bra{j}\otimes  (-i\rho)^{2l} \left(\cos(\theta_{l,|T|/r}){\bm{1}}-i(2j-1){\rm sgn}(T)\sin(\theta_{l,|T|/r})\rho\right)
    \end{align}
    Now, similar to $\Gamma_{l,l'}$ in the proof of Theorem~\ref{thm:main_supple}, we define the superoperator $\Gamma'_{l,l'}$ by
    \begin{align}
        &\Gamma'_{l,l'}(\bullet):=\notag\\
        &{\rm tr}_{\rm E}\left[\left({\bm{1}}\otimes \prod_{k=1}^{2l} e^{-i\pi/2 S_{{\rm targ},{\rm E}_{k}}} \right)\cdot
        \left(\sum_{j=0}^1\ket{j}\bra{j}\otimes e^{-i(2j-1){\rm sgn}(T)\theta_{l,|T|/r}S_{{\rm targ},{\rm E}_1}}\right)(\bullet)\otimes \rho^{\otimes 2l+1}\right]\notag\\
        &~~~~~\circ {\rm tr}_{\rm E}\left[(\bullet)\otimes \rho^{\otimes 2l'+1}\cdot \left(\sum_{j=0}^1\ket{j}\bra{j}\otimes e^{i(2j-1){\rm sgn}(T)\theta_{l',|T|/r}S_{{\rm targ},{\rm E}_1}}\right)\cdot \left({\bm{1}}\otimes \prod_{k=1}^{2l'} e^{i\pi/2 S_{{\rm targ},{\rm E}_k}}\right)\right].
    \end{align}
    Then, we have
    \begin{equation}
        {\rm c}S_L A({\rm c}S_L)^\dagger =\sum_{l,l'}C_lC_{l'}\Gamma'_{l,l'}(A)
    \end{equation}
    and 
    \begin{equation}
        {\rm c}S_L^r A({\rm c}S^r_L)^\dagger =\sum_{l_1,l'_1,...,l_r,l_r'}C_{l_1}C_{l'_1}\cdots C_{l_r}C_{l'_r}\left(\Gamma'_{l_r,l'_r}\circ \cdots \circ \Gamma'_{l_1,l'_1}\right)(A).
    \end{equation}
    
    From the theory of LCS, we can construct the corresponding quantum channel $\widetilde{\Phi}'_{l_1,l'_1,...,l_r,l_r'}$ for the superoperator $\Gamma'_{l_r,l'_r}\circ \cdots \circ \Gamma'_{l_1,l_1'}$ as 
    \begin{equation}
        \widetilde{\Phi}'_{l_1,l'_1,...,l_r,l_r'}=\widetilde{\Gamma}'_{l_r,l'_r}\circ \cdots \circ \widetilde{\Gamma}'_{l_1,l_1'},
    \end{equation}
    where $\widetilde{\Gamma}'_{l,l'}$ is defined as in Fig.~\ref{fig:element_modified_random_map_Wcontrol}. 
    From the circuit description, the quantum channel $\widetilde{\Phi}'_{l_1,l'_1,...,l_r,l_r'}$ has at most $2r+4rL$ copies of $\rho$.
    Using these superoperators $\widetilde{\Phi}'_{l_1,l'_1,...,l_r,l_r'}$, we can define a random modified-quantum map $\hat{\Phi}'_r$ that realizes 
    \begin{equation}
    \hat{\Phi}'_r(\bullet)={\rm tr}_{\rm anc}\left[(X_{\rm anc}\otimes \bm{1})\widetilde{\Phi}'_{l_1,l_1',...,l_r,l_r'}\left(\ketbra{+}_{\rm anc}\otimes \bullet\right)\right]
    \end{equation}
    with the probability 
    \begin{equation}
    p_{l_1}p_{l_1'}\cdots p_{l_r}p_{l_r'}.
    \end{equation}
    Then, we arrive at
    \begin{equation}
        C^{2r}\mathbb{E}[\hat{\Phi}'_r(A)]={\rm c}S_L^r A({\rm c}S^r_L)^\dagger.
    \end{equation}
    The approximation error between ${\rm c}S_L^r A({\rm c}S^r_L)^\dagger$ and the target unitary channel can be bounded in the same way as the proof of Theorem~\ref{thm:main_supple}.
\end{proof}

\begin{figure*}[tb]
\centering
\begin{tabular}{c}
\\
\\
~~~~~\Qcircuit @C=0.55em @R=1.5em {
  \ustick{1}&{/} \qw&\qw &\ctrl{1}&\ctrl{4}&\ctrl{5}&\qw&\cdots &&\ctrl{7}&\gate{Z^{l'}}&\qw&\qw&\qw&\qw&\qw&\ctrlo{1}&\ctrlo{4}&\qw&\cdots &&\ctrlo{6}&\gate{(-Z)^{l}}&\qw\\
  &{/} \qw&\ustick{~~~~~~~~~~~~~~~~~~~~~~~~~j}\qw &\ctrl{1}&\qw&\qw&\qw&\cdots &&\qw&\qw&\qw&\qw&\qw&\qw&\ustick{~~~~~~~~~~~~~~~~~~~~~~~~~j}\qw&\ctrl{1}&\qw&\qw&\cdots &&\qw&\qw&\qw
  \inputgroupv{2}{3}{.8em}{.8em}{{1+\log_2 d}~~~~~~~~}\\
  &{/}\qw &\qw &\multigate{1}{e^{-i(2j-1){\rm sgn}(T)\theta_{l',|T|/r}S}}&\qswap&\qswap &\qw&\cdots &&\qswap&\qw&\qw&\qw&\qw&\qw&\qw&\multigate{1}{e^{-i(2j-1){\rm sgn}(T)\theta_{l,|T|/r}S}}&\qswap&\qw&\cdots &&\qswap&\qw&\qw\\
  \lstick{\rho_{{\rm E}_1}}&{/}\qw &\qw &\ghost{e^{-i(2j-1){\rm sgn}(T)\theta_{l',|T|/r}S}}&\qw&\qw&\qw&\qw&\qw&\qw&\measure{\mbox{out}}&&&&\lstick{\rho_{{\rm E}_1}}&{/}\qw&\ghost{{e^{-i(2j-1){\rm sgn}(T)\theta_{l,|T|/r}S}}}&\qw&\qw&\qw&\qw&\qw&\measure{\mbox{out}}\\
  \lstick{\rho_{{\rm E}_2}}&{/}\qw &\qw&\qw &\qswap&\qw&\qw&\qw&\qw&\qw&\measure{\mbox{out}}&&&&\lstick{\rho_{{\rm E}_2}}&{/}\qw&\qw&\qswap&\qw&\qw&\qw&\qw&\measure{\mbox{out}}\\
  \lstick{\rho_{{\rm E}_3}}&{/}\qw &\qw&\qw &\qw&\qswap&\qw&\qw&\qw&\qw&\measure{\mbox{out}}&&&&\vdots\\
  \vdots&&&&&&&&&&&&&&\lstick{\rho_{{\rm E}_{2l+1}}}&{/}\qw&\qw&\qw&\qw&\cdots &&\qswap&\measure{\mbox{out}}\\
  \lstick{\rho_{{\rm E}_{2l'+1}}}&{/}\qw &\qw&\qw&\qw&\qw&\qw&\cdots &&\qswap&\measure{\mbox{out}}&&&\\
  }
\\
\\
\end{tabular}
\caption{Quantum circuit for $\widetilde{\Gamma}'_{l,l'}$ in $\hat{\Phi}_r'$ to simulate $\mathcal{U}[\ketbra{0}\otimes e^{i\rho T}+\ketbra{1}\otimes e^{-i\rho T}]$. Here, the angle for the partial SWAP gate is given by Eq.~\eqref{eq:angle_for_partialswap}.}
\label{fig:element_modified_random_map_Wcontrol}
\end{figure*}

\subsubsection{Pure target state $\rho=\ketbra{\psi}$}

Here we focus on the case where the target state $\rho=\ketbra{\psi}$ is a pure state. 
In this case, surprisingly, the accuracy dependence in the required number of copies in Theorem~\ref{thm:main_supple} can be completely eliminated.
To prove the following theorem, it is crucial that the power of a pure state remains the same pure state.
\begin{thm}\label{thm:DME_pure_state}
    Let $\rho=\ketbra{\psi}$ be an unknown but accessible $d$-dimensional pure quantum state.
    For any $T\in [-2\pi,2\pi]\backslash\{0\}$ and any integer $r\geq \max\{1,|T|\}$, 
    we can construct a random modified-quantum map $\hat{\Phi}_{{\rm pure},r}$ using at most $N$ copies of $\rho$ and a single-ancilla measurement such that
    \begin{equation}
        C_{\rm pure}^{2r}\mathbb{E}[\hat{\Phi}_{{\rm pure},r}]=\mathcal{U}[e^{-i\ketbra{\psi}T}]
    \end{equation}
    holds, if $N$ satisfies 
    \begin{equation}
        N\geq 2r.
    \end{equation}
    Here, $C_{\rm pure}$ is a positive value that satisfies 
    \begin{equation}
        1<C_{\rm pure}^{2r}\leq e^{2T^2/r}\leq e^{8\pi^2/r}.
    \end{equation}
\end{thm}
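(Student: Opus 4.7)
My plan is to adapt the proof of Theorem~\ref{thm:main_supple} by exploiting the idempotence $\rho^{2l}=\rho$ ($l\geq 1$) that holds for any pure state $\rho=\ketbra{\psi}$. Starting from the expansion
\[
S_L(T/r)=\sum_{l=0}^{L}C_l(T/r)\,(-i\rho)^{2l}\bigl(\cos\theta_{l,|T|/r}\bm{1}-i\,\mathrm{sgn}(T)\sin\theta_{l,|T|/r}\,\rho\bigr)
\]
used in the general proof (cf.\ Eq.~\eqref{eq:def_SL_mainthm1}), the pure-state identity collapses the prefactor $(-i\rho)^{2l}$ to $(-1)^l\rho$ when $l\geq 1$, so the $l$-th summand becomes the rank-one scalar multiple $(-1)^l e^{-i\,\mathrm{sgn}(T)\theta_{l,|T|/r}}\rho$ of $\rho$. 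Such an operator can be realized on the target by \emph{a single} partial-SWAP at angle $\pi/2$ consuming one ancillary copy of $\rho$, with the sign $(-1)^l$ and the phase $e^{-i\,\mathrm{sgn}(T)\theta_{l,|T|/r}}$ absorbed into one ancilla Pauli-$Z^l$ gate and one ancilla $Z$-rotation. The $l=0$ summand is realized by a single partial-SWAP at angle $\mathrm{sgn}(T)\theta_{0,|T|/r}$, again with one copy of $\rho$, as in Theorem~\ref{thm:main_supple}. Hence every sampled $l$ costs only one copy per side, eliminating the $2l$ extra SWAP gates that the general-state construction needs for the $(-i\rho)^{2l}$ factor.

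Given this reduction, I would follow the same scaffold as Theorem~\ref{thm:main_supple}: independently sample $(l_j,l_j')$ for $j=1,\ldots,r$ from $\{p_l(T/r)\}_{l=0}^{\infty}$ in Eq.~\eqref{eq:sample_distribution_l}---now with no truncation, since the per-term copy cost is no longer $l$-dependent---and build the LCS-dilated composition $\widetilde{\Gamma}^{\rm pure}_{l_r,l_r'}\circ\cdots\circ\widetilde{\Gamma}^{\rm pure}_{l_1,l_1'}$ with a single ancilla qubit and terminal Pauli-$X$ measurement, as in Section~\ref{sec:theory_of_LCS}. Define $\hat{\Phi}_{{\rm pure},r}$ to be the resulting random modified-quantum map. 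Each $\widetilde{\Gamma}^{\rm pure}_{l,l'}$ consumes exactly two copies of $\rho$ (one per side), giving the total budget $N=2r$. Because $L=\infty$, the series now equals $e^{-i\rho T/r}$ \emph{exactly} for pure $\rho$, and the averaging identity from the proof of Theorem~\ref{thm:main_supple} directly yields
\[
C_{\rm pure}^{2r}\,\mathbb{E}\bigl[\hat{\Phi}_{{\rm pure},r}\bigr]=\mathcal{U}\bigl[e^{-i\rho T}\bigr]
\]
with no truncation error, where $C_{\rm pure}:=C(T/r)=\sum_{l=0}^{\infty}C_l(T/r)$. The estimate $C(T/r)\leq e^{(T/r)^2}$ invoked in Theorem~\ref{thm:main_supple} then gives $C_{\rm pure}^{2r}\leq e^{2T^2/r}\leq e^{8\pi^2/r}$ under the hypothesis $|T|\leq 2\pi$ and $r\geq\max\{1,|T|\}$.

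The step that I expect to require the most care is the LCS bookkeeping of the $l$-dependent complex scalars $i(-1)^l e^{-i\,\mathrm{sgn}(T)\theta_{l,|T|/r}}$ (and their conjugates on the right side) that appear when $(-i\rho)^{2l}$ is collapsed: they must be implemented via controlled Pauli-$Z$ and $Z$-rotation gates on the single ancilla so that the branching gate $\ketbra{0}_{\rm anc}\otimes U_l+\ketbra{1}_{\rm anc}\otimes V_{l'}$ remains unitary, and one must verify that $\sum_{l,l'}p_l p_{l'}\,\Gamma^{\rm pure}_{l,l'}$ is still Hermitian-preserving, which is precisely what is needed for the LCS formula~\eqref{eq:LCS_formula} to apply verbatim. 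A secondary point is that the sampling distribution has infinite support; but $C_l$ decays super-exponentially through the $1/(2l)!$ factor, so any practical implementation can truncate at a finite $L$ with negligible cost to both sample complexity and overhead.
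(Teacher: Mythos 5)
Your proposal is correct, and it reaches the theorem by a genuinely different decomposition than the paper's. The paper does not keep the grouped Taylor series term by term; instead it splits $e^{-i\ketbra{\psi}T/r}$ into exactly \emph{two} pieces, namely the $(q=0,1)$ group $\sqrt{1+(T/r)^2}\,(\cos\theta_{0,|T|/r}\bm{1}-i\,\mathrm{sgn}(T)\sin\theta_{0,|T|/r}\ketbra{\psi})$ plus the entire collapsed tail $[e^{-iT/r}-(1-iT/r)]\ketbra{\psi}$, yielding a two-point sampling distribution $\{p_0^{(\rm pure)},p_1^{(\rm pure)}\}$ and overhead $C_{\rm pure}=\sqrt{1+(T/r)^2}+|e^{-iT/r}-(1-iT/r)|$. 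You instead retain the infinite family of grouped terms from Eq.~\eqref{eq:def_SL_mainthm1} and use $\rho^{2l}=\rho$ to collapse each $l\geq 1$ summand to the phase multiple $(-1)^l e^{-i\,\mathrm{sgn}(T)\theta_{l,|T|/r}}\rho$, which is implementable with one copy per side exactly as you describe; the series converges exactly, so the claimed identity and the $N=2r$ copy count both follow, and the bound $C^{2r}\leq e^{2T^2/r}$ holds since $\sum_{l=0}^{\infty}C_l(x)\leq\cosh(x)\sqrt{1+x^2}\leq e^{x^2}$. The trade-offs are minor: the paper's single collapsed tail has $\ell_1$-coefficient $|e^{-iT/r}-(1-iT/r)|\leq\sum_{l\geq 1}C_l(T/r)$ by the triangle inequality, so its measurement overhead is never larger than yours (both satisfy the stated bound), and its two-outcome distribution matches the main text's claim of support size $2$, whereas your distribution has countably infinite (though super-exponentially decaying) support and requires the extra bookkeeping of $l$-dependent phases on the ancilla that you correctly flag. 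Your Hermiticity check for the LCS step is also handled correctly, since the averaged map is a positive multiple of a unitary conjugation.
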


We remark that if $\rho$ is pure, then $e^{-i\rho(T+2k\pi)}=e^{-i\rho T}$ holds for any integer $k$, and thus it is sufficient to consider $|T| \in (0,2\pi]$ without loss of generality.
Due to this fact, the sampling overhead $C^{2r}_{\rm pure}$ is always smaller than a constant; additionally, it can be arbitrarily close to one by increasing $r$.
Therefore, we can exactly simulate the DME of any pure state $\rho=\ketbra{\psi}$ using $N=2r=\mathcal{O}(1)$ copies of the state with a constant sampling overhead.
We emphasize that this improvement is highly powerful as the DME of pure states can be directly applied to simulate a (controlled) reflection operator with respect to an unknown pure state $\ketbra{\psi}$. 
Such a simulated reflection operator appears in various applications, e.g., universal quantum emulator in Section~\ref{apsec:uqe} and quantum precomputation in Section~\ref{apsec:precomp}.


\begin{proof}[Proof of Theorem~\ref{thm:DME_pure_state}]

We first observe that the Taylor series of $e^{-i\ketbra{\psi} T/r}$ can be written as
\begin{align}
    &e^{-i\ketbra{\psi} T/r}\notag\\
    &=\sum_{q=0}^{\infty} \frac{(-iT/r)^q}{q!}(\ketbra{\psi})^q\notag\\
    &=\bm{1}-i(T/r)\ketbra{\psi}+\sum_{q=2}^{\infty} \frac{(-iT/r)^q}{q!}\ketbra{\psi}\notag\\
    &=\sqrt{1+\left(\frac{T}{r}\right)^2}\left(\cos(\theta_{0,|T|/r})\bm{1}-i{{\rm sgn}(T)}\sin(\theta_{0,|T|/r})\ketbra{\psi}\right)+\left[e^{-iT/r}-\left(1-i\frac{T}{r}\right)\right]\ketbra{\psi},
\end{align}
where $\theta_{0,|T|/r}$ is defined by Eq.~\eqref{eq:angle_for_partialswap}.
To simplify the notation, we define 
\begin{equation}
    C_{\rm pure}(T/r):=\sqrt{1+(T/r)^2}+\sqrt{(\cos(T/r)-1)^2+(\sin(T/r)-T/r)^2},
\end{equation}
\begin{equation}
    p_0^{(\rm pure)}:=\frac{\sqrt{1+(T/r)^2}}{C_{\rm pure}(T/r)},~~~p_1^{(\rm pure)}:=\frac{\sqrt{(\cos(T/r)-1)^2+(\sin(T/r)-T/r)^2}}{C_{\rm pure}(T/r)},
\end{equation}
\begin{equation}
    e^{i\phi(T/r)}:=\frac{e^{-iT/r}-(1-iT/r)}{\sqrt{(\cos(T/r)-1)^2+(\sin(T/r)-T/r)^2}}.
\end{equation}
Using these, we have
\begin{align}
    \frac{e^{-i\ketbra{\psi}T/r}}{C_{\rm pure}}&=p_0^{\rm (pure)}\left(\cos(\theta_{0,|T|/r})\bm{1}-i{{\rm sgn}(T)}\sin(\theta_{0,|T|/r})\ketbra{\psi}\right)+p_1^{\rm (pure)}e^{i\phi(T/r)}\ketbra{\psi}
\end{align}
and consequently, for any $A$ 
\begin{align}
    &e^{-i\ketbra{\psi}T/r}A e^{i\ketbra{\psi}T/r}\notag\\
    &=[C_{\rm pure}]^2\sum_{l,l'}p^{(\rm pure)}_lp^{(\rm pure)}_{l'}\left({(e^{i\phi' l}\bm{1})\bullet(e^{-i\phi' l'}\bm{1})}\circ\Upsilon_{\delta_{l1}\pi/2+\delta_{l0}{{\rm sgn}(T)}\theta_{0,|T|/r},1}\circ\Upsilon^{\S}_{\delta_{l'1}\pi/2+\delta_{l'0}{{\rm sgn}(T)}\theta_{0,|T|/r},1}\right)(A)\notag\\
    &\equiv [C_{\rm pure}]^2\sum_{l,l'}p^{(\rm pure)}_lp^{(\rm pure)}_{l'}\Gamma_{l,l'}^{(\rm pure)}(A).
\end{align}
Here, the superoperators $\Upsilon_{\delta,1}$ and $\Upsilon_{\delta,1}^\S$ are defined as Eqs.~\eqref{eq:basic_op1_main_thm} and~\eqref{eq:basic_op2_main_thm}, respectively. Also, $\delta_{l1}$ and $\delta_{l0}$ denote the Kronecker delta.
The superoperator $(e^{i\phi' l}\bm{1})\bullet(e^{-i\phi' l'}\bm{1})$ acts as $A\mapsto e^{i\phi' l}Ae^{-i\phi' l'}$, where $\phi':=\phi+\pi/2$.
By repeating this, we arrive at 
\begin{align}
    e^{-i\ketbra{\psi}T}A e^{i\ketbra{\psi}T}=[C_{\rm pure}]^{2r}\sum_{l_1,l_1',...,l_r,l_r'}p^{(\rm pure)}_{l_1}p^{(\rm pure)}_{l'_1}\cdots p^{(\rm pure)}_{l_r}p^{(\rm pure)}_{l'_r}\left(\Gamma_{l_r,l'_r}^{(\rm pure)}\circ\cdots \circ\Gamma_{l_1,l'_1}^{(\rm pure)}\right)(A).
\end{align}

From the LCS theory, we can derive a quantum channel $\widetilde{\Phi}^{(\rm pure)}_{l_1,l_1',...,l_r,l_r'}$ for the superoperator $\Gamma_{l_r,l'_r}^{(\rm pure)}\circ\cdots \circ\Gamma_{l_1,l'_1}^{(\rm pure)}$ by replacing each superoperator according to Table~\ref{tab:translationlist}.
Each $\widetilde{\Phi}^{(\rm pure)}_{l_1,l_1',...,l_r,l_r'}$ has $2r$ copies of $\rho$.
Then, we conclude that the expectation of the random map
\begin{equation}
    \hat{\Phi}_{{\rm pure},r}(\bullet):={\rm tr}_{\rm anc}\left[(X_{\rm anc}\otimes \bm{1})\widetilde{\Phi}^{(\rm pure)}_{l_1,l_1',...,l_r,l_r'}\left(\ketbra{+}_{\rm anc}\otimes \bullet\right)\right]
\end{equation}
over all possible $l_1,l_1',...,l_r,l_r'$ exactly recovers the target unitary channel after multiplying the factor $C_{\rm pure}^{2r}$.
Because
\begin{equation}
    C_{\rm pure}(T/r)\leq 1+(T/r)^2 \leq e^{T^2/r^2}
\end{equation}
holds, we can bound $C_{\rm pure}^{2r}\leq e^{2T^2/r}$.
\end{proof}

\clearpage
\section{Optimal DME-based quantum algorithm for property estimation}
\label{sec:nearlyopt_qalg}

Let us consider a general problem to obtain the expectation value
\begin{equation}
\label{eq:target_exp_for_UM}
    f_M(\rho)={\rm tr}\left[OU_M\sigma U_M^\dagger\right]
\end{equation}
for a given initial state $\sigma$ and a given observable $O$.
Here, $U_M$ is an arbitrary unitary circuit that includes 
$M$ uses of a unitary gate in the form of $e^{-i\rho T}\otimes \bm{1}$ {or $e^{-i\rho T}$. Hence, the dimension of $\sigma$ and $O$ does not necessarily match that of $\rho$. 
Also, $\sigma$ may be a function of $\rho$ such as $\sigma=\rho\otimes \ketbra{0}$. 
The evolution time $T$ may be different for each gate. 
Without loss of generality, we can restrict $|T|\leq 1$, because $M$ can be taken as an arbitrary positive integer if one wants to simulate a long time evolution. 

{Our central question is the same as before;} when $\rho$ is unknown but we are accessible to its copies, how many copies are needed to estimate the expectation value $f_M(\rho)$ for a given additive error with a high probability? 
Importantly, as will be discussed in Section~\ref{supple_sec:applications}, the problem of estimating $f_M(\rho)$ contains a wide range of applications such as Hamiltonian simulation, emulation of unknown dynamics, quantum entropy estimation, principal component analysis, and linear system solver with precomputed resource states.

\subsection{Procedure and performance guarantee}
\label{subsection S3.1}

We can construct a quantum algorithm for efficiently computing $f_M(\rho)$, based on Theorem~\ref{thm:main_supple}, as follows. 
Note that the following algorithm also works when $e^{-i\rho T_1},e^{-i\rho T_2},...,e^{-i\rho T_M}$ have distinct generators $\{\rho_m\}$ as $e^{-i\rho_1 T_1},e^{-i\rho_2 T_2},...,e^{-i\rho_M T_M}$ in the target expectation value. 
The unitary channel $\mathcal{U}[U_M]=U_M\bullet U_M^\dagger$ can be written as
\begin{equation}
    \mathcal{U}[U_M]=\mathcal{V}_{M+1}\circ (\mathcal{U}[e^{-i\rho T_M}]\otimes \mathcal{I}_M)\circ \mathcal{V}_{M}\circ\cdots \circ (\mathcal{U}[e^{-i\rho T_1}]\otimes \mathcal{I}_1)\circ \mathcal{V}_1
\end{equation}
for some unitary channels $\{\mathcal{V}_m\}_{m=1}^{M+1}$ (which are independent from $\rho$) without loss of generality.
From Theorem~\ref{thm:main_supple}, we can construct random modified-quantum maps $\hat{\Phi}^{(m)}_{r_m}$ satisfying  
\begin{equation}\label{eq:error_seg}
    \|[C^{(m)}]^{2r_m}\mathbb{E}[\hat{\Phi}^{(m)}_{r_m}]-\mathcal{U}[e^{-i\rho T_m}]\|_{\diamond}=\mathcal{O} (\varepsilon/M).
\end{equation}
Then, $\mathcal{U}[U_M]=U_M\bullet U_M^\dagger$ can be well approximated by
\begin{equation}
    {\prod_{m}[C^{(m)}]^{2r_m}}\times \mathcal{V}_{M+1}\circ (\mathbb{E}[\hat{\Phi}^{(M)}_{r_M}]\otimes \mathcal{I}_M)\circ \mathcal{V}_{M}\circ\cdots \circ (\mathbb{E}[\hat{\Phi}^{(1)}_{r_1}]\otimes \mathcal{I}_1)\circ \mathcal{V}_1.
\end{equation}
Applying a randomly-chosen quantum map (including Pauli X measurements)
\begin{equation}\label{eq:sampled_circuit_for_UM}
    \mathcal{V}_{M+1}\circ (\hat{\Phi}^{(M)}_{r_M}\otimes \mathcal{I}_M)\circ \mathcal{V}_{M}\circ\cdots \circ (\hat{\Phi}^{(1)}_{r_1}\otimes \mathcal{I}_1)\circ \mathcal{V}_1
\end{equation}
to the initial state $\sigma$ and measuring the observable $O$, we obtain measurement outcomes. 
These outcomes constitute a sample of an estimator $\hat{g}$ that has a finite variance (at most $\|O\|^2$) and the following mean:
\begin{equation}\label{eq:final_exp_before_rescaling}
    {\rm tr}\left[O\left(\mathcal{V}_{M+1}\circ (\mathbb{E}[\hat{\Phi}^{(M)}_{r_M}]\otimes \mathcal{I}_M)\circ \mathcal{V}_{M}\circ\cdots \circ (\mathbb{E}[\hat{\Phi}^{(1)}_{r_1}]\otimes \mathcal{I}_1)\circ \mathcal{V}_1\right)(\sigma)\right].
\end{equation}
Then, multiplying this estimate by the rescaling factor $\prod_m [C^{(m)}]^{2r_m}$, we can estimate the target expectation value $f_M(\rho)$ of Eq.~\eqref{eq:target_exp_for_UM}.

Now, we evaluate the performance of the above quantum algorithm to generate $\hat{g}$.
The resulting estimator $\hat{g}':=(\prod_m [C^{(m)}]^{2r_m})\hat{g}$ has the bias (i.e., the difference between the mean and the target value) of $\mathcal{O}(\varepsilon)$, which can be confirmed by 
\begin{align}
        &\left|{\rm tr}[OU_M\sigma U_M^\dagger]-\mathbb{E}[\hat{g}']\right|\notag\\
        &=\left|{\rm tr}[OU_M\sigma U_M^\dagger]{-\prod_{m}[C^{(m)}]^{2r_m}\rm tr}\left[O\left(\mathcal{V}_{M+1}\circ (\mathbb{E}[\hat{\Phi}^{(M)}_{r_M}]\otimes \mathcal{I}_M)\circ\cdots \circ (\mathbb{E}[\hat{\Phi}^{(1)}_{r_1}]\otimes \mathcal{I}_1)\circ \mathcal{V}_1\right)(\sigma)\right]\right|\notag\\
        &\leq \|O\|\left\|\mathcal{U}[U_M]-\prod_{m}[C^{(m)}]^{2r_m}\mathcal{V}_{M+1}\circ (\mathbb{E}[\hat{\Phi}^{(M)}_{r_M}]\otimes \mathcal{I}_M)\circ\cdots \circ (\mathbb{E}[\hat{\Phi}^{(1)}_{r_1}]\otimes \mathcal{I}_1)\circ \mathcal{V}_1\right\|_{\diamond}\notag\\
        &\leq \mathcal{O}(\|O\|\varepsilon),
\end{align}
where the last line follows from Eqs.~\eqref{diamond_norm_of_xy} and \eqref{eq:error_seg} (and the summary shown in Fig.~\ref{fig:diamond_norm_of_xy}). 
On the other hand, the variance of the estimator is amplified by the factor of $\prod_{m}[C^{(m)}]^{2r_m}$, but this amplification can be upper bounded by a given constant $\gamma>1$. 
This is because, by setting $r_m=2M\ln^{-1}\gamma$, we have 
\begin{equation}
    \prod_{m}[C^{(m)}]^{2r_m}\leq e^{2\sum_{m} T_m^2/r_m}\leq e^{2\sum_{m} 1/r_m}
     =\gamma, 
\end{equation}
where $|T_m|\leq 1~\forall m$ is used. 
For instance, if one allows for doubling the variance, then $\gamma=\sqrt{2}$ and $\ln^{-1}\gamma \simeq 2.89$.
Therefore, generating $\mathcal{O}(1/\varepsilon^2)$ independent samples of the estimator $\hat{g}$ and taking the sample-mean, we can estimate the target expectation value $f_M(\rho)$ within the additive error $\mathcal{O}(\varepsilon)$ with a high probability. 
Finally, to generate a single sample of $\hat{g}$, each of the quantum circuits of Eq.~\eqref{eq:sampled_circuit_for_UM} uses at most
\begin{equation}\label{eq:general_alg_numcopies}
    \sum_{m=1}^M \left[2r_m+r_m\cdot  \mathcal{O}\left(\frac{\log(r_m M/\varepsilon)}{\log\log(r_m M/\varepsilon)}\right)\right]=\mathcal{O}\left(M^2\frac{\log(M/\varepsilon)}{\log\log(M/\varepsilon)}\right)
\end{equation}
copies of the unknown quantum state $\rho$ due to the choice of $r_m=2M\ln^{-1}\gamma=\mathcal{O}(M)$. 
The gate complexity except for the interleaving maps $\{\mathcal{V}_m\}$ is, again from the discussion around Table~\ref{tab:resources}, given by
\begin{equation}
    \mathcal{O}\left(M^2\frac{\log(M/\varepsilon)}{\log\log(M/\varepsilon)}\log (d)\right)
\end{equation}
where $d$ is the dimension of the target state $\rho$.

We remark that if the target state $\rho$ is pure, we can use the random modified-quantum map $\hat{\Phi}_{\rm pure,r}$ instead of the general version $\hat{\Phi}_{r}$.
Since the random map $\hat{\Phi}_{\rm pure,r}$ exactly recovers the target unitary channel for $e^{-i\rho T}$ in expectation, we conclude that the number of copies required for each circuit is at most 
\begin{equation}\label{eq:special_alg_numcopies}
    \sum_{m=1}^M 2r_m=\mathcal{O}\left(M^2\right)
\end{equation}
and the gate complexity is
\begin{equation}
    \mathcal{O}\left(M^2\log (d)\right).
\end{equation}
Similar complexities can be derived when we use our algorithm to estimate ${\rm tr}[OU_M\sigma U_M^\dagger]$ for $U_M$ composed of time propagators $\{e^{-i\rho_m T_m}\}$ with distinct generators $\{\rho_m\}$.
In the whole procedure of the quantum algorithm, the total number of copies is calculated as the product of the final measurement cost $\mathcal{O}(1/\varepsilon^2)$ and Eq.~\eqref{eq:general_alg_numcopies} (or Eq.~\eqref{eq:special_alg_numcopies}).

\subsection{Lower bound}\label{supple_sec:alg_lower_bound}

To show the optimality (or near-optimality) of our quantum algorithm described in Subsection~\ref{subsection S3.1}, we now prove \green{Theorem~\ref{main_thm:lower_bound_M_alg}} in the main text. 
\\
\\
\noindent
{\bf Theorem~3.}
{\it 
Let $M$ be a positive integer, and let $f_M{(\rho)}$ be a function from a state $\rho$ to the expectation value in the form of Eq.~\eqref{eq_main:target_fM} {(or Eq.~\eqref{eq:target_exp_for_UM})} having $M$ density matrix exponential operations with a finite time.
Suppose that $\mathcal{A}$ is any quantum algorithm that, given $k$ copies of an unknown state $\rho$ and a description of $f_M$, outputs an estimate of the expectation value $f_M(\rho)$ within a constant additive error with a high probability (at least 2/3).
Then, for any such algorithm $\mathcal{A}$, there exists a function $f_M{(\rho)}$ such that $\mathcal{A}$ requires $k=\Omega(M^2)$ copies of $\rho$.
}

\begin{proof}
    To prove this theorem, we consider the following \textit{sample-based Grover's search problem}~\cite{Kimmel2017-hv}.
    Let $\mathcal{L}$ be a target subspace of a Hilbert space, and $U_\mathcal{L}$ is the following unitary: 
    \begin{equation}
        U_{\mathcal{L}}\ket{\psi}\ket{0}=\begin{dcases}
        \ket{\psi}\ket{1}&\mbox{if}~\ket{\psi}\in \mathcal{L},\\
        \ket{\psi}\ket{0}&\mbox{if}~\ket{\psi}\perp \mathcal{L}.
        \end{dcases}
    \end{equation}
    Let $\ket{s}$ be an unknown quantum state ensured that (i) $\lambda\geq w$ or (ii) $\lambda=0$ holds for $w>0$ and
    \begin{equation}
        \lambda:=\left|(\bm{1}\otimes \bra{1})U_{\mathcal{L}}\ket{s}\otimes \ket{0}\right|^2.
    \end{equation}
    The goal is then to determine whether (i) or (ii) by using the copies of $\ket{s}$ and the unlimited access to $U_{\mathcal{L}}$ and its inverse; this problem is called the sample-based Grover's search problem. 
    Ref.~\cite{Kimmel2017-hv} proved that to solve this problem with high probability, $\Omega(1/w)$ copies of $\ket{s}$ are required.
    
    Now, we explicitly construct a form of $f_M(\rho)$ that can solve the sample-based Grover's search problem.
    Note that $U_{\mathcal{L}}\ket{s}\ket{0}$ can always be represented as 
    \begin{equation}
        U_{\mathcal{L}}\ket{s}\ket{0}=\sin\frac{\phi}{2}\ket{\psi_1}\ket{1}+\cos\frac{\phi}{2}\ket{\psi_0}\ket{0},
    \end{equation}
    using some quantum states $\ket{\psi_0}$ and $\ket{\psi_1}$, where $\phi=2\arcsin(\sqrt{\lambda})\in [0,\pi]$. 
    In the two-dimensional subspace spanned by $\ket{\psi_0}\ket{0}$ and $\ket{\psi_1}\ket{1}$, let $\overline{X},\overline{Z}$ be Pauli X, Z gates in the two basis.
    Then, in this subspace, we can identify the Pauli rotations $e^{-i \overline{Z}\beta/2}$ and $e^{-i(\cos\phi \overline{Z}+\sin\phi \overline{X})\alpha/2}$ with the following operations, respectively:
    \begin{equation}
        e^{-i(\beta/2)\bm{1}\otimes Z}=\bm{1}\otimes e^{-i(\beta/2)Z},~~~
        U_{\mathcal{L}}\cdot e^{-i\alpha\ket{s}\ket{0}\bra{s}\bra{0}}e^{i(\alpha/2)\bm{1}}\cdot U_{\mathcal{L}}^\dagger.
    \end{equation}
    By using these operations, we can generate a quantum state 
    \begin{equation}
        c_l^{(1)}(\lambda)\ket{\psi_1}\ket{1}+c_l^{(0)}(\lambda)\ket{\psi_0}\ket{0}
    \end{equation}
    such that if $\lambda\geq w$, 
    \begin{equation}
    \label{approx error Ther 3}
        1-(\delta')^2\leq |c_l^{(1)}(\lambda)|^2\leq 1
    \end{equation}
    holds and if $\lambda=0$, then $|c_l^{(1)}(\lambda)|=0$.
    More specifically, Ref.~\cite{yoder2014fixed} proves that for a positive integer $l\geq \log(2/\delta')/\sqrt{w}$, one can choose finite rotational angles $\{\alpha_j,\beta_j\}_{j=1}^l$ such that 
    \begin{equation}\label{eq:q_state_thm3}
        G(\alpha_l,\beta_l)\cdots G(\alpha_1,\beta_1)U_{\mathcal{L}}\ket{s}\ket{0}=c_l^{(1)}(\lambda)\ket{\psi_1}\ket{1}+c_l^{(0)}(\lambda)\ket{\psi_0}\ket{0}
    \end{equation}
    holds, where
    \begin{equation}
        G(\alpha,\beta)=U_{\mathcal{L}}\cdot e^{-i\alpha\ket{s}\ket{0}\bra{s}\bra{0}}e^{i(\alpha/2)\bm{1}}\cdot U_{\mathcal{L}}^\dagger \cdot e^{-i(\beta/2)\bm{1}\otimes Z}.
    \end{equation}

    We take $f_M(\rho)$ as the expectation value $\langle \bm{1}\otimes Z\rangle $ of $\bm{1}\otimes Z$ with respect to the quantum state Eq.~\eqref{eq:q_state_thm3}; $M$ is chosen to be $M=\Theta( \log(1/\delta')/\sqrt{w})$ to ensure the error Eq.~\eqref{approx error Ther 3}. 
    Now, we can take any algorithm specified in the assumption to obtain the expectation value $\langle \bm{1}\otimes Z\rangle $ within a constant additive error using $k=k(M)$ copies of $\ket{s}$.
    Then, one can also determine (i) $\lambda\geq w$ or (ii) $\lambda=0$ by a decision rule; e.g., if $\langle \bm{1}\otimes Z\rangle $ is bigger than a threshold, then return (ii).
    Note that $\delta'$ can be taken as a constant.
    This indicates that any such algorithm $\mathcal{A}$ can solve the sample-based Grover's search problem, with the use of $k(M)$ copies of $\ket{s}$ for $M\propto 1/\sqrt{w}$.
    Since $\Omega(1/w)=\Omega(M^2)$ copies of $\ket{s}$ are required for any procedure to solve the sample-based Grover's search problem, we conclude that $k(M)=\Omega(M^2)$.
\end{proof}

\clearpage
\section{Non-physical operation in our procedure}
\label{supple_sec:non_physical_ope}

Non-physical operations (i.e., non-CPTP maps) play a key role in our procedure. 
To clarify the mechanism of those operations, we first recall the conventional procedure (i.e., the LMR protocol)~\cite{Lloyd2014-nn} for implementing $e^{-i\rho T}$ from multiple copies of an unknown state $\rho$.
As described in Section~\ref{sec:LMR}, the original protocol uses 
a CPTP map $\Lambda_{\delta T}$ defined as
\begin{equation}
    \Lambda_{\delta T} (\bullet)={\rm tr}_{2}\left[e^{-i\delta T S}(\bullet \otimes \rho )e^{i\delta T S}\right],
\end{equation}
where ${\rm tr}_{2}$ denotes the partial trace over the second register and $S$ is the SWAP gate.
We note that this operation uses a unitary channel (thus, a physical operation) between a target system and a single copy of $\rho$.
Then, by using this operation repeatedly, we can implement the unitary $e^{-i\rho T}$ up to the error $\varepsilon$ in the diamond norm, using $N=\mathcal{O}(T^2/\varepsilon)$ copies of an unknown state $\rho$; also see Section~\ref{sec:LMR}. 
The number of copies matches the lower bound $\Omega(T^2/\varepsilon)$, which was proven in Refs.~\cite{Kimmel2017-hv,Go2024-pb} for the case where we can use any quantum process (i.e., any CPTP map) between a target system and copies of $\rho$. 
The above procedure is optimal in this sense.

We here compare the above LMR protocol with our result in Theorem~\ref{thm:main_supple}, which says that using $\mathcal{O}(T^2\log(T/\varepsilon))$ copies of the unknown state $\rho$, we can simulate the unitary channel $\mathcal{U}[e^{-i\rho T}]$ up to the error $\varepsilon$ in the diamond norm. 
In particular, when $\rho$ is a pure state, Theorem~\ref{thm:DME_pure_state} says that $\mathcal{O}(1)$ copies are sufficient for the perfect simulation. 
Here, in those theorems, we take $r=\mathcal{O}(T^2)$ for simplicity, which ensures the constant sampling (or measurement) overhead.
These results seem to be inconsistent with the lower bound, but the important difference is in allowable operations as illustrated in Fig.~\ref{fig:venn_diag}. 
That is, our method cannot be simulated with the use of only CPTP maps.

\begin{figure}[tb]
 \centering
 \includegraphics[scale=0.45]{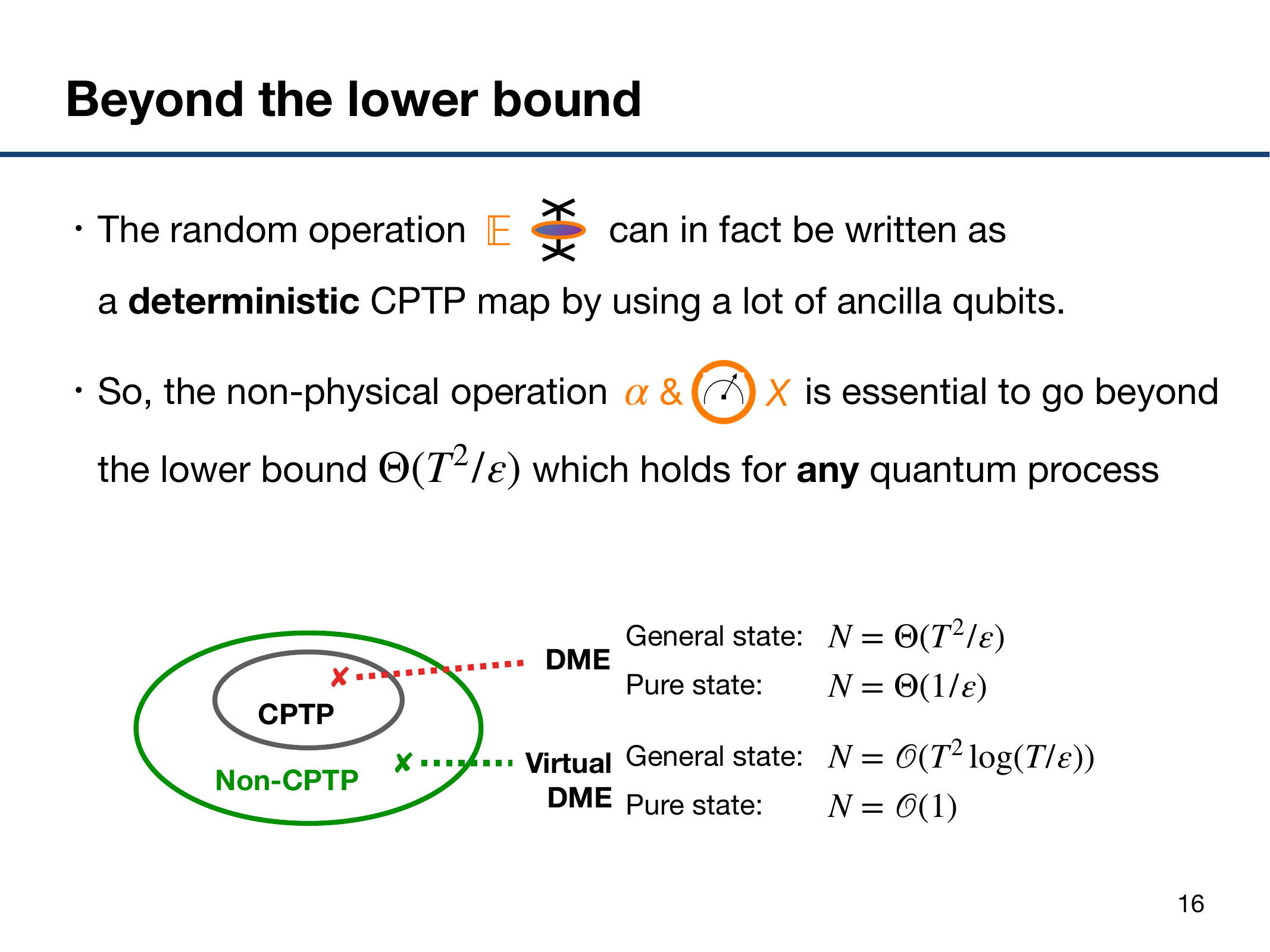}
 \caption{Hierarchy of superoperators for the conventional DME (i.e., the LMR protocol) and the virtual DME.
 Technically, the virtual DME uses Hermitian-preserving maps, which are in a broader set including all of the completely positive and trace-preserving (CPTP) maps.}
 \label{fig:venn_diag}
\end{figure}

By comparing our procedure and the conventional one, we find that our procedure uses (i) random unitary gates and (ii) Pauli X measurement on the single-ancilla system followed by classical post-processing, which are not used in the previous method. 
More specifically, our random map is given by Eq.~\eqref{eq:sample_total_expression}
\begin{equation}
    \hat{\Phi}_r(\bullet)=
    {\rm tr}_{\rm anc}\left[(X_{\rm anc}\otimes \bm{1})\widetilde{\Phi}_{\bm{l}}\left(\ketbra{+}_{\rm anc}\otimes\bullet\right)\right],
\end{equation}
where $\widetilde{\Phi}_{\bm{l}}$ is a CPTP map Eq.~\eqref{eq:sample_each_expression} 
with a set of indices $\bm{l}=(l_1,l_1',...,l_r,l_r')$, and the mean of the random map is given by Eq.~\eqref{eq:mean_modified channel_explicit}:
\begin{equation}
   \mathbb{E}[\hat{\Phi}_r](\bullet)=
    {\rm tr}_{\rm anc}\left[(X_{\rm anc}\otimes \bm{1})\left(\sum_{\bm{l}}p_{\bm{l}}\widetilde{\Phi}_{\bm{l}}\right)\left(\ketbra{+}_{\rm anc}\otimes\bullet\right)\right]
\end{equation}
for a probability distribution $p_{\bm{l}}$. 
Here, $\sum_{\bm{l}}p_{\bm{l}}\widetilde{\Phi}_{\bm{l}}$ is a CPTP map, so the random operation (i) is rewritten as a deterministic physical process by additionally using ancilla qubits (similar to the purification of a mixed state $\mathbb{E}_\psi[\ketbra{\psi}]$), beyond the single-ancilla qubit required in Theorem~\ref{thm:main_supple}.
That is, we can eliminate the process of taking expectation in principle. This implies that, while the randomness has a practically important role to achieve the minimal-size ancilla system, using the randomization alone does not go beyond the established lower bound $\Omega(T^2/\varepsilon)$. 
The essential operation for the exponential improvement should be (ii), which forms a non-physical process as follows.
That is, the mean of the random map, $\mathbb{E}[\hat{\Phi}_r](\bullet)$, can be represented as
\begin{align}
    &{\rm tr}_{\rm anc}\left[(\ketbra{+}_{\rm anc}\otimes \bm{1})\left(\sum_{\bm{l}}p_{\bm{l}}\widetilde{\Phi}_{\bm{l}}\right)\left(\ketbra{+}_{\rm anc}\otimes\bullet\right)\right]-{\rm tr}_{\rm anc}\left[(\ket{-}\bra{-}_{\rm anc}\otimes \bm{1})\left(\sum_{\bm{l}}p_{\bm{l}}\widetilde{\Phi}_{\bm{l}}\right)\left(\ketbra{+}_{\rm anc}\otimes\bullet\right)\right].
\end{align}
The negative sign in the second term indicates that the entire operation is not CPTP. 
If the negative sign is replaced with  $+1$, the resulting operation goes back to a CPTP process (the Pauli $X$ measurement process).
Also, the multiplication of the prefactor $C^{2r}$ (in Eq.~\eqref{eq:diamond_distance_alpha}) to match the mean of the random map with the target unitary channel breaks the trace-preserving property.
Therefore, our method essentially requires the use of a non-CPTP map.

Such non-CPTP maps can be used for the property estimation of quantum systems with the help of classical post-processing; see Section~\ref{sec:S23} for the concrete procedure in our cases.
Actually, the usefulness of such non-CPTP maps is also confirmed in recent developments for quantum information processing such as quantum error mitigation~\cite{PhysRevLett.119.180509,PhysRevX.8.031027,Kim_2023,koczor2021-esd,Huggins2021-vd,PhysRevApplied.23.054021,RevModPhys.95.045005}, quantum circuit knitting~\cite{PhysRevLett.125.150504,Mitarai_2021,10.1109/TIT.2023.3310797,PRXQuantum.5.040308,PRXQuantum.6.010316,Harada2025densitymatrix,yamamoto2024virtualentanglementpurificationnoisy}, virtual cooling~\cite{PhysRevX.9.031013}, resource distillation~\cite{PhysRevLett.132.050203,PhysRevA.109.022403}, quantum broadcasting~\cite{PhysRevLett.132.110203}, and quantum error correction~\cite{cao2023quantummapscptphptp}. 
Usually, the cost to use non-CPTP maps is reflected in the increase of the measurement cost compared to the standard scaling $\mathcal{O}(1/\varepsilon^2)$.
In our case, the measurement overhead is given by $C^{2r}$, leading to the overall measurement cost $\mathcal{O}(C^{4r}/\varepsilon^2)$.
However, as we already clarified, the measurement overhead $C^{2r}\leq e^{2T^2/r}$ can be arbitrarily close to 1 by taking the parameter $r=\mathcal{O}(T^2)$ which determines the circuit depth. 
That is, we have an interpretation that our method can convert the measurement overhead into the circuit depth (and the number of copies per circuit); in other words, by appropriately choosing the circuit depth, the measurement overhead can be ignored.

\clearpage
\section{Applications}\label{supple_sec:applications}

The proposed algorithm builds an essential foundation in general DME, by reducing the 
complexity in the necessary number of copies. 
Due to the ubiquity of DME, our algorithm leads to a significant reduction of complexity in various important applications. 
Here we show some examples, with information on how the DME is used in each application. 
\begin{itemize}

\item 
Sample-based Hamiltonian simulation: The goal is to construct a simulator for $e^{-i H T}$, where $H$ can be unknown and its copies are given to us. 
This problem is directly formulated using DME, assuming we can access the copies of $\rho \propto H+cI$ with $c$ a constant. 

\item 
Unknown unitary emulation: 
Given a set of input and output states of an unknown unitary $U$, i.e., $\{\ket{\phi_i^{\rm in}}, \ket{\phi_i^{\rm out}}\}$, the task is to (approximately) produce $U\ket{\psi}$ for an unknown new input state $\ket{\psi}$. 
Such direct use of quantum states, specifically the use of DMEs $\{e^{-i \pi\ketbra{\phi_i^{\rm in}}},~e^{-i \pi\ketbra{\phi_i^{\rm out}}} \}$, enables efficient emulation of $U$.

\item 
Estimation of nonlinear functions such as entropy:
If the DME $e^{-i\rho t}$ is available, it can be used to realize the block encoding of $\rho$, which can then be transformed to ${\mathcal G}(\rho)$ with $f$ a polynomial function. 
This allows calculating the entropy $-{\rm tr}[\rho \ln \rho]$ via setting ${\mathcal G}(\rho)\approx \ln(\rho)$. 

\item
Precomputation for several algorithms including a linear system solver: 
Consider the quantum solver for a linear system of equations $A\ket{x}=\ket{b}$; if the target $\ket{b}$ is complicated but can be prepared beforehand, its DME $e^{-i\pi\ketbra{b}}$ can be used to make the solver exponentially efficient. 

\item 
Quantum principal component analysis: 
For a noisy state $\rho=(1-\lambda)\ketbra{\psi} + \lambda \rho_{\rm err}$, we are interested in extracting the principal component $\ket{\psi}$. 
This can be efficiently executed by constructing the projection filter onto $\ket{\psi}$ from the DME $e^{-i\rho}$. 

\end{itemize}
The last application, quantum principal component analysis, will be described in detail together with numerical simulation, in the next section.


\subsection{Hamiltonian simulation based on unknown $\rho$}

The first yet a most primitive application of DME is Hamiltonian simulation. 
Rather than the conventional setting where a classical description of a Hamiltonian is given to us, we here assume that a given density matrix encodes a Hamiltonian in the form 
\begin{equation}
    \rho =\frac{H+c\bm{1}}{{\rm tr}[H+c\bm{1}]},
\end{equation}
where $c$ is a real value chosen so that $\rho$ is non-negative. 
Then, by consuming multiple copies of $\rho$, we aim to implement the unitary channel of $e^{-i\rho T}$. 
A notable point of this setting is that $\rho$, and thus $H$, is unknown to us.
This problem is called the \textit{sample-based Hamiltonian simulation problem} and is well studied by Refs.~\cite{Lloyd2014-nn,Kimmel2017-hv,Go2024-pb} with particular interest in the necessary number of copies of $\rho$ to achieve a given precision for simulation.

Beyond the original problem, we here consider estimating properties of a quantum system evolved under the time propagator. 
That is, if one can implement the time propagator $e^{-i\rho T}$ in the accuracy $\varepsilon$, then the expectation value for a given observable $O$,
\begin{equation}
    {\rm tr}[Oe^{-i\rho T}\sigma e^{i\rho T}],
\end{equation}
can be measured within $\mathcal{O}(\|O\|\varepsilon)$ error with a high probability, by $\mathcal{O}(1/\varepsilon^2)$ times measurement of $O$ with the state $e^{-i\rho T} \sigma e^{i\rho T}$. 
Since the previous method~\cite{Lloyd2014-nn} sequentially consumes $\mathcal{O}(T^2/\varepsilon)$ copies of $\rho$ to implement $e^{-i\rho T}$, the corresponding circuit depth is polynomial in $1/\varepsilon$, leading to a deep circuit. 
The proposed virtual DME method resolves this issue; that is,
recalling that the necessary number of copies for virtual DME is given by Eq.~\eqref{eq:simple_dme_copycost}, the circuit depth for this task can be exponentially reduced to $\mathcal{O}(T^2\log(T/\varepsilon))$, while keeping the same scaling of measurement cost $\mathcal{O}(1/\varepsilon^2)$. 
The explicit construction of the algorithm is provided in Section~\ref{sec:S23}.
The crucial point is that the property estimation 
does not necessarily require the direct implementation of the time propagator $e^{-i\rho T}$.
Actually, our procedure simulates the time propagator effectively by using the low-depth circuit with a quantum measurement followed by classical post-processing.

\subsection{Emulation of an unknown unitary from input-output data}
\label{apsec:uqe}

Next, let us consider the task of learning an unknown unitary operation~\cite{PhysRevA.81.032324,kiani2020learningunitariesgradientdescent,PhysRevApplied.19.034017,PhysRevA.109.022429,PhysRevLett.112.240501,PhysRevA.106.062434,PhysRevB.96.020408,marvian2024universalquantumemulator}.
The task of learning an unknown unitary is defined in two ways: (i) learning a unitary to physically implement it on a quantum device or (ii) learning specific features of a unitary for solving particular problems such as estimating expectation values or reversing scrambling processes~\cite{PhysRevA.106.062434,PhysRevA.109.022429}.
In our application, we focus on the latter. 

The problem is as follows. 
Let $U$ be an unknown unitary operator acting on a $d$-dimensional Hilbert space, and suppose we are given access to multiple copies of unknown input-output states of $U$. 
The goal is to emulate the action of $U$ on an unknown new input state $\ket{\psi}$, to generate the state $U\ket{\psi}$.
A straightforward approach to this problem is based on quantum process tomography~\cite{Chuang01111997,PhysRevLett.86.4195,PhysRevLett.78.390}, where one first reconstructs a classical description of $U$ from input-output samples and then applies it to produce $U\ket{\psi}$. 
However, both the run-time and the sample complexity (i.e., the total number of copies required to run the algorithm) scale at least linearly with the system dimension $d$~\cite{haah2023query}.
In contrast, the {\it Universal Quantum Emulator} (UQE) proposed in \cite{marvian2024universalquantumemulator} enables one to solve the problem by learning the input-output relation, rather than reconstructing the unitary itself. 
Remarkably, the depth\footnote{The depth of UQE corresponds to the run-time for reproducing $U\ket{\psi}$.}
of UQE scales logarithmically with $d$ and the number of copies in each circuit (i.e., sample complexity of UQE) is independent of $d$; see Lemma~\ref{lemma:uqe_with_simulated_reflection}. 
Due to this favorable scaling in $d$, UQE is not only considered as a powerful quantum analytical tool for quantum data alongside quantum simulation~\cite{Biamonte_2017,doi:10.1126/science.273.5278.1073}, but also it has broad application in privacy-preserving secure quantum computation (i.e., secure quantum software)~\cite{5438603,10.5555/2011670.2011674}, quantum cryptanalysis~\cite{Arapinis2021quantumphysical,doosti2021unifiedframeworkquantumunforgeability}, quantum resource theory~\cite{RevModPhys.91.025001,gour2024resourcesquantumworld,doi:10.1142/S0217979213450197}, and quantum reference frame~\cite{RevModPhys.79.555,Gour_2008,PhysRevA.78.022304,popescu2018quantum,PhysRevA.90.062110}. 
However, UQE still suffers from a linear dependence on the inverse of the target approximation error $\varepsilon$; that is, both depth and sample complexity scale as $\mathcal{O}(\varepsilon^{-1})$, which poses a major bottleneck for practical implementations.

\begin{figure*}[t]
\centering
\begin{center}
 \includegraphics[width=160mm]{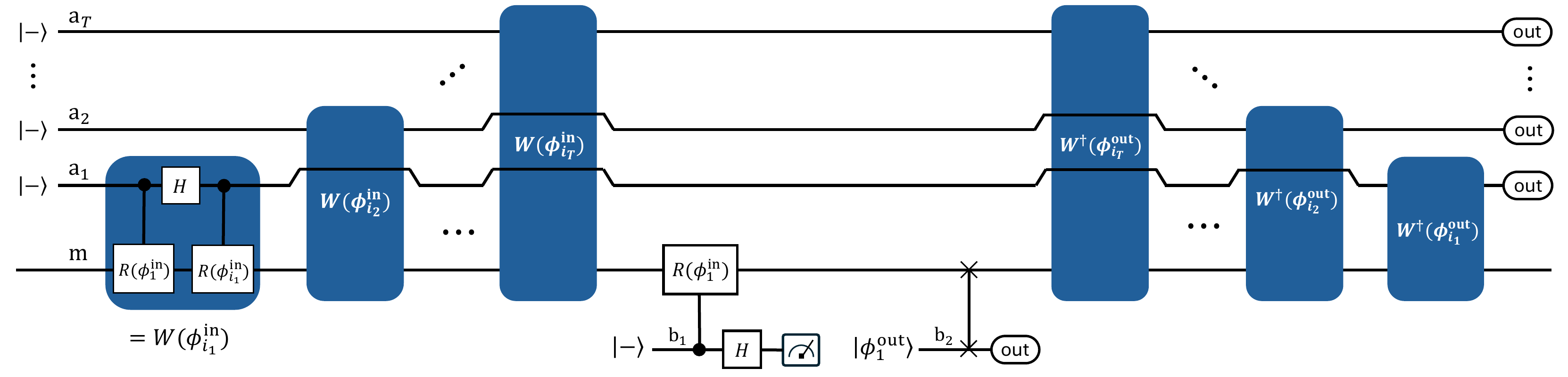}  
\end{center}
\vspace{-0.6cm}
\caption{Quantum circuit of the universal quantum emulator proposed in Ref.~\cite{marvian2024universalquantumemulator}. 
$R(\phi) :=e^{-i\pi\ket{\phi}\bra{\phi}}$ denotes the reflection about the state $\ket{\phi}$, and $W(\phi)$s in the blue shaded area are unitary operators composed of the controlled-$R(\phi)$s and a Hadamard gate; see Eq.~\eqref{eq:def_of_W_operator} for its detailed definition. 
The indices $\{i_1,i_2,...,i_T\}$ are uniform-randomly selected from $\{2,3,...,K\}$. 
In this algorithm, the input state $\ket{\psi}$ is provided at the left end of the wire labeled $\rm m$, and an approximation of the target state $U\ket{\psi}$ is output at the right end of the same wire.}
\label{Fig:universal_quantum_emulator}
\end{figure*}

In the following, we show that if the task is restricted to property estimation on the emulated state $U\ket{\psi}$, both the depth 
and the number of copies in each circuit can be exponentially improved in terms of $\varepsilon$, while preserving the same scaling with respect to other parameters including $d$. 
Our main idea is to replace the reflection operators simulated via the previous DME technique~\cite{Lloyd2014-nn} in the UQE algorithm with those constructed using our virtual DME method.

To state our result accurately, we give a brief review of the original UQE. 
Let us define
\begin{equation*}
    S_{\rm in} := \left\{ \ket{\phi_i^{\rm in}} :i=1,\cdots,K\right\}~~~\text{and}~~~ S_{\rm out} := \left\{\ket{\phi_i^{\rm out}}:=U\ket{\phi_i^{\rm in}}:i=1,\cdots,K\right\}
\end{equation*}
as the sets of input and output quantum states, respectively. 
Also, let $\mathcal{H}_{\rm in}$ denote the $d_{\rm eff}$-dimensional Hilbert subspace spanned by the input states. 
The UQE takes an arbitrary unknown state $\ket{\psi}\in \mathcal{H}_{\rm in}$ as input 
and finally outputs the state $U\ket{\psi}$ by consuming the copies of unknown input-output quantum states.
This algorithm does not require any prior knowledge of the unknown $U$. 
Yet in order to successfully emulate the action of $U$ on the input subspace $\mathcal{H}_{\rm in}$, it is necessary to impose certain assumptions on the set of input states, $S_{\rm in}$.
Specifically, Ref.~\cite{marvian2024universalquantumemulator} shows that $S_{\rm in}$ should satisfy the following condition:
\begin{equation}\label{eq:necessary_and_sufficient_conditino_for_uqe}
    \mathrm{Alg}_{\mathbb{C}}(S_{\rm in}) = \mathcal{L}(\mathcal{H}_{\rm in})
\end{equation}
where $\mathrm{Alg}_{\mathbb{C}}(S_{\rm in})$ denotes the complex associative algebra generated by $S_{\rm in}$ and $\mathcal{L}(\mathcal{H}_{\rm in})$ denotes the set of all linear operators support on $\mathcal{H}_{\rm in}$. 
As shown in Proposition~1 in \cite{marvian2024universalquantumemulator}, this condition is the necessary and sufficient condition for the set $S_{\rm in}$ to uniquely determine the action of $U$ on $\mathcal{H}_{\rm in}$.

The UQE algorithm uses the quantum circuit in Fig.~\ref{Fig:universal_quantum_emulator} using $\mathcal{O}(T)$ controlled reflection gates for the input and output states, defined as
\begin{equation}
    \mathrm{c}R(\phi):= \ketbra{0} \otimes \bm{1} + \ketbra{1} \otimes e^{-i\pi\ket{\phi}\bra{\phi}},
\end{equation}
where $\phi$ takes $\phi_{i_k}^{\rm in}$ and $\phi_{i_k}^{\rm out}$ with the indices $\{i_1,i_2,...,i_T\}$ uniform-randomly selected from $\{2, 3, ..., K\}$. 
In the original UQE algorithm~\cite{marvian2024universalquantumemulator}, the reflection gates $\{e^{-i\pi\ket{\phi}\bra{\phi}}\}$ are simulated using the DME technique based on the LMR protocol~\cite{Lloyd2014-nn}.
Suppose we can perfectly implement the reflection gates, then the output state of this circuit is proved to be $\varepsilon$-close to $U\ket{\psi}$ if $T=\mathcal{O}((1-\lambda_{\perp})^{-1}\log(d_{\rm eff}/\varepsilon^2))$. 
Here, $\lambda_{\perp}$ is a positive value less than one under the condition that Eq.~\eqref{eq:necessary_and_sufficient_conditino_for_uqe} holds.
By consuming multiple copies of input and output states, we can implement the reflection gates $\{e^{-i\pi\ket{\phi}\bra{\phi}}\}$, but this results in the sample complexity of $\mathcal{O}(T^2\varepsilon^{-1})$ and 
the circuit depth $\mathcal{O}(T^2\varepsilon^{-1}\log d)$ due to the use of conventional DME~\cite{Lloyd2014-nn}. 
When we aim to probe properties of the emulated final state $U\ket{\psi}$, this polynomial dependence can be exponentially reduced by using our virtual DME method to simulate the reflection gates in the circuit diagram Fig.~\ref{Fig:universal_quantum_emulator}. 
More specifically, the controlled-reflection gates $\mathrm{c}R(\phi)$ can be simulated via the DME $e^{-i\rho' \pi}$ of $\rho'=\ket{1}\bra{1} \otimes \ket{\phi}\bra{\phi}$. 
When $\rho'$ is a pure state, our virtual DME protocol in Theorem~\ref{thm:DME_pure_state} enables the simulation of $\mathrm{c}R(\phi)$ using only $\mathcal{O}(r)$ copies of $\rho'$, with a sampling overhead of $\mathcal{O}(e^{1/r})$ that is independent of the target precision $\epsilon$ and $r\in \mathbb{N}$ is a tunable parameter. Since the UQE algorithm involves $2T+1$ controlled-reflection gates, replacing them with those simulated by our pure-state virtual DME leads to a total sampling overhead of $\mathcal{O}(e^{T/r})$ and a circuit depth of $\mathcal{O}(rT)$. By choosing $r=\mathcal{O}(T)$ to keep the total sampling overhead constant, we conclude that UQE can be simulated with a total of $\mathcal{O}(T^2)$ input–output pairs and a circuit depth of $\mathcal{O}(T^2 \log d)$. 
This result can be explicitly stated as follows. 

\begin{thm}\label{thm:universal_quantum_emulator}
    Suppose we have access to copies of input and output states $\{\ket{\phi^{\rm in}_i},\ket{\phi^{\rm out}_i}\}$ in $d$ dimension, where each pair is connected with an unknown unitary $U$ as $\ket{\phi^{\rm out}_i}=U\ket{\phi_i^{\rm in}}$.
    In addition, the input states satisfy the condition for emulation Eq.~\eqref{eq:necessary_and_sufficient_conditino_for_uqe} and form a $d_{\rm eff}$-dimensional subspace $\mathcal{H}_{\rm in}$.
    For a given observable $O$ and a state $\rho$ with support restricted to $\mathcal{H}_{\rm in}$, there exists a quantum algorithm that estimates the expectation value ${\rm tr}[OU\rho U^\dagger]$ within additive error $\varepsilon$ with a high probability, using random circuits with
    \begin{itemize}
        \item $T=\mathcal{O}\left((1-\lambda_{\perp})^{-1}\log(\|O\|^2d_{\rm eff}/\varepsilon^2)\right)$, where $\lambda_{\perp}$ is defined as in Lemma~\ref{lemma:uqe_with_ideal_controlled_reflection},
        \item Total number of input-output state copies: $\mathcal{O}(T^2)$,
        \item Total number of one- and two-qubit elementary gates: $\mathcal{O}(T^2\log d)$.
    \end{itemize}
    The total number of measurements is given by $\mathcal{O}(\|O\|^2/\varepsilon^2)$.
\end{thm}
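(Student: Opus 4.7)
The plan is to combine the UQE circuit of Fig.~\ref{Fig:universal_quantum_emulator} with our general algorithmic framework from Section~\ref{sec:nearlyopt_qalg}, using the pure-state virtual DME of Theorem~\ref{thm:DME_pure_state} to simulate each controlled reflection. First, observe that the $2T+1$ reflection operations $e^{-i\pi|\phi\rangle\langle\phi|}$ appearing in UQE can be rephrased as density matrix exponentials $e^{-i\rho'\pi}$ for the pure states $\rho'=|\phi\rangle\langle\phi|$ (and, for the controlled case, $\rho'=|1\rangle\langle 1|\otimes|\phi\rangle\langle\phi|$). Hence the UQE circuit, viewed as a whole, fits exactly into the form $U_M$ of Eq.~\eqref{eq:target_exp_for_UM} with $M=2T+1$, where the intermediate unitaries $\mathcal{V}_m$ are the (known) Hadamard gates and controlled-$W(\phi)$ structures that are independent of the unknown states.

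Next, I would fix the UQE depth $T$ so that, \emph{assuming perfect reflections}, the output $|\tilde{\psi}_{\mathrm{UQE}}\rangle$ satisfies $\||\tilde{\psi}_{\mathrm{UQE}}\rangle\langle\tilde{\psi}_{\mathrm{UQE}}|-U\rho U^\dagger\|_1\leq \varepsilon/(2\|O\|)$. By Lemma~\ref{lemma:uqe_with_ideal_controlled_reflection} (cited from Ref.~\cite{marvian2024universalquantumemulator}), this is achieved with
\begin{equation}
T=\mathcal{O}\!\left((1-\lambda_{\perp})^{-1}\log(\|O\|^{2}d_{\mathrm{eff}}/\varepsilon^{2})\right),
\end{equation}
which determines the first bullet. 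This contributes an error at most $\varepsilon/2$ to the estimated expectation value via $|\mathrm{tr}[O\,\Delta\rho]|\leq\|O\|\|\Delta\rho\|_1$.

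Then, I would replace every controlled reflection in the UQE circuit with the pure-state random modified-quantum map $\hat{\Phi}_{\mathrm{pure},r}$ of Theorem~\ref{thm:DME_pure_state} applied to $|1\rangle\langle 1|\otimes|\phi\rangle\langle\phi|$, following the estimator construction of Subsection~\ref{subsection S3.1}. The key parameter choice is $r=\Theta(M)=\Theta(T)$, which simultaneously (i) makes each DME \emph{exact in expectation} (no truncation error at all, since the pure-state protocol is error-free) and (ii) keeps the total sampling overhead $\prod_m [C^{(m)}_{\mathrm{pure}}]^{2r}\leq e^{2\sum_m T_m^2/r}=e^{O(M/r)}=\mathcal{O}(1)$. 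Thus the only remaining source of bias is the UQE truncation error, and the variance of the final estimator is at most $\mathcal{O}(\|O\|^{2})$. Averaging $\mathcal{O}(\|O\|^{2}/\varepsilon^{2})$ independent samples via Chebyshev (or Hoeffding, since the observable is bounded) suffices to reach total error $\varepsilon$ with high probability, yielding the claimed measurement count.

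Finally, counting resources per circuit: each of the $2T+1$ pure-state virtual DMEs consumes at most $2r=\mathcal{O}(T)$ copies of an input or output state (Theorem~\ref{thm:DME_pure_state}), giving $\mathcal{O}(T^2)$ copies total, while each controlled partial-SWAP/SWAP on the $(\log d{+}1)$-qubit register decomposes into $\mathcal{O}(\log d)$ elementary gates (Section~\ref{sec:S22}, Table~\ref{tab:resources}), for a total gate complexity of $\mathcal{O}(T^{2}\log d)$. The main technical obstacle is the interface between UQE's state-fidelity guarantee and our diamond-norm guarantee for the simulated reflections: one must argue that substituting $M$ reflections, each approximated only in expectation by the non-physical map $\alpha_{\mathrm{pure}}\mathbb{E}[\hat{\Phi}_{\mathrm{pure},r}]$, still yields an unbiased estimator of $\mathrm{tr}[O\,U_{\mathrm{UQE}}\rho U_{\mathrm{UQE}}^{\dagger}]$ with variance controlled by $\prod_m [C^{(m)}_{\mathrm{pure}}]^{4r}$. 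This is exactly the content of the general framework of Subsection~\ref{subsection S3.1}, applied here with $M=2T+1$, and is what makes the exponential improvement in $\varepsilon$ compatible with the original UQE analysis.
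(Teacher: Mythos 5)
Your proposal is correct and follows essentially the same route as the paper's proof: cast the UQE circuit as an instance of $f_M$ with $M=\Theta(T)$ controlled reflections, simulate each via the exact pure-state virtual DME with $r=\Theta(M)$ to keep the overhead $\prod_m[C^{(m)}_{\rm pure}]^{2r}$ constant, invoke Lemma~\ref{lemma:uqe_with_ideal_controlled_reflection} to fix $T$, and conclude by Chebyshev with $\mathcal{O}(\|O\|^2/\varepsilon^2)$ shots. The only cosmetic differences are the reflection count ($2T+1$ versus the paper's $M=4T+1$ in the formal proof, which does not affect the asymptotics) and that the paper makes explicit the averaging over the uniformly sampled indices $\bm{i}=(i_1,\dots,i_T)$, which your sketch leaves implicit.
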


A more detailed discussion, including a review of Ref.~\cite{marvian2024universalquantumemulator} and the proof of Theorem~\ref{thm:universal_quantum_emulator}, is provided in Section~\ref{apsec:theoretical_detail_of_uqe}.

\subsection{Quantum entropy estimation}
\label{apsec:q entropy}

Probing nonlinear properties of quantum states---such as entropy, fidelity, and various distance metrics---is a fundamental yet challenging task in quantum information theory~\cite{montanaro2013-survey, Bravyi2011-testing, Cerezo2020-variational, Tan2021-variational,gilyen2022-fidelity, Wang2023-fidelity, Wang2024-samplizer}.
Among these properties, we here mainly focus on the estimation of quantum entropies: von Neumann entropy and quantum relative entropy.
Due to the exceptional importance of these quantities, their efficient estimation has been extensively investigated as follows.

To estimate the entropy, one of the following two assumptions on target states is usually employed.
The first assumption is called the sample access model,
where multiple identical copies of an unknown quantum state $\rho$ are given.
In this assumption, a tomographic approach~\cite{acharya2020-entropy} can be used to estimate quantum entropy, though it requires a large number of state copies that scales with the system dimension. 
To deal with this issue, Ref.~\cite{wang2023-entropy} provides a method using the conventional DME procedure that does not depend on the dimension but is dependent on the rank of target states; however, it still uses a deep circuit that scales polynomially with respect to the target accuracy.
The second assumption is the purified query access model, where a more powerful oracle that prepares a purification of the target state is given.
This access model smoothly integrates with the block encoding technique for quantum singular value transformation (QSVT) framework~\cite{gilyen2019-qsvt, gilyen2019-world, Subramanian2021-renyi}, enabling a nonlinear transformation based on QSVT, and results in efficient algorithms for entropy estimation~\cite{gilyen2019-world,gur2021-sublinear,Li2019-query, wang2024-new, Wang2023-phase}.
However, constructing such purification oracles can be nontrivial as they may require large ancilla systems; 
if we cannot access the generation process of quantum states such as algorithms for dilated quantum systems, constructing the purification oracle itself would require
quantum state tomography~\cite{PhysRevLett.105.150401,10.1145/3055399.3055454,7956181,Guţă_2020}.

Here, we take the first assumption, the sample access model to circumvent the nontrivial purification oracles.
Also, we employ the time-evolution $e^{-i\rho T}$, which can be efficiently simulated from identical copies of $\rho$ via our virtual DME procedure; this is the trick to bypass the demanding tomography approach, and it can be further integrated with the eigenvalue transformation technique in Section~\ref{sec:nonlinear-transformation}.
As a result, we have a systematic framework to probe the entropy, similar to the QSVT framework under the purified query access model. 
We note that, while this direction has been discussed in Ref.~\cite{gilyen2022-fidelity}, unfavorable scaling in the number of state copies due to the use of the conventional DME, $\mathcal{O}(t^2/\varepsilon)$ limits the practical utility of the approach.
Our virtual-DME-based algorithm overcomes this bottleneck of the approach, enabling the integration of the eigenvalue transformation framework with the sample access model, thereby paving the road to efficient and scalable estimation of nonlinear quantum properties.
We remark that while we focus on the von Neumann entropy and the quantum relative entropy in the following, the same approach can be readily extended to other nonlinear properties such as alternative entropy measures and distance metrics, by applying appropriate polynomial approximations via the eigenvalue transformation.

Our algorithm uses a Hadamard-test circuit as depicted in Fig.~\ref{fig:entropy-estimation-circuit}.
The most challenging part of entropy estimation is the evaluation of the nonlinear function, e.g., $\ln(\rho)$.
To address this, we employ the eigenvalue transformation framework similarly to the previous work~\cite{wang2024-new}.
Specifically, by assuming the access to the controlled-$e^{i\rho/2}$, we construct a unitary $W_P$ that is an approximate block encoding of $P(\rho) \approx \ln (\rho)$. 
Since the implementation of this circuit follows the structure of Eq.~\eqref{eq:target_exp_for_UM},
we can leverage our virtual DME instead of directly implementing $e^{i\rho/2}$.
As a result, this algorithm achieves the necessary number of copies of 
$\mathcal{O}(\kappa^2 \mathrm{polylog}(\varepsilon^{-1}, \kappa))$ per circuit, 
for the accuracy $\varepsilon$ and the smallest nonzero eigenvalue $\kappa^{-1}$ of $\rho$.
This is an exponential speedup over the previous proposal in the sample access model~\cite{wang2023-entropy}, which requires at most $\mathcal{O}(\kappa^2\varepsilon^{-1}\mathrm{polylog}(\varepsilon^{-1}, \kappa))$ copies per circuit. 
Furthermore, the total number of copies including the measurement cost is also polynomially improved to $\mathcal{O}(\varepsilon^{-2}\kappa^2\mathrm{polylog}(\varepsilon^{-1}, \kappa))$ from $\mathcal{O}(\varepsilon^{-5}\kappa^2\mathrm{polylog}(\varepsilon^{-1}, \kappa))$ in the accuracy. 
This result is summarized as follows.

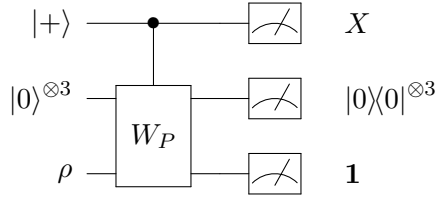
\begin{figure}[htbp]
\centering
\begin{tabular}{c}
\Qcircuit @C=1em @R=1.2em {
    \\
    & \lstick{\ket{+}}  &  \ctrl{1}                &  \qw  & \meter & \rstick{X} \\
    & \lstick{\ket{0}^{\otimes 3}}  &  \multigate{1}{W_P}  &  \qw  & \meter  & \rstick{\ketbra{0}^{\otimes 3}}\\ 
    & \lstick{\rho}     &  \ghost{W_P}         &  \qw  & \meter & \rstick{\bm{1}} \\
    \\
}
\\
\end{tabular}
\caption{The entropy estimation circuit.}\label{fig:entropy-estimation-circuit}
\end{figure}

\begin{thm}\label{thm:entropy-estimation}
Let $\rho$ be an unknown but accessible $d$-dimensional quantum state.
Suppose $1/\kappa$ is a known lower bound on the nonzero eigenvalue of $\rho$.
Then, there exists a quantum algorithm that estimates the von Neumann entropy
$S(\rho) = -\mathrm{tr}[\rho \ln \rho]$ within additive error $\varepsilon$ with a high probability, using random circuits with
\begin{itemize}
\item $\mathcal{O}\left(\kappa^2 \log^3\left(\frac{\kappa}{\varepsilon}\right)\log^2\left(\frac{1}{\varepsilon}\right)\right)$ copies of $\rho$
\item $\mathcal{O}\left(\kappa^2 \log^3\left(\frac{\kappa}{\varepsilon}\right)\log^2\left(\frac{1}{\varepsilon}\right)\log(d)\right)$ one- or two-qubit elementary gates
\end{itemize}
per circuit. 
The measurement cost is given by $\mathcal{O}\left(\frac{\log^2 (\kappa)}{\varepsilon^2}\right)$.
\end{thm}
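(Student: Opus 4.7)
The plan is to reduce the entropy estimation to measuring $\mathrm{tr}[\rho\,P(\rho)]$ for a polynomial $P$ approximating $\ln$ on the relevant eigenvalue range, via the Hadamard-test circuit of Fig.~\ref{fig:entropy-estimation-circuit}, in which $W_P$ is a block encoding of $P(\rho)$ constructed by quantum eigenvalue transformation of unitary matrices (QETU)~\cite{Dong2022-qetu} applied to the controlled time-evolution $e^{\pm i\rho/2}$. Each call to $e^{\pm i\rho/2}$ is then replaced by the random modified-quantum map of Theorem~\ref{thm:main_supple} (or its controlled variant in Theorem~\ref{thm:controlledDME}), so that the whole protocol fits the property-estimation template $f_M(\rho)=\mathrm{tr}[O\,U_M\sigma U_M^\dagger]$ of Section~\ref{sec:nearlyopt_qalg} and the resource bounds derived there apply directly.

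I would first construct $P$. Using standard QSVT approximations~\cite{gilyen2019-qsvt}, one can find a real polynomial $\tilde{P}$ of degree $M=\mathcal{O}(\kappa\log(\kappa/\varepsilon))$ with $|\tilde{P}(x)|\le 1$ on $[-1,1]$ and $|\tilde{P}(x)-\ln(x)/(2\ln\kappa)|\le \varepsilon/(4\ln\kappa)$ for all $x\in[1/\kappa,1]$; setting $P:=2\ln(\kappa)\,\tilde{P}$ then gives $|\mathrm{tr}[\rho P(\rho)]-\mathrm{tr}[\rho\ln\rho]|\le\varepsilon/2$, since every nonzero eigenvalue of $\rho$ lies in $[1/\kappa,1]$. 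QETU converts $M$ calls to controlled-$e^{\pm i\rho/2}$ together with $\mathcal{O}(M)$ precomputable single-qubit rotations into a block encoding $W_{\tilde{P}}$ of $\tilde{P}(\rho)$ using only $\mathcal{O}(1)$ ancilla qubits. Feeding the unknown $\rho$ into the system register of Fig.~\ref{fig:entropy-estimation-circuit} and taking the expectation of $X\otimes\ketbra{0}^{\otimes 3}\otimes\bm{1}$ in the controlled-$W_{\tilde{P}}$ circuit yields an unbiased estimator of $\mathrm{tr}[\rho\tilde{P}(\rho)]$, which is rescaled by $2\ln\kappa$ in classical post-processing; this rescaling inflates the estimator variance by the $\mathcal{O}(\log^2\kappa)$ factor responsible for the stated sample complexity $\mathcal{O}(\log^2\kappa/\varepsilon^2)$.

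Next, I would substitute each (controlled) $e^{\pm i\rho/2}$ inside $W_{\tilde{P}}$ by a sample of the random modified-quantum map of Theorem~\ref{thm:main_supple}. Following Section~\ref{subsection S3.1}, setting $r_m=\Theta(M)$ in every call bounds the total sampling overhead $\prod_m[C^{(m)}]^{2r_m}$ by a constant, while choosing the per-call diamond-norm error as $\mathcal{O}(\varepsilon/(M\log\kappa))$ keeps the accumulated circuit error within $\mathcal{O}(\varepsilon/\log\kappa)$ via Eq.~\eqref{diamond_norm_of_xy}. The copy count per circuit from Eq.~\eqref{eq:general_alg_numcopies} then becomes $\mathcal{O}(M^2\log(M\log\kappa/\varepsilon))$; after substituting $M=\mathcal{O}(\kappa\log(\kappa/\varepsilon))$ and absorbing the extra $\log(1/\varepsilon)$ factors that appear when QETU is inverted through an intermediate polynomial in $\arccos$, one recovers the stated $\mathcal{O}(\kappa^2\log^3(\kappa/\varepsilon)\log^2(1/\varepsilon))$ bound, and multiplying by $\log d$ for the cost of each controlled (partial) SWAP (Table~\ref{tab:resources}) supplies the claimed gate complexity.

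The main obstacle will be the quantitative error accounting: one has to carefully budget the polynomial approximation error of $\ln$, the QETU block-encoding error, the accumulated per-call virtual-DME error in diamond norm, and the $2\ln\kappa$ rescaling, so that their combined effect stays within $\varepsilon$ while each individual tolerance remains loose enough that neither $M$ nor the copy count blow up beyond the stated scaling. A secondary subtlety is verifying that feeding the unknown $\rho$ directly into the Hadamard test, rather than a purification, is legitimate here, which it is because $\langle X\otimes\ketbra{0}^{\otimes 3}\otimes\bm{1}\rangle$ is linear in the input state on the system register and all other ancillas are prepared in fixed product states; this preserves the purely logarithmic dependence on the system dimension $d$ that is essential for the claimed copy count.
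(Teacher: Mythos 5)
Your proposal is correct in its overall architecture and lands on the same skeleton as the paper's proof: approximate $\ln$ by a polynomial of degree $\mathcal{O}(\kappa\,\mathrm{polylog})$ on $[1/\kappa,1]$, realize the eigenvalue transformation from controlled-$e^{\pm i\rho/2}$, read out $\mathrm{tr}[\rho P(\rho)]$ with the Hadamard test of Fig.~\ref{fig:entropy-estimation-circuit}, attribute the $\mathcal{O}(\log^2\kappa/\varepsilon^2)$ measurement cost to the $2\ln\kappa$ rescaling, and then substitute each time-evolution call by the virtual DME with $r_m=\Theta(M)$ and per-call diamond error $\mathcal{O}(\varepsilon/(M\log\kappa))$, invoking the resource bounds of Section~\ref{sec:nearlyopt_qalg}. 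Where you genuinely diverge is the eigenvalue-transformation machinery: you apply QETU directly to $e^{\pm i\rho/2}$, whereas the paper first constructs a $(\pi,2,\varepsilon_\rho)$-block encoding of $\rho$ from $\mathcal{O}(\log(1/\varepsilon_\rho))$ calls to controlled-$e^{i\rho/2}$ (Corollary~71 of~\cite{gilyen2019-qsvt}) and then runs QSVT with the degree-$n_P$ polynomial of Lemma~11 of~\cite{gilyen2019-world} in the variable $\rho/\pi$, giving $M=\mathcal{O}(n_P\log(1/\varepsilon_\rho))$. The paper's route pays an extra $\log(1/\varepsilon)$ factor in $M$ but gets the polynomial degree for free from an off-the-shelf lemma stated in the natural variable $x$; your route saves that factor in principle but shifts the burden onto the degree of the composed polynomial $\tilde P(2\arccos(y))$, since QETU only block-encodes even polynomials in $y=\cos(\rho/2)$ and the eigenvalues $x\in[1/\kappa,1]$ are compressed into a window of width $\Theta(1/\kappa^2)$ near $y=1$. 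You flag this at the end but do not establish that the composed polynomial still has degree $\mathcal{O}(\kappa\,\mathrm{polylog}(\kappa/\varepsilon))$ (it plausibly does, by the quadratic endpoint resolution of Chebyshev approximants, and the paper itself uses exactly this QETU trick for the qPCA filter in Theorem~\ref{thm:error-mitigation}); this is the one step you would need to make quantitative before the claimed copy count follows. Everything else, including the reduction of the relative-entropy case and the linearity argument for feeding the mixed state $\rho$ directly into the Hadamard test, matches the paper.
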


This result is straightforwardly extended to the estimation of quantum relative entropy $D(\rho \,\|\, \sigma)$, as follows.

\begin{cor}\label{cor:relative-entropy-estimation}
Let $\rho$ and $\sigma$ be unknown but accessible $d$-dimensional quantum states.
Suppose $1/\kappa^{(\rho)}$ and $1/\kappa^{(\sigma)}$ are known lower bounds on the nonzero eigenvalues of $\rho$ and $\sigma$, respectively. 
Assume that $\mathrm{supp}(\rho)\subseteq \mathrm{supp}(\sigma)$.
Then, there exists a quantum algorithm that estimates the quantum relative entropy
$D(\rho \,\|\, \sigma) = \mathrm{tr}[\rho \ln \rho - \rho \ln \sigma]$ within additive error $\varepsilon$ with a high probability
using random circuits with
\begin{itemize}
    \item $N^{(\rho)}$ copies of $\rho$ and $\mathcal{O}(N^{(\rho)}\log d)$ one- or two-qubit elementary gates,
    \item or, $N^{(\sigma)}$ copies of $\sigma$, 1 copy of $\rho$, and $\mathcal{O}(N^{(\sigma)}\log d)$ one- or two-qubit elementary gates
\end{itemize}
per circuit. Here, $N^{(\sigma)}$ and $N^{(\rho)}$ are given by
\begin{equation}
N^{(\rho)} = \mathcal{O}\left((\kappa^{(\rho)})^2 \log^3\left(\frac{\kappa^{(\rho)}}{\varepsilon}\right)\log^2\left(\frac{1}{\varepsilon}\right)\right),~~~N^{(\sigma)} = \mathcal{O}\left((\kappa^{(\sigma)})^2 \log^3\left(\frac{\kappa^{(\sigma)}}{\varepsilon}\right)\log^2\left(\frac{1}{\varepsilon}\right)\right).
\end{equation}
The overall measurement cost is given by $\mathcal{O}\left(\frac{\log^2
 (\kappa^{(\rho)}) + \log^2 (\kappa^{(\sigma)})}{\varepsilon^2}\right)$.
\end{cor}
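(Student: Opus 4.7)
The plan is to split the relative entropy as
\begin{equation}
D(\rho\,\|\,\sigma) = -S(\rho) - \mathrm{tr}[\rho\ln\sigma]
\end{equation}
and estimate the two terms by independent subroutines, combining the outputs at the end. The first term $-S(\rho) = \mathrm{tr}[\rho\ln\rho]$ is a direct invocation of Theorem~\ref{thm:entropy-estimation} at target accuracy $\varepsilon/2$, which immediately yields the first bullet of the corollary: $N^{(\rho)}$ copies of $\rho$ per circuit, $\mathcal{O}(N^{(\rho)}\log d)$ gates, and a variance that scales as $\log^2\kappa^{(\rho)}$.

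For the cross term $\mathrm{tr}[\rho\ln\sigma]$ I would reuse the Hadamard-test template of Fig.~\ref{fig:entropy-estimation-circuit} with two modifications: the system register is initialized with a single copy of $\rho$ (instead of $\sigma$), and the unitary $W_P$ is now an approximate block encoding of $\ln(\sigma)/\ln\kappa^{(\sigma)}$ built by quantum eigenvalue transformation on top of a block encoding of $\sigma$ supplied by the virtual DME of $\sigma$, exactly as used inside the proof of Theorem~\ref{thm:entropy-estimation}. The support hypothesis $\mathrm{supp}(\rho)\subseteq\mathrm{supp}(\sigma)$ guarantees that a polynomial approximation of $\ln$ on the truncated interval $[1/\kappa^{(\sigma)},1]$ of degree $d_P=\mathcal{O}(\kappa^{(\sigma)}\log(\kappa^{(\sigma)}/\varepsilon))$ is sufficient; rescaling the Hadamard-test output by $\ln\kappa^{(\sigma)}$ returns an estimator of $\mathrm{tr}[\rho\ln\sigma]$ with bias $\varepsilon/2$. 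By construction, each such circuit consumes $N^{(\sigma)}$ copies of $\sigma$, a single copy of $\rho$ in the system register, and $\mathcal{O}(N^{(\sigma)}\log d)$ elementary gates, matching the second bullet.

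The per-circuit resource counts then follow by plugging the QSVT call complexity into the algorithmic framework of Section~\ref{sec:nearlyopt_qalg} together with Theorem~\ref{thm:main_supple}; the overall measurement cost is the union bound of the two subroutines' shot counts, each inflated by $\log^2\kappa^{(\rho)}$ or $\log^2\kappa^{(\sigma)}$ from the QSVT rescaling, giving the claimed $\mathcal{O}((\log^2\kappa^{(\rho)}+\log^2\kappa^{(\sigma)})/\varepsilon^2)$ total samples. The main obstacle I anticipate is the error bookkeeping across three nested approximations --- the truncated polynomial for $\ln$, the per-call virtual DME diamond-norm error, and the finite-sample statistical noise --- all entangled with the rescaling factor $\ln\kappa^{(\sigma)}$. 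The decisive technical step will be to budget the per-gate virtual DME accuracy as $\varepsilon/(d_P\ln\kappa^{(\sigma)})$ and propagate it through the $d_P$ QSVT queries via the diamond-norm triangle inequality~\eqref{diamond_norm_of_xy}, so that the polylogarithmic $\varepsilon$-dependence inside the per-circuit copy count survives rather than collapsing into a polynomial one once the rescaling has been absorbed.
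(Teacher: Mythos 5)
Your proposal is correct and follows essentially the same route as the paper: the same decomposition $D(\rho\|\sigma)=\mathrm{tr}[\rho\ln\rho]-\mathrm{tr}[\rho\ln\sigma]$, with the first term handled by Theorem~\ref{thm:entropy-estimation} and the cross term by the same Hadamard-test circuit with one copy of $\rho$ in the system register and a QSVT block encoding of (a normalized) $\ln\sigma$ built from the virtual DME of $\sigma$, using $\mathrm{supp}(\rho)\subseteq\mathrm{supp}(\sigma)$ to confine the polynomial approximation to eigenvalues at least $1/\kappa^{(\sigma)}$ (the paper phrases this via $\sigma^{0}\rho=\rho$) and the same $\varepsilon/\log\kappa$ error budgeting. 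No substantive gap.
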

Proofs of Theorem~\ref{thm:entropy-estimation} and Corollary~\ref{cor:relative-entropy-estimation} are given in Section~\ref{sec:proof-entropy}.

\subsection{Quantum precomputation}\label{apsec:precomp}


We here consider the case where target quantum states $\rho$ are known unlike the previous sections.
In this case, the virtual DME can take a role in significantly reducing the ``wall-clock'' time of quantum computing, given reasonably precomputed quantum resources.
As in classical computing, reducing wall-clock time to solution is crucial for time-sensitive information processing.

Let us consider the task to probe properties of a quantum circuit including multiple $e^{-i\rho T}$ with not too large $T$ and known $\rho$; this setup includes quantum linear system solvers as described later. 
To solve the task, we have two options when we have a classical description of $\rho$: simply performing the desired circuit with the quantum gate $e^{-i\rho T}$ constructed from the classical description, or using our quantum algorithm with the consumption of $\rho$ prepared in a quantum memory. 
In the second option, we assume that the preparation of $\rho$ in a quantum memory can be done in advance with a reasonable computational cost. 
The number of elementary gates to implement the unitary gate $e^{-i\rho T}$ from a classical description of $\rho$ may scale linearly in its system dimension $d$ even for small $T$.
On the other hand, in our method, such expensive gates are replaced with copies of $\rho$ and random operations whose cost is logarithmic in both the dimension and accuracy.
Thus, our method offers an exponential reduction in circuit depth with respect to $d$ if we can neglect the cost of preparing $\rho$. 
Since the circuit depth (and gate counts) is a natural metric of the execution time of a quantum circuit, this indicates that our method is particularly suited for practical scenarios requiring time-sensitive information processing when preparing the resource state $\rho$ ahead of time is available.
A part of this advantage was already discussed in the framework of \textit{quantum precomputation}~\cite{Huggins2024-qf} because the exponential circuit depth reduction with respect to $d$ can also be achieved by the conventional method for density matrix exponential~\cite{Lloyd2014-nn}.
However, this is available only for a constant accuracy due to the poor scaling $\mathcal{O}(1/\varepsilon)$ of the previous method to implement $e^{-i\rho T}$, while the other part of the main circuit may have a logarithmic depth $\mathcal{O}(\log(1/\varepsilon))$.
Our method improves this accuracy requirement exponentially when we aim to probe a final state.
Also, the size of a quantum memory to store the copies of $\rho$ is exponentially smaller with respect to $\varepsilon$ than that for the previous method.

One of the important applications of this strategy is to solve a linear system problem $A\ket{x}=\ket{b}$~\cite{harrow2009quantum, childs2017quantum, lin2020optimal, martyn2021grand, an2022quantum, costa2022optimal, dalzell2024shortcut}, which is a fundamental task across numerous scientific fields such as differential equations, machine learning, and many-body physics; see the recent comprehensive review~\cite{morales2024quantum} in the quantum setup.
The most efficient quantum algorithms to prepare the solution $\ket{x}$ of the problem use multiple $e^{-i\pi \ket{b}\bra{b}}$, called the reflection operation~\cite{morales2024quantum}.
In these advanced algorithms, we can reasonably assume that the circuit depth is logarithmic (or poly-logarithmic) in both the dimension and accuracy, except for the reflection gates~\cite{morales2024quantum,Huggins2024-qf}.
Now, considering a situation where a classical description of $\ket{b}$ is available before specifying $A$, we prepare multiple copies of $\ket{b}$ in advance~\cite{Huggins2024-qf}.
By consuming the precomputed states, our method can probe the properties of the solution vector $\ket{x}$ using quantum circuits with a logarithmic depth regarding both the accuracy and dimension.
As mentioned before, this offers an exponential depth reduction thanks to being able to ignore the preparation cost of $\ket{b}$, compared to the standard algorithms performing the expensive reflections during the execution of a main circuit.


To apply our quantum algorithm to the linear system problem, it is sufficient to specify the corresponding function $f_M$ such that 
\begin{equation}
    f_M(\ket{b}\bra{b})\simeq c\bra{x}O\ket{x}
\end{equation}
for a constant value $c>0$, a desired observable $O$ and a certain $M$.
Note that by setting $O=\bm{1}$, we can obtain the constant $c$ and recover the target value $\bra{x}O\ket{x}$ by division.
By explicitly constructing one example of such $f_M$ based on one of the most efficient quantum linear system solvers~\cite{lin2020optimal}, we prove the following theorem.
\begin{thm}\label{thm:application_linearsyssolver}
    Let $U_A$ be an $(1,a,0)$-block encoding of a $d$-dimensional Hermitian $A>0$ whose eigenvalues are in $[1/\kappa,1]$ for known $\kappa>1$.
    Suppose we have access to copies of an unknown (or known) state $\ket{b}$ for a target linear system problem $A\ket{x}=\ket{b}$.
    For a given observable $O$, there exists a quantum algorithm that estimates the expectation value $\bra{x}O\ket{x}$ within a given additive error $\varepsilon$ with high probability, using random circuits with  
    \begin{itemize}
        \item $M=\mathcal{O}(\kappa \left[\log\kappa\log\log\kappa+\log(\|O\|/\varepsilon)\right])$ queries of $U_A$ and its inverse
        \item $\mathcal{O}(\kappa^2 \left[\log\kappa\log\log\kappa+\log(\|O\|/\varepsilon)\right]^2 )$ copies of $\ket{b}$, and  
        \item $\mathcal{O}({aM+M^2\log d})$ one- or two-qubit elementary gates.
    \end{itemize}
    The total number of measurements is given by $\mathcal{O}(\|O\|^2/\varepsilon^2)$.
\end{thm}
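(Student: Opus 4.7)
The plan is to combine the optimal quantum linear system solver of Ref.~\cite{lin2020optimal} with the pure-state virtual DME of Theorem~\ref{thm:DME_pure_state}, and then invoke the general algorithmic framework of Section~\ref{sec:nearlyopt_qalg} to estimate $\bra{x}O\ket{x}$. First I would recall that the solver in Ref.~\cite{lin2020optimal} constructs a unitary $U_M$ of the form $U_M=V_{M+1}\,R_b\,V_M\,R_b\cdots R_b\,V_1$, acting on $\ket{b}$ together with ancilla registers, where each $V_j$ consists of $\mathcal{O}(1)$ queries to $U_A,U_A^\dagger$ plus $\mathcal{O}(a)$ additional two-qubit gates, and each $R_b=e^{-i\pi\ketbra{b}}$ is the reflection about $\ket{b}$. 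The bound $M=\mathcal{O}\bigl(\kappa[\log\kappa\log\log\kappa+\log(\|O\|/\varepsilon)]\bigr)$ suffices so that $U_M(\ket{b}\otimes\ket{0}_{\rm anc})=\ket{x}\otimes\ket{0}_{\rm anc}+\ket{\mathrm{garb}}$ with $\|\ket{\mathrm{garb}}\|=\mathcal{O}(\varepsilon/\|O\|)$; the $\log\kappa\log\log\kappa$ factor originates in the degree of the eigenstate-filtering polynomial used inside each $V_j$, while the $\log(\|O\|/\varepsilon)$ factor arises because bounding observable bias by $\varepsilon$ requires state-preparation error $\mathcal{O}(\varepsilon/\|O\|)$.

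Setting $\sigma:=\ketbra{b}\otimes\ketbra{0}_{\rm anc}$ and $O':=O\otimes\ketbra{0}_{\rm anc}$, the target quantity admits the representation
\begin{equation}
    \bra{x}O\ket{x}\;=\;\mathrm{tr}\!\left[O'\,U_M\,\sigma\,U_M^\dagger\right]+\mathcal{O}(\varepsilon)
    \;\equiv\; f_M(\ketbra{b})+\mathcal{O}(\varepsilon),
\end{equation}
so the expectation value fits exactly into the framework of Eq.~\eqref{eq_main:target_fM}. Each of the $M$ interleaved operations $R_b$ is a pure-state DME at time $T=\pi$, hence Theorem~\ref{thm:DME_pure_state} applies verbatim.

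Next I would replace every $R_b$ by the random modified-quantum map $\hat{\Phi}_{\mathrm{pure},r}$ and run the sampling-plus-post-processing procedure described in Section~\ref{subsection S3.1}. Because $\hat{\Phi}_{\mathrm{pure},r}$ reproduces the target unitary channel \emph{exactly} in expectation (up to the multiplicative factor $C_{\mathrm{pure}}^{2r}$), no additional bias is introduced beyond the $\mathcal{O}(\varepsilon)$ already present in the solver. Choosing $r=\Theta(M)$ uniformly across the $M$ reflections controls the cumulative sampling overhead as
\begin{equation}
    \prod_{m=1}^{M} C_{\mathrm{pure}}^{2r}\;\le\;\exp\!\Bigl(\tfrac{2\pi^2 M}{r}\Bigr)\;=\;\mathcal{O}(1),
\end{equation}
so the total measurement count remains the standard $\mathcal{O}(\|O\|^2/\varepsilon^2)$. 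Counting per-circuit resources via Table~\ref{tab:resources}, each simulated reflection consumes $2r$ copies of $\ket{b}$ and $\mathcal{O}(r\log d)$ elementary gates, giving $\mathcal{O}(Mr)=\mathcal{O}(M^2)$ copies and $\mathcal{O}(M^2\log d)$ gates for the DME portion. Adding the $M$ queries to $U_A,U_A^\dagger$ and the $\mathcal{O}(aM)$ gates used by the $V_j$'s yields the claimed $\mathcal{O}(aM+M^2\log d)$ elementary-gate bound.

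I expect the main obstacle to lie not in the DME replacement itself, which is a direct application of Theorem~\ref{thm:DME_pure_state} combined with the triangle-inequality argument in Eq.~\eqref{diamond_norm_of_xy}, but rather in carefully certifying the query count $M$ inherited from Ref.~\cite{lin2020optimal}. In particular, one must track how the eigenstate-filtering polynomial's degree and the required state-preparation accuracy propagate into the $\log\kappa\log\log\kappa+\log(\|O\|/\varepsilon)$ dependence, and verify that normalization constants from the block encoding of $A^{-1}$ do not spoil the $\|O\|/\varepsilon^2$ measurement-cost scaling. Once that accounting is completed, the remaining steps---namely invoking Theorem~\ref{thm:DME_pure_state}, balancing $r$ against $M$, and summing resources---follow mechanically from the general framework of Section~\ref{sec:nearlyopt_qalg}.
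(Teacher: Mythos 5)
Your overall route is the same as the paper's: instantiate the Lin--Tong adiabatic/eigenstate-projection solver so that all dependence on $\ket{b}$ enters through (controlled) reflections $e^{-i\pi\ketbra{b}}$, replace each reflection by the pure-state virtual DME of Theorem~\ref{thm:DME_pure_state} with $r=\Theta(M)$ so the cumulative overhead $C_{\rm pure}^{2rM}$ stays $\mathcal{O}(1)$, and invoke the framework of Section~\ref{sec:nearlyopt_qalg}; the resource counting also matches. (One small structural caveat: in the actual circuits the reflections sit inside the block encodings of $H_0,H_1$ as \emph{controlled} reflections, so the state to be exponentiated is $\ketbra{1}\otimes\ketbra{b}$ rather than $\ketbra{b}$ --- still pure, so nothing breaks, but your ``$V_j R_b$'' factorization should be stated at that level.)

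The genuine gap is your claim that $U_M(\ket{b}\otimes\ket{0}_{\rm anc})=\ket{x}\otimes\ket{0}_{\rm anc}+\ket{\mathrm{garb}}$ with $\|\ket{\mathrm{garb}}\|=\mathcal{O}(\varepsilon/\|O\|)$, from which you conclude $\bra{x}O\ket{x}={\rm tr}[O'U_M\sigma U_M^\dagger]+\mathcal{O}(\varepsilon)$. This is false: the sequence of block-encoded projectors succeeds only with probability $|c_p|^2$, which the discretization parameters of Ref.~\cite{lin2020optimal} guarantee is bounded below by a \emph{constant}, not by $1-\mathcal{O}(\varepsilon/\|O\|)$. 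Hence
\begin{equation}
{\rm tr}\!\left[(O\otimes\ketbra{0}_{\rm anc})\,U_M\sigma U_M^\dagger\right]=|c_p|^2\,\bra{\tilde x}O\ket{\tilde x},
\end{equation}
which is off from $\bra{x}O\ket{x}$ by the multiplicative factor $|c_p|^2$, an $\Omega(1)$ bias that no choice of $M$ removes. The paper closes this by running the same circuit with $O=\bm{1}$ to estimate $|c_p|^2={\rm tr}[(\bm{1}\otimes\ketbra{0}_{\rm anc})U_M\sigma U_M^\dagger]$ and outputting the ratio of the two estimates; since $|c_p|^2=\Omega(1)$, the division inflates the variance only by a constant and the $\mathcal{O}(\|O\|^2/\varepsilon^2)$ measurement count survives. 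You need to add this normalization step (or else prove that the specific parameter choices drive the success probability to $1-\mathcal{O}(\varepsilon/\|O\|)$, which the cited construction does not do). With that repair, the rest of your argument --- the $\log\kappa\log\log\kappa+\log(\|O\|/\varepsilon)$ accounting for $M$, the $r$ versus $M$ balancing, and the per-circuit copy and gate counts --- goes through as in the paper.
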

\noindent
The proof of this theorem is provided in Section~\ref{sec:thm_proof_QLS}.
If the circuit depth for $U_A$ is poly-logarithmic in the system dimension $d$, then the whole circuit depth is logarithmic (or poly-logarithmic) in both $\varepsilon$ and $d$ when we ignore the preparation cost of the initial state $\ket{b}$.
For simplicity, we here assume that the coefficient matrix $A$ is Hermitian and $A>0$.
A generalization to a more general $A$ would be possible by slightly modifying the procedure in the proof.
Also, the scaling of the query complexity to $A$ is almost linear with respect to $\kappa$, which is the upper bound of the condition number of $A$.
This is the same scaling (up to logarithmic factors) as the state-of-the-art methods for quantum linear system problems~\cite{morales2024quantum}.
The factor $\log\kappa$ in the complexities can be eliminated by additionally using the fixed-point amplitude amplification in the circuit of $f_M$~\cite{lin2020optimal}.

\clearpage
\section{Quantum principal component analysis}
\label{sec:quantum_PCA}

In quantum information processing, the principal component(s) of a quantum state, i.e., eigenstates with dominant eigenvalues of the density matrix, is a highly informative resource. 
The original formulation together with a solution to this problem, called the quantum principal component analysis (qPCA), was established in \cite{Lloyd2014-nn}. 
A particularly interesting scenario is the case where we use digital/analog quantum computers or future quantum sensors to generate quantum data in the form of a density matrix and store it in a quantum memory~\cite{huang2022quantum}. 
In this case, although the resulting state is inevitably subjected to noise such as decoherence and control error, the target information would be relatively robustly retained in the first principal component against the noise. 
A similar situation arises when considering a small-scale logical quantum computing~\cite{egan2021fault, PhysRevX.11.041058, paetznick2024demonstrationlogicalqubitsrepeated, bluvstein2024logical, reichardt2024logical, acharya2024quantum, sundaresan2023demonstrating, Acharya2023-yl, krinner2022realizing, Abobeih_2022} (i.e., when we can operate logical qubits with a modest number of logical operations), especially for the early stage of fault-tolerant quantum computing~\cite{katabarwa2024early}.
Specifically, when using error-corrected qubits with modest code-distance, one can naturally expect that a final output quantum state is dominated by the first principal component, namely the ideal pure state for noiseless quantum algorithms, while a full characterization of the remaining noise is challenging.
In addition, several other tasks such as data analysis~\cite{Lloyd2014-nn,PRXQuantum.3.030334}, process tomography~\cite{Lloyd2014-nn}, and electronic structure analysis~\cite{Lloyd2014-nn,PRXQuantum.3.030334} can also be performed by probing the principal component of quantum states.

Below, we consider the following setup: we have access to copies of an unknown quantum state $\rho$ which, without loss of generality, can be written as 
\begin{equation}
\label{eq:original noise_assumption}
\rho=(1-\lambda)\ketbra{\psi}+\lambda\sum_{k}p_k \ketbra{\phi_k}
:=(1-\lambda)\ketbra{\psi}+\lambda\rho_{\mathrm{err}},   
\end{equation}
where $\ket{\psi}$ and $\ket{\phi_k}$ are eigenstates of $\rho$ with eigenvalues $1-\lambda$ and $\lambda p_k$, respectively.
$\lambda$ is a parameter describing the noise intensity (assumed to be not too large), and $\rho_{\mathrm{err}}$ is an unknown noisy state satisfying $\mathrm{tr}[\rho_{\rm err} \ketbra{\psi}]=0$. 
Our goal is to estimate the expectation value $\bra{\psi}O\ket{\psi}$ of a given observable $O$ with respect to the dominant pure state $\ket{\psi}$ up to a small additive error $\varepsilon$. 
This problem was solved by the original DME-based protocol for qPCA~\cite{Lloyd2014-nn} with remarkably a logarithmic complexity for the target system dimension $d$.
Later, by Ref.~\cite{huang2022quantum}, an exponential advantage of this protocol compared to any classical algorithm with single-copy measurement outcomes is rigorously established even for a fixed $\varepsilon$.
However, the original qPCA protocol requires a deep circuit with a polynomial depth regarding $1/\varepsilon$ based on quantum phase estimation~\cite{kitaev1995quantum} with DMEs, and thus its practical implementation is highly difficult until fully unlocking the power of fault-tolerant quantum computing. 
Note that we can apply the quantum state purification method~\cite{PhysRevLett.82.4344,keyl2001rate,Childs2025streamingquantum,Yao_2025,yang2024quantumerrorsuppressionsubgroup,li2024optimal,brahmachari2025optimal} to this problem, but a similar difficulty arises.
On the other hand, the exponential advantage still holds when we use quantum virtual cooling~\cite{PhysRevX.9.031013} or quantum error mitigation strategies such as virtual distillation (VD)~\cite{Huggins2021-vd} and error suppression by derangement (ESD)~\cite{koczor2021-esd},
which have an aspect of near-term variants of the original qPCA protocol~\cite{huang2022quantum,Cai2023-error-mitigation-survey}.
These methods suppress the error in $\rho_{\rm err}$ at an exponential rate with respect to the number of state copies per circuit, thereby achieving an $\mathcal{O}(\log (1/\varepsilon))$ depth circuit over the copies; however, they require an exponentially large measurement overhead regarding the number of the state copies. 
As another direction, there have been proposed several methods based on variational quantum algorithms~\cite{larose2019variational,verdon2019quantum,xin2021experimental,cerezo2022variational}, but the number of measurements required for them is not clear (rather, it might be prohibitively large).
Overall, there was no method archiving (i) an $\mathcal{O}({\rm polylog}(d,1/\varepsilon))$-depth circuit, (ii) $\mathcal{O}({\rm polylog}(1/\varepsilon))$ state copies per circuit, and (iii) a standard measurement cost $\mathcal{O}(1/\varepsilon^2)$, that would significantly advance the achievement of the exponential quantum advantage along with practically small $\varepsilon$.

In the following, we only assume that $\lambda$ is not too large and we know an upper bound of the parameter $\lambda$. 
This assumption on $\lambda$ is naturally expected to be held, e.g., in the case where we can perform some quantum algorithms (such as Hamiltonian simulation) on small-scale logical qubits. 
Also, we ignore the difference between the dominant eigenstate $\ket{\psi}$ and an ideal pure state obtained from a noiseless target quantum process because in practical scenarios, such coherent errors can be converted into incoherent errors by using, e.g., Pauli twirling~\cite{PhysRevA.78.012347,PhysRevA.85.042311,Cai_2019,Cai_2020}.

\begin{figure}[htbp]
 \centering
 \begin{tabular}{cc}
 \includegraphics[width=0.45\linewidth,page=1]{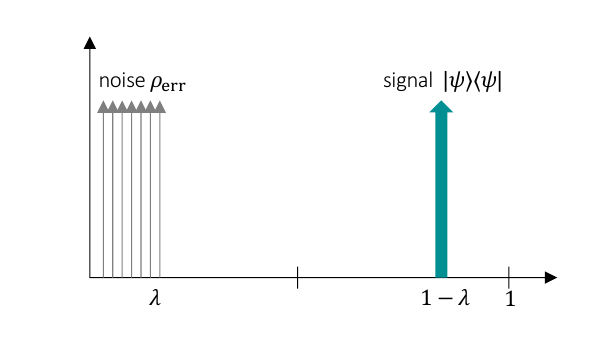} & 
 \includegraphics[width=0.45\linewidth,page=2]{Fig_filter2.pdf}\\
 (a) Original spectrum of $\rho$ & (b) Filtered spectrum
 \end{tabular}
 \caption{Filter function for the principal eigenstate projector. 
 Here, $\rho$, $\rho_{\mathrm{err}}$, and $\ket{\psi}\bra{\psi}$ are $d=2^n$ dimensional quantum states defined in Eq.~\eqref{eq:original noise_assumption}. 
 $\varepsilon_1$ and $\varepsilon_2$ are the filter errors in the pass band and the stop band, respectively.
 }
 \label{fig:error-mitigation-filter}
\end{figure}

\subsection{Main result}

Here we show that the virtual-DME-based protocol achieves the above-mentioned requirements (i)-(iii); i.e., we 
significantly lower the barrier for realizing the exponential quantum advantage of the qPCA protocol by providing a highly efficient way to filter out the contribution of unknown $\rho_{\rm err}$ via the virtual DME.
The main idea is to use $e^{-i\rho T}$ to construct a projector onto the most dominant eigenstate $\ket{\psi}$; then applying the projector to $\rho$, we can estimate the properties of $\ket{\psi}$. 
A quantum circuit for such a dominant eigenstate projector can be designed by using some advanced quantum algorithms~\cite{gilyen2019-qsvt, Dong2022-qetu}, when we have access to the controlled version of $e^{-i\rho T}$ even without knowing $\rho$. 
Specifically, we use a quantum circuit with the controlled $e^{-i\rho T}$ ($T$ is a unit time) to perform the eigenvalue transformation of $\rho$ based on an $\mathcal{O}(\log(1/\varepsilon))$-degree polynomial approximating a step function; see Fig.~\ref{fig:error-mitigation-filter}.
Based on this circuit, we can construct a function $f_M$ to estimate the property of the first principal component  $\ket{\psi}$ within the error $\varepsilon$ by using $M=\mathcal{O}({\rm log}(1/\varepsilon))$ DME operations, leading to poly-logarithmic depth circuits with respect to $d$ and $1/\varepsilon$.
Importantly, our quantum algorithm yields an estimate $f_M$ with only a standard measurement cost $\mathcal{O}(1/\varepsilon^2)$ (more specifically, $\mathcal{O}((1-\lambda)^{-2}/\varepsilon^2)$ as shown later) without any calibration cost to characterize the noisy state $\rho_{\rm err}$. 
Noting that instead of directly using the original qPCA protocol, which requires many additional qubits for phase estimation, we here leverage the eigenvalue transformation technique with a few additional ancilla qubits and a smaller number of gates.

Concretely, the algorithm uses the circuit depicted in Fig.~\ref{fig:mitigation-circuit}.
Let $W_P$ be a block encoding of a filter function, which eliminates the noisy state $\rho_{\rm err}$ without losing the principal eigenstate information (see schematic image in Fig.~\ref{fig:error-mitigation-filter}).
This eigenvalue filtering operation yields the projection onto the principal eigenstate $\ket{\psi}$. 
We can construct $W_P$ by QSVT with a polynomial approximating the step function and the time-evolution access to $e^{i\rho/2}$.
By measuring $\ketbra{0} \otimes O$ and $\ketbra{0} \otimes \bm{1}$ at the end of the circuits in Fig.~\ref{fig:mitigation-circuit}, we can construct an estimator that approximately recovers the noise-free expectation value:
\begin{equation}
\label{eq:error-suppression}
    \frac{\mathrm{tr}[(\ketbra{0} \otimes O)(W_P(\ketbra{0} \otimes \rho)W_P^\dagger)]}{\mathrm{tr}[(\ketbra{0} \otimes \bm{1})(W_P(\ketbra{0} \otimes \rho)W_P^\dagger)]}
    \simeq \frac{(1-\lambda)\Braket{\psi|O|\psi}}{1-\lambda} 
    = \Braket{\psi|O|\psi}.
\end{equation}
Thus, our projector filters out the noise while maintaining the signal from the principal eigenstate.
Also, we can show that the measurement overhead of our algorithm is increased only by a factor $\mathcal{O}((1-\lambda)^{-2})$ caused by the original signal loss.
Since the circuit in Fig.~\ref{fig:mitigation-circuit} has the form of that for calculating $f_M$ in Eq.~\eqref{eq:target_exp_for_UM}, we can use the general algorithm with the virtual DME in Section~\ref{sec:nearlyopt_qalg} to estimate the expectation value regarding $\ketbra{\psi}$.
The performance of this noise-agnostic quantum algorithm is summarized as follows.

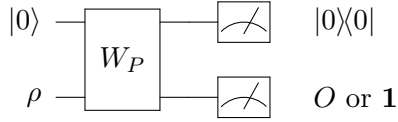
\begin{figure}[tbp]
\centering
\begin{tabular}{c}
\Qcircuit @C=1em @R=1.2em {
    \\
    & \lstick{\ket{0}}  &  \multigate{1}{W_P}  &  \qw  & \meter & \rstick{\ketbra{0}}  \\ 
    & \lstick{\rho}     &  \ghost{W_P}         &  \qw  & \meter & \rstick{O \mathrm{~or~} \bm{1}} \\
    \\
    }
    \\
\end{tabular}
\caption{The error filtering circuit. ß$W_P$ denotes the block encoding of the approximated projector, which is constructed by multiple uses of $e^{i \rho/2}$. 
}
\label{fig:mitigation-circuit}
\end{figure}

\begin{thm}\label{thm:error-mitigation}
Let $\rho$ be an accessible $d=2^n$-dimensional unknown quantum state with the spectral decomposition
\begin{equation}
\label{eq:noise_assumption}
    \rho = (1-\lambda) \ketbra{\psi} + \lambda \rho_{\mathrm{err}},
\end{equation}
where $\ket{\psi}$ is a principal eigenstate and $\rho_{\rm err}$ is an arbitrary unknown state satisfying $\mathrm{tr}[\rho_{\mathrm{err}} \ketbra{\psi}]=0$.
$\lambda \in [0, \frac{1}{2})$ is an unknown parameter.
Suppose $\eta \in [0 , \frac{1}{2})$ is a known upper bound of $\lambda$ (i.e., $\lambda \le \eta$).
For any observable $O$,
there is a quantum algorithm that estimates the expectation value $\bra{\psi} O \ket{\psi}$,
within additive error $\varepsilon \in (0, 8\|O \|]$ with high probability,
using
\begin{itemize}
    \item $N = \mathcal{O}\left(\frac{1}{(1 - 2\eta)^2}\log^2\left(\frac{\|O\|}{\varepsilon}\right)\log\left(\frac{1}{(1-2 \eta) \varepsilon}\right) \right)$ copies of $\rho$
    \item $\mathcal{O}\left(N\log(d)\right)$ one- or two-qubit elementary gates
\end{itemize}
per circuit.
The total number of measurements
is given by $\mathcal{O}(\frac{\|O\|^2}{(1-\eta)^2 \varepsilon^2})$.
\end{thm}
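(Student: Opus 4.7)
The plan is to construct an approximate projector onto the dominant eigenstate $\ket{\psi}$ via quantum eigenvalue transformation, implement the required $e^{i\rho/2}$ blocks through the virtual DME, and then extract $\bra{\psi}O\ket{\psi}$ as the ratio $N/D$ in Eq.~\eqref{eq:error-suppression}. First I would invoke a standard polynomial approximation of a shifted sign function (e.g.\ via shifted Chebyshev polynomials or Lemma~29 of~\cite{gilyen2019-qsvt}) to obtain a real polynomial $P$ of degree $d_P=\mathcal{O}\!\left(\tfrac{1}{1-2\eta}\log(1/\varepsilon')\right)$ with $|P(x)|\le 1$ on $[-1,1]$, $|P(x)-1|\le \varepsilon'$ for $x\in[1-\eta,1]$, and $|P(x)|\le\varepsilon'$ for $x\in[-\eta,\eta]$. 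Under the assumption Eq.~\eqref{eq:noise_assumption}, the top eigenvalue $1-\lambda\ge 1-\eta$ sits in the pass band while every other eigenvalue $\lambda p_k\le\eta$ sits in the stop band, so $P(\rho)$ is $\varepsilon'$-close to $\ketbra{\psi}$ in operator norm.

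Next I would realize $P(\rho)$ as a block encoding $W_P$ using QETU~\cite{Dong2022-qetu} applied to the input unitary $e^{i\rho/2}$, consuming $M=\mathcal{O}(d_P)$ queries to controlled-$e^{\pm i\rho/2}$ together with $\mathcal{O}(M)$ additional elementary gates. By Theorem~\ref{thm:controlledDME}, each controlled $e^{\pm i\rho/2}$ can be replaced by the random modified-quantum map $\hat{\Phi}'_r$, so the circuit of Fig.~\ref{fig:mitigation-circuit} fits the $f_M$ template of Eq.~\eqref{eq:target_exp_for_UM}. A direct calculation then yields
\begin{align}
N&:=\mathrm{tr}\!\left[(\ketbra{0}\otimes O)\,W_P(\ketbra{0}\otimes\rho)W_P^\dagger\right]=(1-\lambda)\bra{\psi}O\ket{\psi}+\mathcal{O}(\|O\|\varepsilon'),\\
D&:=\mathrm{tr}\!\left[(\ketbra{0}\otimes\bm{1})\,W_P(\ketbra{0}\otimes\rho)W_P^\dagger\right]=(1-\lambda)+\mathcal{O}(\varepsilon'),
\end{align}
so the ratio $N/D$ approximates $\bra{\psi}O\ket{\psi}$ up to error $\mathcal{O}(\|O\|\varepsilon'/(1-\eta))$; setting $\varepsilon'=\Theta((1-\eta)\varepsilon/\|O\|)$ makes this bias at most $\varepsilon/2$.

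Finally I would apply the general algorithm of Section~\ref{sec:optimalalgorithm}: choose the per-segment simulation time $|T|=1/2$, take $r=\mathcal{O}(M)$ so that the total measurement overhead $\prod_m [C^{(m)}]^{2r_m}$ is bounded by a small constant, and set the per-segment virtual-DME diamond-norm error to $\mathcal{O}(\varepsilon/M)$. This produces $\mathcal{O}(M^2\log(M/\varepsilon))$ copies of $\rho$ per circuit, and substituting $M=\mathcal{O}(\tfrac{1}{1-2\eta}\log(\|O\|/\varepsilon))$ reproduces the claimed bound on $N$. The measurement cost follows from the standard shot-noise scaling $\mathcal{O}(\|O\|^2/\varepsilon^2)$ amplified by the ratio-estimation variance factor $\mathcal{O}((1-\eta)^{-2})$, since $D\ge 1-\eta$ is bounded away from zero. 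The main technical obstacle I foresee is the clean joint tuning of three error budgets---the polynomial-truncation error $\varepsilon'$, the linearly accumulating virtual-DME diamond-norm error $\propto M\varepsilon''$ (bounded via Eq.~\eqref{diamond_norm_of_xy}), and the ratio-propagation amplification by $(1-\eta)^{-1}$---so that the total bias remains $\le\varepsilon$ without inflating any of the polylogarithmic factors beyond the advertised copy and gate counts.
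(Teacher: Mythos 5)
Your proposal is correct and follows essentially the same route as the paper's proof: a polynomial approximation of a step/sign function realized as a block encoding via QETU on $e^{\pm i\rho/2}$, replacement of the controlled time evolutions by the virtual DME maps of Theorem~\ref{thm:controlledDME}, and a ratio estimator with the same parameter choices $M=\mathcal{O}((1-2\eta)^{-1}\log(\|O\|/\varepsilon))$, $r=\mathcal{O}(M)$, and denominator lower bound $1-\eta$. The only cosmetic differences are that the paper works with $P(\cos(\rho/2))$ and exploits a quadratic-in-$\varepsilon_P$ bias bound (since the filter sandwiches $\rho_{\rm err}$ on both sides), whereas your linear bound suffices for the same asymptotics.
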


\noindent
We prove this theorem by explicitly constructing a quantum algorithm; see the proof in Section~\ref{sec:proof-error-mitigation}.
In the explicit algorithm, we use simple quantum primitive operations to simplify the proof; the number of copies, elementary gates, and additional qubits may be further reduced by optimizing our algorithm. 
Although we put the assumption of $\lambda<1/2$ for simplicity, the complexity depends only on the gap between the primary and the second largest eigenvalues; that is, our algorithm can be generalized to the case $\lambda \ge 1/2$. 
In addition, while we here consider only the first principal component, a generalization for multiple principal components would be straightforward by appropriately changing the filter polynomial. 

\subsection{Numerical simulation}

\begin{figure}[tb]
  \centering
  \begin{minipage}{\textwidth}
    \centering
    \textbf{(a) Coherent filter}\\
    ~~~\Qcircuit @C=0.3em @R=0.8em {
    \\
      \lstick{\ket{0}} & \gate{e^{i\phi_1 X}} & \ctrl{1} & \ctrlo{1} & \gate{e^{i\frac{Z}{2}}}&\qw&&& \cdots &&& \gate{e^{i\phi_{2M_{\rm f}} X}} & \ctrl{1} & \ctrlo{1} & \gate{e^{i\frac{Z}{2}}} &\gate{e^{i\phi_{2M_{\rm f}+1} X}} & \meter &\rstick{\ketbra{0}}\\
      \lstick{\rho}& {/}\qw   & \gate{e^{i\frac{\rho}{2}}} & \gate{e^{-i\frac{\rho}{2}}}&\qw&\qw &&& \cdots &&& \qw & \gate{e^{i\frac{\rho}{2}}} & \gate{e^{-i\frac{\rho}{2}}}& \qw& \qw & \meter &\rstick{O}
    \\
    \\
    }
  \end{minipage}
  \\
  \bigskip
  \begin{minipage}{\textwidth}
    \centering
    \textbf{(b) Classical and Quantum Hybrid filter}\\
    ~~~\Qcircuit @C=0.3em @R=0.8em {
    \\
      \lstick{\ket{0}} & \gate{H} & \ctrl{1} & \ctrlo{1}     &\gate{e^{-ikZ/2}} & \meter &\rstick{X}\\
      \lstick{\rho}& {/} \qw      & \gate{e^{-ik\frac{\rho}{2}}}    &  \gate{e^{ik\frac{\rho}{2}}}  &\qw & \meter &\rstick{O}\\
    \\
    }
  \end{minipage}

  \caption{Two types of filters. (a) Coherent filter. The parameters $\{\phi_j\}$ are efficiently calculated from the description of a target even real polynomial $P(x)$ satisfying $|P(x)|\leq 1$ for all $x\in[-1,1]$~\cite{Dong2022-qetu}. (b) Classical and quantum hybrid filter. The index $k$ is randomly chosen according to the probability distribution $\lambda_k$ at each circuit run.}
  \label{fig:coherent_hybrid_filters}
\end{figure}

We here provide a numerical simulation to compare the quantum resource of several algorithms: two virtual-DME-based protocols, the original qPCA protocol, and the VD method. 
The first two are the newly proposed quantum algorithms that effectively construct the quantum filter in Fig.~\ref{fig:error-mitigation-filter} using $e^{-i\rho T}$ and then apply it to the unknown state $\rho=(1-\lambda)\ketbra{\psi}+\lambda {\rho}_{\rm err}$. 
We will call these two filters the {\it coherent filter} and the {\it hybrid filter}, the detailed procedures of which will be given in Algorithm~\ref{alg:coherentfilter} and \ref{alg:hybridfilter}, respectively.

The coherent filter algorithm is essentially the same as the algorithm constructed in the proof of Theorem~\ref{thm:error-mitigation}, and the hybrid one is a more hardware-efficient approach, which utilizes a quantum-classical hybrid filtering technique. 
Both filters contain the following common component. 
For a simple numerical calculation, we consider a filter for $\bm{1}-\rho$ as follows, instead of Fig.~\ref{fig:error-mitigation-filter} for $\rho$. 
Let $F(x)=\sum_{k=0}^{M_{\rm f}} f(k)\cos(kx)$ be a real function such that
\begin{equation}\label{eq:condition_F}
    \max_{x\in [0,\eta]}|F(x)-1|\leq \varepsilon_1,~~~\max_{x\in [1-\eta,1]}|F(x)|\leq \varepsilon_2
\end{equation}
holds for a known upper bound $\eta$ of $\lambda$ and error parameters $\varepsilon_1,\varepsilon_2>0$.
Given the parameters $\eta, \varepsilon_1,\varepsilon_2$, we can numerically find the minimal possible filter order $M_{\rm f}$ and the coefficients $\{f(k)\}$ by using e.g., Remez algorithm~\cite{Remez1934a,Remez1934b,Remez1934c}.
Then, because $\bm{1}-\rho$ has an eigenvalue $\lambda\in[0,\eta]$ with corresponding eigenstate $\ket{\psi}$ while the other eigenvalues live in the range $[1-\eta,1]$, the resulting filtered operator $F(\bm{1}-\rho)$ becomes an approximate projector $F(\bm{1}-\rho) \simeq F(\lambda)\ketbra{\psi} \simeq \ketbra{\psi}$.

\renewcommand{\baselinestretch}{1.2}
\begin{figure}[htb]
\begin{algorithm}[H]
    \caption{Estimation of ${\rm tr}[OF(\bm{1}-\rho)\rho F(\bm{1}-\rho)]$ via Coherent Filter}\label{alg:coherentfilter}
    \begin{algorithmic}[1]
    \smallskip
    \REQUIRE 
    Copies of states $\rho$, real function $P(x)=\sum_{k=0}^{M_{\rm f}} f(k) \cos(2k\arccos(x))$ satisfying $|P(x)|\leq 1$ for all $x\in[-1,1]$, observable $O$, bias from virtual DME $\Delta'>0$

    \smallskip
    \ENSURE A sample $\mu_1$ of an estimator $\hat{\mu}_1$ satisfying 
    $\left|\mathbb{E}[\hat{\mu}_1]-{\rm tr}[OF(\bm{1}-\rho)\rho F(\bm{1}-\rho)]\right|\leq \Delta'$ and ${\rm Var}[\hat{\mu}_1]\leq e\|O\|^2$.
    
    \medskip
    \STATE For $T=-1/2, r=2M_{\rm f}$, and $\varepsilon=W_0(\Delta'/\|O\|)/2M_{\rm f}$, generate $2M_{\rm f}$ samples from $\hat{\Phi}_r'$ and calculate $C$ in Theorem~\ref{thm:controlledDME}.

    \STATE Run the circuit in Fig.~\ref{fig:coherent_hybrid_filters}(a) where $2M_{\rm f}$ gates in the form of $\ketbra{0}\otimes e^{-i\rho/2}+\ketbra{1}\otimes e^{i\rho/2}$ are replaced with these $\hat{\Phi}_r'$

    \STATE Set $\widetilde{g}\leftarrow$ the measurement outcome including the mid-circuit measurements in $\hat{\Phi}_r'$.

    \RETURN $\mu_1=[C]^{8M^2_{\rm f}}\widetilde{g}$ \COMMENT{Note that $[C]^{8M^2_{\rm f}}\leq \sqrt{e}$}
    
    \end{algorithmic}
\end{algorithm}
\end{figure}
\renewcommand{\baselinestretch}{1}

Algorithm~\ref{alg:coherentfilter} computes the target quantity Eq.~\eqref{eq:error-suppression} based on the quantum circuit depicted in Fig.~\ref{fig:coherent_hybrid_filters}(a).
Since $P(x):=F(2\arccos(x))$ is an even real $2M_{\rm f}$-degree polynomial, we can explicitly construct the circuit Fig.~\ref{fig:coherent_hybrid_filters}(a) \cite[Corollary 17]{Dong2022-qetu} corresponding to Fig.~\ref{fig:mitigation-circuit}, where now $W_P$ is an $(1,1,0)$ block encoding of $P(\cos((\bm{1}-\rho)/2))=F(\bm{1}-\rho)$.
Note that if $P(x)=F(2\arccos(x))$ does not satisfy the condition $|P(x)|\leq 1$ ($|x|\leq 1$), we can simply rescale it.
This circuit yields estimates for ${\rm tr}[OF(\bm{1}-\rho)\rho F(\bm{1}-\rho)]$ and ${\rm tr}[F(\bm{1}-\rho)\rho F(\bm{1}-\rho)]$ by the procedure based on the virtual DME in Algorithm~\ref{alg:coherentfilter}.
By taking the average to reduce the shot noise, we can estimate $\bra{\psi}O\ket{\psi}$ using the estimator given in Eq.~\eqref{eq:error-suppression}.
Here, we take a sufficiently small error $\Delta'$ for the virtual DME in Algorithm~\ref{alg:coherentfilter}.
From the choice of the filter $F(x)$, we can show the following, in a similar way to the proof of Theorem~\ref{thm:error-mitigation},
\begin{itemize}
    \item Approximation error $\Delta$: 
    \begin{equation}
    \left|\frac{{\rm tr}[OF(\bm{1}-\rho)\rho F(\bm{1}-\rho)]}{{\rm tr}[F(\bm{1}-\rho)\rho F(\bm{1}-\rho)]}-\bra{\psi}O\ket{\psi}\right|\leq \frac{\eta}{1-\eta}2\|O\|\left(\frac{\varepsilon_2}{1-\varepsilon_1}\right)^2=:\Delta,
    \end{equation}
    \item Measurement overhead reflecting the division of the denominator: 
    \begin{equation}
    \frac{{\rm Var}[\hat{\mu}_1]}{{\rm tr}[F^2(\bm{1}-\rho)\rho]^2}\leq \frac{e \|O\|^2}{{\rm tr}[F^2(\bm{1}-\rho)\rho]^2}=:\gamma^2_{\rm coherent},
    \end{equation}
    \item Expectation of the number of copies per circuit:
    \begin{equation}
    1 + \left(2M_{\rm f}\times \mbox{Average number of state copies in }\hat{\Phi}'_{r=2M_{\rm f}}\right)\simeq 1+8M_{\rm f}^2.
    \end{equation}    
\end{itemize}

\renewcommand{\baselinestretch}{1.2}
\begin{figure}[htb]
\begin{algorithm}[H]
    \caption{Estimation of ${\rm tr}[OF(\bm{1}-\rho)\rho]$ via Hybrid Filter}\label{alg:hybridfilter}
    \begin{algorithmic}[1]
    \smallskip
    \REQUIRE 
    Copies of states $\rho$, function $F(x)=\sum_{k=0}^{M_{\rm f}} f(k) \cos(kx)$ with real coefficients $f(k)$, observable $O$, bias $\Delta'>0$, and a probability distribution $\{\lambda_k\}_{k=0}^{M_{\rm f}}$ ($\lambda_k\neq 0$)

    \smallskip
    \ENSURE A sample $\mu_2$ from an estimator $\hat{\mu}_2$ satisfying $\left|\mathbb{E}[\hat{\mu}_2]-{\rm tr}[OF(\bm{1}-\rho)\rho]\right|\leq \Delta'$ and ${\rm Var}[\hat{\mu}_2]\leq \sum_{k=0}^{M_{\rm f}} \frac{f(k)^2}{\lambda_k}\sigma_k^2$
    where $\sigma_0=\|O\|$ and $\sigma_{k\neq 0}= \sqrt{e}\|O\|$.

    \medskip
    
    \STATE Sample $k$ from the probability distribution $\lambda_k$

    \STATE For $T=k/2,r=k^2$, and $\varepsilon=\Delta'/ \|O\|(\sum_{k=1}^{M_{\rm f}} |f(k)|)$, sample $\hat{\Phi}'_{r}$ and calculate $C$ in Theorem~\ref{thm:controlledDME}.
    (if $k=0$, skip this step)
    \STATE Run the circuit in Fig.~\ref{fig:coherent_hybrid_filters}(b) where $\ketbra{0}\otimes e^{ik\rho/2}+\ketbra{1}\otimes e^{-ik\rho/2}$ is replaced with the $\hat{\Phi}'_{r}$.
    \STATE Set $\widetilde{g}\leftarrow$ the measurement outcome including the Pauli X measurement in $\hat{\Phi}_r'$.
    
    \RETURN $\mu_2=C^{2k^2}f(k)\widetilde{g}/\lambda_k$ (if $k=0$, $\mu_2=f(k)\widetilde{g}/\lambda_k$) \COMMENT{Note that $1<C^{2k^2}\leq \sqrt{e}$}
    
    \end{algorithmic}
\end{algorithm}
\end{figure}
\renewcommand{\baselinestretch}{1}

Next, we consider the second approach based on the hybrid filter with the quantum circuit depicted in Fig.~\ref{fig:coherent_hybrid_filters}(b), which is more hardware efficient than (a).
Since we observe that 
\begin{equation}
    (1-\lambda)\bra{\psi}O\ket{\psi}\simeq {\rm tr}[OF(\bm{1}-\rho)\rho]=\sum_{k=0}^{M_{\rm f}}\lambda_k\frac{f(k)}{\lambda_k}{\rm tr}[O\cos(k(\rho-\bm{1}))\rho]
\end{equation} 
holds for any probability distribution $\lambda_k$ $(\neq 0~\forall k)$, we can estimate $(1-\lambda)\bra{\psi}O\ket{\psi}$ by randomly selecting $k$ according to the probability $\lambda_k$ followed by sampling from an estimator whose mean matches ${f(k)}{\rm tr}[O\cos(k(\rho-\bm{1}))\rho]/\lambda_k$.
The corresponding procedure is given in Algorithm~\ref{alg:hybridfilter}.
After independently repeating this algorithm to estimate ${\rm tr}[OF(\bm{1}-\rho)\rho]$ and ${\rm tr}[F(\bm{1}-\rho)\rho]$, we can estimate $\bra{\psi}O\ket{\psi}$ by calculating the ratio of the average values.
From the choice of $F(x)$ satisfying Eq.~\eqref{eq:condition_F}, we can show that (assuming that we take a sufficiently small $\Delta'$ in Algorithm~\ref{alg:hybridfilter})
\begin{itemize}
    \item Approximation error $\Delta$: 
    \begin{equation}
    \left|\frac{{\rm tr}[OF(\bm{1}-\rho)\rho]}{{\rm tr}[F(\bm{1}-\rho)\rho]}-\bra{\psi}O\ket{\psi}\right|\leq \frac{\eta}{(1-\eta)(1-\varepsilon_1)-\varepsilon_2\eta}2\|O\|\varepsilon_2=:\Delta,
    \end{equation}
    if $(1-\varepsilon_1)(1-\eta)>\varepsilon_2\eta$.
    \item Measurement overhead reflecting the division of the denominator: 
    \begin{equation}
    \frac{{\rm Var}[\hat{\mu}_2]}{{\rm tr}[F(\bm{1}-\rho)\rho]^2}\leq \frac{\sum_{k=0}^{M_{\rm f}} \lambda_k^{-1} f(k)^2 \sigma_k^2 }{{\rm tr}[F(\bm{1}-\rho)\rho]^2}=:\gamma^{2}_{\rm hybrid},
    \end{equation}
    where $\sigma_0=\|O\|$ and $\sigma_{k\neq0}=\sqrt{e}\|O\|$.
    \item Expected value of the number of copies per circuit:
    \begin{equation}
    \sum_{k=0}^{M_{\rm f}} \lambda_k\left(1+\mbox{Average number of state copies in }\hat{\Phi}'_{r=k^2}\right)\simeq \sum_{k=0}^{M_{\rm f}} \lambda_k (1+2k^2).
    \end{equation}    
\end{itemize}
\noindent
We here consider the optimization of the probability distribution $\lambda_k$ such that the product of $\gamma_{\rm hybrid}^2$ and the expected number of copies is minimized.
Such an optimal distribution can be found by the method of Lagrange multiplier, and the resulting optimal distribution is given by
\begin{equation}
    \lambda_k \mathrel{\underset{\sim}{\propto}} \frac{|f(k)|\sigma_k}{\sqrt{1+2k^2}}.
\end{equation}

Finally, we describe the original procedure~\cite{Lloyd2014-nn} of qPCA for comparison.
For simplicity, we write the spectral decomposition of the target state $\rho$ as $\rho\equiv \sum_{i}r_i \ketbra{\chi_i}$ ($r_i\in [0,1]$).
The original procedure uses a quantum phase estimation with multiple (controlled) unitaries $e^{i\rho T}$ and an initial entangled ancillary state $\sum_{k=0}^{2^m-1} a_k \ket{k}$ with $m$ qubits, unlike the uniform superposition state.
Here, $a_k$ is given by~\cite{Lloyd2014-nn}
\begin{equation}
    a_k=\sqrt{\frac{1}{2^{m-1}}}\sin\frac{\pi(k+1/2)}{2^m}.
\end{equation}
Then, the final state of the quantum phase estimation circuit including the inverse QFT becomes
\begin{equation}
    \sum_{l=0}^{2^m-1} \left[\sum_{k=0}^{2^m-1} \frac{a_k}{\sqrt{2^m}}e^{-2\pi i k (l/2^m-r_i T/2\pi)}\right]\ket{l}\ket{\chi_i}
     =: \sum_{l=0}^{2^m-1} C_l(r_i)\ket{l}\ket{\chi_i},
\end{equation}
when the initial state of the target register is an eigenstate $\ket{\chi_i}$.
If we take $\rho$ as the initial target state, the resulting state before measurement is calculated as
\begin{equation}
    (1-\lambda)\ketbra{\psi}\otimes \sum_{l,l'=0}^{2^m-1} C_l(1-\lambda)C^*_{l'}(1-\lambda)\ketbra{l}{l'}+\sum_{k}\lambda p_k \ketbra{\phi_k}\otimes \sum_{l,l'=0}^{2^m-1} C_l(\lambda p_k)C^*_{l'}(\lambda p_k)\ketbra{l}{l'}.
\end{equation}
By taking $T=2\pi$, it follows $|C_l(r_i)|\simeq \delta(l-r_i 2^m)$ when $2^m$ is sufficiently large. 
To extract the most dominant eigenstate, we now assume that we know $l^*$ for the target eigenvalue $1-\lambda$ such that $l^*\simeq (1-\lambda)2^m$.
After performing the post-selection of $\ket{l^*}$ in the ancilla register, we obtain the following state close to the most dominant eigenstate:
\begin{equation}
    \frac{(1-\lambda)|C_{l^*}(1-\lambda)|^2\ketbra{\psi}+\sum_{k}\lambda p_k|C_{l^{*}}(\lambda p_k)|^2 \ketbra{\phi_k}}{(1-\lambda)|C_{l^*}(1-\lambda)|^2+\sum_{k}\lambda p_k|C_{l^{*}}(\lambda p_k)|^2}\simeq \ketbra{\psi}.
\end{equation}
The denominator corresponds to the post-selection probability.

Measuring the observable $O$ on the post-selected state, we can approximately estimate the expectation value of $O$ with respect to the most dominant eigenstate.
This estimation has a bias from the approximation error between the most dominant eigenstate $\ketbra{\psi}$ and the post-selected state.
The approximation error is upper bounded as
\begin{align}\label{qPCA:approx_error}
    &\left|\frac{(1-\lambda)|C_{l^*}(1-\lambda)|^2\bra{\psi}O\ket{\psi}+\sum_{k}\lambda p_k|C_{l^{*}}(\lambda p_k)|^2 \bra{\phi_k}O\ket{\phi_k}}{(1-\lambda)|C_{l^*}(1-\lambda)|^2+\sum_{k}\lambda p_k|C_{l^{*}}(\lambda p_k)|^2}-\bra{\psi}O\ket{\psi}\right|\notag\\
    &\leq \frac{{\sum_{k}\lambda p_k|C_{l^{*}}(\lambda p_k)|^2 \left|\bra{\phi_k}O\ket{\phi_k}-\bra{\psi}O\ket{\psi}\right|}}{{(1-\lambda)|C_{l^*}(1-\lambda)|^2+\sum_{k}\lambda p_k|C_{l^{*}}(\lambda p_k)|^2}}\leq 2\|O\|\frac{{\sum_{k}\lambda p_k|C_{l^{*}}(\lambda p_k)|^2 }}{{(1-\lambda)|C_{l^*}(1-\lambda)|^2+\sum_{k}\lambda p_k|C_{l^{*}}(\lambda p_k)|^2}},
\end{align}
when we ignore the error from the conventional DME for implementing the controlled-$e^{i\rho T}$ operations.
In the numerical simulation shown below, for simplicity, we numerically calculate $C_{l*}(\lambda p_k)$ for the upper bound of the approximation error (while this cannot be done in practice).
From the evaluation of the approximation error, we can specify the number of ancilla qubits $m$ and $M=2^{m}-1$ times use of the controlled-$e^{i\rho T}$ according to the structure of the phase estimation circuit.
When the approximation error is $\Delta$, the implementation error of $e^{i\rho T}$ via the conventional DME procedure should be of the order of $(1-\lambda)\Delta/M$ to achieve the overall approximation error $\sim\Delta$.
As a result, for $m$ ensuring that Eq.~\eqref{qPCA:approx_error} is smaller than $\Delta$, the number of state copies per circuit required for the original qPCA is well approximated by
\begin{equation}
    M\times \frac{T^2}{(1-\lambda)\Delta/M}=\frac{4\pi^2}{\Delta}\frac{(2^{m}-1)^2}{1-\lambda}.
\end{equation}
Also, we remark that the measurement overhead is given by the inverse of the post-selection probability $\sim (1-\lambda)^{-1}$.
\\

We now provide the numerical simulation result depicted in Fig.~\ref{fig:numericalsim_VD} (and \green{Fig.~\ref{fig:main_numerical_res}} in the main text), showing the number of state copies and measurement overhead over the approximation error $\log_{10}\Delta$ for each algorithm. 
In this numerical simulation, we consider the case where $\rho_{\rm err}$ is an arbitrary pure state orthogonal to $\ketbra{\psi}$. 
Also, we take $\eta=\lambda=0.2,~0.3,~\mbox{or}~0.4$ and assume $\|O\|=1$.
The green lines represent the results of the coherent filter in Algorithm~\ref{alg:coherentfilter}, and the purple lines represent the results of the hybrid filter in Algorithm~\ref{alg:hybridfilter}.
The dotted lines in the left figures of Fig.~\ref{fig:numericalsim_VD} show the $25\%$, $50\%$, and $95\%$ percentiles for the hybrid filter.
Although the number of state copies in the coherent filter also varies probabilistically, we have found that its probability distribution is too narrow to plot the percentiles, and thus we have omitted them.
For comparison, we numerically calculate the performance of the previous work for quantum error mitigation (VD)~\cite{Huggins2021-vd, koczor2021-esd} and plot it with red lines.
In the case of VD, the approximation error $\Delta$ is given by $2Q_l/(1+Q_l)$ for the number $l$ of copies per circuit where
$Q_l:=(\lambda/(1-\lambda))^l {\rm tr}[\rho_{\rm err}^l]$.
The measurement overhead is given by ${\rm tr}[\rho^l]^{-2}$. 
Moreover, we plot the results of the original qPCA with the conventional DME procedure with blue lines.

From Fig.~\ref{fig:numericalsim_VD}, we find several advantageous features of our methods.
The coherent filter (Algorithm~\ref{alg:coherentfilter}) achieves the constant measurement overhead by coherently consuming a modest number of the state copies, which is exponentially smaller than that of the original qPCA.
This significant improvement arises from the exponential improvement of the virtual DME compared to the conventional DME.
Also, our method is in contrast to the previous work for VD, which also keeps the number of state copies per circuit small but suffers from an exponentially large measurement overhead. 
This difference can be understood in terms of the remaining signal intensity at the final measurement. 
In the coherent filter, the input state $\rho$ is effectively projected into the rescaled pure state $(1-\lambda)\ketbra{\psi}$.
So, the intrinsic signal intensity $(1-\lambda)$ is still preserved in the coherent filter, while the previous VD approach suppresses both the signal and noise by using the power of the target state $\rho^l$. 
This leads to the exponential separation in the measurement overhead, as observed in the right panels of Fig.~\ref{fig:numericalsim_VD}. 
On the other hand, the hybrid filter (Algorithm~\ref{alg:hybridfilter}) combines the advantages of both the coherent filter and VD. 
As a result, its measurement overhead is much smaller than VD (as observed in the right panels) by coherently consuming a comparable number of state copies in average (as observed in the left panels).
Further improvements of our procedure by combining it with previous methods are expected, and we leave it as an interesting future direction.



\begin{figure}[htbp]
  \centering

  \begin{minipage}{\textwidth}
    \centering
    \textbf{(a) $\lambda=\eta=0.20$ and arbitrary pure $\rho_{\rm err}$}\\
    \includegraphics[width=0.95\textwidth]{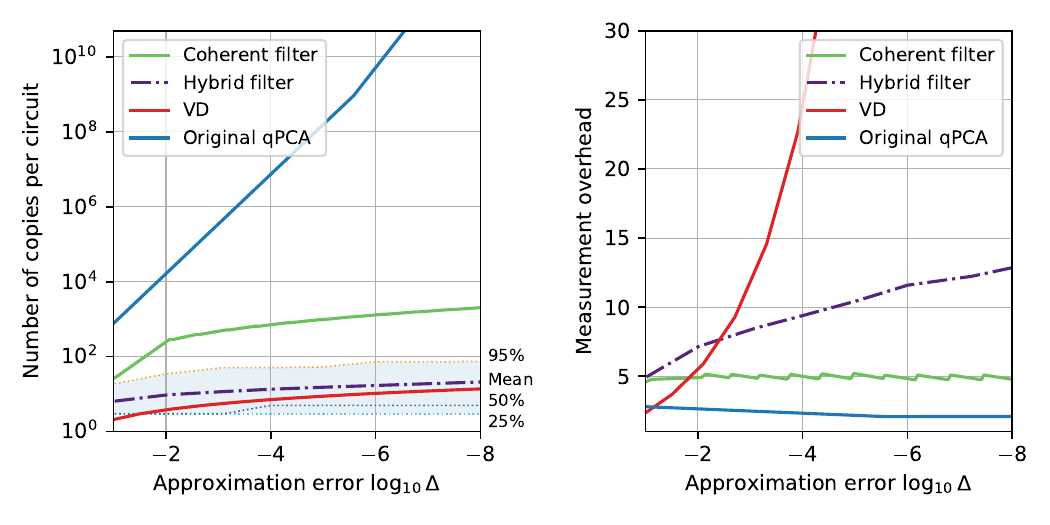}
  \end{minipage}

  \vspace{1em} 

  \begin{minipage}{\textwidth}
    \centering
    \textbf{(b) $\lambda=\eta=0.40$ and arbitrary pure $\rho_{\rm err}$}\\
    \includegraphics[width=0.95\textwidth]{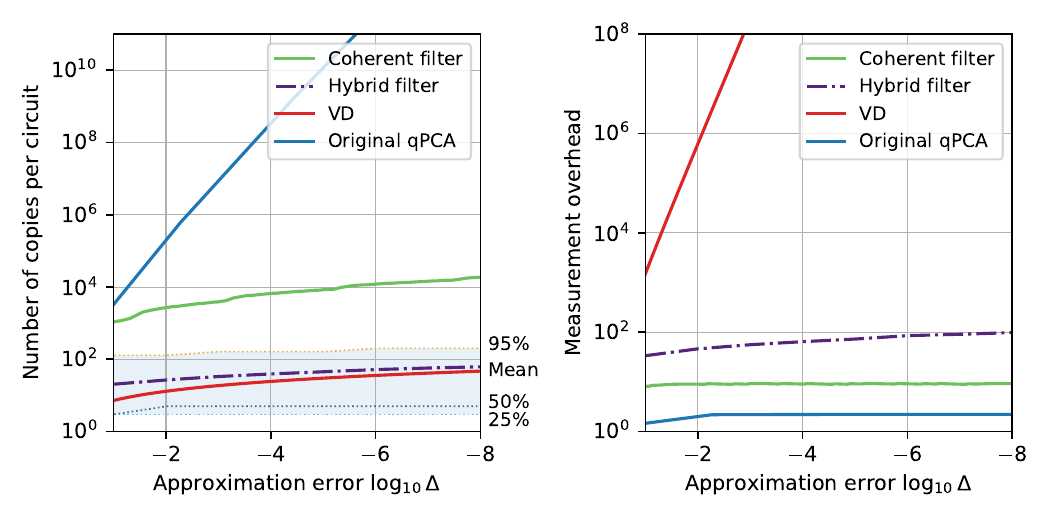}
  \end{minipage}

  \caption{Number of state copies and measurement overhead versus the approximation error $\log_{10}\Delta$ for computing the target mean value $\bra{\psi}O\ket{\psi}$ for any observable $O$ ($\|O\|=1$), where $\rho=(1-\lambda)\ketbra{\psi}+\lambda\rho_{\rm err}$.
  The green and purple lines represent the results based on the coherent filter (Algorithm~\ref{alg:coherentfilter}) and the hybrid filter (Algorithm~\ref{alg:hybridfilter}), respectively. 
  The red and blue lines represent the results of virtual distillation (VD)~\cite{Huggins2021-vd, koczor2021-esd} and the original protocol for qPCA~\cite{Lloyd2014-nn}, respectively.
  Since the number of state copies required for the hybrid filter varies probabilistically, we plot the $25\%$, $50\%$, and $95\%$ percentiles by dotted lines in the left panels.}
  \label{fig:numericalsim_VD}
\end{figure}

\clearpage
\section{Theoretical details in applications}\label{sec:proofs}

\subsection{Detailed description on universal quantum emulator}
\label{apsec:theoretical_detail_of_uqe}

The main components of the UQE algorithm proposed in Ref.~\cite{marvian2024universalquantumemulator} are reflection operators about the given input–output pairs of quantum states, which are approximated using the DME method~\cite{Lloyd2014-nn}. Here, by replacing these reflection operators with approximations based on our virtual DME method, we show that the resulting algorithm satisfies Theorem~\ref{thm:universal_quantum_emulator}. 
To this end, we first review the original UQE algorithm introduced in Ref.~\cite{marvian2024universalquantumemulator}, including its problem setup, procedure, and associated complexities. Then, we prove that our modified algorithm meets the conditions stated in Theorem~\ref{thm:universal_quantum_emulator}.

\paragraph{Problem Setup}
Let $U$ be an unknown unitary operator on $\mathbb{C}^d$, and define
\begin{equation}
    S_{\rm in} := \left\{  \ket{\phi_i^{\rm in}} :i=1,\cdots,K\right\}~~~\text{and}~~~ S_{\rm out} := \left\{\ket{\phi_i^{\rm out}}:=U\ket{\phi_i^{\rm in}}:i=1,\cdots,K\right\}
\end{equation}
as the sets of input and output quantum states, respectively. 
Let $\mathcal{H}_{\rm in}$ denote the $d_{\rm eff}$-dimensional subspace spanned by the input states. 
The UQE algorithm takes as input as an arbitrary unknown state $\ket{\psi}\in \mathcal{H}_{\rm in}$ and produce a state $\rho_{\psi}$ that satisfies $\| \rho_{\psi} - U\ketbra{\psi}U^{\dagger} \|_{1}\leq \varepsilon$, by coherently coupling the state $\ket{\psi}$ with sufficiently many input-output pairs sampled randomly from $S_{\rm in}$ and $S_{\rm out}$. 
In order to emulate the action of $U$ on the input subspace $\mathcal{H}_{\rm in}$, it is necessary to impose certain assumptions on $S_{\rm in}$. 
For instance, consider the case of $U=e^{i\theta Z}$ and $S_{\rm in}=\{\ket{0},\ket{1}\}$. 
In this case, the parameter $\theta$ only appears as a global phase in the output states, which cannot be inferred from the input-output data. 
To avoid such issue, we require $S_{\rm in}$ to satisfy the following condition:
\begin{equation}\label{apeq:covergence_condition}
    \mathrm{Alg}_{\mathbb{C}}(S_{\rm in}) = \mathcal{L}(\mathcal{H}_{\rm in}),
\end{equation}
where $\mathrm{Alg}_{\mathbb{C}}(S_{\rm in})$ denotes the complex associative algebra generated by $S_{\rm in}$ and $\mathcal{L}(\mathcal{H}_{\rm in})$ denotes the set of all linear operators supported on $\mathcal{H}_{\rm in}$. As shown in Proposition~1 in \cite{marvian2024universalquantumemulator}, this is the necessary and sufficient condition for the set $S_{\rm in}$ to uniquely determine the action of $U$ on $\mathcal{H}_{\rm in}$.


\paragraph{Procedures of original UQE algorithm~\cite{marvian2024universalquantumemulator}}

Here, we provide a brief overview of the main steps of the algorithm, while the sample complexity (namely, the total number of necessary input-output pairs) and the gate complexity of the algorithm will be discussed in detail later (Lemma~\ref{lemma:uqe_with_simulated_reflection}).
The main building block of the UQE algorithm is the controlled-reflection operator with respect to an input or output state $\ket{\phi}$, defined as
\begin{equation}
    \mathrm{c}R_{\rm c,t}(\phi):= \ketbra{0}_{\rm c} \otimes \bm{1}_{\rm t} + \ketbra{1}_{\rm c} \otimes e^{-i\pi\ket{\phi}\bra{\phi}_{\rm t}},
\end{equation}
where the subscripts $\rm c$ and $\rm t$ denote the control and target system respectively. 
In the original UQE algorithm, the controlled reflections are simulated using the DME technique~\cite{marvian2024universalquantumemulator}, which is realized by entangling the target system with multiple copies of the state $\ket{\phi}$. 
For simplicity, however, the following description of the procedure assumes ideal implementations of $\mathrm{c}R_{\rm c,t}(\phi)$.

As illustrated in the quantum circuit shown in Fig.~\ref{Fig:universal_quantum_emulator}, the algorithm starts by initializing $T$ ancilla qubits labeled by $\mathrm{a}_{1},\cdots,\mathrm{a}_{T}$ in the state $\ket{-}:=(\ket{0}+\ket{1})/\sqrt{2}$ and preparing the state $\ket{\psi}$ in the main system denoted by $\rm m$. 
Then, the algorithm consists of the following four steps.

\vspace{0.3 cm}

\noindent \textbf{Step 1.} In the first step, we are trying to coherently erase the initial state $\ket{\psi}_{\rm m}$ on the main system and drive it toward a fixed input state $\ket{\phi_{1}^{\rm in}}$. 
To this end, we repeat the following procedure for each $k=1,...,T$: we sample an integer $i_k$ from the set $\{2,\cdots,K\}$ at uniformly random, and apply a unitary gate $W_{\mathrm{a}_k,\mathrm{m}}(\phi_{i_k}^{\rm in})$ defined by
\begin{equation}\label{eq:def_of_W_operator}
    W_{\mathrm{a}_k,\mathrm{m}}(\phi_{i_k}^{\rm in}) := \mathrm{c}R_{\mathrm{a}_k,\mathrm{m}}(\phi_{i_k}^{\rm in}) \left\{ H_{\mathrm{a}_k} \otimes \mathbf{1}_{\rm m} \right\} \mathrm{c}R_{\mathrm{a}_k,\mathrm{m}}(\phi_{1}^{\rm in}),
\end{equation}
 to the $k$-th ancilla qubit $\mathrm{a}_k$ and the main system $\rm m$. 
 Here, let $\mathcal{W}$ be the CPTP map obtained by tracing out the qubit ${\rm a}_{k}$ of $W_{\mathrm{a}_k,\mathrm{m}}$ and taking the expectation over $i_k$. Then, if the relation~\eqref{apeq:covergence_condition} holds, the randomized channel $\mathcal{W}$ has a unique fixed point $\ket{\phi_1^{\rm in}}\bra{\phi_1^{\rm in}}$, and $\mathcal{W}^{T}(\ket{\psi}\bra{\psi})$ converges to this fixed point, i.e., $\lim_{T\rightarrow \infty} \mathcal{W}^{T}(\ket{\psi}\bra{\psi})=\ket{\phi_1^{\rm in}}\bra{\phi_1^{\rm in}}$; see Proposition 2 in Ref.~\cite{marvian2024universalquantumemulator}.
 Moreover, since the global state at the end of this step is a pure state\footnote{This is because the initial state of the algorithm is a pure state and the entire process consists of unitary transformations}, the resulting reduced state on the ancilla system is also close to pure, implying the preservation of information of the initial state $\ket{\psi}$ due to the unitarity of the entire process. 
 Thus, letting $\ket{\omega(\bm{i})}_{a_{1}\cdots a_{T}}$ denote the reduced state on the ancilla system when $T$ is large, the global state at the end of this step can be approximately written as
 \begin{equation}\label{apeq:emulator_step_1}
     \left[ W_{\mathrm{a}_T,\mathrm{m}}(\phi_{i_{T}}^{\rm in}) \cdots W_{\mathrm{a}_{1},\mathrm{m}}(\phi_{i_1}^{\rm in}) \right] \ket{\psi}_{\rm m} \ket{-}_{\mathrm{\bm{a}}}^{\otimes T} \approx \ket{\phi^{\rm in}_{1}}_{\rm m} \ket{\omega(\bm{i})}_{\bm{\mathrm{a}}}
 \end{equation}

\vspace{0.3 cm}

\noindent \textbf{Step 2.} Next, to verify whether \textbf{Step 1} has succeeded (that is, whether the initial state $\ket{\psi}$ has been projected into the state $\ket{\phi^{\rm in}_{1}}$), we perform a controlled-reflection $\mathrm{c}R_{\mathrm{b}_1,\mathrm{m}}(\phi_1^{\rm in})$ on $\ket{\phi_{1}^{\rm in}}$, where the control qubit is a newly prepared ancilla qubit $\mathrm{b}_{1}$ in the state $\ket{-}$. 
Next, we perform the Hadamard gate on a qubit $\rm b_1$, followed by the computational basis measurement. 
Then, the measurement outcome $0$ indicates that the projection has succeeded; otherwise, the projection has failed. 
At this point, we can choose one of two approaches: either we ignore the measurement outcome 
or continue only if the outcome is 0.

\vspace{0.3 cm}

\noindent \textbf{Step 3.} At this stage, we replace the 
reduced state of the main system which is pushed into
$\ket{\phi_1^{\rm in}}$ with its corresponding output state $\ket{\phi^{\rm out}_{1}}=U\ket{\phi_{1}^{\rm in}}$. 
This can be implemented by applying the SWAP operation between the main system and the ancillary system $\mathrm{b}_2$ in the state $\ket{\phi^{\rm out}_{1}}$ as
\begin{equation}
    \mathrm{SWAP}_{\mathrm{m},\mathrm{b}_2} \ket{\phi^{\rm in}_{1}}_{\rm m} \ket{\omega(\bm{i})}_{\bm{\mathrm{a}}} \ket{\phi_{1}^{\rm out}}_{\rm b_2} = \ket{\phi^{\rm out}_{1}}_{\rm m} \ket{\omega(\bm{i})}_{\bm{\mathrm{a}}} \ket{\phi_{1}^{\rm in}}_{\rm b_2},
\end{equation}
followed by discarding the state in the ancilla system.

\vspace{0.3 cm}

\noindent \textbf{Step 4.} In the final step, we apply a sequence of controlled reflections corresponding to the output states, in the reverse order of \textbf{Step 1}. Specifically, for each $k=T,...,1$, we apply the following operation to the ancilla qubit $a_k$ and the main system:
\begin{equation}
    W^{\dagger}_{\mathrm{a}_k,m}(\phi^{\rm out}_{i_k}) = \mathrm{c}R^{\dagger}_{\mathrm{a}_k,m}(\phi_{1}^{\rm out}) \left\{H_{\mathrm{a}_k} \otimes \mathbf{1}_{\rm m}\right\} \mathrm{c}R^{\dagger}_{\mathrm{a}_k,m}(\phi_{i_k}^{\rm out}).
\end{equation}
After the above procedure, we have
\begin{equation}\label{apeq:emulator_step_4}
    \left[ W^{\dagger}_{\mathrm{a}_1,\mathrm{m}}(\phi_{i_{1}}^{\rm out}) \cdots W^{\dagger}_{\mathrm{a}_{T},\mathrm{m}}(\phi_{i_T}^{\rm out}) \right]  \ket{\phi^{\rm out}_{1}}_{\rm m} \ket{\omega(\bm{i})}_{\bm{\mathrm{a}}} \approx (U \ket{\psi})_{\rm m}\ket{-}^{\otimes T}.
\end{equation}
In this way, we have emulated the action of the unknown unitary $U$ on a given input state $\ket{\psi}\in \mathcal{H}_{\rm in}$. Note that Eq.~\eqref{apeq:emulator_step_4} can be confirmed by the following calculation:
\begin{align}
    &\left[ W^{\dagger}_{\mathrm{a}_1,\mathrm{m}}(\phi_{i_{1}}^{\rm out}) \cdots W^{\dagger}_{\mathrm{a}_{T},\mathrm{m}}(\phi_{i_T}^{\rm out}) \right]  \ket{\phi^{\rm out}_{1}}_{\rm m} \ket{\omega(\bm{i})}_{\bm{\mathrm{a}}}\nonumber\\
    &~~~~~~~~~~~~= \left[ W^{\dagger}_{\mathrm{a}_1,\mathrm{m}}(\phi_{i_{1}}^{\rm out}) \cdots W^{\dagger}_{\mathrm{a}_{T},\mathrm{m}}(\phi_{i_T}^{\rm out}) \right] U_{\rm m} \ket{\phi^{\rm in}_{1}}_{\rm m} \ket{\omega(\bm{i})}_{\bm{\mathrm{a}}} \\
    &~~~~~~~~~~~~\approx \left[ W^{\dagger}_{\mathrm{a}_1,\mathrm{m}}(\phi_{i_{1}}^{\rm out}) \cdots W^{\dagger}_{\mathrm{a}_{T},\mathrm{m}}(\phi_{i_T}^{\rm out}) \right] U_{\rm m} \left[ W_{\mathrm{a}_T,\mathrm{m}}(\phi_{i_{T}}^{\rm in}) \cdots W_{\mathrm{a}_{1},\mathrm{m}}(\phi_{i_1}^{\rm in}) \right] \ket{\psi}_{\rm m} \ket{-}_{\mathrm{\bm{a}}}^{\otimes T}  \\
    &~~~~~~~~~~~~=(U \ket{\psi})_{\rm m}\ket{-}^{\otimes T}, 
\end{align}
where the second-to-third line uses Eq.~(\ref{apeq:emulator_step_1}) and the final line follows from the fact that $W^{\dagger}_{\mathrm{a}_k,\mathrm{m}}(\phi^{\rm out}_{i_k}) =(\bm{1}_{\mathrm{a}_k}\otimes U_{\rm m}) W^{\dagger}_{\mathrm{a}_k,\mathrm{m}}(\phi^{\rm in}_{i_k}) (\bm{1}_{\mathrm{a}_k}\otimes U_{\rm m}^{\dagger})$ and thereby
\begin{equation}
W^{\dagger}_{\mathrm{a}_1,\mathrm{m}}(\phi^{\rm out}_{i_1}) \cdots W^{\dagger}_{\mathrm{a}_T,\mathrm{m}}(\phi^{\rm out}_{i_T}) =(\bm{1}_{\mathrm{\bm{a}}}\otimes U_{\rm m}) W^{\dagger}_{\mathrm{a}_1,\mathrm{m}}(\phi^{\rm in}_{i_1}) \cdots W^{\dagger}_{\mathrm{a}_T,\mathrm{m}}(\phi^{\rm in}_{i_T}) (\bm{1}_{\mathrm{\bm{a}}}\otimes U_{\rm m}^{\dagger}).
\end{equation}

The approximation accuracy of the resulting main-system state 
with respect to $U\ket{\psi}$ depends on the success probability of the projection onto $\ket{\phi_{1}^{\rm in}}$ in \textbf{Step 1}, which is equal to the probability of observing 0 in \textbf{Step 2}, denoted by $p_0$. 
More precisely, let $\mathcal{E}_{U}$ denote the quantum channel that represents the quantum circuit in Fig.~\ref{Fig:universal_quantum_emulator} when no post-selection is performed and all controlled-reflection operations are implemented without error; then Ref.~\cite{marvian2024universalquantumemulator} showed that 
\begin{equation}    F\left(\mathcal{E}_U(|\psi\rangle\langle\psi|),U|\psi\rangle\langle\psi|U^\dagger\right) \geq p_{0},
\end{equation}
where $F$ denotes the Uhlmann fidelity~\cite{nielsen2010quantum}. 
We remark that if post-selection is performed (i.e., the protocol proceeds only when the outcome of Step 2 is 0), then the fidelity of the resulting main-system state is lower-bounded by $\sqrt{p_0}$.

\paragraph{Complexities of original UQE algorithm~\cite{marvian2024universalquantumemulator}}

We present the error analysis provided by \cite{marvian2024universalquantumemulator}, concerning the sample complexity and gate complexity of this algorithm. 
We first focus on the case where the controlled-reflection operations are ideal and the measurement outcome in \textbf{Step 2} is ignored. 

\begin{lem}[UQE with ideal controlled-reflection~\cite{marvian2024universalquantumemulator}]\label{lemma:uqe_with_ideal_controlled_reflection}
Let $\mathcal{E}_{U}$ denote the quantum channel that captures the overall effect of the quantum circuit in Fig.~\ref{Fig:universal_quantum_emulator}, assuming that all controlled-reflections are ideal and that no post-selection is performed. 
Then,
%
\begin{equation}
    T \geq \frac{\mathrm{ln}(2\varepsilon_{\rm id}^{-2}\sqrt{d_{\rm eff}-1})}{1-\lambda_{\perp}} 
\end{equation}
is suffice to achieve the trace distance $\frac{1}{2}\| \mathcal{E}_{U}(\ketbra{\psi}) - U\ketbra{\psi}U^{\dagger} \|_{1}\leq \varepsilon_{\rm id}$ with $\varepsilon_{\rm id}\in(0,1]$ and $\ket{\psi}\in \mathcal{H}_{\rm in}$.
Here, $\lambda_{\perp}$ denotes the largest eigenvalue of the operator 
\begin{equation}
    \frac{1}{K-1} \sum_{k=2}^{K} A_k \otimes A_k^{\ast}, 
    \quad \quad A_k = P_{\perp} - 2P_{\perp} \ket{\phi_k^{\rm in}} \bra{\phi_k^{\rm in}}P_{\perp},
\end{equation}
where $P_{\perp}$ is the projector onto the subspace $\mathcal{H}_{\rm in}$ orthogonal to $\ket{\phi_1^{\rm in}}$. 
\end{lem}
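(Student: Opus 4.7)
The plan is to bound the trace distance by the failure probability of the verification measurement in Step 2, and then to translate that probability bound into the announced spectral convergence rate. By construction, the reduced state on the main system at the end of Step 1 is close to $\ketbra{\phi_1^{\rm in}}$ precisely when the randomized channel $\mathcal{W}$ has driven $\ketbra{\psi}$ close to its unique fixed point; because Step 4 is the $U$-conjugate of Step 1 (up to the swap in Step 3), the closeness of $\mathcal{E}_U$ to $U\bullet U^{\dagger}$ is controlled by how close $\mathcal{W}^{T}(\ketbra{\psi})$ is to $\ketbra{\phi_1^{\rm in}}$.

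First, I would invoke the fidelity inequality $F(\mathcal{E}_U(\ketbra{\psi}),U\ketbra{\psi}U^{\dagger})\geq p_{0}$ already recalled in the excerpt, where $p_0$ is the probability of the ``success'' outcome in Step 2. Combined with the Fuchs--van de Graaf bound $\tfrac{1}{2}\|\rho-\sigma\|_{1}\leq\sqrt{1-F^{2}}\leq\sqrt{2(1-p_{0})}$, the target reduces to showing $1-p_{0}\leq\varepsilon_{\rm id}^{2}/2$. Using the expression
\begin{equation*}
1-p_{0}\;=\;\mathrm{tr}\!\left[(\bm{1}-\ketbra{\phi_{1}^{\rm in}})\,\mathbb{E}_{\bm{i}}\!\left[\mathcal{W}_{i_{T}}\circ\cdots\circ\mathcal{W}_{i_{1}}(\ketbra{\psi})\right]\right],
\end{equation*}
where each $\mathcal{W}_{i}$ is the CPTP map obtained from $W_{\mathrm{a}_{k},\mathrm{m}}(\phi_{i}^{\rm in})$ after tracing out the ancilla initialized in $\ket{-}$, it then suffices to bound the weight on the orthogonal complement of $\ket{\phi_{1}^{\rm in}}$ inside $\mathcal{H}_{\rm in}$.

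Second, I would decompose $\ket{\psi}=c_{1}\ket{\phi_{1}^{\rm in}}+\ket{v}$ with $\ket{v}\in\mathcal{H}_{\perp}$ and analyze the one-step action of $\mathcal{W}_{i}$ on the orthogonal part via vectorization. The crucial calculation is to show that on $\mathcal{H}_{\perp}\otimes\mathcal{H}_{\perp}^{*}$ inside the doubled system, the averaged map $\mathbb{E}_{i}[\mathcal{W}_{i}\otimes\overline{\mathcal{W}_{i}}]$ acts exactly as the contraction operator $\frac{1}{K-1}\sum_{k=2}^{K}A_{k}\otimes A_{k}^{*}$ appearing in the lemma, whose largest eigenvalue is $\lambda_{\perp}$. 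Iterating $T$ times and using $\lambda_{\perp}^{T}\leq e^{-(1-\lambda_{\perp})T}$, the Hilbert--Schmidt norm of the $\mathcal{H}_{\perp}$-supported component decays exponentially in $T$; a Cauchy--Schwarz estimate between this component and the identity on $\mathcal{H}_{\perp}$ yields the $\sqrt{d_{\rm eff}-1}$ prefactor when converting back to the trace. Requiring $\sqrt{d_{\rm eff}-1}\,e^{-(1-\lambda_{\perp})T}\leq\varepsilon_{\rm id}^{2}/2$ and solving for $T$ gives the stated threshold.

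The main obstacle is verifying that the one-step vectorized contraction on the orthogonal subspace is exactly $A_{k}\otimes A_{k}^{*}$ with $A_{k}=P_{\perp}-2P_{\perp}\ketbra{\phi_{k}^{\rm in}}P_{\perp}$. This requires unpacking the two Kraus operators of $\mathcal{W}_{i}$ obtained from $\bra{0}_{\rm a}W_{\mathrm{a}_{k},\mathrm{m}}(\phi_{i}^{\rm in})\ket{-}_{\rm a}$ and $\bra{1}_{\rm a}W_{\mathrm{a}_{k},\mathrm{m}}(\phi_{i}^{\rm in})\ket{-}_{\rm a}$, expanding each reflection via $e^{-i\pi\ketbra{\phi}}=\bm{1}-2\ketbra{\phi}$, and exploiting the fact that the Hadamard on the ancilla symmetrizes the two controlled reflections about $\ket{\phi_{1}^{\rm in}}$ and $\ket{\phi_{i}^{\rm in}}$ into a Householder-type action on $\mathcal{H}_{\perp}$. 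Once this identification is in place, the remainder is a standard spectral-gap mixing argument.
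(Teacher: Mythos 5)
The paper does not actually prove this lemma: it is imported verbatim from Ref.~\cite{marvian2024universalquantumemulator}, and the text explicitly defers the proof to Appendix~A of that reference. So there is no in-paper argument to compare against line by line. That said, your reconstruction follows exactly the route that the cited proof is known to take: reduce the trace distance to the Step-2 success probability via the fidelity bound $F(\mathcal{E}_U(\ketbra{\psi}),U\ketbra{\psi}U^\dagger)\geq p_0$ and Fuchs--van de Graaf, then bound $1-p_0$ (the weight of $\mathcal{W}^T(\ketbra{\psi})$ outside $\ket{\phi_1^{\rm in}}$) by a vectorized spectral-gap argument, with the $\sqrt{d_{\rm eff}-1}$ arising from a Cauchy--Schwarz/Hilbert--Schmidt-to-trace conversion and $\lambda_\perp^T\leq e^{-(1-\lambda_\perp)T}$. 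Your closing arithmetic, $\sqrt{d_{\rm eff}-1}\,e^{-(1-\lambda_\perp)T}\leq\varepsilon_{\rm id}^2/2$, reproduces the stated threshold exactly, which is a good sign that the skeleton is the right one.

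The genuine gap is the step you yourself flag as ``the main obstacle'': the claim that the averaged vectorized one-step map restricted to $\mathcal{H}_\perp\otimes\mathcal{H}_\perp^{*}$ is \emph{exactly} $\frac{1}{K-1}\sum_{k}A_k\otimes A_k^{*}$. This is asserted, not derived, and it is not a formality. Writing out the two Kraus operators of $\mathcal{W}_i$ from $W(\phi_i^{\rm in})=\mathrm{c}R(\phi_i^{\rm in})\,(H\otimes\bm{1})\,\mathrm{c}R(\phi_1^{\rm in})$ with the ancilla in $(\ket{0}+\ket{1})/\sqrt{2}$, one finds a pair of operators $B_i,C_i$ on $\mathcal{H}_{\rm in}$ with $B_i+C_i$ proportional to a projector and $B_i-C_i$ proportional to a reflection, so that on the orthogonal block the vectorized map takes the form of a convex combination such as $\tfrac{1}{2}\bigl(P_\perp\otimes P_\perp^{*}+A_i\otimes A_i^{*}\bigr)$ rather than $A_i\otimes A_i^{*}$ alone; the resulting top eigenvalue would then be $(1+\lambda_\perp)/2$ instead of $\lambda_\perp$, which changes the effective gap by a factor of two and hence the constant in the bound on $T$. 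Whether this extra $P_\perp\otimes P_\perp^{*}$ term appears depends delicately on the gate ordering and on which ancilla outcome is kept, so you must actually carry out the Kraus computation and the restriction to $\mathcal{H}_\perp\otimes\mathcal{H}_\perp^{*}$ before the decay rate $\lambda_\perp^{T}$ can be claimed. A secondary point to make explicit: $\lambda_\perp$ is defined as the largest \emph{eigenvalue} of a Hermitian operator whose spectrum need not be positive, whereas your iteration bounds the Euclidean norm of the vectorized orthogonal component, which requires the largest \emph{singular} value; you should justify why these coincide (or why the relevant component lives in the positive part of the spectrum) for the contraction you derive.
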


The proof can be found in Appendix~A of Ref.~\cite{marvian2024universalquantumemulator}. As pointed out in Ref.~\cite{marvian2024universalquantumemulator}, $T$ depends on the convergence rate toward the reference state $\ket{\phi_{1}^{\rm in}}$, which is governed by the spectral parameter $\lambda_{\perp}$. 
Importantly, it is clear from the definition of $\lambda_{\perp}$ that the value is independent of the approximation error parameter $\varepsilon_{\rm id}$. 

The above analysis assumes ideal implementation of the controlled-reflection operators in Fig.~\ref{Fig:universal_quantum_emulator}. 
In practice, however, the UQE approximates these operators $\mathrm{c}R(\phi)$ using multiple copies of the state $\ket{\phi}$ via the DME technique~\cite{Lloyd2014-nn}\footnote{The ideal implementation of $\mathrm{c}R(\phi)$ requires implementing the unitary $U$ that generates $\ket{\phi}$, but in the present problem setting, we do not have direct access to $U$ in general.}. 
When approximation errors due to this simulation are taken into account, the total number of copies of input-output states and the total number of elementary gates required to implement the algorithm are given as follows:

\begin{lem}[UQE with approximate controlled-reflection~\cite{marvian2024universalquantumemulator}]
\label{lemma:uqe_with_simulated_reflection}
Let $\mathcal{E}_{U}^{\rm DME}$ denote the quantum channel in Fig.~\ref{Fig:universal_quantum_emulator} where the ideal controlled-reflection operations are replaced with approximated versions via the density matrix exponentiation~\cite{Lloyd2014-nn}. 
Then, for any $\ket{\psi}\in \mathcal{H}_{\rm in}$ and $d_{\rm eff}={\rm dim}\mathcal{H}_{\rm in}$,
$\frac{1}{2}\left\| \mathcal{E}^{\rm DME}_{U}(\ket{\psi}\bra{\psi}) - U\ket{\psi}\bra{\psi}U^{\dagger} \right\|_{1} \leq \varepsilon$ can be achieved using the following resources:
\begin{itemize}
  \item $T=\mathcal{O}\left((1-\lambda_{\perp})^{-1}\log(d_{\rm eff}/\varepsilon^2)\right)$, where $\lambda_{\perp}$ is defined as in Lemma~\ref{lemma:uqe_with_ideal_controlled_reflection},
  \item Total number of input-output state copies: $ \mathcal
  {O}(\varepsilon^{-1}T^2)$,
  \item Total number of one- and two-qubit elementary gates: 
  $\mathcal{O}(\varepsilon^{-1}T^2 \log d)$.
\end{itemize}
\end{lem}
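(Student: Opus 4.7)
The plan is to decompose the target trace-distance bound via the triangle inequality into (i) the intrinsic emulation error of the ideal UQE circuit and (ii) the accumulated approximation error induced by replacing each ideal controlled-reflection with its DME-simulated counterpart. First I would invoke Lemma~\ref{lemma:uqe_with_ideal_controlled_reflection} with target accuracy $\varepsilon_{\rm id}=\varepsilon/2$, which pins down the circuit-depth parameter to $T=\mathcal{O}((1-\lambda_\perp)^{-1}\log(d_{\rm eff}/\varepsilon^2))$. This takes care of the intrinsic error contribution and fixes $T$ as a function of the problem parameters.

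For the second piece, I would count the controlled-reflection gates in Fig.~\ref{Fig:universal_quantum_emulator}: there are $\Theta(T)$ of them (two per $W$ block across Steps 1 and 4, plus the single verification reflection in Step 2, together with the ideal SWAP in Step 3). Letting $\delta$ denote the diamond-norm error with which each such gate is simulated, I would apply the general channel-composition estimate Eq.~\eqref{eq:xy_error_bound_general} together with the schematic in Fig.~\ref{fig:diamond_norm_of_xy}: since every interleaving ideal operation ($H$-gate, SWAP, partial trace over the sampled index $i_k$) is CPTP and therefore has unit diamond norm, the single-layer errors accumulate at most linearly, giving $\|\mathcal{E}_U^{\rm DME}-\mathcal{E}_U\|_\diamond=\mathcal{O}(T\delta)$. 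Setting $\delta=\mathcal{O}(\varepsilon/T)$ then makes this contribution at most $\varepsilon/2$, as required.

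Finally, each controlled-reflection $\mathrm{c}R(\phi)$ is just the unitary $e^{-i\pi(\ketbra{1}\otimes\ketbra{\phi})}$, so its simulation fits directly into the LMR framework reviewed in Section~\ref{sec:LMR}, applied to the pure resource state $\ketbra{1}\otimes\ketbra{\phi}$ at evolution time $T_{\rm ref}=\pi$. The LMR scaling $N=\mathcal{O}(T_{\rm ref}^{2}/\delta)=\mathcal{O}(1/\delta)=\mathcal{O}(T/\varepsilon)$ copies per controlled-reflection, together with the $\mathcal{O}(\log d)$-gate decomposition of each controlled partial-SWAP, yields $\mathcal{O}((T/\varepsilon)\log d)$ elementary gates per simulated reflection. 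Multiplying by the $\Theta(T)$ occurrences gives the claimed totals of $\mathcal{O}(T^{2}/\varepsilon)$ input-output state copies and $\mathcal{O}(T^{2}\log(d)/\varepsilon)$ one- and two-qubit gates.

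The main technical subtlety I anticipate is justifying the linear-in-$T$ error propagation despite three complications of the UQE circuit: the classical randomization over the indices $i_k$, the ancilla qubits shared between Steps 1, 2, and 4, and the fact that the DME-simulated gates are themselves CPTP maps acting on an enlarged system including fresh copies of $\ket{\phi}$. However, each of these is handled by standard tools: convexity of the diamond norm absorbs the classical averaging; the ancilla qubits can be folded into the system register on which the diamond norm is evaluated (using the tensorization $\|\mathcal{X}\otimes \mathcal{I}_R\|_\diamond=\|\mathcal{X}\|_\diamond$); and the resource copies of $\ket{\phi}$ can be prepared layer-by-layer so that the LMR error bound $\|\Lambda_\delta\circ\cdots\circ\Lambda_\delta-\mathcal{U}[\mathrm{c}R(\phi)]\|_\diamond\leq\delta$ from Section~\ref{sec:LMR} applies independently to each reflection. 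Once these bookkeeping points are resolved, Eq.~\eqref{diamond_norm_of_xy} closes the argument.
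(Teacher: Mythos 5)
Your proposal is correct and follows essentially the same route as the paper's proof: a triangle-inequality split into the ideal-circuit error (handled by Lemma~\ref{lemma:uqe_with_ideal_controlled_reflection} with $\varepsilon_{\rm id}=\Theta(\varepsilon)$) plus a linearly accumulating diamond-norm error over the $\Theta(T)$ DME-simulated controlled reflections, with per-gate accuracy $\varepsilon_{\rm DME}=\mathcal{O}(\varepsilon/T)$ and the LMR cost $\mathcal{O}(1/\varepsilon_{\rm DME})$ copies and $\mathcal{O}(\varepsilon_{\rm DME}^{-1}\log d)$ gates per reflection. The additional bookkeeping you flag (convexity over the random indices, tensorization with the ancillas, layer-by-layer resource preparation) is handled exactly as you describe and is implicit in the paper's shorter argument.
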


\noindent
While the detailed proof is provided in Appendix~B in Ref.~\cite{marvian2024universalquantumemulator}, we here give a proof sketch below for completeness.

\begin{proof}
As shown in~\cite{Lloyd2014-nn} or Section~\ref{supple_sec:non_physical_ope}, the controlled-reflection operator $\mathrm{c}R(\phi)$ can be approximated by a quantum channel $\mathcal{E}_{\rm DME}(\phi)$ simulated via DME, such that 
\begin{equation}
\| \mathcal{E}_{\rm DME}(\phi) - \mathrm{c}R(\phi) \|_{\diamond}
    \leq\varepsilon_{\rm DME}.
\end{equation}
This implementation requires $\mathcal{O}(\varepsilon_{\rm DME}^{-1})$ copies of $\ket{\phi}$ and has a gate complexity of $\mathcal{O}(\varepsilon^{-1}_{\rm DME}\log d)$.

Let $\mathcal{E}_{U}^{\rm DME}$ denote the quantum channel obtained by replacing all $4T+1$ controlled reflections in $\mathcal{E}_{U}$ with their simulated versions via DME; then, triangle inequality for the diamond norm implies
\begin{eqnarray}
    \frac{1}{2}\| \mathcal{E}_{U}^{\rm DME} (\ket{\psi}\bra{\psi}) - U\ket{\psi}\bra{\psi}U^{\dagger}\|_{1} 
    &\leq& \frac{1}{2}\| \mathcal{E}_{U}^{\rm DME}(\ket{\psi}\bra{\psi}) - \mathcal{E}_{U}(\ket{\psi}\bra{\psi}) \|_{1} \nonumber\\
       && + \frac{1}{2}\| \mathcal{E}_{U}(\ket{\psi}\bra{\psi}) - U\ket{\psi}\bra{\psi}U^{\dagger} \|_{1} \nonumber\\
    &\leq& (4T+1) \varepsilon_{\rm DME} + \varepsilon_{\rm id}.
\end{eqnarray}
In order to ensure that the total error is bounded by $\varepsilon$, it suffices to set $\varepsilon_{\rm DME}=\mathcal{O}(\varepsilon/T)$ and $\varepsilon_{\rm id}=\mathcal{O}(\varepsilon)$. Thus, $\mathcal{E}_{U}^{\rm DME}$ which satisfies $\frac{1}{2}\left\| \mathcal{E}^{\rm DME}_{U}(\ketbra{\psi}) - U\ketbra{\psi}U^{\dagger} \right\|_{1} \leq \varepsilon$ can be implemented using 
\begin{equation}
    N_{\rm tot}=
    (4T+1) \times \mathcal
    {O}\left(\frac{1}{\varepsilon_{\rm DME}}\right) = \mathcal{O}\left(\frac{T^2}{\varepsilon}\right)
\end{equation}
copies of the sample states, and the total gate complexity is
\begin{equation}
    N_{\rm tot}\times \mathcal
    {O}(\log{d})=\mathcal{O}\left(\frac{T^2 \log{d}}{\varepsilon}\right).
\end{equation}
\end{proof}




\paragraph{Proof of Theorem~\ref{thm:universal_quantum_emulator}}

As shown in Lemma~\ref{lemma:uqe_with_simulated_reflection}, with respect to $\varepsilon$, the existing UQE algorithm requires $N_{\rm tot}=\mathcal{O}(\varepsilon^{-1})$ copies of the input-output pairs and $t_{\rm tot}=\mathcal{O}(\varepsilon^{-1})$ elementary gates per emulation circuit, resulting in a substantial overhead in practical implementations, particularly for small $\varepsilon$. Remarkably, we show that if the task is to estimate properties of the emulated state $U\ket{\psi}$, both $N_{\rm tot}$ and $t_{\rm tot}$ can be exponentially reduced to $\mathcal{O}(\log({1/\varepsilon}))$ without affecting the scaling with respect to other parameters, $d_{\rm eff}$ and $d$. 

\begin{proof}[Proof of Theorem~\ref{thm:universal_quantum_emulator}]
To prove the theorem, we explicitly provide an algorithm that satisfies the stated conditions. 
Before delving into the details, we define the following quantum channel, corresponding to a part of the circuit shown in Fig.~\ref{Fig:universal_quantum_emulator} (which excludes state preparation and measurement of Fig.~\ref{Fig:universal_quantum_emulator}) for a fixed set of indices $\bm{i}:=(i_1,...,i_{T})$:
\begin{align}
    \mathcal{E}_{\rm circ}^{[\bm{i}]}
    := \underbrace{\mathcal{U}[W^{\dagger}(\phi_{i_{1}}^{\rm out})] \circ \cdots \circ \mathcal{U}[W^{\dagger}(\phi_{i_{T}}^{\rm out})]}_{\text{Step 4}} \circ \underbrace{\mathcal{U}[S]}_{\text{Step 3}} \circ \underbrace{\mathcal{U}[H_{b_1}] \circ \mathcal{U}[\mathrm{c}R(\phi_{1}^{\rm in})]}_{\text{Step 2}} \circ \underbrace{\mathcal{U}[W(\phi_{i_{T}}^{\rm in})] \circ \cdots \circ \mathcal{U}[W(\phi_{i_{1}}^{\rm in})]}_{\text{Step 1}}.\nonumber
\end{align}
Here, $S$ in Step 3 denotes the SWAP operator between the systems $\rm m$ and $\mathrm{b}_2$.
Note that the operators in Step~1 and Step~4 consist of controlled-reflection operations and Hadamard gates:
\begin{align}
    \mathcal{U}[W(\phi_{i_{k}}^{\rm in})] &= \mathcal{U}[\mathrm{c}R(\phi_{i_{k}}^{\rm in})] \circ \mathcal{U}[H_{\mathrm{a}_k}] \circ \mathcal{U}[\mathrm{c}R(\phi_{1}^{\rm in})], \\
    \mathcal{U}[W^{\dagger}(\phi_{i_{k}}^{\rm out})] &= \mathcal{U}[\mathrm{c}R(\phi_{1}^{\rm out})] \circ \mathcal{U}[H_{\mathrm{a}_k}] \circ \mathcal{U}[\mathrm{c}R(\phi_{i_k}^{\rm out})].
\end{align}
In addition, we define the following function for $\bm{i}:=(i_1,...,i_{T})$ and $M:=4T+1$:
\begin{align}
     &f_{M}^{[\bm{i}]}\left(\left\{\ketbra{\phi_{i}^{\rm in}},\ketbra{\phi_{i}^{\rm out}}\right\}\right)\notag\\
     &:= \mathrm{tr}\left[ \left(O_{\rm m} \otimes \bm{1}_{\mathrm{\bm{a,b}}}\right) \mathcal{E}_{\rm circ}^{[\bm{i}]}\left( \ket{-}\bra{-}^{\otimes T}_{\mathrm{\bm{a}}} \otimes \ket{-}\bra{-}_{\mathrm{b}_1} \otimes \ket{\phi_{1}^{\rm out}} \bra{\phi^{\rm out}_{1}}_{\mathrm{b}_2} \otimes \rho \right) \right].
\end{align}
Note that the unitary circuit $\mathcal{E}^{[\bm{i}]}_{\rm circ}$ implicitly depends on $\ket{\phi_i^{\rm in}},\ket{\phi_i^{\rm out}}$. 
To achieve the condition in the theorem statement, we perform the following procedure: for each shot $N$, we randomly choose the index $\bm{i}$, and apply the general algorithm described in Section~\ref{sec:nearlyopt_qalg} to the corresponding function $f_{M}^{[\bm{i}]}$. 
Specifically, we execute the following algorithm.

\vspace{0.2cm}

\noindent For $j=1,\ldots,N$:
\begin{enumerate}
    \item Sample $\bm{i}=(i_1,...,i_{T})$ uniformly at random from $\{2,...,K\}^{T}$.
    \item Replace all $M$ controlled-reflection operators in $\mathcal{E}_{\rm circ}^{[\bm{i}]}$ with $\hat{\Phi}_{\mathrm{pure},r}$, as defined in Theorem~\ref{thm:DME_pure_state}. Denote the resulting channel by $ \mathcal{E}_{\hat{\Phi}}^{[\bm{i}]}$.
    \item Execute the circuit corresponding to 
    \begin{equation}\label{eq:quantum_circuit_random}
        \mathrm{tr}\left[ \left(O_{\rm m} \otimes \bm{1}_{\mathrm{\bm{a,b}}}\right) \mathcal{E}_{\hat{\Phi}}^{[\bm{i}]}\left( \ket{-}\bra{-}^{\otimes T}_{\mathrm{\bm{a}}} \otimes \ket{-}\bra{-}_{\mathrm{b}_1} \otimes \ket{\phi_{1}^{\rm out}} \bra{\phi^{\rm out}_{1}}_{\mathrm{b}_2} \otimes \rho \right) \right],
    \end{equation}
    and denote the measurement outcome by $\hat{\mu}_{j}$. Multiply this outcome by the weight factor $(C^{2r}_{\rm pure})^{M}$.
\end{enumerate}
After $N$ iterations, we compute the final estimator as $\hat\mu=\frac{1}{N}\sum_{j=1}^{N}\hat{\mu}_{j}C_{\rm pure}^{2rM}$.

We now verify that the above procedure satisfies the theorem's requirement.
By Chebyshev's inequality, setting $N=\mathcal{O}(\bar\varepsilon^{-2}{C_{\rm pure}^{4rM}}\|O\|^2)$ guarantees that $| \hat{\mu}-\mathbb{E}[\hat{\mu}] |\leq \bar\varepsilon$ with high probability. Moreover, since
\begin{align}
    \mathbb{E}[\hat{\mu}] 
    &= \sum_{\bm{i}} p(\bm{i}) C_{\rm pure}^{2rM} \mathbb{E}\left[ \mathrm{tr}\left[ \left(O_{\rm m} \otimes \bm{1}_{\mathrm{\bm{a,b}}}\right) \mathcal{E}_{\hat{\Phi}}^{[\bm{i}]}\left( \ket{-}\bra{-}^{\otimes T}_{\mathrm{\bm{a}}} \otimes \ket{-}\bra{-}_{\mathrm{b}_1} \otimes \ket{\phi_{1}^{\rm out}} \bra{\phi^{\rm out}_{1}}_{\mathrm{b}_2} \otimes \rho \right) \right] \right]\\
    &= \sum_{\bm{i}} p(\bm{i}) \mathrm{tr}\left[ \left(O_{\rm m} \otimes \bm{1}_{\mathrm{\bm{a,b}}}\right) \mathcal{E}_{\rm circ}^{[\bm{i}]}\left( \ket{-}\bra{-}^{\otimes T}_{\mathrm{\bm{a}}} \otimes \ket{-}\bra{-}_{\mathrm{b}_1} \otimes \ket{\phi_{1}^{\rm out}} \bra{\phi^{\rm out}_{1}}_{\mathrm{b}_2} \otimes \rho \right) \right] \\
    &= \sum_{\bm{i}} p(\bm{i}) f_{M}^{[\bm{i}]} = \mathrm{tr}[O \mathcal{E}_{U}(\rho)]
\end{align}
where $p(\bm{i})$ is the probability distribution of $\bm{i}$ and the second equality uses Theorem~\ref{thm:DME_pure_state}, we obtain $| \hat{\mu}-\mathrm{tr}[O \mathcal{E}_{U}(\rho)] |\leq \bar\varepsilon$ with high probability.

Finally, Lemma~\ref{lemma:uqe_with_ideal_controlled_reflection} guarantees that $|\mathrm{tr}[O \mathcal{E}_{U}(\rho)-OU\rho U^{\dagger}] |\leq 2\varepsilon_{\rm id}\|O\|$ provided that $T=\mathcal{O}((1-\lambda_{\perp})^{-1}\log( d_{\rm eff}/{\varepsilon}_{\rm id}^2))$. By choosing $\bar{\varepsilon} + 2\|O\|\varepsilon_{\rm id} \leq \varepsilon$, we arrive at the final error bound 
\begin{equation}
| \hat{\mu}-\mathrm{tr}[OU\rho U^{\dagger}] |\leq \varepsilon
\end{equation}
Lastly, to ensure that the scaling factor $C^{4rM}_{\rm pure}\leq e^{\mathcal{O}(M/r)}$ remains constant, it is sufficient to take $r=\mathcal{O}(M)$, and thus each random circuit in Eq.~\eqref{eq:quantum_circuit_random} uses at most: 
\begin{itemize}
    \item $\mathcal{O}(M^2)=\mathcal{O}(T^2)$ copies of input-output sample states: 
    \item $\mathcal{O}(M^2 \log d)=\mathcal{O}(T^2\log d)$ of one- and two-qubit elementary gates.
\end{itemize}
\end{proof}

\subsection{Proof of Theorem~\ref{thm:entropy-estimation} and Corollary~\ref{cor:relative-entropy-estimation}}
\label{sec:proof-entropy}

\begin{proof}
[Proof of Theorem~\ref{thm:entropy-estimation} and Corollary~\ref{cor:relative-entropy-estimation}]
    From Ref.~\cite[Corollary 71]{gilyen2019-qsvt}, given access to the controlled time-evolution $e^{i\rho/2}$, we can construct a $(\pi, 2, \varepsilon_{\rho})$-block-encoding of $\rho$, where $\varepsilon_{\rho} \in (0, 1/2]$ is a block encoding error. 
    This block encoding requires $\mathcal{O}(\log(1/\varepsilon_{\rho}))$ uses of the controlled-$e^{i\rho/2}$ and its inverse, and $\mathcal{O}(\log(1/\varepsilon_{\rho}))$ uses of one- or two-qubit gates.
    
    Next, from Ref.~\cite[Lemma 11]{gilyen2019-world}, 
    there exists an even real polynomial $P(x)$ of degree
    $n_P = \mathcal{O}\left(\delta^{-1} \log\left(1/\varepsilon_P\right)\right)$,
    which provides an approximation of a target function $f(x)\propto \ln(1/x)$ such that
    \begin{gather}\label{eq:log-polynomial}
        \left| P(x) - f(x) \right|  = \left| P(x) - \frac{\ln(1/x)}{2\ln (2/\delta)} \right| \leq \varepsilon_P~~~\forall x\in {[\delta, 1]}\\
        \left|P(x)\right|\le 1~~~\forall x\in[-1,1],
    \end{gather}
    where $\delta, \varepsilon_P >0$.
    Then, applying the eigenvalue transformation determined by $P(x)$ to the block encoding of $\rho$, we can construct $(1, 3,  4n_P \sqrt{\varepsilon_\rho})$-block encoding of $P(\rho / \pi) \approx f(\rho/\pi)$; see Ref.~\cite[Lemma 22]{gilyen2019-qsvt}.
    We can see that $\mathrm{tr}[\rho P(\rho/\pi)]$ is used to approximately calculate $S(\rho)$ as follows;
    \begin{equation}
        S(\rho) = -\mathrm{tr}(\rho \ln \rho)
        = 2\ln \left(\frac{2}{\delta}\right) \mathrm{tr}\left[\rho f\left(\frac{\rho}{\pi}\right)\right]  - \ln(\pi)
        \approx 2\ln \left(\frac{2}{\delta}\right) \mathrm{tr}\left[\rho P\left(\frac{\rho}{\pi}\right)\right]  - \ln(\pi).
    \end{equation}
    We now evaluate the approximation error. 
    For this purpose, first note that $\rho \ln \rho$ can be represented as $\sum_{i: p_i > 0} p_i \ln(p_i) \ketbra{i}$ by the spectral decomposition of $\rho$ as $\rho = \sum_i p_i \ketbra{i}$, where $p_i$ are the eigenvalues of $\rho$. 
    Then, the polynomial approximation error of $f(x)$ is described as
    \begin{align}\label{eq:error-bound-of-xlog}
        & \mathrm{tr}\left[ \left|\rho f\left(\frac{\rho}{\pi}\right)  - \rho P\left(\frac{\rho}{\pi}\right) \right|\right]
         = \sum_{i:p_i > 0} \left|p_i f \left(\frac{p_i}{\pi }\right) - p_i P\left(\frac{p_i}{\pi }\right) \right|\notag \\
         = &\sum_{0<p_i/\pi  < \delta} \left|p_i f \left(\frac{p_i}{\pi}\right) - p_i P\left(\frac{p_i}{\pi}\right) \right|
        + \sum_{p_i/\pi \ge \delta} \left|p_i f \left(\frac{p_i}{\pi}\right) - p_i P\left(\frac{p_i}{\pi}\right) \right|.
    \end{align}
    From the assumption that $1/\kappa$ is a known lower bound on the smallest non-zero eigenvalue of $\rho$,
    we can null the first term in Eq.~\eqref{eq:error-bound-of-xlog} by setting
    $\delta = \frac{1}{\pi \kappa}$.
    Then, using the condition~\eqref{eq:log-polynomial} of $P$, the error is bounded by
    \begin{align}
        \left|\mathrm{tr}\left[\rho f\left(\frac{\rho}{\pi}\right) \right]- \mathrm{tr}\left[\rho P\left(\frac{\rho}{\pi}\right)\right] \right|
        \le \sum_{p_i/\pi \ge \delta} \left|p_i f \left(\frac{p_i}{\pi}\right) - p_i P\left(\frac{p_i}{\pi}\right) \right|
        \le \varepsilon_P.
    \end{align}
    The second error source is the block encoding of $\rho$.
    Let $W_P$ be a unitary for the $(1, 3,  4n_P \sqrt{\varepsilon_\rho})$-block encoding of $P(\rho/\pi)$. 
    Then we can bound the block encoding error as follows:
    \begin{align}
        \left|\mathrm{tr}\left[\rho P\left(\frac{\rho}{\pi}\right)\right]
        - \mathrm{tr}\left[(\ketbra{0}^{\otimes 3} \otimes \rho ) W_P\right]
        \right|&=\left|\mathrm{tr}\left[\rho P\left(\frac{\rho}{\pi}\right)\right]
        - \mathrm{tr}\left[\rho (\bra{0}^{\otimes 3}\otimes \bm{1}) W_P(\ket{0}^{\otimes 3}\otimes \bm{1})\right]
        \right|\notag\\
        &\leq \|\rho\|_1\left\|P\left(\frac{\rho}{\pi}\right)-(\bra{0}^{\otimes 3}\otimes \bm{1}) W_P(\ket{0}^{\otimes 3}\otimes \bm{1})\right\|\notag\\
        & \le 4 n_P \sqrt{\varepsilon_{\rho}}.
    \end{align}

    Now, we have a quantum circuit $W_{P}$ including $\mathcal{O}(n_P\log(1/\varepsilon_\rho)) =\mathcal{O}(\kappa\log(1/\varepsilon_P)\log(1/\varepsilon_\rho))$ uses of $\ketbra{0}\otimes \bm{1} +\ketbra{1}\otimes e^{-i\rho/2}$ (and its inverse) and $\mathcal{O}(n_P\log(1/\varepsilon_\rho))$ single- or two-qubit gates.
    Thus, using the general quantum algorithm in Section~\ref{sec:nearlyopt_qalg} for the Hadamard-test circuit Fig.~\ref{fig:entropy-estimation-circuit}, we can obtain samples of the estimator $\hat{\mu}$ for the following quantities:
    \begin{equation}
        \mathrm{tr} \left[(\ketbra{0}^{\otimes 3} \otimes \rho) W_P\right] 
    \end{equation}
    within additive error $\varepsilon_\Phi$.
    To bound the total error by $\varepsilon$ with a high probability, i.e., 
    \begin{equation}
        \mathrm{Pr}\left[\left|2\ln (2\pi \kappa )\hat{\mu}  - \ln(\pi) - S(\rho) \right| \le \varepsilon\right] \ge 2/3,
    \end{equation}
    it is sufficient to set
    $\varepsilon_P = 4n_P \sqrt{\varepsilon_\rho} = \varepsilon_\Phi = \mathcal{O}(\varepsilon/\log (\kappa))$
     and choose the number of measurements as $\mathcal{O}(1/\varepsilon_{\Phi}^2)$ to construct a single sample of $\hat{\mu}$.
    
    
    Based on the above analysis, we evaluate the complexity of the algorithm.
    The number $M$ of controlled-$e^{i\rho/2}$ in the circuit is obtained as
    \begin{equation}
        M = \mathcal{O}\left(n_P \log\left(\frac{1}{\varepsilon_\rho}\right)\right)
        = \mathcal{O}\left(\kappa \log\left(\frac{1}{\varepsilon_P}\right) \log\left(\frac{1}{\varepsilon_\rho}\right)\right)
        = \mathcal{O}\left(\kappa \log\left(\frac{1}{\varepsilon}\right) \log\left(\frac{\kappa}{\varepsilon}\right)\right),
    \end{equation}
    where we ignore $\log \log$ factors.
    The required number of identical copies of $\rho$ in the circuit is thus 
    \begin{equation}
    \mathcal{O}\left(M^2\frac{\log(M/\varepsilon_\Phi)}{\log\log(M/\varepsilon_\Phi)}\right)
    = \mathcal{O}\left(\kappa^2 \log^3\left(\frac{\kappa}{\varepsilon}\right)\log^2\left(\frac{1}{\varepsilon}\right)\right).
    \end{equation}
    Also, the number of measurements required for generating each $\hat{\mu}$ is given by
    \begin{equation}
        \mathcal{O}\left(\frac{1}{\varepsilon_{\Phi}^2}\right)=\mathcal{O}\left(\frac{\log^2(\kappa)}{\varepsilon^2}\right).
    \end{equation}
    The number of one- and two-qubit elementary gates are $\mathcal{O}(M^2 \log(M/\varepsilon)\log(d))$ for the virtual DME and $\mathcal{O}(M)$ for the eigenvalue transformation.
    Thus, the gate complexity is $\mathcal{O}(M^2 \log(M/\varepsilon)\log(d) + M) = \mathcal{O}(M^2 \log(M/\varepsilon)\log(d))$.
    
    The proof for the quantum relative entropy mostly follows that for the von Neumann entropy. 
    Let $\sigma=\sum_{i} p_i^{(\sigma)} \ketbra{i^{(\sigma)}}$ be the spectral decomposition of $\sigma$ and define $\sigma^0=\sum_{p^{(\sigma)}_i > 0} \ketbra{i^{(\sigma)}}$ be the projector on $\mathrm{supp}(\sigma)$. 
    Then we have $\sigma^0 \ln(\sigma) = \sum_{p^{(\sigma)}_i > 0} \ln(p_i^{(\sigma)}) \ketbra{i^{(\sigma)}}$. 
    Using $\sigma^0 \rho = \rho$ by the assumption that $\mathrm{supp}(\rho)\subseteq \mathrm{supp}(\sigma)$, we can ensure the polynomial approximation error similarly to Eq.~\eqref{eq:error-bound-of-xlog} as
    \begin{align}
         & \mathrm{tr}\left[ \left|\rho \sigma^0\left( f\left(\frac{\sigma}{\pi}\right) - P\left(\frac{\sigma}{\pi}\right) \right) \right|\right]
         =  \mathrm{tr}\left[ \left|\rho \sum_{p_i^{(\sigma)} > 0} \ketbra{i^{(\sigma)}} \left( f\left(\frac{p_i^{(\sigma)}}{\pi}\right) - P\left(\frac{p_i^{(\sigma)}}{\pi}\right) \right) \right|\right] \notag \\
         &\leq \left\| \sum_{p_i^{(\sigma)} > 0} \ketbra{i^{(\sigma)}} \left( f\left(\frac{p_i^{(\sigma)}}{\pi}\right) - P\left(\frac{p_i^{(\sigma)}}{\pi}\right) \right) \right\|=\max_{p_i^{(\sigma)}>0}\left| f\left(\frac{p_i^{(\sigma)}}{\pi}\right) - P\left(\frac{p_i^{(\sigma)}}{\pi}\right)\right|\leq \varepsilon_P.
    \end{align}
    Here, the final inequality holds by choosing $\delta=\frac{1}{\pi\kappa^{(\sigma)}}$.
    The remaining discussion is almost the same as that for the case of entropy. 
\end{proof}

\subsection{Proof of Theorem~\ref{thm:application_linearsyssolver}}

\begin{proof}[Proof of Theorem~\ref{thm:application_linearsyssolver}]
\label{sec:thm_proof_QLS}
    To explicitly construct $f_M$, we mainly focus on an efficient quantum algorithm developed in Ref.~\cite{lin2020optimal}.
    The core idea of the algorithm is sequentially applying projectors in order that a sequence of projected quantum states follows an adiabatic quantum evolution toward the solution vector $\ket{x}$.
    Specifically, we define the following time-dependent Hamiltonian on the target system and a single-qubit system
    \begin{equation}
        H(f(s)) :=(1-f(s))H_0+f(s)H_1,~~~s\in[0,1],
    \end{equation}
    where the initial and final Hamiltonians are given by
    \begin{equation}
        H_0:=X\otimes \left(\bm{1}-\ket{b}\bra{b}\right),~~~
        H_1:=\ket{0}\bra{1}\otimes A(\bm{1}-\ket{b}\bra{b})+\ket{1}\bra{0}\otimes (\bm{1}-\ket{b}\bra{b})A.
    \end{equation}
    Here, $f(s)$ is a scheduling function specified later.
    The Hamiltonian $H(f)$ always has the two-dimensional null space spanned by $\ket{1}\ket{b}$ and $\ket{0}\ket{x(f)}$ for a quantum state $\ket{x(f)}$ such that $ ((1 - f)\bm{1} + fA)\ket{x(f)} \propto \ket{b}$, and the other eigenvalues are separated from $0$ by a gap at least $1-f+f/\kappa$~\cite{lin2020optimal}.
    If one choose the initial state $\ket{0}\ket{x(0)}=\ket{0}\ket{b}$, then $\ket{1}\ket{b}$ does not appear in the adiabatic path due to the definition of $H(f)$.
    So, if we can follow the eigenstates $\ket{0}\ket{x(f)}$ from $\ket{0}\ket{b}$, we finally obtain the solution vector $\ket{0}\ket{x(1)}=\ket{0}\ket{x}$.

    Instead of implementing the time-dependent Hamiltonian dynamics, the previous method uses the projection for $\ket{x(f)}$ at each discretized time step.
    We first define the time segment $s_j=j/M'$ ($j=0,1,2,...,M'$) and $f_j:=f(s_j)$.
    To follow the adiabatic path at each $s_j$, we use a quantum gate $B(f_j,\Delta_j)$ that is an $(1,a',\Delta_j)$
    block encoding of the projector $\ket{x(f_j)}\bra{x(f_j)}$.
    We describe the construction of this gate with reflection gates $e^{-i\pi \ket{b}\bra{b}}$ later.
    By using these gates, the initial state $\ket{b}$ approximately propagates along the adiabatic path by the circuit in Fig.~\ref{fig_circ:linearsystemsolver1}.
    \begin{figure}[tb]
\centering
\begin{tabular}{c}
\\
~~~~~\Qcircuit @C=0.55em @R=1.5em {
  \lstick{\ket{0}^{\otimes a'}}&{/}\qw &\multigate{1}{B(f_1,\Delta_1)}&\meter& \push{\ket{0}^{\otimes a'}} &{/}\qw &\multigate{1}{B(f_2,\Delta_2)}&\meter&&\cdots&& \push{\ket{0}^{\otimes a'}} &{/}\qw &\multigate{1}{B(f_{M'},\Delta_{M'})}&\qw&\meter\\
  \lstick{\ket{b}}&{/}\qw&\ghost{B(f_1,\Delta_1)}&\qw&\qw&\qw &\ghost{B(f_2,\Delta_2)}&\qw&&\cdots&&&\qw&\ghost{B(f_{M'},\Delta_{M'})} &\qw\\}
\\
\\
\end{tabular}
\caption{Quantum circuit for the linear system solver in Ref.~\cite{lin2020optimal}. 
$B(f,\Delta)$ is a $\Delta$-precise block encoding of the projector $\ket{x(f)}\bra{x(f)}$.}
\label{fig_circ:linearsystemsolver1}
\end{figure}
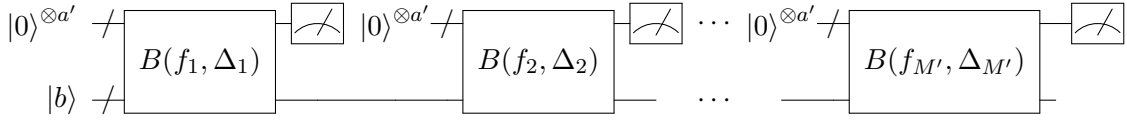
    The final unnormalized state of the circuit becomes 
    \begin{align}
        &\left[\langle 0|^{\otimes a'}B(f_{M'},\Delta_{M'})\ket{0}^{\otimes a'}\cdots \langle 0|^{\otimes a'}B(f_1,\Delta_1)\ket{0}^{\otimes a'}\right]\ket{b},
    \end{align}
    when the mid-circuit measurement outcomes are all zero. We write this unnormalized state as $c_p\ket{\tilde{x}}$ ($\|\ket{\tilde{x}}\|=1$).
    To determine the overall computational resources, we need to specify the scheduling function $f(s)$, $M'$, and $\Delta_j$.
    The authors of Ref.~\cite{lin2020optimal} suggest that 
    \begin{equation}
        f(s):=\frac{1-\kappa^{-s}}{1-\kappa^{-1}}
    \end{equation} 
    and $M'=\lceil 4\log^2(\kappa)/(1-\kappa^{-1})^2\rceil$, $\Delta_j=1/162(M')^2$ (if $j<M'$) and $\Delta_{M'}=\varepsilon^2/4$.
    Under these choices, the projection probability $|c_p|^2$ is lower bounded by a constant value, and the overlap between the final state $\ket{\tilde{x}}$ and the solution $\ket{x}$ satisfies~\cite{lin2020optimal}
    \begin{equation}
        |\bra{x}\tilde{x}\rangle|>1-\varepsilon^2.
    \end{equation}

    To complete the description of the entire circuit (i.e., $U_M$ in $f_M$ for our algorithm), we here construct the unitary gates $B$ as follows.
    First, we construct block encodings $W_j$ for $H_j$ ($j=0,1$) using the reflection gate with respect to $\ket{b}$ in Fig.~\ref{fig_circ:linearsystemsolver2}.
\begin{figure}[tbp]
  \centering
  \begin{minipage}{0.4\textwidth}
    \centering
    \begin{tabular}{c}
      \Qcircuit @C=0.55em @R=1em {
        \lstick{\ket{0}} & \gate{H} & \ctrl{1} & \gate{H} \\
                         & {/}\qw   & \gate{e^{-i\pi \ket{b}\bra{b}}} & \qw \\
                         & \qw      & \gate{X} & \qw \\
      }
    \end{tabular}
  \end{minipage}
  \begin{minipage}{0.4\textwidth}
    \centering
    \begin{tabular}{c}
      \Qcircuit @C=0.55em @R=1em {
        \lstick{\ket{0}} & \gate{H} & \qw & \qw&\ctrl{2}&\gate{H}&\qw\\
        \lstick{\ket{0}} & \gate{H} & \ctrl{1}&\qw&\qw&\gate{H}&\qw\\
        &\qw&\ctrl{1} &\gate{X}&\ctrl{1}&\qw&\qw\\
        &\qw&\gate{e^{-i\pi\ket{b}\bra{b}}}&\multigate{1}{U_A}&\gate{e^{-i\pi\ket{b}\bra{b}}}&\qw&\qw\\
        \lstick{\ket{{0}}^{\otimes a}}&\qw&\qw&\ghost{U_A}&\qw&\qw&\qw
      }
      \\
      \\
    \end{tabular}
  \end{minipage}
  \caption{Quantum circuits for block encoding of $H_j$ ($j=0,1$). Right (Left) circuit is a block encoding of $H_1$ ($H_0$).}
\label{fig_circ:linearsystemsolver2}
\end{figure}
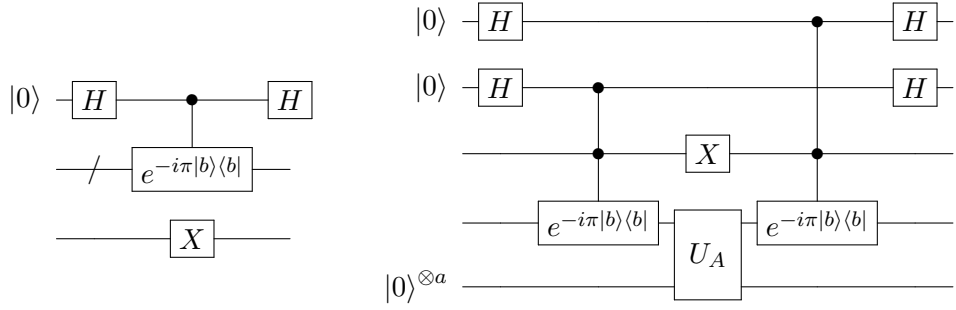
    Then, combining the controlled version of these unitaries with the linear combination of unitaries (LCU) method~\cite{childs2012hamiltonian,berry2015hamiltonian}, we obtain a quantum circuit for a block encoding of $H(f(s))$.
    Furthermore, using the quantum eigenvalue transformation of $H(f(s))$ with an $\mathcal{O}((1-f+f/\kappa)^{-1}\log(1/\Delta))$-degree polynomial, we obtain the unitary gate $B$ (see Theorem~3 in Ref.~\cite{lin2020optimal}), and this uses $\mathcal{O}((1-f+f/\kappa)^{-1}\log(1/\Delta))$ controlled reflection gates $e^{-i\pi\ket{b}\bra{b}}$.
    We note that in the circuit $B$, the number of one-qubit or two-qubit elementary gates is $\mathcal{O}(a(1-f+f/\kappa)^{-1}\log(1/\Delta))$ (except for the reflection gates of $\ket{b}$) and the number of queries to $U_A$ and its inverse is $\mathcal{O}((1-f+f/\kappa)^{-1}\log(1/\Delta))$.

    From the above discussion, we observe that there exists a function $f_M$ such that
    \begin{equation}
        f_M(\ketbra{1}\otimes \ketbra{b})=|c_p|^2\bra{\tilde{x}}O\ket{\tilde{x}}.
    \end{equation}
    where the number $M$ of density matrix exponentials for $\ket{1}\ket{b}$ is calculated as
    \begin{equation}
        M=\sum_{j=1}^{M'-1} \mathcal{O}\left(\log(1/\Delta_j)\frac{1}{1-f_j+f_j/\kappa}\right)+\mathcal{O}(\kappa\log(1/\varepsilon))=\mathcal{O}(\kappa(\log(\kappa)\log\log(\kappa)+\log(1/\varepsilon))).
    \end{equation}
    Here, we use $\ket{{1}}\ket{b}$ as input to $f_M$ because we need to use the controlled reflection gates.
    The quantity $f_M(\ket{1,b}\bra{1,b})$ can be obtained within $\mathcal{O}(\varepsilon)$ error with a high probability by our general algorithm in Section~\ref{sec:nearlyopt_qalg}.
    Similarly, we can obtain $|c_p|^2$ by simply setting $O=\bm{1}$ in the function $f_M$.
    So, by taking the ratio of these estimates, we obtain an estimate $\bra{\tilde{x}}O\ket{\tilde{x}}$ within $\mathcal{O}(\varepsilon)$ error with a high probability.
    This quantity is also $\mathcal{O}(\varepsilon \|O\|)$ close to the target value $\bra{x}O\ket{x}$ due to $|\langle x\ket{\tilde{x}}|>1-\varepsilon^2$. 
    We note that the variance of the final ratio estimator is amplified by only a constant value in the final division, so the measurement cost in our general algorithm is still $\mathcal{O}(1/\varepsilon^2)$.
    The gate complexity directly follows from the discussion in Section~\ref{sec:nearlyopt_qalg}.
\end{proof}

\subsection{Proof of Theorem~\ref{thm:error-mitigation}}
\label{sec:proof-error-mitigation}

    \begin{proof}[Proof of Theorem~\ref{thm:error-mitigation}]
    We first provide an overview of our algorithm. 
    The algorithm utilizes the circuit in Fig.~\ref{fig:mitigation-circuit}.
    $W_P$ is a block encoding of the projector onto the eigenspace of $\rho$ with eigenvalue $1-\lambda$. 
    $W_P$ is constructed from the time-evolution $e^{i\rho/2}$ for an unknown target state $\rho$ and the well-established eigenvalue transformation technique.
    In this proof, we first show the construction of $W_P$ and then analyze the algorithmic error and its complexity.

    To efficiently construct the projector via time-evolution, we employ the technique of quantum eigenvalue transformation of unitary matrices with real polynomials (QETU)~\cite{Dong2022-qetu}. 
    QETU can directly and efficiently construct the projector from the time-evolution $e^{i\rho/2}$, without using the intermediate step of the block encoding $\rho$ shown in the proof of Theorem~\ref{thm:entropy-estimation} in Section~\ref{sec:proof-entropy}.
    Let $P(x)$ be an even real polynomial of degree $n_P$ satisfying $|P(x)| \le 1$ for $x \in [-1,1]$. We can construct the block encoding of $P(\cos(\frac{\rho}{2}))$ with $n_P$ uses of $\ketbra{0}\otimes e^{i\rho/2}+\ketbra{1}\otimes e^{-i\rho/2}$~\cite[Corollary 17]{Dong2022-qetu}.
    By leveraging the result of polynomial approximating the rectangular function
    ~\cite[Corollary 7]{low2017-spectral-amplification} or~\cite[Lemma 29]{gilyen2019-qsvt},
    we can readily obtain the even real polynomial $P(x)$ of degree $n_P=\mathcal{O}(\delta^{-1}\log(1/\varepsilon_P))$,
    which approximates $\Theta(\Delta-x)
    = \frac{1}{2}(-\mathrm{sgn}(x-\Delta) + 1)$ in $x>0$ and satisfies the following:
    \begin{gather}
        |P(x) - \Theta(\Delta-x)| \leq \varepsilon_P; ~~  \forall x \in \left[0, \Delta \right] \cup \left[\Delta + {\delta}, 1\right], \\
        |P(x) | \leq 1; ~~ \forall  x \in [-1,  1].
    \end{gather}
    By setting $\delta = \cos(\frac{\eta}{2}) - \cos(\frac{1-\eta}{2}) = \Omega(1- 2\eta)$  and  $\Delta =\cos(\frac{1-\eta}{2})$,
    $P(\cos(\frac{\rho}{2}))$ approximates the desired projector $\ketbra{\psi}$.
    For simplicity, we define $P := P(\cos(\rho/2))$ and $\Braket{O} := \bra{\psi} O\ket{\psi}$.
    Therefore, QETU provides a $(1,1,0)$-block encoding $W_P$ and $\mathrm{tr}_a[(\ketbra{0}_a) W_P(\ketbra{0}_a\otimes\rho) W_P^\dagger] = P\rho P$.

    Next, we see that $\frac{\mathrm{tr}\left[OP \rho P\right]}{\mathrm{tr}\left[P \rho P\right]}$ provides a good estimate of $\Braket{O}$. 
    Using
    \begin{equation*}
       P=P(\cos(\rho/2))
        =P\left( \cos\left(\frac{1-\lambda}{2} \ketbra{\psi} + \frac{\lambda}{2}\rho_{\rm err} \right)\right)
       =P\left( \cos\left(\frac{1-\lambda}{2} \right)\right) \ketbra{\psi} 
          + P\left( \cos\left(\frac{\lambda}{2}\rho_{\rm err} \right)\right)
    \end{equation*}
    and assuming that $ \varepsilon_P \le 1/2$, we can bound the polynomial error as
    \begin{align}
        \left|\frac{\mathrm{tr}\left[OP \rho P\right]}{\mathrm{tr}\left[P \rho P\right]} - \Braket{O} \right| 
        & = \left|\frac{(1-\lambda) (1 +\varepsilon_1)^2 \Braket{O} + \lambda \mathrm{tr}\left[OP \rho_{\mathrm{err}} P\right]}{(1-\lambda) (1 +\varepsilon_1)^2 + \lambda \mathrm{tr}\left[P \rho_{\mathrm{err}} P\right]} - \Braket{O} \right| \notag \\
        & = \frac{ \lambda \left| \mathrm{tr}\left[OP \rho_{\mathrm{err}} P\right] - \Braket{O} \mathrm{tr}\left[P \rho_{\mathrm{err}} P\right]\right| }{(1-\lambda) (1 +\varepsilon_1)^2 + \lambda \mathrm{tr}\left[P \rho_{\mathrm{err}} P\right]}  \notag \\
        & \le  \frac{\lambda}{(1-\lambda)(1 - \varepsilon_P)^2} \left| \mathrm{tr}\left[OP \rho_{\mathrm{err}} P\right] - \Braket{O} \mathrm{tr}\left[P \rho_{\mathrm{err}} P\right]\right| \notag \\
        & \le  \frac{\lambda}{1-\lambda} \frac{1}{(1-\frac{1}{2})^2} 2 \|O\| \varepsilon^2_P \le  8 \|O\| \varepsilon^2_P,
    \end{align}
    where $\varepsilon_1 =  P(\cos((1-\lambda)/2)) - 1$.

    Now, we have a quantum circuit $W_{P}$ including $n_P=\mathcal{O}((1-2\eta)^{-1}\log(1/\varepsilon_P))$ uses of $\ketbra{0}\otimes e^{i\rho/2}+\ketbra{1}\otimes e^{-i\rho/2}$ and $\mathcal{O}(n_P)$ single-qubit gates.
    Thus, using the general quantum algorithm in Section~\ref{sec:nearlyopt_qalg}, we can obtain samples of estimators $\hat{\mu}_A$ and $\hat{\mu}_B$ for the following quantities:
    \begin{equation}
        {\rm tr}[(\ketbra{0}\otimes O)W_P(\ketbra0\otimes\rho)W_P^\dagger]={\rm tr}[OP\rho P],~~~{\rm tr}[(\ketbra{0}\otimes \bm{1})W_P(\ketbra0\otimes\rho)W_P^\dagger]={\rm tr}[P\rho P],
    \end{equation}
    within the additive error $\varepsilon_A,\varepsilon_B$ with a high probability, respectively.
    Then, we calculate $\hat{\mu}_A/\hat{\mu}_B$ to serve it as the final estimate for $\Braket{O}$.
    By taking 
    $\varepsilon_A=\mathcal{O}(\mathbb{E}[\hat{\mu}_B]\varepsilon), \varepsilon_B=\mathcal{O}(\mathbb{E}[\hat{\mu}_B]\varepsilon/\|O\|))$ and appropriately choosing the number of measurements to construct a single sample of $\hat{\mu}_A$ and $\hat{\mu}_B$ (more specifically, the number of samples of $\hat{g}'$; see Section~\ref{sec:nearlyopt_qalg}), we can say that 
    \begin{equation}
        {\rm Pr}\left[\left|\frac{\hat{\mu}_A}{\hat{\mu}_B}-\frac{{\rm tr}[OP\rho P]}{{\rm tr}[P\rho P]}\right|\leq\varepsilon\right]\geq 2/3
    \end{equation}
    holds. 
    It is sufficient to take the number of measurements $\mathcal{O}(\|O\|^2/(\mathbb{E}[\hat{\mu}_B]\varepsilon)^2)$ to obtain such estimates $\mu_A,\mu_B$.
    Therefore, by setting $\varepsilon^2_P=\mathcal{O}(\varepsilon/\|O\|)$, we can conclude that the estimator $\hat{\mu}_A/\hat{\mu}_B$ yields an estimate that is $\mathcal{O}(\varepsilon)$-close to $\Braket{O}$ with a high probability.
    
    From the above parameters, we obtain the number of calls of controlled $e^{i \rho/2}$ and its inverse:
    \begin{equation}
        M =  \mathcal{O}\left(n_P \right) = \mathcal{O}\left( \frac{1}{1 - 2\eta}\log\left(\frac{\|O\|}{\varepsilon}\right)\right), 
    \end{equation}
    and the number of copies of $\rho$ per circuit as
    \begin{equation}
        \mathcal{O}\left(M^2\frac{\log(M/\varepsilon)}{\log\log(M/\varepsilon)}\right)
        = \mathcal{O}\left(\frac{1}{(1 - 2\eta)^2}\log^2\left(\frac{\|O\|}{\varepsilon}\right)\log\left(\frac{1}{(1-2 \eta) \varepsilon}\right) \right),
    \end{equation}
    where we have ignored the $\log \log$ factors.
    The number of measurements required for generating $\hat{\mu}_{A}$ or $\hat{\mu}_{B}$ is given by 
    \begin{equation}
        \mathcal{O} \left(\frac{\|O\|^2}{(\mathbb{E}[\hat{\mu}_B] \varepsilon)^2}\right) = 
        \mathcal{O} \left(\frac{\|O\|^2}{(\mathrm{tr}\left[P \rho P\right] \varepsilon)^2}\right)
        = 
        \mathcal{O} \left(\frac{\|O\|^2}{(1-\eta)^2\varepsilon^2}\right).
    \end{equation}
    The number of one- and two-qubit elementary gates are $\mathcal{O}(M^2 \log(M/\varepsilon)\log(d))$ for the virtual DME and $\mathcal{O}(n_P)$ for the eigenvalue transformation.
    Thus, the gate complexity is $\mathcal{O}(M^2 \log(M/\varepsilon)\log(d) + n_P) = \mathcal{O}(M^2 \log(M/\varepsilon)\log(d))$.

    \end{proof}

\bibliographystyle{unsrt}
\bibliography{ref}

\end{document}